\theoremstyle{plain}
    \newtheorem{theorem}{Theorem}
    \newtheorem{corollary}[theorem]{Corollary}
    \newtheorem{lemma}[theorem]{Lemma}
    \newtheorem{proposition}[theorem]{Proposition}
    \newtheorem{claim}{Claim}[theorem]
    \newtheorem*{theorem*}{Theorem}
    \newtheorem*{corollary*}{Corollary}
    \newtheorem*{lemma*}{Lemma}
    \newtheorem*{proposition*}{Proposition}
    \newtheorem*{claim*}{Claim}
    \newtheorem*{conjecture*}{Conjecture}
\theoremstyle{definition}
    \newtheorem*{definition*}{Definition}
    \newtheorem*{example*}{Example}
\theoremstyle{remark}
    \newtheorem*{remark*}{Remark}
    \newtheorem*{note*}{Note}
\newenvironment{claimproof}
{\noindent {\em Proof of claim:} }
{\hfill $\diamond$ \smallskip}
\DeclarePairedDelimiter\abs{\lvert}{\rvert}
\DeclarePairedDelimiter\norm{\lVert}{\rVert}
\DeclarePairedDelimiter\ceil{\lceil}{\rceil}
\mathchardef\mhyphen="2D
\def\A{{\bf A}}
\DeclareSymbolFont{bbold}{U}{bbold}{m}{n}
\DeclareSymbolFontAlphabet{\mathbbold}{bbold}
\newcommand*{\reals}{\mathbb{R}}
\newcommand*{\integers}{\mathbb{Z}}
\newcommand*{\rationals}{\mathbb{Q}}
\newcommand*{\naturals}{\mathbb{N}}
\newcommand*{\range}[2]{\{#1,\dots,#2\}}
\newcommand*{\csp}{\textsc{CSP}}
\newcommand*{\instance}{\mathcal{I}}
\newcommand*{\variables}{V}
\newcommand*{\constraints}{C}
\newcommand*{\declareInstance}{\instance = (\variables, \constraints)}
\newcommand*{\nxnindset}{n \times n \textsc{ Independent Set}}
\newcommand*{\unitallen}{{\bf A}_{\rm ua}\xspace}
\newcommand*{\dcsp}[1][d]{#1\mhyphen\csp}
\newcommand*{\dcspchi}[1][d]{#1\mhyphen\csp^{\chi}}
\newcommand*{\disjtemp}[1]{{\bf D}_{#1}}
\newcommand*{\stpfull}{\bf{S}\xspace}
\newcommand*{\allin}[1]{\mathcal{U}({#1})\xspace}
\newcommand*{\compare}[1]{\odot_{#1}}
\algrenewcommand{\algorithmiccomment}[1]{\hfill$//$ \textit{#1}}
\newcommand*{\Accept}{\textbf{accept\xspace}}
\newcommand*{\Reject}{\textbf{reject\xspace}}
\newcommand*{\algtab}{\hskip\algorithmicindent}
\newcommand{\cc}[1]{{\mbox{\textnormal{\textsf{#1}}}}\xspace}  
\newcommand{\Nat}{\mathbb{N}}
\newcommand{\bigoh}{O}
\newcommand{\PP}{\cc{P}}
\newcommand{\NP}{\cc{NP}}
\newcommand{\PSPACE}{\cc{PSPACE}}
\newcommand{\FPT}{\cc{FPT}}
\newcommand{\XP}{\cc{XP}}
\newcommand{\Weft}{{\cc{W}}}
\newcommand{\W}[1]{{\Weft}{{[#1]}}}
\newcommand{\paraNP}{\cc{pNP}}
\newcommand{\hy}{\hbox{-}\nobreak\hskip0pt}
\newcommand{\intP}[1]{\lfloor #1 \rfloor}
\newcommand{\ratP}[1]{{\rm frac}(#1)}
\def\inst{{\cal I}\xspace}
\def\D{{\bf D}\xspace}
\newcommand*{\var}[1]{{\rm var}_\chi(#1)\xspace}
\newcommand*{\con}[1]{{\rm con}_\chi(#1)\xspace}
\newcommand*{\scope}[1]{{\rm scope}(#1)\xspace}
\newcommand*{\cd}[1]{{\it CD}(#1)\xspace}
\newcommand*{\cdf}{{\it CD}\xspace}
\newcommand*{\num}[1]{{\rm num}(#1)\xspace}
\newcommand*{\tw}[1]{{\rm tw}(#1)\xspace}
\newcommand{\probfont}[1]{{\sc #1}\xspace}
\newcommand{\MPSS}{\probfont{MPSS}}
\def\SubsetSum{\probfont{Subset Sum}}
\newcommand{\pbDef}[3]{%
\noindent
\begin{center}
\begin{boxedminipage}{0.98 \columnwidth}
#1\\[5pt]
\begin{tabular}{l p{0.70 \columnwidth}}
Input: & #2\\
Question: & #3
\end{tabular}
\end{boxedminipage}
\end{center}
}
\newcommand{\pbDefP}[4]{%
\noindent
\begin{center}
\begin{boxedminipage}{0.98 \columnwidth}
#1\\[5pt]
\begin{tabular}{l p{0.70 \columnwidth}}
Input: & #2\\
Param.: & #3\\
Question: & #4
\end{tabular}
\end{boxedminipage}
\end{center}
}
\newcommand{\pg}{P}
\newcommand{\ig}{I}
\newcommand{\rva}{\alpha}
\newcommand{\rca}{\beta}
\newcommand{\rcd}{D_B}
\newcommand{\ria}{\tau}
\newcommand{\RRR}{\mathcal{R}}
\newcommand{\SSS}{\mathcal{S}}
\newcommand{\emptyass}{\emptyset}
\mathchardef\mhyphen="2D
\def\reals{\mathbb{R}}
\def\integers{\mathbb{Z}}
\def\rationals{\mathbb{Q}}
\def\naturals{\mathbb{N}}
\newcommand{\SB}{\{\,}
\newcommand{\SM}{\;{|}\;}
\newcommand{\SE}{\,\}}
\newcommand{\mvec}[1]{\bar{#1}}
\begin{document}

\title{Algorithms and Complexity of Difference Logic\thanks{This article improves and extends results from two conference papers~\cite{Dabrowski:etal:kr2020,Dabrowski:etal:aaai2021}.
For the purpose of open access, the authors have applied a Creative Commons Attribution (CC BY) licence to any Author Accepted Manuscript version arising.}}

\author{
Konrad K. Dabrowski\thanks{School of Computing, Newcastle University, UK, \texttt{konrad.dabrowski@newcastle.ac.uk}} \and
Peter Jonsson\thanks{Department of Computer and Information Science, Link{\"o}ping University, Sweden, \texttt{peter.jonsson@liu.se}} \and
Sebastian Ordyniak\thanks{School of Computing, University of Leeds, UK, \texttt{sordyniak@gmail.com}} \and
George Osipov\thanks{Department of Computer and Information Science, Link{\"o}ping University, Sweden, \texttt{george.osipov@pm.me}}
}
\date{\today}

\maketitle

\begin{abstract}
{\em Difference Logic} (DL) is a fragment of linear arithmetics
where atoms are constraints $x+k \leq y$ for variables $x,y$ (ranging over ${\mathbb Q}$ or ${\mathbb Z}$) and integer $k$.
We study the complexity of deciding the truth of existential DL sentences.
This problem appears in many contexts: examples include verification, bioinformatics, telecommunications, and spatio-temporal reasoning in AI.
We begin by considering
sentences in CNF with rational-valued variables. We restrict the allowed clauses via
two natural parameters: {\em arity} and {\em coefficient bounds}.
The 
problem is \NP-hard for most choices of these parameters. As a response to this, we refine our understanding by analyzing the time complexity and
the parameterized complexity 
(with respect to well-studied parameters such as
primal and incidence treewidth).
We obtain a comprehensive picture of the complexity landscape
in both cases. Finally, we generalize our results to integer domains and
sentences that are not in CNF. 
\end{abstract}
\bigskip

\noindent
{\bf Keywords:} Difference Logic, Algorithms and Complexity, Fine-grained Complexity, Parameterized Complexity, Treewidth.

\newpage

\tableofcontents

\newpage

\section{Introduction}
\label{sec:introduction}

We have divided this introductory section into four parts.
In the first one (Section~\ref{sec:background}), we present difference logic (DL)
and some of its applications, and we describe our
approach for studying the complexity of DL.
In short, the satisfiability problem for DL is
almost always \NP-hard and a more fine-grained analysis
becomes necessary; we will thus study the time complexity
of DL together with its parameterized complexity under
natural structural parameters.
Our time complexity results are discussed
in Section~\ref{sec:intro-time} while our parameterized results are
discussed in Section~\ref{sec:intro-parameterized}.
Finally, an outline of the article is given in
Section~\ref{sec:outline}.

\subsection{Background}
\label{sec:background}

{\em Difference Logic} is a fragment of linear arithmetics
where atoms are constraints of the form $x+k \leq y$ for variables $x,y$ (with some numeric domain such as ${\mathbb Q}$ or ${\mathbb Z}$) and some integer $k$.
The {\em satisfiability} problem 
for DL is the computational problem of deciding the truth
of sentences
\[\exists x_1,\dots,x_n . \phi\]
where $\phi$ is a quantifier-free formula over variable set
$\{x_1,\dots,x_n\}$.
The satisfiablity problem for
conjunctions of difference atoms
is solvable in polynomial time (by, for instance, the Floyd-Warshall algorithm), while adding various logical features often leads to computational hardness. We note, for instance, that the satisfiability problem is \NP-hard (since the satisfiability problem for propositional logic is \NP-hard~\cite{Cook:stoc71}) and that the problem of deciding the truth of an arbitrary formula is \PSPACE-hard
(since deciding the truth of quantified propositional formulas is \PSPACE-hard~\cite{Stockmeyer:Meyer:stoc73}).
These complexity results hold both for rational and integer variable domains.
DL is a well-studied formalism due to its many applications: the archetypal example is from verification where
timed automata have natural connections with
DL~\cite{Alur:cav99,Niebert:etal:ftrtft2002}.
Other
important applications include
the channel assignment problem
(which is a central problem in telecommunications~\cite{Audhya:etal:wcmc2011,Kral:dam2005}),
unit interval problems (with applications in bioinformatics
and graph theory~\cite{Golumbic:etal:aam94,pe1997satisfiability}),
and problems in connection with answer set programming~\cite{Lierler:Susman:tplp2017,Niemela:amai2008}---all of these
can be viewed as restricted variants of DL.
Applications like these and the relative simplicity of DL have made it into
one
of the most ubiquitous theories in the context
of {\em satisfiability modulo theories} (SMT)~\cite{Barrett:etal:SMT,Candeago:etal:sat2016,Nieuwenhuis:etal:jacm2006}.
DL is also interesting from a complexity-theoretic point of view. One example is
the {\em max-atom} problem (see the
paper by Bezem et al.~\cite{Bezem:etal:lpar2008} or Section 6 in~\cite{Bodirsky:Mamino:survey})
that can be viewed as a severely restricted version of DL.
This problem is polynomial-time equivalent to problems such as mean pay-off games,  scheduling under and-or precedence constraints, and finding solutions
to certain classes of equations.
The max-atom problem is intriguing
since it is known to be in \NP $\cap$ co\NP\ but no
polynomial-time algorithm has yet been identified.

DL is of major importance in AI but the connections
are in general not clearly spelled out in the literature.
Spatial-temporal reasoning is a fundamental task in AI and
one of the most influential formalisms is
the {\em simple temporal problem} (STP) that was
first
proposed in an AI context by Dechter et al.~\cite{Dechter:etal:ai91}.
It is a constraint satisfaction problem (CSP) over a constraint language 
with relations
$$\{ (x,y) \in \rationals^2  :  x - y \in [l,u] \}$$
where 
$\ell,u \in \rationals \cup \{-\infty, + \infty\}$
and $[\ell,u]$ denotes a closed interval.
We refer to constraints using such relations as \emph{simple constraints}.
The close relationship to DL is obvious.
The STP formalism is often generalized so that the
intervals may be half-closed, open, or a single point.
Dechter et al.~\cite[Sec. 7]{Dechter:etal:ai91} point
out that this generalization apparently do not have any adverse effects
and,
in particular, the resulting CSP is still solvable
in polynomial time. 
Even though STPs have proven to be immensely useful in AI, their expressive power is limited.
Thus, a common way of obtaining increased expressibility is to introduce
disjunctions in various
ways~\cite{barber2000reasoning,Dechter:etal:ai91,Oddi:Cesta:ecai2000,Stergiou:Koubarakis:ai2000}. From the DL perspective, this is equivalent to considering 
DL formulas on
conjunctive normal form and restricting the set of allowed 
clauses in various ways.
The resulting formalisms
are highly relevant in an AI context. Well-known examples can be found in
automated planning~\cite{Gerevini:etal:jair2006,BrentVenable:ijcai2005}
and multi-agent systems~\cite{Bhargava:Williams:aamas2019,Boerkoel:Durfee:aaai2013}. Stergiou \& Koubarakis~\cite[Sec. 7]{Stergiou:Koubarakis:ai2000},
Tsamardinos \& Pollack~\cite{Tsamardinos:Pollack:ai2003} and Peintner et al.~\cite{Peintner:etal:cp2007}
discuss various other applications, and
Zavatteri et al.~\cite{Zavatteri:etal:constraints2023}
have recently presented a
large-scale
evaluation of software for solving DTPs.

We traditionally view a computational problem as
intractable if it is \NP-hard. \NP-hardness rules out polynomial-time
algorithms (assuming \PP $\neq$ \NP), but it does not say anything
about the time complexity of the best possible algorithm.
Recent advances in complexity theory allow us to prove
conditional lower bounds via restricted reductions from
complexity-theoretic conjectures that are stronger than the
\PP $\neq$ \NP conjecture.
This methodology has
enabled proving close-to-optimal bounds on time
complexity for a multitude of problems
assuming suitable conjectures, cf.
the textbook by Gaspers~\cite{Gaspers:ETA}.
The goal of this article is to analyze the satisfiability problem for DL
following this methodology.
Our time complexity results reveal that many severely restricted variants of 
DL cannot be solved in a reasonable amount of time under
the Exponential-Time Hypothesis (ETH).
This computational hardness makes it worthwhile to
use
parameterized complexity
for analyzing DL with
restricted interactions between variables and constraints.

We need some definitions and notation to facilitate the discussion of the problems that we will study. In the sequel, we restrict ourselves to the satisfiability problem for DL over rational numbers where the input is
in CNF, and we study various ways of restricting the allowed clauses. 
We will return to DL without these restrictions
in Section~\ref{sec:extensions}.
The restriction to CNF formulas allows us to view the satisfiability problem for DL as a constraint satisfaction
problem where the constraint language correspond to the allowed clauses.
Our clause restrictions will be based on
two parameters: {\em arity} and {\em coefficient bounds}. The arity bounds the number of distinct
variables that may appear in a clause.
It is closely connected to the {\em length}
of a clause, i.e. the maximum number of literals,
since
if a clause has length $k$, then its arity is at most $2k$.
The coefficient bound simply equals the
maximum over the absolute values of constants appearing
in clauses.

We continue by introducing the maximally expressive
constraint language $\D$. 
We consider intervals over $\rationals$ with endpoints in $\integers \cup \{-\infty,+\infty\}$.
The intervals may be open, closed, half-closed, or a single point.
Let ${\mathbb I}$ denote the set of these intervals and
let $\D$ contain all relations
\[
    \textstyle \{ ({x_1},\dots,{x_t}) \in \rationals^t  : \bigvee_{\ell=1}^{m} x_{i_\ell} - x_{j_\ell} \in I_{\ell} \}
\]
for arbitrary $t,m \geq 1$ where $i_\ell, j_\ell \in \range{1}{t}$ and $I_\ell \in {\mathbb I}$ 
for all $1 \leq \ell \leq m$. 
We remark that one may equally well use the reals instead of
the rationals as the underlying domain.
The CSP for $\D$ is known
 as the {\em disjunctive temporal problem} (DTP) in the AI literature.
It is easy to verify that CSP$(\D)$ is in \NP\ since the STP is solvable
in polynomial time.
Given a relation $R \in \D$, let $K(R)$ denote the set of numerical bounds appearing in $R$, e.g. for 
$$R= \{(x,y,z) \in \rationals^3 : (-\infty < x - y \leq 3) \lor (0 \leq x - z < 6) \}$$ we have 
$K(R) = \{3, 0, 6\}$.
If $X$ is a set of relations, then the definition of $K$ extends naturally: $K(X) = \bigcup_{R \in X} K(R)$.
Let ${\bf A} \subseteq {\bf D}$ and define $\num{{\bf A}} = \max \{ \abs{a} : a \in K({\bf A}) \}$, i.e. $\num{{\bf A}}$ is the
least
upper bound on absolute values of all numerical bounds appearing in the relations of ${\bf A}$.
We 
 let~$\D_{a,k}$ (where $a,k \in {\mathbb N} \cup \{\infty\}$) 
 denote the class of relations of arity at most $a$ and with
 $\num{\D_{a,k}} \leq k$.

We illustrate the basic definitions with an example: consider 
{\em Allen's interval algebra}~\cite{allen1983maintaining} 
restricted so that the intervals are only allowed to have unit length.
This formalism (which is referred to as the {\em unit Allen algebra)} 
has, for example, applications in bioinformatics
and graph theory~\cite{Golumbic:etal:aam94,pe1997satisfiability}.
Given a closed interval $I$, we let
$I^-$ and $I^+$ denote 
the left and the right endpoint, respectively.
We let $\unitallen$ denote a binary structure based on
the following relations:
\begin{alignat*}{3}
    &I \{p\} J &&\qquad I \text{ precedes } 
    J &&\qquad I^+ < J^- \\
    &I \{m\} J &&\qquad I \text{ meets } 
    J &&\qquad I^+ = J^- \\
    &I \{o\} J &&\qquad I \text{ overlaps } 
    J &&\qquad I^- < J^- \text{ and } J^- < I^+ \text{ and } I^+ < J^+ \\
    &I \{e\} J &&\qquad I \text{ equals } 
    J &&\qquad I^- = J^- \text{ and } I^+ = J^+
\end{alignat*}
Note that relations $p, m, o$ admit converses 
$p^{-1}, m^{-1}, o^{-1}$ while the relation $e$ is symmetric.
We let the structure
$\unitallen$ contain every disjunction of the basic relations.
Formally, let $\mathbb{U}$ denote the set of all unit intervals 
on the real line.
$\unitallen$ contains
$ \{ (I,J) \in \mathbb{U}^2 : \bigvee_{r \in S} I \{r\} J \} $
for every $S \subseteq \{p,m,o,e,o^{-1},m^{-1},p^{-1}\}$.
Observe that every basic relation in 
the unit Allen algebra can be expressed 
as a simple relation in $\disjtemp{2,1}$ over 
the left endpoints of the intervals, i.e.
\begin{align*}
    I \{p\} J &\quad \iff 
    \quad I^- - J^- \in (-\infty, -1),  \\
    I \{m\} J &\quad \iff 
    \quad I^- - J^- \in \{-1\},  \\
    I \{o\} J &\quad \iff 
    \quad I^- - J^- \in (-1, 0),  \\
    I \{e\} J &\quad \iff 
    \quad I^- - J^- \in \{0\},
\end{align*}
and similarly for the converse relations.
Moreover, every simple relation in 
$\disjtemp{2,1}$ can be expressed as 
a basic relation of the unit Allen algebra
since the correspondence is one to one.
This reasoning naturally extends to
taking disjunctions of simple/basic relations.  
Thus, $\csp(\unitallen)$ and $\csp(\disjtemp{2,1})$
are the same computational problem,
and any upper/lower bound that applies to
one of the problems also applied to the other.

Let us now summarize the computational complexity of $\csp(\D_{a,k})$.
The polynomial-time solvability of $\csp(\D_{2,0})$ follows from the fact that
the relations in $\D_{2,0}$ equal the
point algebra~\cite{Vilain:Kautz:aaai86}. 
It is well known that $\csp(\D_{k,0})$ for $k \geq 3$ is \NP-hard 
(this follows,
for instance, from an easy reduction from the \probfont{Betweenness} problem~\cite{gj79}). Finally, $\csp(\D_{2,1})$ (and thus $\csp(\unitallen)$) are \NP-hard via a straightforward
reduction from \probfont{3-Colourability}; \NP-hardness for $\csp(\D_{2,k})$, $k > 1$, is a direct
consequence.
These results are presented
in Table~\ref{tb:computational-summary}---we immediately see that there is an
conspicuous lack of polynomial-time solvable cases.
In the rest of this article, we will refine our understanding of the complexity of $\csp(\D_{a,k})$ by first analyzing its time complexity
and continue with its parameterized complexity.
We discuss these results in Sections~\ref{sec:intro-time} and~\ref{sec:intro-parameterized}, respectively.

\begin{table} \centering
\begin{tabular}{|c|c|c|c|} \hline
       & $k=0$               & $1 \leq k < \infty$         & $k$ unbounded \\ \hline

$a = 2$ & $\in \PP$   &  \NP-complete  & \NP-complete    \\ \hline
$a \geq 3$ & \NP-complete        &  \NP-complete   & \NP-complete  \\ \hline
\end{tabular}

\caption{Summary of computational complexity landscape for $\csp(\D_{a,k})$.}
\label{tb:computational-summary}
\end{table}

\subsection{Time Complexity}
\label{sec:intro-time}

We prove the following results concerning the time complexity of DTPs.
Our lower bounds are based on the {\em Exponential Time Hypothesis} (ETH)
by Impagliazzo et al.~\cite{impagliazzo2001problems}, i.e.
the {\sc 3-Satisfiability} problem cannot be solved
in $2^{o(n)}$ time, where $n$ is the number of variables.
We let $\D_{a,k}^{\leq}$ denote the subset of $\D_{a,k}$
where the relations are defined by only using closed intervals.
\begin{enumerate}
\item
  $\csp(\D)$ is solvable in $2^{O(n(\log{n}+\log{k}))}$ time (Corollary~\ref{thm:generalupperbound}).
\item
  $\csp(\D_{2,k})$ is solvable in $2^{O(n \log \log n)}$ time (Theorem~\ref{thm:gammaonetime}).
\item
  $\csp(\D_{4,0})$ and $\csp(\D^{\leq}_{3,1})$ are not solvable in $2^{o(n \log n)}$ time
  (Theorems~\ref{thm:omegakresult} and~\ref{thm:D3,1-hardness}).
\item
  $\csp(\D^{\leq}_{2,\infty})$ is not solvable in $2^{o(n (\log n + \log k))}$ time
  (Theorem~\ref{thm:binary-hardness}).
\item
  For every $c>1$, there exist  $k \geq 0$ and 
  ${\bf A} \subseteq \D^{\leq}_{2,k}$ such that $\csp({\bf A})$
  cannot be solved in $O(c^n)$ time (Theorem~\ref{thm:d2klowerbound}).
\end{enumerate}

We additionally use a result by 
Eriksson and Lagerkvist~\cite[Section~3]{Eriksson:Lagerkvist:ijcai2021}.

\begin{theorem}[\cite{Eriksson:Lagerkvist:ijcai2021}] \label{thm:D30}
$\csp(\D_{3,0})$ is solvable in $2^{O(n)}$
but not in $2^{o(n)}$ time (if the ETH is true).
\end{theorem}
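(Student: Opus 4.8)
The plan is to prove the two bounds separately, and the substantive work is entirely on the upper bound.

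\emph{Upper bound.} Every relation in $\D_{3,0}$ has arity at most three and, since interval endpoints lie in $\{0,-\infty,+\infty\}$ (and writing $x\le y$ as $x<y\lor x=y$), is a disjunction of point-algebra atoms $x_i<x_j$, $x_i=x_j$, $x_i>x_j$ between variables of its scope. I would solve $\csp(\D_{3,0})$ by divide-and-conquer. Given an instance on $n$ variables $V$, any solution induces a weak order on $V$; I branch over an ordered partition $V=V_1\sqcup V_2\sqcup V_3$, meaning every variable of $V_1$ is to lie below every variable of $V_2$, which lies below every variable of $V_3$. Once this layering is fixed, every atom whose two variables lie in different layers is decided, and --- this is where arity three is essential --- if a three-variable constraint has its scope split $2$--$1$ across the layers, then the disjunction of its atoms on the singleton side is a constant, so the constraint is either already satisfied or collapses to the disjunction of its atoms on the two-variable side. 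Hence, after fixing the layering, each constraint is either satisfied or becomes a $\D_{3,0}$-constraint living entirely inside one of $V_1$, $V_2$, $V_3$, and the instance splits into three independent subinstances.

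It remains to pick layerings that shrink the parts we recurse into. If the weak order induced by some solution has all classes of size $\le n/3$, there is a layering with $V_2=\emptyset$ and $|V_1|,|V_3|\le 2n/3$; otherwise some class of size $>n/3$ may be taken as $V_2$, forcing $|V_1|,|V_3|<2n/3$, and since that class is a single block, its subinstance (all its variables equal) is checked in polynomial time with no recursion. So the algorithm enumerates all layerings with $|V_1|,|V_3|\le 2n/3$ --- there are $2^{O(n)}$ of them --- recurses on $V_1$ and $V_3$, and on $V_2$ either recurses (if $|V_2|\le 2n/3$) or verifies the all-equal assignment directly. This gives $T(n)\le 2^{O(n)}\cdot T(2n/3)$, which solves to $T(n)=2^{O(n)}$ up to a polynomial factor in the input size.

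\emph{Lower bound.} I would reduce from $\sat[3]$. By the sparsification lemma of Impagliazzo et al.~\cite{impagliazzo2001problems} we may assume $O(n)$ clauses, so under the ETH $\sat[3]$ with $O(n)$ clauses has no $2^{o(n)}$-time algorithm; the standard \NP-hardness reduction to the \probfont{Betweenness} problem (asking for a strict total order placing a designated element of each given triple strictly between the other two) is linear, so \probfont{Betweenness} on $N$ elements and $O(N)$ triples is not in $2^{o(N)}$. Finally, \probfont{Betweenness} embeds into $\csp(\D_{3,0})$: distinctness of a triple's elements $a,b,c$ is enforced by the binary constraints $a\ne b$, $a\ne c$, $b\ne c$, where $x\ne y$ abbreviates $x-y\in(-\infty,0)\lor x-y\in(0,+\infty)$; and, given distinctness, ``$b$ lies strictly between $a$ and $c$'' is equivalent to the two ternary $\D_{3,0}$-constraints $(a<b)\lor(c<b)$ and $(b<a)\lor(b<c)$. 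A solution of the resulting $O(N)$-variable instance, after breaking any remaining ties, yields a \probfont{Betweenness} solution, and conversely; thus a $2^{o(n)}$-time algorithm for $\csp(\D_{3,0})$ would refute the ETH.

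\emph{Main obstacle.} The difficulty is to get the exponent linear rather than $n\log n$: enumerating weak orders only gives the $2^{O(n\log n)}$ bound that already follows from Corollary~\ref{thm:generalupperbound}, and one must show that the arity-$3$ restriction turns this into a balanced recursion. The points that need care are verifying that a $2$--$1$ scope split really leaves nothing to decide on the singleton side, that a layering with a constant-factor size reduction on the recursed parts always exists (including when there are two large equality classes), and that an oversized middle layer can be offloaded to a direct polynomial-time check instead of a recursive call.
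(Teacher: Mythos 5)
Your argument is essentially correct, but be aware that the paper does not prove Theorem~\ref{thm:D30} at all: it imports the result from Eriksson and Lagerkvist, whose algorithm is based on dynamic programming, so there is no in-paper proof to compare against, and your route is a genuinely different, self-contained one. For the upper bound you run a balanced divide-and-conquer over ordered $3$-partitions with strict separation between layers, using the fact that with arity $3$ and coefficient bound $0$ any constraint whose scope is split $2$--$1$ across layers either is decided or collapses to a binary point-algebra constraint inside one layer; this is close in spirit to the paper's own \textsc{ThreeSplit}/\textsc{FiveSplit} strategy for $\csp(\disjtemp{2,k})$ (Theorem~\ref{thm:gammaonetime}), but the $k=0$ setting makes the separation clean enough that your single case distinction (all equality classes of size at most $n/3$ versus one class larger than $n/3$, with an oversized middle class handled by the all-equal check) yields a recursion $T(n)\le 2^{O(n)}\cdot T(2n/3)$ and hence $2^{O(n)}$, rather than the $2^{O(n\log\log n)}$ obtained there. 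Your balance analysis and the soundness/completeness of the split are fine; the only details worth spelling out are the ones you gloss: a constraint whose scope meets all three layers is fully decided (reject the layering if it is violated, including the case where the surviving disjunction is empty), and recombining the three partial solutions uses translation invariance to place them in strictly increasing position, which is compatible with all decided cross-layer atoms since these are strict inequalities or disequalities. For the lower bound, your chain sparsified $\sat[3]\rightarrow\probfont{Betweenness}\rightarrow\csp(\D_{3,0})$ is the natural quantitative sharpening of the paper's remark that \NP-hardness of $\csp(\D_{a,0})$ for $a\ge 3$ follows from \probfont{Betweenness}; the ternary encoding $(a<b)\lor(c<b)$, $(b<a)\lor(b<c)$ plus pairwise disequalities is correct, and the argument goes through provided the 3-SAT-to-\probfont{Betweenness} reduction is linear in the number of elements, which the standard clause-by-clause construction is (this should be stated or verified explicitly). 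What your approach buys is an elementary proof uniform with the paper's divide-and-conquer machinery; the cited dynamic-programming proof is an alternative that avoids the enumeration of $3^n$ partitions per level.
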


The results are summarized in Table~\ref{tb:time-summary} and we see that the upper and lower bounds
are reasonably close.
The lower bounds hold for constraint languages that do not use strict inequalities except for
$\csp(\D_{a,0})$, $a \geq 2$; an instance
of $\csp(\D^{\leq}_{a,0})$ is always satisfiable by assigning each variable  value 0.
The results concerning $\csp(\D_{2,k})$ indicates that there is no uniform single-exponential
algorithm for $\csp(\disjtemp{2,k})$. 
The result does not, however, rule out the possibility that 
$\csp(\disjtemp{2,k})$ can be solved in $2^{c_k \cdot n}$ time, 
where $c_1,c_2,\ldots$ is an increasing sequence.
All results in Table~\ref{tb:time-summary} remain intact if we restrict the variables to take integer values only (see
Section~\ref{sec:integerdomains}).
We remark that our main goal is in delineating single-exponential vs super-exponential
running times, for which the ETH is a reasonable starting point.
To obtain more fine-grained lower bounds, e.g. 
rule out concrete constants in the bases of exponential functions,
one typically needs to rely on stronger hypotheses like the {\em strong} ETH~\cite{Calabro:etal:iwpec2009}.

\smallskip

Our algorithm for $\csp(\D)$ is based on proving
a {\em small solution property}: every satisfiable instance of $\csp(\D)$
has a solution that assigns sufficiently small
values to the variables. 
The small solution property is not very common in infinite-domain CSPs but it is, for instance, known to hold for 
the max-atom problem \cite{Bezem:etal:lpar2008} and 
the CSP problem for {\em unit two variables per inequality} relations~\cite{Seshia:etal:jsbmc2007}.
Our proof utilizes certain ordering properties inherent in $\D$ together with a method for handling the integer and fractional part of the variables independently; this approach is distinctly different compared to the proof
    techniques used in~\cite{Bezem:etal:lpar2008} and~\cite{Seshia:etal:jsbmc2007}.
With the aid of this
result, we can enumerate a suitable collection of assignments and check whether at least one of them satisfies all constraints in the instance.
The small solution property will be important once again when we consider the parameterized setting (see Section~\ref{sec:intro-parameterized}).
Our algorithm for $\csp(\D_{2,k})$ is based on a non-trivial
divide-and-conquer approach. The relations in
$\D_{2,k}$ exhibit even stronger ordering properties
than the relations in $\D$ and this allows us to
show that any solution for an instance $\instance$ of
$\csp(\disjtemp{2,k})$ suggests a natural split of the whole instance
into either two or three subinstances sharing only a small number of
variables. Hence, our algorithm enumerates all
possible decompositions into two or three subinstances with small variable overlap and recurses on
those for every possible assignment of the shared variables. An immediate consequence of this
algorithm is the following result 
(since $\csp(\unitallen)$ and $\csp(\disjtemp{2,1})$
are the same computational problem).

\begin{proposition}
    \label{prop:unitallen-time-ub}
    $\csp{(\unitallen)}$ is solvable in $2^{O(n \log \log n)}$ time.
\end{proposition}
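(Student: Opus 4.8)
The plan is to derive Proposition~\ref{prop:unitallen-time-ub} as an immediate corollary of Theorem~\ref{thm:gammaonetime}, using the equivalence between $\csp(\unitallen)$ and $\csp(\disjtemp{2,1})$ established earlier in the excerpt. Concretely, the excerpt already argues that each basic relation of the unit Allen algebra corresponds one-to-one to a simple relation in $\disjtemp{2,1}$ over the left endpoints of the intervals (e.g.\ $I\{p\}J \iff I^- - J^- \in (-\infty,-1)$, and so on), and that this correspondence lifts to disjunctions. Hence $\csp(\unitallen)$ and $\csp(\disjtemp{2,1})$ are literally the same computational problem up to a trivial, linear-time syntactic translation of the input.

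First I would invoke Theorem~\ref{thm:gammaonetime}, which states that $\csp(\D_{2,k})$ is solvable in $2^{O(n\log\log n)}$ time. Specializing to $k=1$ gives that $\csp(\disjtemp{2,1})$ is solvable in $2^{O(n\log\log n)}$ time. Second I would note that, given an instance of $\csp(\unitallen)$ on $n$ intervals, one can in linear time produce an equivalent instance of $\csp(\disjtemp{2,1})$ on the $n$ left-endpoint variables, simply by rewriting each unit-Allen constraint as the corresponding disjunction of interval constraints on the differences of left endpoints according to the dictionary displayed in the excerpt (and, for $o$, noting that the conjuncts $I^-<J^-$, $J^-<I^+$, $I^+<J^+$ all collapse to the single difference constraint $I^- - J^- \in (-1,0)$ because the intervals have unit length). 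Since this translation preserves both satisfiability and the number of variables, running the algorithm of Theorem~\ref{thm:gammaonetime} on the translated instance solves the original one within the claimed bound. This completes the argument.

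There is essentially no obstacle here: the only thing to be careful about is that the translation does not blow up the number of variables (it does not — the interval $I$ is represented by the single rational $I^-$, its unit length being implicit) and that the coefficient bound of the translated instance is indeed at most $1$ (it is — all constants appearing are among $\{-1,0\}$, possibly together with $\pm\infty$, which do not count toward $\num{\cdot}$). The substantive work is entirely contained in Theorem~\ref{thm:gammaonetime}, whose divide-and-conquer proof is discussed separately; Proposition~\ref{prop:unitallen-time-ub} is merely the application of that theorem to a concrete and well-motivated special case.
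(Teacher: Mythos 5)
Your proposal is correct and follows exactly the paper's own route: the paper derives the proposition as an immediate consequence of Theorem~\ref{thm:gammaonetime} applied with $k=1$, using the previously established fact that $\csp(\unitallen)$ and $\csp(\disjtemp{2,1})$ are the same computational problem via the left-endpoint translation. Your additional checks (no blow-up in the number of variables, coefficient bound $1$) are sound and merely make explicit what the paper leaves implicit.
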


\smallskip

Our lower bounds are based on a mixture of related ideas.
We exploit the lower bound on the $(k \times k)$-\probfont{Independent Set}
problem by Lokshtanov,~Marx~and~Saurabh~\cite{lokshtanov2018slightly}, and 
the lower bound by Traxler~\cite{traxler2008time}. 
The latter result concerns binary CSPs over finite domains,
where the complexity is measured with respect to 
the number of variables.
Intuitively, Traxler shows that, under the ETH, 
the complexity of binary CSPs grows 
together with the domain size.
We illustrate the main technical idea by an example.
Suppose an instance of a CSP over the domain $\{1,2,3\}$ 
has two variables $v_1, v_2$ and two unary constraints:
$v_1 \in \{1,2\}$ and
$v_2 \in \{2,3\}$.
One can reduce it to a CSP over the domain
$\{1,2,3\}^2$
with a single constraint
$v \in \{1,2\} \times \{2,3\}$, where
$\{1,2\} \times \{2,3\} = \{ (1,2), (1,3), (2,2), (2,3) \}$.
Here variable $v$ encodes the pair of variables $(v_1, v_2)$.
Applying the same idea, one can reduce any
instance of binary CSP over domain $d$ with $n$ variables
to a binary CSP over domain $d^r$ with roughly $n/r$ variables,
for any constant $r$.
Thus, with increased domain size, 
the number of variables required to 
encode the same set of constraints decreases.
Lokshtanov~et~al. push this idea to the limit,
where the domain size and the number of variables 
are roughly equal.
A helpful technical tool that we use 
in adapting these results to $\csp(\D_{a,k})$
are {\em Sidon sets}. 
A set $S$ of natural numbers is called a Sidon set if 
all pairwise sums of its elements are distinct, i.e. the equation 
$a+b=c+d$ with $a,b,c,d \in S$ is only solvable when $\{a,b\}=\{c,d\}$. 
Sidon sets are also used 
in our lower bound proofs in the parameterized case.



\begin{table} \centering
\begin{tabular}{|c|c|c|c|} \hline
{\bf Upper bounds}        & $k=0$               & $1 \leq k < \infty$         & $k$ unbounded \\ \hline

$a = 2$ & $\in \PP$   &  $2^{O(n \log \log n)}$  & $2^{O(n(\log n + \log k))}$   \\ \hline
$a = 3$ & $2^{O(n)}$ & $2^{O(n \log n)}$  & $2^{O(n(\log n + \log k))}$ \\ \hline
$a \geq 4$ & $2^{O(n \log n)}$    &  $2^{O(n \log n)}$    & $2^{O(n(\log n + \log k))}$ \\ \hline \hline

{\bf Lower bounds}        & $k=0$               & $1 \leq k < \infty$         & $k$ unbounded \\ \hline

$a = 2$ & $-$            &  $(*)$    & $2^{o(n (\log n + \log k))}$  \\ \hline
$a = 3$ &   $2^{o(n)}$          &    $2^{o(n \log{n} )}$     &  $2^{o(n (\log n + \log k))}$   \\ \hline
$a \geq 4$ &  $2^{o(n \log n)}$      &  $2^{o(n \log n )}$     & $2^{o(n (\log n + \log k))}$ \\ \hline
\end{tabular}

\caption{Summary of time complexity landscape for $\csp(\D_{a,k})$. $(*)$ means that for every $c>1$, there exists $k \geq 0$ and
    ${\bf A} \subseteq \disjtemp{2,k}$ such that $\csp({\bf A})$
    cannot be solved in $O(c^n)$ time.}
\label{tb:time-summary}
\end{table}

We conclude this section with a few words about related problems
from the literature.
The $\csp(\D)$ problem can be expressed in the {\em existential theory of the reals} ($\exists \reals$).
An $\exists \reals$-formula is a Boolean combination of atomic
predicates of the form $p(x_1, \dots, x_n) \odot 0$,
where $p$ is a real polynomial and $\odot \in \{<,\leq,=,\geq,>,\neq\}$.
Renegar's algorithm~\cite{renegar1992computational}
decides the satisfiability problem for
$\exists \reals$-formulas
in $L \log L \log \log L \cdot (md)^{O(n)}$ time
where
$L$ is the number of bits needed to represent the coefficients in the polynomials, 
$m$ is the number of polynomials in the sentence, 
$d$ is maximum among total degrees of the polynomials, and 
$n$ is the number of variables.
Observe that instances of $\csp(\D)$ can be written as
$\exists \reals$-formulas by replacing atomic formulas of the form
$x - y \leq a$ with $p(x,y) \leq 0$ where $p(x,y) = x - y - a$.
An instance ${\cal I}$ of $\csp(\D)$ with
$n$ variables and $k = \num{\cal I}$
can have $O(n^2k)$ atomic formulas:
there are $\binom{n}{2}$ pairs of variables
and $O(k)$ possible bounds can be expressed on their difference.
We are allowed to use disjunctions, which can be applied to
an arbitrary subset of the $O(n^2k)$ atomic formulas.
Thus, cast as a $\exists \reals$-formula,
$I$ has $m \leq 2^{O(n^2k)}$ polynomials of degree $d = 1$.
This leads to a $2^{O(n^3k)}$-time algorithm for $\csp(\D)$
and, consequently, $2^{O(n^3)}$ time for $\csp(\disjtemp{\infty,k})$. 
For binary constraint languages, Renegar's algorithm yields
better results with $2^{O(n (\log n + \log k))}$ time for $\csp(\disjtemp{2,\infty})$ 
and $2^{O(n \log n)}$ time for $\csp(\disjtemp{2,k})$
because only $O(n^2 k)$ disjunctive formulas are available.
In fact, the running time for $\csp(\disjtemp{2,\infty})$ obtained this way matches our result. 
However, we claim that our algorithm represents 
a very simple and natural approach to solving this problem.
While asymptotically the result are the same,
Renegar's algorithm solves a much more general problem,
and the hidden constants in its running time are astronomical
(see e.g. the practical evaluation in~\cite{hong1991comparison}).

Let us turn our attention to lower bounds.
Soca{\l}a~\cite{socala2016tight} shows that the {\em channel assignment} problem cannot be solved
in $2^{o(n \log n)}$ time under the ETH. This problem can be viewed as CSP$({\bf A}_{\rm ca})$
where ${\bf A}_{\rm ca}$ contains the relation $\{(x,y) \in {\mathbb N}^2 :  |x-y| \geq a\}$ for every $a \in {\mathbb N}$. This result implies that $\csp(\D^{\leq}_{2,\infty})$ is not solvable in
$2^{o(n \log n)}$ time
but it does not directly imply our stronger
$2^{o(n (\log n + \log k))}$ lower bound for integer solutions (that is derived
by combining Theorem~\ref{thm:binary-hardness} and Lemma~\ref{lem:integer-ak}). 
Unit two variables per inequality (UTVPI) relations
are
defined as
$\{(x,y) \in \integers^2 : ax+by \geq c\}$
where $a,b \in \{-1,0,1\}$ and $c \in \integers$.
This is a well-studied and interesting generalization of
$\csp(\D^{\leq}_{2,\infty})$ over the integers;
Schutt and Stuckey
write the following~\cite[p. 514]{Schutt:Stuckey:informs2010}.
\begin{quotation}
Unit two-variable-per-inequality (UTVPI) constraints form one of the largest class of integer
constraints which are polynomial time solvable (unless \PP\ = \NP). There is considerable interest
in their use for constraint solving, abstract interpretation, spatial databases, and theorem
proving.
\end{quotation}
Seshia et al.~\cite{Seshia:etal:jsbmc2007} have presented an algorithm
for
checking the satisfiability of first-order formulas without universal quantification
over UTVPI constraints.
This algorithm runs in $2^{O(n(\log n+\log k))}$
time (where, as usual, $n$ is the number of variables and $k$ is the coefficient bound).
Our lower bound result for $\csp(\D^{\leq}_{2,\infty})$ over the integers
shows that this algorithm is
essentially optimal with respect to running time.

\subsection{Parameterized Complexity}
\label{sec:intro-parameterized}

We have seen
that $\csp(\D)$
(and many severely restricted variants) cannot be solved in single-exponential 
time  under
the Exponential-Time Hypothesis (ETH).
This motivates the search for efficiently solvable subproblems.
To this end, we use the framework of 
{\em parameterized complexity}~\cite{DowneyFellows13,book/FlumG06,book/Niedermeier06},
where the run-time of an algorithm is studied with respect to a parameter
$p\in\Nat$ and the input size~$n$.
The idea is that the parameter describes the structure of
the instance in a computationally meaningful way.
Here, the most favorable complexity class is \FPT
(\emph{fixed-parameter tractable}),
which contains all problems that can be decided 
in $f(p)\cdot n^{O(1)}$ time, where $f$ is a computable
function.
The next best option is the
complexity class \XP, which contains all problems decidable
in $n^{f(p)}$ time, i.e. the problems solvable in polynomial time
when the parameter $p$ is bounded.
Clearly, $\FPT \subseteq \XP$ and
this inclusion is strict 
(see e.g.~\cite[Cor. 2.26]{book/FlumG06}).
It is significantly 
better if a problem is in \FPT than in \XP
since the order of the polynomial factor in the former case does not depend on the parameter $p$.
Finally, the class \paraNP\ contains all problems that can be decided 
in $f(p)\cdot n^{O(1)}$ time by a non-deterministic algorithm for some computable function $f$.
It is known that
a problem is \paraNP-hard (under {\em fpt-reductions}; see Sec~\ref{sec:lb}) if it is \NP-hard for some constant value of the parameter. 
Problems that are \paraNP-hard are considered to be
significantly harder than those in \XP since a problem that is \paraNP-hard cannot be in \XP unless \PP = \NP.

A prominent method for identifying tractable fragments of CSPs 
is to restrict variable-constraint interactions (see, for instance, the
survey by Carbonnel and Cooper~\cite[Sec.~5]{Carbonnel:Cooper:constraints2016}); 
these are referred to as {\em structural restrictions}
and are commonly studied via the primal and incidence graphs associated with instances of the CSP.
The {\em primal graph} has the variables as its vertices 
with any two joined by an edge if they occur together 
in a constraint. 
The {\em incidence graph} is the bipartite graph with 
two disjoint sets of vertices corresponding to the variables and the constraints, respectively.
A constraint vertex and a variable vertex are joined by an edge if the variable occurs in the scope of the constraint.
The {\em treewidth} of such graphs has been used extensively. 
It is, for example, known
that the finite-domain CSP is in \FPT with the parameter $w+d$
if $w$ is the primal treewidth and~$d$ is the domain size~\cite{Gottlob:etal:ai2002}, 
while this is not true
(under standard complexity assumptions) if 
$w$ is the incidence
treewidth~\cite{Samer:Szeider:jcss2010}.

We now describe
our parameterized results.
It is known that
the primal treewidth is bounded
from below by the incidence treewidth~\cite{Kolaitis:Vardi:jcss2000} for
arbitrary CSP instances.
Thus, we present algorithms for $\csp(\D_{a,k})$ parameterized by incidence treewidth
and lower bounds with respect to primal treewidth.
We exhibit an \XP algorithm
for CSP$(\D_{\infty,k})$ when $k \in {\mathbb N}$. This is a
bottom-up dynamic programming algorithm along a nice
tree-decomposition of the incidence graph that exploits
the fact that CSP$(\D)$ has the small solution property. The algorithm runs
in time $(nk)^{O(w)}$ where $w$ is the treewidth of the
incidence graph. 
One may note that
CSP$(\D)$ is in \XP\ whenever the numeric values occurring in the instance
are bounded by a polynomial in the number of variables.

We complement this algorithmic result by proving that $\csp(\D_{2,k})$ for $1 \leq k < \infty$
is \Weft[1]-hard when parameterized by primal treewidth and thus not in \FPT under standard complexity-theoretic assumptions. This 
shows that significantly faster algorithms for 
$\csp(\D_{a,k})$ with $k < \infty$
are unlikely. 
This \Weft[1]-hardness result carries over to
$\csp(\D^{\leq}_{a,k})$ when $k \geq 1$ almost without extra effort; note that the condition on $k$ is required since
$\csp(\D^{\leq}_{a,0})$ is trivially in $\PP$.
The reduction is from a novel multi-dimensional variant
of the well-known \textsc{Subset Sum} problem, which we show to be
\Weft[1]-hard.
Many important problems from the AI literature such as Allen's Algebra and RCC8 are in \FPT~\cite{Dabrowski:etal:ai2023}
so even $\csp(\D_{2,1})$ is a substantially harder problem.
We finally show that $\csp(\D_{2,\infty})$ is \paraNP-hard, i.e. the
problem becomes much harder when the numeric values are unbounded.
If a language $L$ is in \NP, then all parameterized languages 
$L' \subseteq L \times {\mathbb N}$ are
members of \paraNP so $\csp(\D_{2,\infty})$ is a \paraNP-complete problem.
We summarize our results in Table~\ref{tb:summary}. 
All results for $k \geq 1$ can be found in this article,
while the result for $k=0$ was proven
by Dabrowski et al.~\cite{Dabrowski:etal:ai2023}.
We note that the results still hold if we restrict ourselves to integer 
variable domains (see Corollary~\ref{cor:w1-hard} and Section~\ref{sec:integerdomains}).
The results outlined above immediately implies the following
since $\csp(\unitallen)$ and $\csp(\disjtemp{2,1})$
are the same computational problem.

\begin{proposition}
    \label{prop:unitallen-param}
    $\csp{(\unitallen)}$ with parameter treewidth of incidence graph
    is in \XP and it is \W{1}-hard with parameter treewidth of primal graph.
\end{proposition}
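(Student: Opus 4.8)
The plan is to exploit the fact, established above, that $\csp(\unitallen)$ and $\csp(\disjtemp{2,1})$ are literally the same computational problem, and to read off both halves of the proposition from the results already obtained for $\csp(\disjtemp{\infty,k})$ and $\csp(\disjtemp{2,k})$ at the value $k=1$. The only point requiring attention is that the translation between the two problems preserves the primal and incidence graphs, so that structural parameters transfer unchanged.

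First I would spell out the translation at the level of instances. Given an instance of $\csp(\unitallen)$ with interval variables $I_1,\dots,I_n$ and constraints $C_1,\dots,C_m$, introduce one rational variable $x_j$ for each $I_j$, intended to represent the left endpoint $I_j^-$, and replace each constraint $C_i$ --- a disjunction of basic unit-Allen relations on a pair $(I_a,I_b)$ --- by the corresponding disjunction of simple $\disjtemp{2,1}$-constraints on $x_a-x_b$, using the dictionary $\{p\}\leftrightarrow(-\infty,-1)$, $\{m\}\leftrightarrow\{-1\}$, $\{o\}\leftrightarrow(-1,0)$, $\{e\}\leftrightarrow\{0\}$, together with the mirror entries $\{p^{-1}\}\leftrightarrow(1,\infty)$, $\{m^{-1}\}\leftrightarrow\{1\}$, $\{o^{-1}\}\leftrightarrow(0,1)$ for the converses. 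Since this correspondence is a bijection on basic relations and distributes over disjunction, a solution of one instance yields a solution of the other (by taking, respectively forgetting, left endpoints), so the two instances are equi-satisfiable; running the same argument in the other direction turns an arbitrary instance of $\csp(\disjtemp{2,1})$ into an instance of $\csp(\unitallen)$. Note also that $\num{\disjtemp{2,1}}=1$, so both imported results apply with $k=1$.

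The crucial observation for the parameterized statement is that this translation sends the $j$-th variable to the $j$-th variable and the $i$-th constraint to the $i$-th constraint, leaving every constraint scope untouched. Hence the primal graphs of the two instances coincide, and so do the incidence graphs; in particular the primal treewidth and the incidence treewidth are preserved exactly. Therefore the \XP algorithm for $\csp(\disjtemp{\infty,k})$ parameterized by incidence treewidth, specialised to $k=1$, places $\csp(\unitallen)$ in \XP with parameter incidence treewidth, and the \W{1}-hardness of $\csp(\disjtemp{2,k})$ with $1\le k<\infty$ parameterized by primal treewidth, specialised to $k=1$, gives \W{1}-hardness of $\csp(\unitallen)$ with parameter primal treewidth.

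I do not expect a genuine obstacle here: the entire content is that the translation is a bijection on the combinatorial skeleton of the instance, so every structural parameter --- in particular the treewidth of the primal and of the incidence graph --- carries over verbatim, with no additive or multiplicative loss. The only mild care needed is to get the dictionary for the converse relations right and to observe that the $k=1$ specialisations of the two generic results are exactly what is claimed.
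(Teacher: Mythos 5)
Your proposal is correct and follows the paper's own route: the paper proves this proposition exactly by observing that $\csp(\unitallen)$ and $\csp(\disjtemp{2,1})$ are the same computational problem (via the left-endpoint dictionary, which extends to disjunctions and preserves the instance structure, hence the primal and incidence graphs), and then importing the \XP algorithm for $\csp(\disjtemp{\infty,k})$ parameterized by incidence treewidth and the \W{1}-hardness of $\csp(\disjtemp{2,1})$ parameterized by primal treewidth at $k=1$. Your explicit verification that the translation is a bijection on the combinatorial skeleton, and your dictionary for the converse relations, are exactly the (implicit) content of the paper's argument.
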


We conclude this section by discussing some related algorithms from the literature.
Bodirsky \& Dalmau~\cite{Bodirsky:Dalmau:jcss2013} 
and Huang et al.~\cite{Huang:etal:ai2013} 
proved that $\csp({\bf A})$ is in \XP (with treewidth of
the primal graph as parameter)
for $\omega$-categorical ${\bf A}$ and
binary constraint languages ${\bf A}$ 
that have the {\em atomic network amalgamation property}~(aNAP),
respectively.
Huang et al. write that their algorithm is fixed-parameter tractable,
but this is due to non-standard terminology; according to their Theorem 6,
the algorithm runs in
$O(w^3n \cdot {\rm e}^{w^2 \log n}) = n^{O(w^2)}$ time.
These two general results apply to many interesting
problems: $\omega$-categoricity is a fundamental
property in the study of infinite-domain CSPs
and many AI-relevant CSPs have this property
(cf. the book by Bodirsky~\cite{Bodirsky:book}).
Similarly, the aNAP and other amalgamation properties
are highly important in this context, too.
However, these properties do not hold for 
the  constraint language $\D$ 
or even the fragment $\D_{2,1}$, as we will show next.

The theorem by Engeler, Ryll-Nardzewski, and 
Svenonius~(see e.g.~\cite[Theorem 6.3.1]{Hodges:1997:SMT:262326})
implies that if ${\bf A}$ is an $\omega$-categorical constraint language,
then for all $n > 1$, there are finitely many nonequivalent 
formulas over ${\bf A}$ with $n$ free variables.
This is not true for $\D_{2,1}$: 
consider the infinite sequence of formulas $\phi_2(x,y),\phi_3(x,y),\dots$
defined as follows:
\[\phi_k(x,y) \equiv \exists z_1,\dots,z_k. \; x = z_1 \land y = z_k \land 
\bigwedge_{i=1}^{k-1} z_{i+1} - z_{i} = 1 \]
and note that $\phi_k(x,y)$ holds if and only if $y=x+k-1$.
If a structure ${\bf A}$ containing binary relations has aNAP, then 
for any pair of complete atomic
instances $(V_1, C_1)$ and $(V_2, C_2)$ 
of $\csp({\bf A})$ 
that have the same constraints
over the variables in $V_1 \cap V_2$,
their union $(V_1 \cup V_2, C_1 \cup C_2)$ is satisfiable.
An instance of~$\csp({\bf A})$ is {\em complete} if there is one constraint
for every pair of variables, and it is {\em atomic}
if no constraints involve disjunctions. 
Consider the instances
\begin{align*}
  \inst_1 &= (\{ x,a,y \}, \{ a - x = 1, y - a = 1, y - x \in (1,\infty) \}), \\
  \inst_2 &= (\{ x,b,y \}, \{ b - x = 1, y - b \in (0,1), y - x \in (1,\infty) \}).  
\end{align*}
$\inst_1$ and $\inst_2$ are complete, satisfiable, atomic instances of $\csp(\D_{2,1})$, and
they agree on their intersection.
However, their union is not satisfiable, 
since $\inst_1$ implies that $y - x = 2$, 
while $\inst_2$ implies that $y - x \in (1,2)$.

Dabrowski et al.~\cite{Dabrowski:etal:ai2023} have presented
a fixed-parameter tractable algorithm
for constraint languages having the {\em patchwork}
property~\cite{Lutz:Milicic:jar2007}; this is yet another
amalgamation property.
The applicability of this
algorithm can naturally be ruled out with the
aid of the hardness results presented in Section~\ref{sec:lb}.
It is also straightforward to verify directly
that the problems we study do not have
the patchwork property: in fact, the example
above for ruling out that $\D_{2,1}$ has
aNAP also shows that $\D_{2,1}$
does not have the patchwork property.


\begin{table} \centering
\begin{tabular}{|c|c|c|c|} \hline
 {\bf Upper bounds}       & $k=0$               & $1 \leq k < \infty$         & $k$ unbounded \\ \hline

$a = 2$ & $\in \PP $            &  $\in \XP$    & $\in \paraNP$   \\ \hline
$a \geq 3$ & $\in \FPT $          &  $\in \XP$    & $\in \paraNP$    \\ \hline
\end{tabular}

\bigskip

\begin{tabular}{|c|c|c|c|} \hline
 {\bf Lower bounds}       & $k=0$               & $1 \leq k < \infty$         & $k$ unbounded \\ \hline

$a = 2$ & $ - $            &   \Weft[1]-hard   & \paraNP-hard    \\ \hline
$a \geq 3$ & $ - $          &   \Weft[1]-hard    & \paraNP-hard    \\ \hline
\end{tabular}

\caption{Summary of parameterized complexity landscape for $\csp(\D_{a,k})$}
\label{tb:summary}
\end{table}

\subsection{Outline}
\label{sec:outline}

This article is based on two conference papers~\cite{Dabrowski:etal:kr2020,Dabrowski:etal:aaai2021}.
The major differences are that (1) this article
generalizes our earlier results on various temporal
formalisms to difference logic, (2) it gives a comprehensive
picture of the time complexity landscape, (3) the
proofs are both unified and significantly simplified by the addition
of multi-purpose results such as
Theorem~\ref{lem:compact-assign}, and (4) the results are extended to both general formulas
and variables with integer domains.
The article has
the following structure.
We present the necessary preliminaries in Section~\ref{sec:prelims}.
The upper and lower bounds on time complexity are collected in
Sections~\ref{sec:upper-bounds-time} and \ref{sec:lower-bounds-time}, respectively, while the parameterized upper and lower bounds
are collected in Sections~\ref{sec:ub} and \ref{sec:lb}, respectively.
We look at two generalizations of our results in
Section~\ref{sec:extensions}: formulas that are not
in conjunctive normal form are considered in Section~\ref{sec:generalformulas}
and problems where variables have integer domains in 
Section~\ref{sec:integerdomains}.
We conclude the article in Section~\ref{sec:discussion} with a discussion of
our results.

\section{Preliminaries}
\label{sec:prelims}

In this section we provide some prerequisites.
We present the basic language of difference logic in Section~\ref{sec:difflogic}
and give a compact overview of the constraint satisfaction problem
in Section~\ref{sec:csp}. Finally, Section~\ref{sec:sidon} contains a primer on Sidon sets
that we use as a tool for proving our lower bound results.

\subsection{Difference Logic}  
\label{sec:difflogic}

We begin with some basic logical terminology.
A {\em (relational) signature} $\tau$ is a set of symbols, 
each with an associated natural number called their {\em arity}. 
A {\em (relational) $\tau$-structure} $\A$ consists of
a set $D$ (the domain), together with relations $R^\A \subseteq D^k$ 
for each $k$-ary symbol $R \in \tau$. 
To avoid overly complex notation, we sometimes do not distinguish between
the symbol $R$ for a relation and the relation $R^\A$ itself.
We also allow ourselves to view relational structures as sets and, for instance,
write expressions like $R \in \A$.
Let $\A$ be a $\tau$-structure over a domain $D$.
We say that $\A$ has arity $a$ if every relation in $\A$ has arity at most $a$.

Let $\A$ be a $\tau$-structure.
First-order formulas $\phi$ over $\A$ (or, for short, $\A$-formulas)
are defined using the logical symbols of universal and existential
quantification, disjunction, conjunction, negation,
equality, bracketing, variable symbols, the relation symbols from $\tau$, and
the symbol $\bot$ for the truth-value false. 
First-order formulas over $\A$ can be used to define relations: 
for a formula $\phi(x_1,\ldots,x_k)$ 
with free variables $x_1,\ldots,x_k$, the corresponding relation $R$
is the set of all $k$-tuples $(t_1,\ldots,t_k) \in D^k$
such that $\phi(t_1,\ldots,t_k)$ is true in $\A$. 
In this case we say that $R$ is {\em first-order definable} in $\A$.
Our definitions of relations are always parameter-free, i.e. we do not allow
the use of domain elements within them. 

Certain types
of first-order formulas are particularly interesting for our
purposes. Let $\phi$ denote a first-order formula.

\begin{itemize}
\item
$\phi$ is a {\em sentence} if it has no free variables.

\item
$\phi$ is in \emph{conjunctive normal form} (CNF), if it is a a conjunction of disjunctions of
{\em literals}, i.e., atomic formulas or their negations. A disjunction of literals is called a {\em clause}. 

\item
$\phi$ is {\em quantifier-free} if it does not contain existential and/or universal quantifiers.

\item
$\phi$ is {\em existential} if $\phi=\exists x_1,\dots,x_n . \psi$
where $\psi$ is quantifier-free.
\end{itemize}
We let $\stpfull$ denote the relational structure representing
the atomic DL formulas, i.e. the infinite set
of relations
$$ \{ (x,y) \in \rationals^2 : \ell \compare{1} \: x - y \: \compare{2} u \} $$
for any $\ell \in \integers \cup \{-\infty\}$, $u \in \integers \cup \{\infty\}$ and $\compare{1}, \compare{2} \in \{<,\leq\}$. We will sometimes
consider a restricted set ${\bf S}^{\leq}$ where $\compare{1} = \compare{2} = \; \leq$.
The satisfiability problem for DL is the following problem.

\pbDef{\probfont{DL-Sat}}
{An existential first-order sentence $\phi$ over $\stpfull$.}
{Is $\phi$ true?}

 Note that we make (without loss of generality) the sensible assumption that the bounding values are integers (see e.g. the article by Tsamardinos \& Pollack~\cite{Tsamardinos:Pollack:ai2003}):
 real values cannot in general be written down with a finite number of bits, and rational numbers can be scaled
 in a suitable way.
We use the rationals as the value domain (also without loss of generality): if there is a solution to an instance of $\csp(\D)$ over
the reals, then there is also a solution over the rationals. 
While this is not of major
importance in this article, the differences between $\reals$ and $\rationals$ sometimes causes
confusion and/or technical problems.
We refer the reader to the literature for a more thorough discussion of representational
issues~\cite{Bodirsky:Jonsson:jair2017,Jonsson:Loow:ai2013}.

\subsection{Constraint Satisfaction}
\label{sec:csp}

We continue by defining the {\em constraint satisfaction problem} (\csp).
Let ${\bf A}$ denote a relational $\tau$-structure defined on a set $D$
  of values.
  The {\em constraint satisfaction problem} over ${\bf A}$ ($\csp({\bf A})$)
  is defined as follows:

\pbDef{\csp({\bf A})}
{A tuple $(\variables,\constraints)$, where $\variables$ is a set
      of variables and $\constraints$ is a set of constraints of the form
      $R(v_1, \dots, v_a)$, where $a$ is the
      arity of $R$, $v_1, \dots , v_a \in V$, and $R \in {\bf A}$.}
{Is there a function $f : V \rightarrow D$
          such that $(f(v_1), \dots , f(v_a)) \in R$ for every
          $R(v_1, \dots , v_a) \in \constraints$?} 
          
Observe that we do not require ${\bf A}$ to have finite signature or $D$ to be a finite set. 
The structure ${\bf A}$ is sometimes referred to as a {\em constraint language}, while
the function $f$ is a {\em satisfying assignment} or simply a {\em solution}.
If $c=R(x_1,\dots,x_a)$ is a constraint, then the set $\{x_1,\dots,x_a\}$
is the {\em scope} of $c$. We denote this set by $\scope{c}$.
A basic example of a CSP is 
the STP problem: it is easy to verify that it equals $\csp{(\bf S)}$.
Another example is the
max-atoms problem: it is 
conveniently defined as a CSP with the infinite constraint language ${\bf A}_{\max}$
containing the relations
$R_a = \{(x,y,z) \in {\mathbb Q}^3 : \max(x,y)+d \geq z\}$ for every $d \geq 0$.
We note that $R_d$ is quantifier-free definable in ${\bf S}$ since
\[\max(x,y)+d \geq z \Leftrightarrow (x+d \geq z) \vee (y+d \geq z).\]

One may view $\csp({\bf A})$ with ${\bf A} \subseteq {\bf D}$ as
a restricted \probfont{DL-Sat} problem. An {\bf A}-sentence
is {\em primitive positive} if it is of the form

\[\exists x_1,\dots,x_n . \psi_1 \wedge \dots \psi_l\]
where $\psi_1,\dots,\psi_l$ are atomic formulas over ${\bf A}$, i.e.
formulas (1) $R(y_1,\dots,y_a)$ with $R \in {\bf A}$, 
(2) $y_i=y_j$, or (3) $\bot$. 
Thus, $\csp({\bf A})$ can be viewed as \probfont{DL-Sat}
restricted to primitive positive ${\bf A}$-formulas
whenever the equality relation is in ${\bf A}$.
This assumption is harmless for the CSP problem: 
adding equality to the constraint language does not affect the complexity of the CSP
up to log-space reductions (cf. Lemma 1.2.6 in \cite{Bodirsky:book}).
This connection between CSP and \probfont{DL-Sat} will be 
exploited in Section~\ref{sec:extensions}.

To simplify the presentation, we sometimes use an
alternative notation for a disjunctive constraint
$\bigvee_{\ell=1}^{m} x_{i_\ell} - x_{j_\ell} \in I_\ell$ and write 
it as a set of simple constraints $\{ x_{i_\ell} - x_{j_\ell} \in I_\ell : \ell \in \range{1}{m} \}$.
Then, an assignment satisfies the disjunctive constraint whenever it satisfies at least one simple constraint in the corresponding set. This way of viewing disjunctions
simplifies, for instance, the treatment of certificates in Section~\ref{sec:enumeration}.

When considering CSPs with
infinite
constraint languages, it is important to specify how the relation symbols are represented in the input instances. 
In our case, it would (for instance) be
sufficient to represent the
relation symbol for a relation $R$ by a quantifier-free CNF definition of $R$ using atomic formulas of the form
$x-y \odot c$ with $\odot \in \{<,\leq\}$, and coefficients $c \in \integers$ represented in binary. Such a representation has certain pleasant features: one may, for instance,
check in polynomial time whether a given
rational tuple (where the numerator and denominator are viewed as integers represented in binary) is a member of $R$ or not. Note that there are no representational issues like these when considering finite
constraint languages.

For an instance $\instance$ of $\csp({\bf A})$, we write $\norm{\inst}$ for the number of bits required to represent
$\instance$.
We primarily measure time complexity in terms of $n$ (the number of variables). 
Historically, this has been the most common way of measuring time complexity: 
for instance, the vast majority of work concerning finite-domain \csp{s} 
concentrates on the number of variables.
One reason for this is that an instance may be much larger than the number of variables.
Consider an instance of the
propositional \textsc{SAT} problem,
i.e. a propositional logical formula in CNF.
Such a formula
may contain up to $2^{2n}$ distinct clauses
if repeated literals are disallowed,
so measuring in terms of the instance size may give far too
optimistic figures. It is thus more informative to know
that \textsc{SAT} can be solved in $O^*(2^{n})$ time\footnote{The $\bigoh^*(\cdot)$ notation hides polynomial factors.}  
instead of knowing that it is solvable in
$O^*(2^{\norm{\instance}})$ time.

The various constraint languages that we will consider were defined in Section~\ref{sec:introduction}.
We note that
disjunctive temporal relations are sometimes defined in a more general way which allows for
unary atomic relations $x \in I$ (as opposed to binary atomic relations $x-y \in I$).
The standard trick for handling unary relations
is to introduce a {\em zero variable} (see \cite{barber2000reasoning}).
Solutions to CSP$(\D)$ have the following property:
if $\varphi : V \rightarrow \rationals$ satisfies an instance $(\variables, \constraints)$, 
then so does $\varphi'(v) = \varphi(v) + c$ 
where $c \in \rationals$ is an arbitrary constant.
Thus, we can pick an arbitrary variable in $\variables$ 
and assume that its value is zero:
such a variable is called a {\em zero variable}.
We can now easily express unary constraints, e.g. the constraint $x - z \in (0,2]$ 
is equivalent to $x \in (0,2]$ if $z$ is the zero variable.
Adding a single zero variable 
does not affect the time complexity with more than a multiplicative factor.


\subsection{Sidon Sets}
\label{sec:sidon}

Our lower bound results presented in Sections~\ref{sec:lower-bounds-time}
 and~\ref{sec:lb} use {\em Sidon sets}~\cite{sidon1932satz}.
 The study of Sidon sets is an important topic in additive number theory and elsewhere; see e.g.
the survey by O’Bryant~\cite{Obryant:ejc2004}
or the book by Halberstam and Roth~\cite{Halberstam:Roth:sequencebook}.
The terminology used in the literature may appear confusing: they are known under several names such as {\em Golomb rulers}, {\em Sidon sequences}, and $B_2$-sets, and the term Sidon set has 
different meanings in number theory and functional analysis.
A  Sidon set $S$ is a set of integers such that the sum of any pair of its elements is unique, i.e.  if $a + b = c + d$ for
$a, b, c, d \in S$, then $\{a, b\} = \{c, d\}$.
It is easier to work with differences in our proofs so we 
use the following equivalent condition: 
for all $a,b,c,d \in S$ such that $a \neq b$ and $c \neq d$, $a-b=c-d$ holds
if and only if $a=c$ and $b=d$.
This indicates one way of using Sidon sets: they allow us (under certain
conditions) to rewrite a
disjunction $x \neq a \vee y \neq b$
(where $x,y$ are variables and $a,b$ integers)
as a difference $x-y \neq c$ for some integer $c$.

The {\em order} of a Sidon set is the number of elements in it and the {\em length} is the difference between its maximal and minimal elements.
For example, $\{0,1,4,6\}$ is a Sidon set of order $4$ with length $6$.
We will use a particular way of constructing Sidon sets with length quadratic in their order.

\begin{proposition}[\cite{erdos1941problem}]
    \label{prop:golombconstruction}
    Let $p \geq n$ be an odd prime. Then 
    $$S_n = \left\{ pa + (a^2 \bmod p) : a \in \range{0}{n-1} \right\}$$ 
    is a Sidon set.
\end{proposition}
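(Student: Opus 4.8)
The plan is to verify directly that the purported defining property of a Sidon set holds for $S_n = \{\,pa + (a^2 \bmod p) : a \in \{0,\dots,n-1\}\,\}$. I will use the difference formulation stated just before the proposition: I must show that for $a,b,c,d \in \{0,\dots,n-1\}$ with $a \neq b$ and $c \neq d$, the equality $\bigl(pa + (a^2 \bmod p)\bigr) - \bigl(pb + (b^2 \bmod p)\bigr) = \bigl(pc + (c^2 \bmod p)\bigr) - \bigl(pd + (d^2 \bmod p)\bigr)$ forces $a = c$ and $b = d$. (I should also note that the map $a \mapsto pa + (a^2\bmod p)$ is injective on $\{0,\dots,n-1\}$, since the $pa$ terms are spaced $p$ apart and the residues $a^2 \bmod p$ lie in $\{0,\dots,p-1\}$; so $S_n$ genuinely has order $n$.)

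First I would rewrite the candidate equation by grouping the ``high'' parts (multiples of $p$) and the ``low'' parts (residues in $[0,p-1]$):
\[
p(a - b - c + d) = (c^2 \bmod p) - (d^2 \bmod p) - (a^2 \bmod p) + (b^2 \bmod p).
\]
The right-hand side is an integer of absolute value strictly less than $2p$ (it is a sum of two residues minus two residues, each in $[0,p-1]$), so the left-hand side, being a multiple of $p$ with absolute value $<2p$, must be $0$ or $\pm p$. Actually it is cleaner to reduce the whole original equation modulo $p$: since $pa \equiv 0$ and $a^2 \bmod p \equiv a^2 \pmod p$, the equation yields $a^2 - b^2 \equiv c^2 - d^2 \pmod p$, i.e.
\[
a^2 + d^2 \equiv b^2 + c^2 \pmod p.
\]
Combined with this, the ``integer part'' comparison gives $a - b = c - d$ exactly (the residue difference on the right is then forced to vanish, or one argues the coefficient of $p$ is $0$), hence $a + d = b + c$ over the integers as well — here one uses that $a,b,c,d < n \le p$ so the relevant sums are small enough that a congruence mod $p$ plus the linear relation pins things down.

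The key step — and the main obstacle to state cleanly — is the elementary number theory: from $a^2 + d^2 \equiv b^2 + c^2 \pmod p$ and $a + d = b + c$ (as integers), conclude $\{a,d\} = \{b,c\}$, and then use $a \neq b$, $c \neq d$ to get $a = c$, $b = d$. For this, set $s = a+d = b+c$ and note $a^2 + d^2 = s^2 - 2ad$, $b^2 + c^2 = s^2 - 2bc$, so the congruence becomes $2ad \equiv 2bc \pmod p$; since $p$ is odd, $ad \equiv bc \pmod p$. Now $a,d$ are the two roots of $t^2 - st + ad$ and $b,c$ are the roots of $t^2 - st + bc$ over $\mathbb{Z}/p\mathbb{Z}$; as these monic quadratics have the same sum of roots and the same product of roots modulo $p$, they are the same polynomial mod $p$, so $\{a \bmod p, d \bmod p\} = \{b \bmod p, c \bmod p\}$. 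Since all four lie in $\{0,\dots,n-1\} \subseteq \{0,\dots,p-1\}$, reduction mod $p$ is injective, so $\{a,d\} = \{b,c\}$ as sets of integers. The hypotheses $a \neq b$ and $c \neq d$ then leave only the possibility $a = c$ and $b = d$, which is exactly the Sidon condition. I expect the only delicate point to be making sure the reduction from the original integer equation to the pair ``$a+d = b+c$ and $a^2+d^2 \equiv b^2+c^2 \pmod p$'' is airtight (controlling the coefficient of $p$), and everything after that is the short Vieta's-formulas argument above.
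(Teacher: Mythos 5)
You never see the paper's own argument because there is none: the proposition is quoted from Erd\H{o}s--Tur\'an without proof, so the only benchmark is the classical argument, whose core your Vieta step reproduces correctly (from $a+d=b+c$ and $a^2+d^2\equiv b^2+c^2\pmod p$ you get $ad\equiv bc\pmod p$, the two monic quadratics agree over $\mathbb{Z}/p\mathbb{Z}$, and since all four numbers lie in $\{0,\dots,p-1\}$ this lifts to $\{a,d\}=\{b,c\}$ in $\mathbb{Z}$). The genuine gap is exactly the point you flagged and postponed: in $p(a-b-c+d)=(c^2\bmod p)-(d^2\bmod p)-(a^2\bmod p)+(b^2\bmod p)$ the right-hand side is only bounded in absolute value by $2p-2$, so $a-b-c+d\in\{-1,0,1\}$, and nothing in your argument excludes the ``carry'' cases $\pm1$. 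They cannot be excluded, because the statement as printed is false. Take $p=11$ and any $n$ with $6\le n\le 11$ (so $p\ge n$): the construction gives $f(2)=22+4=26$, $f(3)=33+9=42$, $f(5)=55+3=58$, and $f(3)-f(2)=f(5)-f(3)=16$, equivalently $f(2)+f(5)=84=f(3)+f(3)$ with $\{f(2),f(5)\}\neq\{f(3),f(3)\}$. Here $a-b-c+d=3-2-5+3=-1$ and the residue side equals $-p$: precisely the carry you needed to rule out.

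So the defect is in the displayed formula rather than in your algebra. The classical Erd\H{o}s--Tur\'an set is $\{2pa+(a^2\bmod p):a\in\{0,\dots,n-1\}\}$: with coefficient $2p$ the two residue parts sum to less than $2p$, so equality of pairwise sums forces $a+d=b+c$ and equality of the residue parts separately, the carry case disappears, and your mod-$p$ Vieta argument finishes the proof verbatim; this version still has largest element at most $2p^2$, so the paper's subsequent claim of a Sidon set of order $k$ and length $8k^2$ (via Bertrand's postulate) is unaffected. Alternatively one could keep the coefficient $p$ but require $p\ge 2n^2$, which kills the carry trivially but ruins the quadratic length bound. Either way, the step you described as ``the only delicate point'' is not a detail to be tightened: with the coefficient $p$ as printed it is a counterexample, and a correct write-up must first correct the construction (to $2p$) and only then apply your argument.
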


 We sometimes need
to ensure that the length of a Sidon set is bounded by a
polynomial in its order $k$. Indeed, Proposition~\ref{prop:golombconstruction} 
shows that there is a Sidon set
containing $k$ positive integers and whose largest element is at most $2p^2$,
where $p$ is the smallest prime number larger than or equal to $k$. This set can clearly be
constructed in polynomial time. Together with Bertrand's
postulate (see e.g. Chapter 2 in the book by Aigner and Ziegler~\cite{AignerZiegler18}) which states that for every natural
number $n$ there is a prime number between $n$ and $2n$, we see
that a Sidon set of order $k$ and length $8k^2$ can be generated in polynomial time.

\section{Upper Bounds on Time Complexity}
\label{sec:upper-bounds-time}

This section contains two main results:
a $2^{O(n(\log{n}+\log{k}))}$ time algorithm for
$\csp(\D)$ (Section~\ref{sec:generalupperbound}) and a
$2^{O(n \log \log n)}$ time algorithm for
$\csp(\D_{2,k})$ when $k < \infty$ is fixed (Section~\ref{sec:finitealgo}).
These results together with the lower bound results 
that are proved in Section~\ref{sec:lower-bounds-time} are
summarized in Table~\ref{tb:time-summary}.

\subsection{Upper Bound for $\csp(\D)$}
\label{sec:generalupperbound}

We will prove a {\em small solution property} for
$\csp(\D)$. Small solution properties are results that state that
a solvable instance of CSP$({\bf A})$ has a solution that
assign `small' values to the variables. Exactly what is meant
by `small' varies in different contexts. A concrete example is
provided by
Bezem et al.~\cite{Bezem:etal:lpar2008} for the max-atoms
problem that we encountered in Section~\ref{sec:background}:
every satisfiable instance $(V,C)$ of the max-atoms problem has
a solution $f:V \rightarrow \{0,\dots,p\}$ where

\[p = \sum_{\max(x,y)+d \geq z \in C} |d|.\]

Another example (from Section~\ref{sec:intro-time}) is
UTVPI relations. 
These are defined as
$\{(x,y) \in \integers^2 : ax+by \geq c\}$
where $a,b \in \{-1,0,1\}$ and $c \in \integers$.
Seshia et al.~\cite{Seshia:etal:jsbmc2007} prove that every satisfiable instance $(V,C)$ of the CSP over
UTVPI relations has a solution in the interval $\{-|V| \cdot k, \dots, |V| \cdot k\}$ 
where $k=\num{C}$.
This result implies that every satisfiable instance $(V,C)$ of
CSP$(\D^{\leq})$ has a solution 
$\{-|V| \cdot k, \dots, |V| \cdot k\}$ but it does not give a bound
for CSP$(\D)$---note, for instance, that CSP$(\D)$ is not guaranteed to have integer solutions,
e.g. $\{x - y \in (0,1)\}$.
Bezem et al.'s proof has a graph-theoretical flavor while 
Sesha et al.'s proof is based on a polyhedral approach. None of these methods
appear to be directly applicable to $\csp(\D)$: Bezem et al.'s proof use
intrinsic properties of the max-atom problem while Sesha et al.'s approach 
is built around the fact that solutions must assign integers to the variables.
Our proof strategy has more of an order-theoretic flavor.
Define the set 
$$\cd{n,k} = \left\{ z + \frac{q}{n} : z,q \in \naturals, \; 0 \leq z \leq (n-1)(k+1), \; {\rm and} \; 0 \leq q < n \right\}$$
for $n,k \in \naturals$.
This set will serve as a ruler, and we will show that
any satisfying assignment to an instance of $\csp(\D)$
with $n$ variables and numerical bound $k$
can be transformed into one that only chooses values
from the ruler. 
To achieve this we will split the assignment of each variable into the
integral part and the fractional part and show how to independently
transform these parts. Our starting point is the following
lemma which provides sufficient conditions for two assignments to
satisfy the same set of simple constraints. This lemma will also be
useful when proving the forthcoming Lemma~\ref{lem:independence}.

We let $\intP{x}$ denote the {\em floor} function 
(i.e. $\intP{x}$ is the largest integer less than or equal to the real number $x$) and we let
$\ratP{x}$ denote the fractional part of the non-negative real number $x$ (i.e. $\ratP{x}=x-\intP{x}$). 
To simplify the proof, we note that it is sufficient to concentrate on {\em unit} constraints,
which are defined as follows.
Let ${\bf T} \subseteq \disjtemp{2,k}$ be the constraint language with relations
\begin{align*}
  &\{ (x,y) \in \rationals^2 : x - y \in \{i\} \}, \\
  &\{ (x,y) \in \rationals^2 : x - y \in (i,i+1) \}, \text{and} \\
  &\{ (x,y) \in \rationals^2 : x - y \in (i, \infty) \}
\end{align*}
for all $i \in \integers$.
We refer to the relations in ${\bf T}$ as {\em unit} relations.
Consider constraint $x - y \in (-1,0] \cup [1,\infty)$.
An equivalent constraint can be enforced by 
a disjunction of unit constraints
$x - y \in (-1, 0) \lor x - y \in \{0\} \lor x - y \in \{1\} \lor x - y \in (1, \infty)$.
In a similar manner,
we can rewrite every disjunctive temporal constraint
as a disjunction of unit constraints. 
We are now ready to prove the main technical lemma.

\begin{lemma}\label{lem:ass-scon}
  Let $k$ be an integer and let $\phi_1 : V \rightarrow \rationals$ and
  $\phi_2 : V \rightarrow \rationals$ be two assignments of the
  variables in $V$ that satisfy
  the following two conditions:
  \begin{enumerate}
  \item For every $x,y \in V$, it holds that $\phi_1(x)-\phi_1(y)$ and
    $\phi_2(x)-\phi_2(y)$ have the same integer part up to $k+1$, i.e.
    $\min\{\intP{\phi_1(x)-\phi_1(y)},k+1\}=\min\{\intP{\phi_2(x)-\phi_2(y)},k+1\}$.\label{con:lemassscon1}
  \item For every $x,y \in V$, it holds that $\ratP{\phi_1(x)} \odot
    \ratP{\phi_1(y)}$ if and only if $\ratP{\phi_2(x)} \odot
    \ratP{\phi_2(y)}$ for every $\odot \in \{<,=,>\}$.\label{con:lemassscon2}
  \end{enumerate}
  Then, $\phi_1$ and $\phi_2$ satisfy the same simple constraints over
  $V$ with relations in $\disjtemp{2,k}$.
\end{lemma}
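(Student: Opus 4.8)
The plan is to show that for every simple constraint $x - y \in I$ with $I \in \mathbb{I}$ and endpoints bounded by $k$ in absolute value, the assignment $\phi_1$ satisfies it if and only if $\phi_2$ does. Since every disjunctive temporal constraint in $\disjtemp{2,k}$ is a disjunction of simple constraints, and an assignment satisfies a disjunction iff it satisfies one of its disjuncts, it suffices to handle a single simple constraint. Moreover, as noted in the text, every simple constraint can be rewritten equivalently as a disjunction of \emph{unit} constraints (relations from $\mathbf{T}$), so it is enough to prove the claim for the three types of unit relations: $x - y \in \{i\}$, $x - y \in (i, i+1)$, and $x - y \in (i, \infty)$, for $i \in \integers$ with $|i| \le k$ (actually $|i|\le k$ or $i$ can be taken in a bounded range once we note the endpoints of the original interval lie in $[-k,k]$).

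First I would write, for each variable $v$, $\phi_j(v) = \intP{\phi_j(v)} + \ratP{\phi_j(v)}$, so that
\[
\phi_j(x) - \phi_j(y) = \bigl(\intP{\phi_j(x)} - \intP{\phi_j(y)}\bigr) + \bigl(\ratP{\phi_j(x)} - \ratP{\phi_j(y)}\bigr),
\]
where the fractional difference lies in $(-1, 1)$. The key elementary observation is that $\intP{\phi_j(x) - \phi_j(y)}$ equals $\intP{\phi_j(x)} - \intP{\phi_j(y)}$ if $\ratP{\phi_j(x)} \ge \ratP{\phi_j(y)}$, and equals $\intP{\phi_j(x)} - \intP{\phi_j(y)} - 1$ otherwise; and correspondingly whether $\phi_j(x) - \phi_j(y)$ is an integer is determined by whether $\ratP{\phi_j(x)} = \ratP{\phi_j(y)}$. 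Condition~\ref{con:lemassscon2} says the relative order of $\ratP{\phi_1(x)}$ and $\ratP{\phi_1(y)}$ matches that of $\ratP{\phi_2(x)}$ and $\ratP{\phi_2(y)}$ (including equality), and Condition~\ref{con:lemassscon1} says the integer parts of the differences agree whenever either is at most $k$. Combining these, I would argue: (i) $\phi_1(x) - \phi_1(y) \in \{i\}$ iff $\intP{\phi_1(x)-\phi_1(y)} = i$ and $\ratP{\phi_1(x)} = \ratP{\phi_1(y)}$; for $|i| \le k$ both of these facts are preserved by Conditions~1 and~2, hence equivalent to $\phi_2(x) - \phi_2(y) \in \{i\}$. (ii) $\phi_1(x) - \phi_1(y) \in (i, i+1)$ iff $\intP{\phi_1(x) - \phi_1(y)} = i$ and $\ratP{\phi_1(x)} \ne \ratP{\phi_1(y)}$ — again both preserved. (iii) $\phi_1(x) - \phi_1(y) \in (i, \infty)$ iff either $\intP{\phi_1(x)-\phi_1(y)} > i$, or $\intP{\phi_1(x)-\phi_1(y)} = i$ and $\ratP{\phi_1(x)} \ne \ratP{\phi_1(y)}$; here I use the truncation at $k+1$ in Condition~1 to handle the case where the integer part exceeds $k$, noting that $\min\{\cdot, k+1\}$ agreeing forces "$> i$" to be preserved as long as $i \le k$.

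The main obstacle — really the only delicate point — is the truncation in Condition~\ref{con:lemassscon1}: the integer parts are only guaranteed to agree up to the threshold $k+1$. I expect to handle this by a careful case split on whether $\min\{\intP{\phi_1(x)-\phi_1(y)}, k+1\}$ is strictly less than $k+1$ or equal to it. In the first case the true integer parts coincide and everything goes through directly. In the second case both differences have integer part at least $k+1 > k \ge |i|$, which already forces the difference to lie strictly above any unit interval with index $i \le k$ and strictly above the endpoint of any $(i,\infty)$ relation with $i \le k$, and to lie outside any $\{i\}$ or $(i,i+1)$ with $|i| \le k$; so both $\phi_1$ and $\phi_2$ treat the constraint identically (both satisfy $(i,\infty)$-type and both fail the bounded-interval types), independently of the fractional parts. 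Assembling these cases for all three unit relation types and then lifting back through the "disjunction of unit constraints" and "disjunction of simple constraints" reductions completes the proof.
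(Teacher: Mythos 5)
Your proposal is correct and follows essentially the same route as the paper's proof: reduce to the three types of unit constraints and transfer membership using the integer part of the difference (Condition~1) and the order/equality of the fractional parts (Condition~2), with the $k+1$ truncation harmless because all relevant unit indices $i$ satisfy $|i|\leq k$. Your characterization of the case $x-y\in(i,i+1)$ as ``floor equals $i$ and $\ratP{\phi_1(x)}\neq\ratP{\phi_1(y)}$'' is if anything slightly more careful than the paper's phrasing (which asserts $\ratP{\phi_1(x)}>\ratP{\phi_1(y)}$), but the substance of the argument is the same.
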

\begin{proof}
  To show the lemma, it is
  sufficient to show that $\phi_1$ and $\phi_2$ satisfy the
  same unit constraints.
  Suppose that $\phi_1$ and
  $\phi_2$ satisfy conditions~\ref{con:lemassscon1}
  and~\ref{con:lemassscon2}. We need to show that $\phi_1$ satisfies
  any of the unit constraints on two variables $x$ and $y$ if and only
  if so does $\phi_2$. We distinguish the following cases according to
  the three types of unit constraints given above.
  \begin{itemize}
  \item 
    If $\phi_1(x) - \phi_1(y) \in \{i\}$ for some $i \leq k$, then
    $\intP{\phi_1(x)-\phi_1(y)}=i$ and
    $\ratP{\phi_1(x)}=\ratP{\phi_1(y)}$. Therefore, 
    $\intP{\phi_2(x)-\phi_2(y)}=i$ and
    $\ratP{\phi_2(x)}=\ratP{\phi_2(y)}$, which implies
    that $\phi_2(x) - \phi_2(y) \in \{i\}$.
  \item
    If $\phi_1(x) - \phi_1(y) \in (i,i+1)$ for some $i <k$, then
    $\intP{\phi_1(x)-\phi_1(y)}=i$ and
    $\ratP{\phi_1(x)}>\ratP{\phi_1(y)}$. Therefore, 
    $\intP{\phi_2(x)-\phi_2(y)}=i$ and
    $\ratP{\phi_2(x)}>\ratP{\phi_2(y)}$, which implies
    that $\phi_2(x) - \phi_2(y) \in (i,i+1)$.
  \item
    If $\phi_1(x) - \phi_1(y) \in (i,\infty)$ for some $i \leq k$, then
    either $\intP{\phi_1(x)-\phi_1(y)}=i$ and
    $\ratP{\phi_1(x)}>\ratP{\phi_1(y)}$ or
    $\intP{\phi_1(x)-\phi_1(y)}>i$. In the former case,
    we have that $\intP{\phi_2(x)-\phi_2(y)}=i$ and
    $\ratP{\phi_2(x)}>\ratP{\phi_2(y)}$ and therefore
    $\phi_2(x) - \phi_2(y) \in (i,\infty)$.
    In the latter case,
    we have that $\intP{\phi_2(x)-\phi_2(y)}>i$ and therefore
    $\phi_2(x) - \phi_2(y) \in (i,\infty)$. 
  \end{itemize}
  This completes the proof.
\end{proof}

Lemma~\ref{lem:ass-scon} enables us to give a clear-cut proof of the
small solution property.

\begin{theorem} \label{lem:compact-assign}
  Every satisfiable instance $\inst = (V,C)$ of 
  $\csp(\D)$ has a solution
  $f : V \rightarrow \cd{\abs{V},\num{C}}$.
\end{theorem}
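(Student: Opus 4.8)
The plan is to start from an arbitrary solution and reshape it in two independent stages. In the first stage I move the fractional parts of the values onto the grid $\{0,\tfrac{1}{n},\dots,\tfrac{n-1}{n}\}$; here Lemma~\ref{lem:ass-scon} does all the work, since the reshaped assignment will satisfy exactly the same simple constraints as the original. In the second stage I compress the integer parts so that no two of them differ by more than $(n-1)(k+1)$. The second stage cannot preserve all simple constraints: if $\phi(x)-\phi(y)$ is very large, then condition~\ref{con:lemassscon1} applied to the ordered pair $(y,x)$ would pin $\intP{f(y)-f(x)}$ to an equally large negative number, which is impossible inside $\cd{n,k}$. So for the second stage I argue directly that the compressed assignment still satisfies the \emph{instance}, exploiting that the instance's coefficients are bounded by $k$ and hence that any difference exceeding $k$ may be safely collapsed to $k+1$.

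In detail: fix a solution $\phi$ of $(V,C)$, write $n=\abs{V}$ and $k=\num{C}$, and use translation invariance of $\csp(\D)$ to assume $\phi(v)\ge 0$ for all $v$ (so $\intP{\phi(v)}$ and $\ratP{\phi(v)}$ are defined). Each constraint in $C$ is a disjunction of simple constraints $x-y\in I$ with $I$ an interval whose finite endpoints have absolute value at most $k$; for each constraint fix one disjunct that $\phi$ satisfies. It then suffices to produce $f:V\to\cd{n,k}$ satisfying all of these fixed disjuncts. \emph{Stage 1.} Let $r_1<\dots<r_m$ ($m\le n$) be the distinct values of $\ratP{\phi(\cdot)}$ and set $\psi(v)=\intP{\phi(v)}+\tfrac{\rho(v)-1}{n}$, where $r_{\rho(v)}=\ratP{\phi(v)}$. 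Since $\intP{\psi(v)}=\intP{\phi(v)}$ and $r_i\mapsto\tfrac{i-1}{n}$ is order preserving, $\phi$ and $\psi$ agree on every comparison of fractional parts and on the integer part of every difference; thus conditions~\ref{con:lemassscon1} and~\ref{con:lemassscon2} of Lemma~\ref{lem:ass-scon} hold for $(\phi,\psi)$, so $\psi$ is again a solution, and now every $\psi(v)$ has the form (integer)$+\tfrac{q}{n}$ with $q\in\{0,\dots,n-1\}$.

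\emph{Stage 2.} List the distinct values of $\psi$ as $u_1<\dots<u_p$ ($p\le n$), write $u_j=a_j+t_j$ with $a_j=\intP{u_j}$ and $t_j=\ratP{u_j}$, and set $\gamma_j=\intP{u_j-u_{j-1}}\ge 0$ for $j\ge 2$. Define $b_1=0$ and, for $j\ge 2$, $b_j=b_{j-1}+(a_j-a_{j-1})$ if $\gamma_j\le k$ and $b_j=b_{j-1}+(k+1)$ otherwise; put $w_j=b_j+t_j$ and $f(v)=w_{\iota(v)}$ where $u_{\iota(v)}=\psi(v)$. When $\gamma_j\le k$ one has $a_j-a_{j-1}\in\{\gamma_j,\gamma_j+1\}$, so every increment $b_j-b_{j-1}$ is a nonnegative integer at most $k+1$; with at most $n-1$ increments this gives $0\le b_j\le(n-1)(k+1)$, and together with $t_j\in\{0,\tfrac{1}{n},\dots,\tfrac{n-1}{n}\}$ we get $f(v)\in\cd{n,k}$.

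It remains to check that $f$ satisfies each fixed disjunct $x-y\in I$. The key observation is that if every $\gamma_\ell$ with $\ell$ lying between $\iota(x)$ and $\iota(y)$ satisfies $\gamma_\ell\le k$, then telescoping gives $b_{\iota(x)}-b_{\iota(y)}=a_{\iota(x)}-a_{\iota(y)}$, whence $f(x)-f(y)=\psi(x)-\psi(y)$ exactly. If $I$ is bounded, then $\psi(x)-\psi(y)\in I\subseteq[-k,k]$, and since $|u_{\iota(x)}-u_{\iota(y)}|$ is at least $\sum_\ell\gamma_\ell$ over the relevant chain, every such $\gamma_\ell\le k$ and exact preservation applies, so $f(x)-f(y)\in I$. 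If $I$ is a half-line, say $I=(i,\infty)$ with $|i|\le k$ (the closed form and the left-unbounded forms are symmetric), then either exact preservation applies again, or some $\gamma_\ell\ge k+1$ on the chain; in the latter case $\iota(x)>\iota(y)$ (otherwise $\psi(x)-\psi(y)\le-(k+1)<i$), the step at $\ell$ contributes exactly $k+1$ to $b_{\iota(x)}-b_{\iota(y)}$, the other steps are nonnegative, and so $f(x)-f(y)\ge(k+1)+(t_{\iota(x)}-t_{\iota(y)})>k\ge i$. In all cases $f$ satisfies the disjunct, so $f$ is the required solution. I expect the only delicate point to be the choice in Stage 2 to collapse a large $\gamma_j$ to exactly $k+1$ rather than to $k+1$ plus an inherited fractional carry, which is what keeps the integer parts within $(n-1)(k+1)$ instead of $(n-1)(k+2)$.
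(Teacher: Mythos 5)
Your proof is correct, and the construction is essentially the paper's: the same ruler $\cd{n,k}$, the same capping of consecutive integer gaps at $k+1$, and the same order-preserving rescaling of the fractional parts onto $\{0,\tfrac{1}{n},\dots,\tfrac{n-1}{n}\}$. The difference lies in the verification. The paper performs both modifications in one step and then invokes Lemma~\ref{lem:ass-scon}, asserting that the original and compressed assignments satisfy its hypotheses; you use the lemma only for the fractional stage (where condition~\ref{con:lemassscon1} holds with exact equality of integer parts) and verify the integer-compression stage directly against one chosen disjunct per constraint, via the telescoping/chain argument. Your reason for splitting is a genuine point: after compression, condition~\ref{con:lemassscon1} as literally stated --- a one-sided cap at $k+1$ --- fails for ordered pairs whose difference is strongly negative, since $\intP{g(x)-g(y)}$ can lie far below $-(k+1)$ while the compressed difference cannot; the paper's appeal to the lemma glosses over this, and its conclusion survives only because every bound occurring in the constraints lies in $[-k,k]$. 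Your direct check of the fixed disjuncts (exact preservation when all gaps on the chain are at most $k$, and a guaranteed surplus of more than $k$ in the correct direction when some gap was collapsed to $k+1$) makes precisely this point rigorous, at the cost of a short case analysis on bounded intervals versus half-lines. Both routes yield the theorem; yours is slightly longer but self-contained on the delicate step, whereas the paper's is shorter but leans on a hypothesis of Lemma~\ref{lem:ass-scon} that its own construction does not satisfy verbatim.
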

\begin{proof}
  Let $\inst = (V,C)$ be a satisfiable instance of $\csp(\D)$ with solution $g : V \rightarrow \rationals$.
  Let $n=\abs{V}$, $k=\num{C}$, our strategy is to take the assignment
  $g$ and construct a new assignment $f : V \to \cd{n,k}$
  that satisfies the same simple constraints as $g$ over $V$
  with relations in $\disjtemp{2,k}$.

  Index the variables $\{v_1,\dots,v_n\}$
  so that $g(v_i) \leq g(v_{i+1})$ for all $1 \leq i < n$.
  Then, split the values $g(v_i)$ into integral and fractional
  parts, i.e. define $z_i = \intP{g(v_i)}$ and 
  $q_i = \ratP{g(v_i)}$ for all $i$.
  Note that $0\leq q_i < 1$
  and 
  the integers $z_1, \dots, z_n$ are in non-decreasing order.
  
  We recursively define the assignment $f(v_i) = c_i + d_i$ for all $i$,
  where $c_i$ is the integral part and $d_i$ is the fractional part of $f(v_i)$.
  Set $c_1 = 0$ and let $c_{i+1} = c_i + \min\{z_{i+1} - z_i, k + 1\}$ for all $1 \leq i < n$.
  Note that $c_1, \dots, c_n$ are sorted in non-decreasing order.
  Furthermore, let $\sigma : \{q_1, \dots, q_n\} \rightarrow \range{0}{n-1}$
  be an injective function 
  such that $\sigma(q_i) \odot \sigma(q_j) \iff q_i \odot q_j$
  for all $i$, $j$ and $\odot \in \{<,=,>\}$.
  One may view $\sigma$ as an order-preserving `scaling' of the fractional parts
  into the integers.
  Let $d_i = \frac{\sigma(q_i)}{n}$ for all $i$.
  Note that $c_n \leq (n-1)(k+1)$ and $0 \leq \sigma(q_i) \leq n-1$, 
  so $f$ maps the variables in $V$ into the set $\cd{n,k}$, as desired. Moreover, since
  $f$ and $g$ satisfy the conditions on $\phi_1$ and $\phi_2$ given in
  the statement of Lemma~\ref{lem:ass-scon}, we obtain that $f$ and
  $g$ satisfy the same simple constraints over $V$ with relations in
  $\disjtemp{2,k}$. Therefore, $f$ also satisfies $\inst$, as required.
\end{proof}

Theorem~\ref{lem:compact-assign} gives us straightforward upper bounds on the time complexity
of $\csp(\D)$ and many of its subclasses.

\begin{corollary} \label{thm:generalupperbound}
Every instance $\inst=(V,C)$ of $\csp(\D)$ can be solved in $2^{O(n(\log n + \log k))}$ time where
$n=|V|$ and $k=\num{C}$. In particular, every instance of $\csp(\D_{\infty,k})$ can be solved in $2^{O(n\log n)}$ time.
\end{corollary}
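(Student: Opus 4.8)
The plan is to convert the small solution property of Theorem~\ref{lem:compact-assign} directly into an exhaustive search over a polynomially-sized pool of candidate values. First I would bound the size of the ruler: the elements of $\cd{n,k}$ have the form $z + q/n$ with $z,q \in \naturals$, $0 \le z \le (n-1)(k+1)$ and $0 \le q < n$, so there are at most $\big((n-1)(k+1)+1\big)\cdot n = O(n^2 k)$ of them. Theorem~\ref{lem:compact-assign} tells us that $\inst = (V,C)$ is satisfiable if and only if it has a solution $f : V \to \cd{|V|,\num{C}}$, so it suffices to enumerate every function from $V$ to this set and test each one; the number of such functions is at most $|\cd{n,k}|^{n} = \big(O(n^2 k)\big)^{n} = 2^{O(n(\log n + \log k))}$.

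Second, I would argue that testing a single candidate is cheap. Given a fixed rational assignment $f$, deciding whether $f$ satisfies a constraint of $C$ amounts to checking, for each disjunct $x - y \in I$ of the constraint, whether $f(x) - f(y)$ lies in the interval $I$, which can be done in time polynomial in the representation size as discussed in Section~\ref{sec:csp} (rational values have bounded bit-length here, and interval membership is a pair of comparisons). Hence checking one candidate against all of $C$ takes time polynomial in $\norm{\inst}$. Multiplying the number of candidates by this polynomial factor, and absorbing the polynomial into the exponent (or hiding it via the $\bigoh^*$ convention), yields a $2^{O(n(\log n + \log k))}$-time algorithm for $\csp(\D)$. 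For the second claim, specializing to $\csp(\D_{\infty,k})$ with $k$ fixed makes $\log k = O(1)$, so the bound becomes $2^{O(n \log n)}$.

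There is no real obstacle here — this genuinely is a corollary — so the only things requiring care are routine: getting the counting bound on $|\cd{n,k}|$ right (in particular being mindful of the degenerate case $k = 0$, where one simply reads $\log(k+1)$ in place of $\log k$), and confirming that the polynomial-time membership test for disjunctive difference constraints is valid for the input representation fixed in Section~\ref{sec:csp}. Both points are already settled by earlier material, namely the definition of $\cd{n,k}$ preceding Lemma~\ref{lem:ass-scon} and the representational conventions, so the argument closes immediately.
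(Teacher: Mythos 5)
Your proposal is correct and follows exactly the paper's own argument: enumerate all $2^{O(n(\log n + \log k))}$ assignments $f : V \rightarrow \cd{n,k}$, whose existence suffices by Theorem~\ref{lem:compact-assign}, and check each one in polynomial time. The extra remarks on the $k=0$ case and on the polynomial-time membership test are sensible refinements but do not change the argument.
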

\begin{proof}
Enumerate all assignments $f:V \rightarrow CD(n,k)$ and check
if they satisfy $\inst$. Theorem~\ref{lem:compact-assign} implies that $\inst$ is satisfiable if and only if at least one such assignment is satisfying.
This takes 
$O^*(|CD(n,k)|^n)$ time in total. The set $CD(n,k)$ contains
$O(n^2k)$ elements so
$$O^*(|CD(n,k)|^n) = O^*((n^2k)^n)=2^{O(n \log n)} \cdot 2^{O(n \log k)} = 2^{O(n(\log n + \log k))}.$$

\end{proof}

\subsection{Upper Bound for \texorpdfstring{$\csp(\disjtemp{2,k})$}{Lg}}
\label{sec:finitealgo}

In this section we prove that $\csp(\disjtemp{2,k})$ 
can be solved in $2^{O(n \log \log n)}$ time.
Our algorithm for $\csp(\disjtemp{2,k})$ is based on a
divide-and-conquer approach, i.e., we split the instance into smaller
parts and solve them recursively. To achieve the splitting, we first
show that any solution for an instance $\instance$ of
$\csp(\disjtemp{2,k})$ suggests a natural split of the whole instance
into either two or three almost independent subinstances; here, almost
independent refers to the instances sharing only a small set of
variables. This will be exploited by the algorithm to enumerate all
possible decompositions into two or three subinstances and recurse on
those for every possible assignment of the shared variables. We
also need a subroutine that allows us to find all solutions with small
domain values. This is used to solve the middle instance in the case
that the instance decomposes into three subinstances.

Before we begin, we describe some polynomial time
preprocessing steps for $\csp(\disjtemp{2})$.
Suppose there are two constraints
for a pair of variables $x - y$:
\begin{align*}
  x - y &\in I_1 \lor \dots \lor x - y \in I_p, \\
  x - y &\in J_1 \lor \dots \lor x - y \in J_q, \\
\end{align*}
where $I_1, \dots, I_p$ and $J_1, \dots, J_q$ are intervals.
For both constraints to hold, 
there must exist $1 \leq i \leq p$ and $1 \leq j \leq q$
such that $x - y \in I_i \cap J_j$.
Thus, we can replace these two constraints with
\[ \bigvee_{i=1}^{p} \bigvee_{j=1}^{q} x - y \in I_i \cap J_j. \]
Applying this procedure exhaustively,
we obtain an instance with at most one constraint
for every pair of variables.
In the rest of the section we assume that all instances
$\declareInstance$ are preprocessed, and we
write $\sigma_{C}(x,y)$ to denote \emph{the} constraint in $C$ over variables
$x$ and $y$; if there are no constraints over $x$ and $y$, we let
$\sigma_{C}(x,y)$ be the set of all possible simple constraints
over $x$ and $y$.

The rest of this section is divided into three parts
(Sections~\ref{sec:enumeration}--\ref{ssec:algod2}):
the first two sections introduce certain subroutines that are needed
in the algorithm, and the
algorithm itself is presented and proven correct in the third section.

%
%

\subsubsection{Certificates}
\label{sec:enumeration}

We begin by presenting an alternative method for enumerating
compact representations of solutions to $\csp({\bf D})$ instances in
terms of certificates. The main advantage of certificates compared to
representing solutions by assignments is that certificates allow us to
express the partial solution in terms of constraints. In particular, it allows
us to fix the behavior of certain variables by simply adding
additional constraints to the instance.

Let $\declareInstance$ be an instance of $\csp({\bf D})$.
Define $\allin{\constraints}$ to be the set of simple temporal constraints appearing as disjuncts in $C$.
For example, if $\constraints=\{(x-y \leq 0) \vee (x-y \geq 1), x-z \geq 1\}$, then $\allin{\constraints} = \{x-y \leq 0,x-y \geq 1, x-z \geq 1\}$.
Let $\varphi : \variables \rightarrow \reals$ be an assignment to $\instance$.
We identify $\varphi$ with the subset of constraints $F \subseteq \allin{\constraints}$ satisfied by $\varphi$.
This allows us to define an equivalence relation $\sim$ on the assignments where $\varphi_1 \sim \varphi_2$
holds if and only if $F_1 = F_2$.
This way, $F$ represents the entire class of assignments equivalent to $\varphi$. 
We say that $F$ is a {\em certificate} of the satisfiability of $\instance$.
An assignment $\varphi$ is satisfying if the certificate $F$ contains at least one simple constraint from every $c \in \constraints$.
Note that if $\varphi_1 \sim \varphi_2$ and $\varphi_1$ is a satisfying assignment, then so is $\varphi_2$.
While there may be infinitely many satisfying assignments to $\instance$, the number of certificates is finite since there are at most as many certificates
as there are subsets of $\allin{\constraints}$.

\begin{theorem}
  \label{thm:enumassign}
  The list of certificates to an instance $\declareInstance$
  of $\csp({\bf D})$ can be computed in
  $2^{O(n(\log{n} + \log{k}))}$ time, where
  $n = \abs{\variables}$ and $k = \num{\constraints}$.
\end{theorem}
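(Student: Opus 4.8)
The plan is to enumerate certificates by first enumerating a sufficiently small family of candidate assignments and then projecting each to its certificate. Concretely, I would invoke Theorem~\ref{lem:compact-assign}: every satisfiable instance has a solution mapping into the ruler set $\cd{n,k}$, which has size $O(n^2 k)$. More importantly, Lemma~\ref{lem:ass-scon} tells us that two assignments satisfying the same simple constraints over $V$ with relations in $\disjtemp{2,k}$ are equivalent, and the construction in the proof of Theorem~\ref{lem:compact-assign} shows that \emph{every} assignment $g : V \to \rationals$ can be transformed into one with values in $\cd{n,k}$ satisfying the same simple constraints. Hence every certificate that is realized by \emph{some} rational assignment is already realized by some assignment $f : V \to \cd{n,k}$. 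So the list of all certificates of $\instance$ equals $\{\,F(f) : f : V \to \cd{n,k}\,\}$, where $F(f) \subseteq \allin{\constraints}$ is the set of simple constraints from $\allin{\constraints}$ satisfied by $f$.

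The algorithm is then immediate: enumerate all functions $f : V \to \cd{n,k}$ (there are $|\cd{n,k}|^n = (O(n^2k))^n = 2^{O(n(\log n + \log k))}$ of them), and for each one compute $F(f)$ by testing, for each of the $\abs{\allin{\constraints}} \leq \norm{\instance}$ simple constraints, whether $f$ satisfies it; this membership test is polynomial-time since the endpoints of $f$ are rationals with numerator and denominator bounded by $O(n^2 k)$ and $n$. Collect the distinct sets $F(f)$ into the output list, deduplicating (e.g. by sorting or hashing), which costs at most a polynomial factor per certificate. The total running time is $2^{O(n(\log n + \log k))}$ times a polynomial in $\norm{\instance}$, and since $\norm{\instance}$ is itself at most exponential in $n$ only through the binary encoding of $k$ and the number of disjuncts, the polynomial factors are absorbed into the stated bound. (If one wants to be careful, $\log k$ in the exponent already dominates $\log \norm{\instance}$ up to the number of clauses, which contributes only polynomially.)

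The one subtlety worth spelling out — and the step I'd expect to require the most care — is justifying that restricting to $f : V \to \cd{n,k}$ loses no certificates. This is not quite a direct quotation of Theorem~\ref{lem:compact-assign}, which is phrased only for \emph{satisfiable} instances and \emph{satisfying} solutions; here we need the stronger statement that the transformation $g \mapsto f$ preserves the full certificate (the set of satisfied simple constraints), not merely the property of being a solution. Fortunately the proof of Theorem~\ref{lem:compact-assign} already establishes exactly this: it builds $f$ so that $f$ and $g$ meet conditions~\ref{con:lemassscon1} and~\ref{con:lemassscon2} of Lemma~\ref{lem:ass-scon}, and the conclusion of that lemma is precisely that $f$ and $g$ satisfy the same simple constraints over $V$ with relations in $\disjtemp{2,k}$ — which includes every member of $\allin{\constraints}$. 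So I would explicitly restate this as a small observation before running the enumeration, to make the correctness argument airtight, and then the counting and per-assignment cost analysis finish the proof.
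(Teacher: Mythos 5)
Your proposal is correct and takes essentially the same route as the paper: the paper likewise enumerates all assignments $f\colon V\to\cd{\abs{\variables},\num{\constraints}}$ and outputs the simple constraints of $\allin{\constraints}$ satisfied by each, using Theorem~\ref{lem:compact-assign} for completeness (the paper applies it to each certificate viewed as a satisfiable instance over $\stpfull$, while you invoke the certificate-preserving property established inside its proof via Lemma~\ref{lem:ass-scon} --- a slightly more careful justification of the same step). The only difference worth noting is that the paper retains $F(f)$ only when $f$ satisfies $\instance$, so that \textsc{ListCert} lists certificates of \emph{satisfying} assignments (as its later use in Algorithm~\ref{alg:gammaone} requires); adding that one-line filter makes your argument identical to the paper's.
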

\begin{proof}
By definition, every certificate $\instance' = (V', \constraints')$ 
for $\instance$ is a satisfiable instance of $\csp({\bf S})$
with $n$ variables and $\num{\constraints'} \leq k$.
Theorem~\ref{lem:compact-assign} implies that 
each $\instance'$ admits a satisfying assignment in $CD(n, k)$.
Thus, we can enumerate assignments $f : V(I) \to CD(n, k)$,
check whether it satisfies $I$, and if so,
collect the simple constraints in $\allin{C}$
satisfied by $f$ and output them as a certificate.
This requires $|CD(n,k)|^n \leq (n^2k)^n = 2^{O(n (\log{n} + \log{k}))}$ time.
\end{proof}

The algorithm underlying the previous theorem will be
used
as a subroutine in our algorithm for $\csp(\disjtemp{2,k})$. We will
refer to it as
\textsc{ListCert} in what follows.

\subsubsection{Instances with Bounded Span}
\label{sec:boundedspan}

We continue by examining a restricted version of $\csp(\disjtemp{2})$
(denoted $w\mhyphen\csp(\disjtemp{2})$) where solutions can only
take values in the interval $[0,w)$; we say that such a solution has
{\em span} $w$.
We show that this problem can be solved in $O^*({w^n})$ time.

\begin{lemma}
    \label{lem:w-disjtemp}
    $w\mhyphen\csp(\disjtemp{2})$ can be solved in $O^*({w^n})$ time.
\end{lemma}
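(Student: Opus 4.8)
The plan is to prove this by a direct dynamic-programming argument over the variables, exploiting the fact that we only need to consider polynomially-bounded integer-valued assignments. Since solutions to $w\mhyphen\csp(\disjtemp{2})$ take values in $[0,w)$, I would first argue that it suffices to consider assignments $f : V \to \{0, \tfrac{1}{2n}, \tfrac{2}{2n}, \dots\} \cap [0,w)$, or more simply an appropriate finite ``ruler'' of size $O(wn)$ contained in $[0,w)$. Indeed, by the same reasoning as in Lemma~\ref{lem:ass-scon} and Theorem~\ref{lem:compact-assign} --- the integer part of each value lies in $\{0,\dots,w-1\}$ and the fractional parts only matter up to their relative order, so they can be replaced by values in $\{0,\tfrac1n,\dots,\tfrac{n-1}{n}\}$ --- every satisfiable span-$w$ instance has a satisfying assignment into a set of size $O(wn)$. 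This already gives a brute-force bound of $O^*((wn)^n) = O^*(w^n n^n)$, which is \emph{not} good enough: we need $O^*(w^n)$ with no $n^n$ factor.

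To remove the spurious $n^n$, the key observation is that the values $f(v) \in [0,w)$ can be processed one variable at a time, and the relevant information about the variables assigned so far is only their values in $[0,w)$, not the combinatorics of fractional parts. Concretely, I would reduce to integer-valued assignments: replace the interval $[0,w)$ by the integers $\{0, 1, \dots, w-1\}$ after scaling, arguing that every simple temporal constraint $x - y \in I$ with integer endpoints is satisfied by the scaled integer assignment iff it was satisfied by the original (here one must be careful --- strict vs.\ non-strict and open/point intervals require that the scaling preserve order of fractional parts, which is handled by first snapping fractional parts to $\{0,\tfrac1n,\dots\}$ as above and then multiplying through by $n$, at the cost of replacing $w$ by $wn$; to avoid inflating the base I instead keep the values in $[0,w)$ but discretize only as much as needed). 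The cleanest route: enumerate, for each variable $v_i$ in turn, one of the $O(w)$ candidate integer values $\{0,\dots,w-1\}$ for $\intP{f(v_i)}$, together with a \emph{consistent total preorder} on the fractional parts; but the preorder data is what costs $n^n$. So instead I would directly enumerate assignments $f : V \to \{0,1,\dots,w-1\}$ into the integers (span $w$), solving $w\mhyphen\csp(\disjtemp{2})$ restricted to integer solutions in $O^*(w^n)$ time by brute force, and separately reduce the rational problem to the integer one by a scaling argument that blows $w$ up by a factor of $n$ only when strict inequalities are present --- but since $w\mhyphen\csp(\disjtemp{2})$ is stated with general (possibly open) intervals, I expect the intended statement uses a slightly different discretization.

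Reconsidering, the right approach is almost certainly: a span-$w$ solution, after the fractional-part normalization of Theorem~\ref{lem:compact-assign}'s proof, lives in the set $D_w = \{ z + \tfrac{q}{n} : z \in \{0,\dots,w-1\},\ q \in \{0,\dots,n-1\}\}$ of size $wn$; brute force over $D_w^V$ is $O^*((wn)^n)$, still too big. The fix is to realize we do not need \emph{all} of $D_w$ simultaneously --- we process variables in sorted order of their value, and the fractional offsets $\tfrac{q}{n}$ are determined (up to the equivalence of Lemma~\ref{lem:ass-scon}) by the \emph{rank} of each variable, which is free information once we fix the integer parts and the order. Thus I would: (1) guess the total order of the variables by value ($n!$ choices $= 2^{O(n\log n)}$ --- still too big!).

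Given the obstacle, I believe the actual argument must be a genuine dynamic program and not brute-force enumeration, which is why the lemma is stated separately. The plan I would commit to: build a DP table indexed by \emph{which variables have been assigned so far and what integer values in $\{0,\dots,w-1\}$ they received}, i.e.\ over functions from subsets of $V$ to $\{0,\dots,w-1\}$; but that is $2^n w^n$, too big. The clean $O^*(w^n)$ bound comes from processing variables in a \emph{fixed} order $v_1,\dots,v_n$ and maintaining, after processing $v_1,\dots,v_i$, the table of all $w^i$ (integer) partial assignments that are locally consistent --- still $w^n$ total, matching the claim, \emph{provided} integer-valued span-$w$ solutions suffice. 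So the crux, and the main obstacle, is precisely the reduction from rational span-$w$ solutions to integer span-$w$ solutions: I would prove that if $\inst$ has a rational solution with all values in $[0,w)$ then it has an integer solution with all values in $\{0,\dots,w-1\}$ --- \emph{false in general} (e.g.\ $x-y\in(0,1)$ with $x,y\in[0,2)$ forces distinct values but any two integers in $\{0,1\}$ differing by $1$ fail the open interval). Hence the real statement must allow the DP to track fractional parts symbolically: run the DP over integer parts in $\{0,\dots,w-1\}^n$ ($w^n$ states) and, for each such integer-part vector, decide in polynomial time (via the transitive-closure / Floyd--Warshall feasibility check of the induced simple-constraint system, after the disjunctions are resolved by the integer parts) whether a compatible fractional assignment exists. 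Thus: enumerate the $O^*(w^n)$ integer-part vectors; for each, the remaining constraints on fractional parts are a set of strict/non-strict order constraints solvable in $\mathrm{poly}(n)$ time; accept iff some vector is extendable. The main obstacle is verifying that resolving each disjunction by the chosen integer parts leaves a pure fractional-order feasibility problem (requiring that for each pair $x,y$ with fixed $\intP{f(x)}-\intP{f(y)}=\delta$, each disjunct $x-y\in I$ becomes either unconditionally true, unconditionally false, or a single order relation between $\fracpart{f(x)}$ and $\fracpart{f(y)}$) --- this is exactly the case analysis in Lemma~\ref{lem:ass-scon}, so it goes through, and the $O^*(w^n)$ bound follows.
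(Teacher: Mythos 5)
Your final committed argument---enumerate the $w^n$ integer-part vectors in $\{0,\dots,w-1\}^V$, observe (via the case analysis of Lemma~\ref{lem:ass-scon}) that fixing the integer parts turns each remaining constraint into a truth value or a relation between fractional parts, and decide the residual system in polynomial time---is exactly the paper's proof, so the proposal is correct and takes the same route. The one nuance worth noting is that the residual per-pair relation need not be a single order constraint: it can be a disequality (e.g.\ from $x-y\in(c-1,c)\cup(c,c+1)$), so the polynomial-time subroutine must handle systems of linear inequalities together with disequalities, which is precisely what the paper invokes.
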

\begin{proof}
    Let $\declareInstance$ be an instance of $w\mhyphen\csp(\disjtemp{2})$ and assume $\varphi : V \rightarrow [0, w)$ is a satisfying assignment.
    Without loss of generality, assume that all constraints in $C$ are represented as disjunctions of unit constraints.
    We split $\varphi$ into integral and fractional parts: $\varphi(x) = \varphi_{i}(x) + \varphi_{f}(x)$, where $\varphi_{i}(x) \in \range{0}{w-1}$ and $0 \leq \varphi_{f}(x) < 1$.
    Suppose we fix $\varphi_{i}$ and want to check whether any $\varphi_{f}$ extends $\varphi_{i}$ to a satisfying assignment.
    For every pair of distinct variables $x$ and $y$ we have $\varphi_{i}(x) - \varphi_{i}(y) = c$ for some integer $c$.
    There are only six nontrivial unit constraints that agree with this assignment,
    each of them expressible as a linear inequality or disequality:
    \begin{align*}
        &x - y \in (c-1, c) &&\longrightarrow &&\varphi_{f}(x) < \varphi_{f}(y) \\
        &x - y \in \{c\} &&\longrightarrow &&\varphi_{f}(x) = \varphi_{f}(y) \\
        &x - y \in (c, c+1) &&\longrightarrow &&\varphi_{f}(x) > \varphi_{f}(y) \\
        &x - y \in (c-1, c] && \longrightarrow && \varphi_{f}(x) \leq \varphi_{f}(y), \\
        &x - y \in [c, c+1) && \longrightarrow && \varphi_{f}(x) \geq \varphi_{f}(y), \\
        &x - y \in (c-1, c) \cup (c, c+1) && \longrightarrow && \varphi_{f}(x) \neq \varphi_{f}(y).
    \end{align*}
    These constraints together with the domain restriction $0 \leq \varphi_{f}(v) < 1$ for each $v$ yield a system of linear inequalities and disequalities that has a solution if and only if there is a fractional assignment $\varphi_{f}$ that extends $\varphi_{i}$ to a satisfying assignment.
    Feasibility of a system of linear inequalities and disequalities can be decided in polynomial time~\cite{Jonsson:Backstrom:ai98,Koubarakis:tcs2001}.
    There are $w^n$ possible functions $\varphi_{i} : \variables \rightarrow \range{0}{w-1}$ and checking whether an
    integer assignment $\varphi_{i}$ can be extended to a satisfying assignment requires polynomial time.
    Hence, the total running time of this algorithm is $O^*({w^n})$.
\end{proof}

We refer to the algorithm underlying the previous lemma as \textsc{SolveBounded}.

\subsubsection{Divide-and-Conquer Strategy}
\label{ssec:algod2}

Our algorithm for $\csp(\disjtemp{2,k})$ is based on a
divide-and-conquer approach: we split the instances into smaller parts
and solve them recursively. To achieve the splitting, we first show
that any satisfying assignment for an instance of
$\csp(\disjtemp{2,k})$ provides a natural split of the instance into
either two or three subinstances. These subinstances are almost independent
in the sense that they only share a small number of variables. 
By enumerating suitable values for the shared variables, we can thus solve
the original instance recursively.
To show that every
satisfying assignment allows one of the two splits, consider a
satisfying assignment $\varphi$ of an instance $\instance=(\variables,\constraints)$ of
$\csp(\disjtemp{2,k})$. We can assume that
the minimal value assigned by $\varphi$ is zero by translational invariance
and we additionally assume that $\varphi: \variables
\rightarrow [0,kw)$ for some $w \in \naturals$. Next, we 
divide the domain of $\varphi$ into intervals $[ki-k,ki)$ of span $k$
for every $i \in \range{1}{w}$. This implies a partition of the
variables in $V$ into disjoint (possible empty) subsets
$\{ V_i : i \in \range{0}{w+1} \}$ defined as follows:
\[ V_i = \{v \in V : \varphi(v) \in [ki-k,ki)\} \]
for all $i \in \range{1}{w}$, and $V_0 = V_{w+1} = \varnothing$. Also,
define the sets $L_i = \bigcup_{j=0}^{i-1} V_j$ and $R_i =
\bigcup_{j=i+1}^{w+1} V_j$ for all $i$. We start by showing that
we can split the instance at any $V_i$ to obtain two subinstances
that are independent up to their overlap at $V_i$. That is, let
$\instance[U] = (U,\constraints[U])$ be the subinstance of $\instance$
induced by the variables in $U$, where
$\constraints[U]$ is the subset of $\constraints$ containing only the
constraints with all variables in $U$. Then, the following lemma 
provides necessary and sufficient conditions that allows us to
transform solutions for the subinstances $\instance[L_i \cup V_i]$ and
$\instance[V_i\cup R_i]$ into a solution for the whole instance;
informally the lemma allows us to split the instance at any $V_i$.  


\begin{lemma}
    \label{lem:independence}
    Let $\declareInstance$ be an instance of $\csp(\disjtemp{2,k})$,
    where $V$ is equal to the disjoint union of the sets $X$, $Y$, and $Z$.
    Assume $\varphi_{1}$ and $\varphi_{2}$ are satisfying assignments
    to the subinstances $\instance[X \cup Y]$ and $\instance[Y \cup Z]$,
    respectively.
    If the following conditions hold, then $\instance$ is satisfiable:
    \begin{enumerate}
        \item \label{lem:independence:cond1}
          For every pair of variables $x \in X$ and $z \in Z$, the
          constraint $\sigma_{\constraints}(x,z)$ is empty or it implies
          $z - x > k$.
        \item \label{lem:independence:cond2}
          Assignments $\varphi_{1}$ and $\varphi_{2}$ satisfy the
          same unit constraints over every pair of variables in $Y$.
        \item \label{lem:independence:cond3}
          There is $T_1 \in \rationals$ such that
          $\varphi_1(x) < T_1 \leq \varphi_1(y) < T_1 + k$ 
          for all $x \in X$ and $y \in Y$.
        \item \label{lem:independence:cond4}
          There is $T_2 \in \rationals$ such that
          $T_2 \leq \varphi_2(y) < T_2 + k \leq \varphi_2(z)$
          for all $y \in Y$ and $z \in Z$.
    \end{enumerate}
\end{lemma}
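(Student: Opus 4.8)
The plan is to glue $\varphi_1$ and $\varphi_2$ into a single assignment $\psi\colon V\to\rationals$ satisfying all of $\instance$, using the four conditions together with Lemma~\ref{lem:ass-scon} as the main tool. First I would normalise: by translational invariance of $\csp(\D)$ we may assume $T_1=0$ in condition~\ref{lem:independence:cond3}. Note that conditions~\ref{lem:independence:cond3} and~\ref{lem:independence:cond4} each force the image of $Y$ to lie in an interval of length $<k$, and condition~\ref{lem:independence:cond2} forces $\varphi_1|_Y$ and $\varphi_2|_Y$ to induce the same linear order on $Y$ and the same pairwise integer differences (truncated at $k+1$) — i.e.\ exactly the data appearing in the hypothesis of Lemma~\ref{lem:ass-scon}. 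I would then keep $\psi=\varphi_1$ on $X\cup Y$; this already satisfies every constraint of $\instance[X\cup Y]$, namely all constraints internal to $X$, internal to $Y$, and between $X$ and $Y$.

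It then remains to define $\psi$ on $Z$ so that (a) all constraints of $\instance[Y\cup Z]$ are satisfied and (b) $\psi(z)-\psi(x)>k$ for every $x\in X$ and $z\in Z$ — by condition~\ref{lem:independence:cond1} this is all that is needed for the $X$--$Z$ constraints. For (a) I would \emph{transplant} the $\varphi_2$-positions of $Z$ onto the (possibly differently spaced) $\varphi_1$-positions of $Y$: for each $z\in Z$, read off from $\varphi_2$ the target truncated integer part and the fractional-part order of $z$ relative to the variables of $Y\cup Z$, and place $\psi(z)$ accordingly. The consistency of these placements is a one-dimensional interval-intersection argument, carried out with the integral/fractional splitting used in the proof of Theorem~\ref{lem:compact-assign}, and it works because $\varphi_2$ solves $\instance[Y\cup Z]$ and, by condition~\ref{lem:independence:cond2}, relabelling $Y$ preserves the data Lemma~\ref{lem:ass-scon} requires. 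Lemma~\ref{lem:ass-scon} then gives that $\psi|_{Y\cup Z}$ satisfies the same simple $\disjtemp{2,k}$-constraints over $Y\cup Z$ as $\varphi_2$, which yields~(a).

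For (b), let $y_*\in Y$ have the smallest value — the same variable under $\varphi_1$, $\varphi_2$ and $\psi$, since all three induce the same order on $Y$. Using that $\varphi_1|_Y$ (hence $\psi|_Y$) has span $<k$ and that $Z$ lies above $Y$, one can take $T_2$ as large as possible in condition~\ref{lem:independence:cond4} to obtain $\varphi_2(z)-\varphi_2(y_*)\geq k$ for all $z\in Z$; because ``$z-y_*\geq k$'' is itself a simple $\disjtemp{2,k}$-constraint and $\psi|_{Y\cup Z}$ agrees with $\varphi_2$ on simple constraints, $\psi(z)-\psi(y_*)\geq k$. Condition~\ref{lem:independence:cond3} gives $\psi(x)=\varphi_1(x)<T_1\leq\varphi_1(y_*)=\psi(y_*)$, so $\psi(z)-\psi(x)=\bigl(\psi(z)-\psi(y_*)\bigr)+\bigl(\psi(y_*)-\psi(x)\bigr)>k$, establishing~(b). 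Hence $\psi$ satisfies every constraint of $\instance$.

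The step I expect to be the main obstacle is the transplantation used for~(a): one must reconcile $\varphi_1$ and $\varphi_2$ on $Y$ (where they need not agree, only satisfy the same unit constraints) while simultaneously keeping every $Y$--$Z$ and internal-$Z$ constraint satisfied and leaving $Z$ far enough above $Y$ for~(b). Getting this feasibility/interval-intersection argument to go through cleanly — in particular the bookkeeping of the fractional parts, where the two assignments can genuinely disagree — is the delicate point, and it is precisely here that Lemma~\ref{lem:ass-scon} and the machinery behind Theorem~\ref{lem:compact-assign} do the work.
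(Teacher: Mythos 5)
Your overall strategy is the paper's: split values into integer and fractional parts, use Lemma~\ref{lem:ass-scon} to transfer satisfaction of simple constraints, and dispose of the $X$--$Z$ constraints via conditions~\ref{lem:independence:cond1}, \ref{lem:independence:cond3} and~\ref{lem:independence:cond4}. The problem is that the step you defer as ``the main obstacle'' is not a technical afterthought --- it \emph{is} the proof. What the paper actually establishes from conditions~\ref{lem:independence:cond2}--\ref{lem:independence:cond4} is that, after translating both assignments so that they agree (with value $0$) on the minimum element $y_1$ of $Y$, (i) $\intP{\varphi_1(y)}=\intP{\varphi_2(y)}$ for every $y\in Y$, and (ii) the orderings of the fractional parts $\ratP{\varphi_1(\cdot)}$ and $\ratP{\varphi_2(\cdot)}$ coincide on $Y$; neither fact is immediate from ``same unit constraints'' and both need the span-$<k$ argument. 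The paper then glues by keeping integer parts ($\varphi_1$'s on $X$, $\varphi_2$'s on $Z$, the common ones on $Y$) and \emph{rescaling all fractional parts}, merging the two ordered partitions of $X\cup Y$ and $Y\cup Z$ (which agree on $Y$) into one and assigning fractional values $\sigma(v)/n$, after which Lemma~\ref{lem:ass-scon} is applied twice. Your variant --- freeze $\psi=\varphi_1$ on $X\cup Y$ and ``transplant'' only $Z$ --- can be made to work (density of $\rationals$ lets you insert the required fractional order against the fixed values of $\ratP{\varphi_1|_Y}$, and fact (i) makes a consistent choice of integer parts for $Z$ possible), but you neither state facts (i)--(ii) nor give the placement argument; you only assert that an ``interval-intersection argument'' with the machinery of Theorem~\ref{lem:compact-assign} will do it. As written, the central construction is missing.

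There is also a concrete flaw in your step (b). From condition~\ref{lem:independence:cond4} you cannot ``take $T_2$ as large as possible'' to conclude $\varphi_2(z)-\varphi_2(y_*)\geq k$: the condition guarantees $T_2+k\leq\varphi_2(z)$ only for \emph{some} $T_2\leq\min_{y\in Y}\varphi_2(y)$, and increasing $T_2$ to $\varphi_2(y_*)$ may destroy exactly that inequality (e.g.\ $k=2$, $Y=\{y_*\}$, $\varphi_2(y_*)=0$, $\varphi_2(z)=0.5$, $T_2=-1.6$ satisfies the condition, yet $\varphi_2(z)-\varphi_2(y_*)=0.5<k$). What your argument really needs is the stronger property that $T_2$ can be chosen to be $\min_{y\in Y}\varphi_2(y)$; this is how the lemma is instantiated in Algorithm~\ref{alg:gammaone}, where $T_2$ is the value of the variable $y_{\min}\in Y'$ and the constraints $z-y_{\min}>k$ are explicit (the paper's own final step, ``because of conditions 3 and 4,'' is terse at precisely this point and should be read with that instantiation in mind). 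In your write-up this needs to be an explicit hypothesis or derived from the intended use, not deduced from condition~\ref{lem:independence:cond4} alone.
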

\begin{proof}
First note that the ordering of the variables in $Y$ with respect to $\varphi_1$
and $\varphi_2$ is the same since, otherwise, there are two variables $y$
and $y'$ that do not satisfy the same simple constraints
under $\varphi_1$ and $\varphi_2$ and this contradicts condition \ref{lem:independence:cond2}. 
Rename the variables in $Y$ so that
$y_1,\dotsc,y_{|Y|}$ is in
non-decreasing order with respect to $\varphi_1$ and $\varphi_2$.
It is convenient to assume, without loss of generality, that
$\varphi_1(y_1)=\varphi_2(y_1)=0$ --
we can arrive at such assignments by subtracting
$\varphi_i(y_1)$ from $\varphi_i(y)$ for every $y \in Y$ and $i \in \{1,2\}$. We show next that
$\intP{\varphi_1(y)}=\intP{\varphi_2(y)}$ for every $y \in Y$. This clearly
holds for $y_1$ so we arbitrarily pick another variable $y \in Y$. Now, it holds that
$\varphi_i(y)-\varphi_i(y_1)=\varphi_i(y)$ for every $i \in \{1,2\}$. We
distinguish the following cases:
\begin{itemize}
\item $\varphi_1(y)-\varphi_1(y_1)=i$ for some $i \in \{0,\dotsc,k\}$,
\item $\varphi_1(y)-\varphi_1(y_1)\in (i,i+1)$ for some $i \in \{0,\dotsc,k-1\}$, or
\item $\varphi_1(y)-\varphi_1(y_1)\in (k,\infty)$.
\end{itemize}
Note that the third case cannot occur because of conditions 3 or 4.
Moreover, in the first case
we obtain from condition 2 that $\varphi_2(y)-\varphi_2(y_1)=i$, which implies that
$\varphi_1(y)=\varphi_2(y)=i$. Finally, in the second case, we obtain from condition 2 that
$\varphi_2(y)-\varphi_2(y_1)\in (i,i+1)$, which implies that
$\intP{\varphi_1(y)}=\intP{\varphi_2(y)}$.

Now we show that $\ratP{\varphi_1(y)} \odot \ratP{\varphi_1(y')}$ if and
only if $\ratP{\varphi_2(y)} \odot \ratP{\varphi_2(y')}$ for every $y,y' \in
Y$ and $\odot \in \{<,=,>\}$. We distinguish the following cases. If $\ratP{\varphi_i(y)} =
\ratP{\varphi_i(y')}$ for some $i \in \{1,2\}$, then $\varphi_i(y)-\varphi_i(y')$ is an integer and
conditions 3 and 4 imply that $\varphi_i(y)-\varphi_i(y') \in
\{-k,\dotsc,k\}$. Therefore, we obtain from condition 2 that
$\varphi_1(y)-\varphi_1(y')=\varphi_{2}(y)-\varphi_{2}(y')$ so
$\varphi_{2}(y)-\varphi_{2}(y')$ is an integer and 
$\ratP{\varphi_{2}(y)}=\ratP{\varphi_{2}(y')}$, as required. 
Otherwise,
suppose without loss of generality that $\ratP{\varphi_1(y)} < \ratP{\varphi_1(y')}$ and $\intP{\varphi_1(y)}\leq
\intP{\varphi_1(y')}$. Then, using condition 3, we obtain that
$\varphi_1(y')-\varphi_1(y) \in (i,i+1)$ for some $i \in
\{0,\dotsc,k-1\}$. By condition 2, it follows that
$\varphi_2(y')-\varphi_2(y) \in (i,i+1)$, too. Since the integer parts of
$y$ and $y'$ are the same for both $\varphi_1$ and $\varphi_2$, we see that $\ratP{\varphi_2(y)}<\ratP{\varphi_2(y')}$, as required.

We are now ready to define a common assignment $\varphi$ for $\inst$ as follows.
We keep the integer part of every variable in $X \cup
Y\cup Z$ the same as before, i.e.:
\begin{itemize}
\item for every $x \in X$, we set the
integer part of $\varphi(x)$ equal to the integer part of $\varphi_1(x)$,
\item for every $z \in Z$, we set the
  integer part of $\varphi(z)$ equal to the integer part of $\varphi_2(z)$, and
\item for every $y \in Y$, we set the
  integer part of $\varphi(y)$ equal to the integer part of $\varphi_1(y)$
  and $\varphi_2(y)$; using the property that the integer parts of
  $\varphi_1(y)$ and $\varphi_2(y)$ are equal.
\end{itemize}
To define the fractional part of $\varphi(v)$ for every $v \in X\cup
Y\cup Z$, let $\rho_1$ ($\rho_2$) be the 
ordered partition\footnote{An \textit{ordered partition} of a finite set $S$ is a sequence
of non-empty disjoint subsets $(S_1,\dots,S_\ell)$ such that
$\bigcup_{i=1}^{\ell} S_i = S$.}
of
the variables in $X\cup Y$ ($Y\cup Z$) ordered by non-decreasing
fractional part with respect to $\varphi_1$ ($\varphi_2$).
Then, $\rho_1$ and
$\rho_2$ are equal if restricted to the variables in $Y$ and therefore we can combine them into an ordered
partition $\rho$ of the variables in $X\cup Y\cup Z$. Moreover, from
$\rho$ we can obtain a function
$\sigma: X \cup Y \cup Z \rightarrow \range{0}{|X\cup
  Y\cup Z|-1}$ such that for every $\odot \in \{<,=,>\}$ it holds that:
\begin{itemize}
\item $\sigma(u) \odot \sigma(v)$ if and only if $\ratP{\varphi_1(u)}
  \odot \ratP{\varphi_1(v)}$ for every $u,v \in X\cup Y$ and
\item $\sigma(u) \odot \sigma(v)$ if and only if $\ratP{\varphi_2(u)}
  \odot \ratP{\varphi_2(v)}$ for every $u,v \in Y\cup Z$.
\end{itemize}
We now obtain the fractional part of $\varphi(v)$ by setting
${\varphi(v)}=\sigma(v)/n$, which implies that $\varphi$ satisfies:
\begin{itemize}
\item $\varphi(u) \odot \varphi(v)$ if and only if $\ratP{\varphi_1(u)}
  \odot \ratP{\varphi_1(v)}$ for every $u,v \in X\cup Y$ and
\item $\varphi(u) \odot \varphi(v)$ if and only if $\ratP{\varphi_2(u)}
  \odot \ratP{\varphi_2(v)}$ for every $u,v \in Y\cup Z$.
\end{itemize}
It now follows from Lemma~\ref{lem:ass-scon} (applied to $\varphi$ and
$\varphi_1$ as well as $\varphi$ and $\varphi_2$) that $\varphi$ satisfies
$\inst[X\cup Y]$ and $\inst[Y\cup Z]$. Moreover, because of
condition 3 and 4, it holds that $\varphi(z)-\varphi(x)>k$ for every $x
\in X$ and $z \in Z$, which together with condition 1 implies that
$\varphi$ also satisfies $\inst[X\cup Z]$. Therefore, $\varphi$ satisfies
$\inst$, as required.
\end{proof}
The above lemma shows that the instance can be split at any $V_i$ into two subinstances $\instance[L_i \cup V_i]$ and
$\instance[V_i\cup R_i]$ that are independent once we fix a certificate for the instance $\instance[V_i]$. This also means that a split at $V_i$ will only be useful  if the number of possible certificates of $\instance[V_i]$, or equivalently the number of variables in $V_i$, is not too large. That is, if $V_i$ contains at most $n /
\log{n}$ variables, then we say it is \textit{sparse} and, otherwise, we
say that it is \textit{dense}. 
The choice of $n / \log{n}$ as the threshold value is justified by the
following proposition:
\begin{proposition}
    \label{prop:n-div-logn}
  The number of certificates for any instance $\declareInstance$ of
  $\csp({\bf D})$ with at most $N\leq n/\log(n)$ variables can be
  computed in time $2^{O(n)}$.
\end{proposition}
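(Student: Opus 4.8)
The plan is to invoke the \textsc{ListCert} procedure from Theorem~\ref{thm:enumassign}. Recall that on a $\csp(\D)$ instance with $N$ variables and coefficient bound $k$, \textsc{ListCert} proceeds by enumerating every assignment $f : V \to CD(N,k)$, checking it against the instance, and collecting the simple constraints of $\allin{C}$ that $f$ satisfies; its running time is $O^*(|CD(N,k)|^N)$. In the situation where this proposition is used, $k$ is a fixed constant, since it is applied to the subinstances $\instance[V_i]$ arising inside the divide-and-conquer algorithm for $\csp(\disjtemp{2,k})$ and there $k$ is part of the problem rather than the input. By the count used in the proof of Corollary~\ref{thm:generalupperbound}, $|CD(N,k)| = O(N^2 k) = O(N^2)$, so the running time of \textsc{ListCert} on $\instance$ is $N^{O(N)} = 2^{O(N\log N)}$.

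Next I would substitute the hypothesis $N \le n/\log n$. Since $N \le n/\log n$ we have $\log N \le \log n$, hence $N\log N \le (n/\log n)\cdot\log n = n$, and therefore \textsc{ListCert} halts within $2^{O(n)}$ steps; in particular it emits at most $2^{O(n)}$ certificates in total. To convert this into the \emph{number} of distinct certificates, I would maintain a dictionary keyed by the (sorted) set of simple constraints comprising a certificate, insert each certificate produced by \textsc{ListCert}, and discard repeats; each insertion takes time polynomial in $\norm{\instance}$, which is itself $2^{O(n)}$, so the overall running time stays $2^{O(n)}$, and the answer is read off as the number of entries of the dictionary.

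I do not expect a genuine obstacle here: the statement is essentially a specialization of Theorem~\ref{thm:enumassign} combined with the arithmetic identity $(n/\log n)\cdot\log n = n$. The only points deserving a sentence of care are (i) the treatment of $k$ as a constant, which is legitimate given how the proposition is applied, and possibly a remark that the same bound survives as long as $\log k = O(\log n)$; and (ii) the degenerate regime where $N$ is very small (e.g.\ $N \le 1$), where the claim is immediate and can be dispatched separately so that the quantity $\log N$ causes no trouble.
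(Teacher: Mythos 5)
Your proposal is correct and follows essentially the same route as the paper: invoke Theorem~\ref{thm:enumassign} (the \textsc{ListCert} bound $2^{O(N(\log N + \log k))}$), treat $k$ as constant, and observe that $N \le n/\log n$ gives $N\log N \le n$, hence $2^{O(n)}$. The extra remarks about deduplicating certificates and the degenerate small-$N$ case are harmless implementation details not needed in the paper's one-line argument.
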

\begin{proof}
  This follows immediately from Theorem~\ref{thm:enumassign} after
  observing that $2^{O(N(\log{N} + \log{k}))}=2^{O(n)}$, where $k =
  \num{\constraints}$ is constant.
\end{proof}

The next lemma is crucial for our algorithm since it allows us to
show that any solution naturally splits the instance into either two
or three almost independent subinstances; we will later see how the
subinstances are obtained from the partitions identified by the
lemma. Informally, case 1 in the lemma corresponds to a three-split and
gives rise to the two
instances $\instance[L_i\cup V_i]$ and $\instance[V_i\cup R_i]$ and
case 2 in the lemma corresponds to a five-split and gives rise to the three instances
$\instance[L_i\cup V_i]$, $\instance[V_i \cup V_{i+1} \cup \dotsb \cup V_{j-1} \cup V_j]$ (where $i \leq j$), and
$\instance[V_j\cup R_j]$.

\begin{lemma}
    \label{lem:splits}
    Let $\declareInstance$ be an instance of $\csp(\disjtemp{2,k})$, let $\varphi : \variables \rightarrow [0,kw)$ be an assignment of $\declareInstance$, and let $V_i$, $L_i$, and $R_i$ be defined as above with respect to $\varphi$.
    If $\abs{V} \geq 8$, then one of the following holds:
    \begin{enumerate}
        \item There is an index $0 \leq i \leq w+1$ such that $V_i$ is sparse, $\abs{L_i} \geq n/3$ and $\abs{R_i} \geq n/3$.\label{lem:splits:three}
        \item There are indices $0 \leq i < j \leq w+1$ such that $V_i$ and $V_j$ are sparse, $V_s$ is dense for all $i < s < j$, $\abs{L_i} < n/3$ and $\abs{R_j} < n/3$.\label{lem:splits:five}
    \end{enumerate}
\end{lemma}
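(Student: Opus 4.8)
The statement is purely combinatorial: it concerns only the partition $\{V_i\}$ of $V$ induced by $\varphi$ and not the constraints, so the plan is to argue directly from monotonicity together with the identity $\abs{L_i}+\abs{V_i}+\abs{R_i}=n$ (valid because $L_i$, $V_i$, $R_i$ partition $V$), where $n=\abs{V}$. First I would record the elementary facts: since $L_{i+1}=L_i\cup V_i$ and $R_{i-1}=R_i\cup V_i$ (disjoint unions), the sequence $\abs{L_0},\dots,\abs{L_{w+1}}$ is non-decreasing with $\abs{L_0}=0$, and $\abs{R_0},\dots,\abs{R_{w+1}}$ is non-increasing with $\abs{R_{w+1}}=0$; moreover $V_0=V_{w+1}=\varnothing$, so both are sparse; and from $n\geq 8$ we get $\log n\geq 3$, hence $n/\log n\leq n/3$. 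This last inequality is the only place where $\abs{V}\geq 8$ is used.

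Next I would split on whether alternative~\ref{lem:splits:three} already holds. If some index $m$ satisfies $V_m$ sparse, $\abs{L_m}\geq n/3$, and $\abs{R_m}\geq n/3$, we are done. So assume no such index exists; the claim is then that alternative~\ref{lem:splits:five} holds. Let $i$ be the largest index with $\abs{L_i}<n/3$ and $V_i$ sparse, and let $j$ be the smallest index with $\abs{R_j}<n/3$ and $V_j$ sparse. Both are well-defined: index $0$ witnesses the condition defining $i$ (since $\abs{L_0}=0$ and $V_0=\varnothing$) and index $w+1$ witnesses the condition defining $j$. By construction $V_i$ and $V_j$ are sparse, $\abs{L_i}<n/3$, and $\abs{R_j}<n/3$, so it remains to verify that $i<j$ and that $V_s$ is dense for every $i<s<j$.

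For $i<j$: if instead $j\leq i$, then $\abs{L_j}\leq\abs{L_i}<n/3$ by monotonicity, whence $\abs{V_j}=n-\abs{L_j}-\abs{R_j}>n-n/3-n/3=n/3\geq n/\log n$, contradicting that $V_j$ is sparse. For the intermediate density: fix $s$ with $i<s<j$ and suppose $V_s$ were sparse. If $\abs{L_s}<n/3$, this contradicts the maximality of $i$; if $\abs{R_s}<n/3$, it contradicts the minimality of $j$; and if $\abs{L_s}\geq n/3$ and $\abs{R_s}\geq n/3$, then $m=s$ is a witness for alternative~\ref{lem:splits:three}, contradicting the standing assumption. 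Hence $V_s$ is dense, and $(i,j)$ is the required pair for alternative~\ref{lem:splits:five}.

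I do not expect a genuine obstacle: the argument is short monotonicity bookkeeping. The one point that needs care is the final step, where one must check that each of the three possible behaviours of a hypothetical sparse intermediate block $V_s$ collides with one of the three standing facts (maximality of $i$, minimality of $j$, non-existence of a witness for alternative~\ref{lem:splits:three}), and that the threshold inequality $n/3\geq n/\log n$ --- the sole use of the hypothesis $\abs{V}\geq 8$ --- is applied in the right direction.
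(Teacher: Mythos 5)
Your proof is correct and follows essentially the same route as the paper: both arguments pick extremal sparse indices with respect to the $n/3$ thresholds on $\abs{L_i}$ and $\abs{R_i}$, exploit monotonicity of these sequences and the sparseness of the empty blocks $V_0,V_{w+1}$, and use $\log n\geq 3$ (i.e. $n/\log n\leq n/3$) exactly once. The only cosmetic difference is that you assume alternative~\ref{lem:splits:three} fails and work with the complementary conditions $\abs{L_i}<n/3$, $\abs{R_j}<n/3$, whereas the paper first notes that sparseness of $V_i$ forces $\abs{L_i}\geq n/3$ or $\abs{R_i}\geq n/3$ and then handles the case $i\geq j$ directly.
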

\begin{proof}
    If $V_i$ is sparse, then $\abs{L_i} + \abs{R_i} = \abs{V} - \abs{V_i} \geq n - n / \log{n}$.
    Since $\log{n} \geq 3$, we have $\abs{L_i} + \abs{R_i} \geq {2n}/{3}$.
    Thus, for every sparse $V_i$ either $\abs{L_i} \geq n / 3$ or $\abs{R_i} \geq n / 3$.
    
    Let $i$ be the maximal index such that $V_i$ is sparse and $\abs{R_i} \geq n/3$.
    Similarly, let $j$ be the minimal index such that $V_j$ is sparse and $\abs{L_j} \geq n/3$.
    Such indices always exist since $V_0$ and $V_{w+1}$ are sparse.
    If $i \geq j$, then both $V_i$ and $V_j$ meet the conditions of Case~\ref{lem:splits:three}.
    Otherwise, all $V_s$ for $i < s < j$ are dense.
    If neither $V_i$ nor $V_j$ fulfills the conditions of Case~\ref{lem:splits:three}, 
    then $\abs{L_i} < n/3$ and $\abs{R_j} < n/3$, and we are in Case~\ref{lem:splits:five}.
\end{proof}

\algrenewcommand\algorithmicindent{1.0em}%
\begin{algorithm}[tbh!]
    \caption{}
    \label{alg:gammaone}
    \begin{algorithmic}[1]
        \Procedure{Solve}{$\declareInstance$}
            \If{$\varnothing \in \constraints$}
                \Reject
            \EndIf
            \If{$\abs{V} < 8$}\label{alg:gammaone:basecase}
                \State \Accept \textbf{ if }$\Call{ListCert}{\instance} \neq \varnothing$ \textbf{ else }\Reject
            \EndIf
            \If{$\Call{ThreeSplit}{\instance}$}
                \Accept
            \EndIf
            \If{$\Call{FiveSplit}{\instance}$}
                \Accept
            \EndIf
            \State \Reject
        \EndProcedure
        
        \Statex 

        \Procedure{ThreeSplit}{$\declareInstance$}
            \For{\textbf{each} $3$-partition $(X,Y,Z)$ of $V$} 
                \If{$\abs{X},\abs{Z} \geq \frac{\abs{V}}{3}$ \textbf{and} $\abs{Y} \leq \frac{\abs{V}}{\log{\abs{V}}}$\textbf{ and}\label{alg:gammaone:1:partition1}
                \State $z - x > k \in \sigma_{\constraints}(x,z)$ \textbf{for} $x \in X, z \in Z$\label{alg:gammaone:1:partition2}
                }
                    \Statex \algtab\algtab\algtab\textit{// introduce
                      a fresh variable $y_{\min}$}
                    \State $Y' \leftarrow Y \cup \{y_{\min}\}$
                    \State $\instance_{Y'} \leftarrow (Y', \constraints_{Y} \cup \{ 0 \leq y - y_{\min} < k \}_{y \in Y})$ \label{alg:gammaone:1:instanceB} 
                    \For{$F_{Y'} \in \Call{ListCert}{\instance_{Y'}}$}\label{alg:gammaone:1:listassign}
                        \State $\instance_{1} \leftarrow (X\cup Y',\constraints_{X \cup Y} \cup F_{Y'} \cup \{ y_{\min} - x > 0 \}_{x \in X})$\label{alg:gammaone:1:instance1}
                        \State $\instance_{2} \leftarrow (Y'\cup Z,\constraints_{Y \cup Z} \cup F_{Y'} \cup \{ z - y_{\min} > k \}_{z \in Z})$\label{alg:gammaone:1:instance2}
                        \If{\Call{Solve}{$\instance_{1}$} \textbf{and} \Call{Solve}{$\instance_{2}$}}\label{alg:gammaone:1:recursive}
                            \State \Accept
                        \EndIf
                    \EndFor
                \EndIf
            \EndFor
            \State \Reject
        \EndProcedure

        \Statex 
        \algrenewcommand\algorithmicindent{1.0em}%
        \Procedure{FiveSplit}{$\declareInstance$}
            \For{\textbf{each} $5$-partition $(S_1,S_2,S_3,S_4,S_5)$ of $\variables$}\label{alg:gammaone:2:partition}
                \State $X \leftarrow S_3 \cup S_4 \cup S_5$
                \If{$\abs{S_1}, \abs{S_5} < \frac{\abs{V}}{3}$ \textbf{and} $\abs{S_2}, \abs{S_4} \leq \frac{\abs{V}}{\log{\abs{V}}}$\textbf{ and}\label{alg:gammaone:2:partition1}
                \State $(x - s_1 > k) \in \sigma_{\constraints}(s_1,x)$ \textbf{for} $s_1 \in S_1$, $x \in X$\textbf{ and}\label{alg:gammaone:2:partition2}
                \State $(s_5 - s_3 > k) \in \sigma_{\constraints}(s_3,s_5)$ \textbf{for} $s_3 \in S_3$, $s_5 \in S_5$\label{alg:gammaone:2:partition3}
                }
                    \Statex \algtab\algtab\algtab\textit{// introduce fresh variables $s_2^{\min}$ and $s_4^{\min}$}
                    \State $S_2'=S_2\cup \{s_2^{\min}\}$
                    \State $S_4'=S_4\cup \{s_4^{\min}\}$
                    \State $\instance_{S_2'} \leftarrow (S_2',\constraints_{S_2} \cup \{ 0 \leq s_2 - s_2^{\min} < k \}_{s_2 \in S_2})$\label{alg:gammaone:2:instanceB}
                    \State $\instance_{S_4'} \leftarrow (S_4',\constraints_{S_4} \cup \{ 0 \leq s_4 - s_4^{\min} < k \}_{s_4 \in S_4})$\label{alg:gammaone:2:instanceD}
                    \For{$F_{S_2'} \in \Call{ListCert}{\instance_{S_2'}}$ \textbf{and}\label{alg:gammaone:2:listassign1}
                    \State \:\:$F_{S_4'} \in \Call{ListCert}{\instance_{S_4'}}$}\label{alg:gammaone:2:listassign2}
                        \State $\instance_1 \leftarrow (S_1\cup S_2',\constraints_{S_1 \cup S_2} \cup F_{S_2'} \cup \{ s_2^{\min} - s_1 > 0 \}_{s_1 \in S_1})$ \label{alg:gammaone:2:instance1}
                        \State $\instance_2 \leftarrow (S_2'\cup
                        S_3\cup S_4',\constraints_{S_2 \cup S_3 \cup S_4} \cup F_{S_2'} \cup \{ s_3 - s_2^{\min} > k \}_{s_3 \in S_3} \cup$\label{alg:gammaone:2:instance2}
                        \Statex \algtab\algtab\algtab\algtab\algtab\algtab\: $F_{S_4'} \cup \{ s_4^{\min} - s_3 > 0 \}_{s_3 \in S_3})$     
                        \State $\instance_3 \leftarrow (S_4'\cup S_5,\constraints_{S_4 \cup S_5} \cup F_{S_4'} \cup \{ s_5 - s_4^{\min} > k \}_{s_5 \in S_5})$\label{alg:gammaone:2:instance3}
                        \State $w \leftarrow k(\log{\abs{V}} + 2)$
                        \If{$\Call{Solve}{\instance_1}$ \textbf{and} $\Call{Solve}{\instance_3}$ \textbf{and}\label{alg:gammaone:2:recursive}
                        \State $\Call{SolveBounded}{w, \instance_2}$}\label{alg:gammaone:2:bounded}
                            \State \Accept
                        \EndIf
                    \EndFor
                \EndIf
            \EndFor
            \State \Reject
        \EndProcedure
        \algrenewcommand\algorithmicindent{1.0em}%
    \end{algorithmic}
\end{algorithm}
\algrenewcommand\algorithmicindent{2.0em}%

We are now ready to prove that the algorithm is correct and analyze its running time.
\begin{theorem}
    \label{thm:gammaonetime}
    Algorithm~\ref{alg:gammaone} solves instances of $\csp(\disjtemp{2,k})$
    in $2^{O(n \log{\log{n}})}$ time.
\end{theorem}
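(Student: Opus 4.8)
The plan is to prove the two assertions of the theorem --- correctness of Algorithm~\ref{alg:gammaone} and its running time --- separately, in each case by induction on $n=\abs{V}$, following the recursion. The base case $n<8$ is immediate: line~\ref{alg:gammaone:basecase} calls \textsc{ListCert}, which by Theorem~\ref{thm:enumassign} (together with Theorem~\ref{lem:compact-assign}) returns a non-empty list exactly when $\instance$ is satisfiable, and for bounded $n$ this costs $O(1)$ time. For the inductive step, consider first \emph{soundness}. If \textsc{Solve} accepts via \textsc{ThreeSplit}, there is a $3$-partition $(X,Y,Z)$ passing the tests at lines~\ref{alg:gammaone:1:partition1}--\ref{alg:gammaone:1:partition2} and a certificate $F_{Y'}$ for which the recursive calls on $\instance_1$ and $\instance_2$ accept; by the induction hypothesis these have solutions $\varphi_1,\varphi_2$. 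The extra constraints built into $\instance_1$ and $\instance_2$ --- the shared certificate $F_{Y'}$, the window constraints $0\leq y-y_{\min}<k$, and the separating inequalities $y_{\min}-x>0$ and $z-y_{\min}>k$ anchored at the fresh variable $y_{\min}$ --- are designed so that $X,Y,Z,\varphi_1,\varphi_2$ satisfy conditions~\ref{lem:independence:cond1}--\ref{lem:independence:cond4} of Lemma~\ref{lem:independence}, which then yields a solution of $\instance$. The \textsc{FiveSplit} case is analogous but applies Lemma~\ref{lem:independence} twice (gluing $\instance_1$ to $\instance_2$ and $\instance_2$ to $\instance_3$ through the anchors $s_2^{\min}$ and $s_4^{\min}$), with the middle instance $\instance_2$ decided correctly by \textsc{SolveBounded} via Lemma~\ref{lem:w-disjtemp}.

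For \emph{completeness}, suppose $\instance$ is satisfiable. By Theorem~\ref{lem:compact-assign} it has a solution $\varphi$ with values in $\cd{n,k}$, hence, after translating the minimum to $0$, $\varphi:V\to[0,kw)$ for a suitable $w\in\naturals$. Slicing the domain into the blocks $[ki-k,ki)$ defines the sets $V_i,L_i,R_i$ from Section~\ref{ssec:algod2}, and since $n\geq 8$ forces $\log n\geq 3$, Lemma~\ref{lem:splits} places us in the three- or five-split case. In the three-split case the partition $(L_i,V_i,R_i)$ passes the tests of \textsc{ThreeSplit}, and taking $F_{Y'}$ to be the certificate generated from $\varphi$ restricted to the shared block (with $y_{\min}$ set to the block boundary $ki-k$), the restrictions of $\varphi$ satisfy $\instance_1$ and $\instance_2$, so the recursive calls succeed by the induction hypothesis; the five-split case is symmetric. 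There, the values of $\varphi$ on the middle blocks $V_i\cup\dots\cup V_j$ lie, after translation, in an interval of length at most $k(\log n+2)=w$, because at most $\log n$ of these blocks can be dense (each dense block has at least $n/\log n$ variables) plus the two sparse boundary blocks --- so \textsc{SolveBounded}$(w,\instance_2)$ recovers the restriction of $\varphi$, and the recursive calls on $\instance_1$ and $\instance_3$ succeed by induction.

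For the running time, let $T(n)$ bound the time \textsc{Solve} spends on an instance with $n$ variables. One call enumerates $3^n+5^n=2^{O(n)}$ partitions; for each it calls \textsc{ListCert} on an auxiliary instance with at most $n/\log n+1$ variables and constant coefficient bound, which by Proposition~\ref{prop:n-div-logn} costs $2^{O(n)}$ time and outputs $2^{O(n)}$ certificates; for each certificate it builds the subinstances in polynomial time, recurses on two subinstances of size at most $2n/3+1$ (three-split) or $n/3+n/\log n+1$ (five-split) --- in both cases at most $cn$ for a constant $c<1$ once $n$ is large --- and possibly runs \textsc{SolveBounded} with span $w=k(\log n+2)$ on an instance with at most $n+2$ variables, costing $O^*(w^{n+2})=2^{O(n\log\log n)}$ by Lemma~\ref{lem:w-disjtemp} since $k$ is constant. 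This yields
\[
 T(n)\;\leq\;2^{O(n)}\cdot T(cn)\;+\;2^{O(n\log\log n)},
\]
with $T(n)=O(1)$ for $n<8$. The recursion has depth $O(\log n)$, and since instance sizes shrink geometrically the total number of recursive calls is $\prod_i 2^{O(c^i n)}=2^{O(n)}$; as each call does $2^{O(n\log\log n)}$ non-recursive work, $T(n)=2^{O(n\log\log n)}$, as claimed.

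I expect the delicate part to be the soundness bookkeeping: checking that the certificate, window constraints, separating inequalities, and anchor variables together enforce \emph{precisely} conditions~\ref{lem:independence:cond1}--\ref{lem:independence:cond4} of Lemma~\ref{lem:independence}. Condition~\ref{lem:independence:cond2} is the subtle one --- it requires $\varphi_1$ and $\varphi_2$ to agree on \emph{all} unit constraints over pairs in $Y$, and this is forced only because the window constraints $0\leq y-y_{\min}<k$ confine any two $Y$-variables to within $k$ of each other, leaving finitely many relevant unit constraints that the shared certificate $F_{Y'}$ can pin down. The five-split additionally hinges on the observation that the sparse/dense threshold $n/\log n$ caps the number of dense blocks at $\log n$, which is exactly what makes the span bound $w=k(\log n+2)$ correct and keeps \textsc{SolveBounded} inside the target time. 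Solving the recurrence is routine divide-and-conquer; the only point to watch is that $n/\log n=o(n)$ guarantees the constant-factor shrinkage of the recursive subinstances.
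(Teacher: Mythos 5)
Your proposal is correct and follows essentially the same route as the paper: induction on $n$, soundness by applying Lemma~\ref{lem:independence} once for the three-split and twice for the five-split (with \textsc{SolveBounded} and Lemma~\ref{lem:w-disjtemp} handling the middle instance), completeness via Theorem~\ref{lem:compact-assign} and Lemma~\ref{lem:splits} with the span bound $k(\log n+2)$, and the same identification of the bounded-span subroutine as the dominant cost. The only cosmetic difference is that you solve the recurrence by a recursion-tree/geometric-shrinkage argument, whereas the paper guesses the bound $c^n(\log n+2)^n$ and verifies it by induction; the two analyses are equivalent.
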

\begin{proof}
    Consider an arbitrary instance $\declareInstance$ of $\csp(\disjtemp{2,k})$.
    We prove the claim by induction based on $\abs{V}$.
    If the instance has fewer than 8 variables, the claim follows by Corollary~\ref{thm:generalupperbound}.
    Assume henceforth that $\abs{V} \geq 8$.

    First, we prove that if the algorithm accepts an instance, then it is satisfiable. 
    Suppose the procedure \textsc{ThreeSplit} accepts $\instance$.
    We show that instance $\instance$ is satisfiable since all four conditions of Lemma~\ref{lem:independence} are fulfilled.
    First, note that subinstances $\instance_1$ and $\instance_2$ in lines~\ref{alg:gammaone:1:instance1}~and~\ref{alg:gammaone:1:instance2} have at most ${2n}/{3} + 1 < n$ variables.
    Hence, they admit satisfying assignments by the inductive hypothesis.
    Condition~\ref{lem:independence:cond1} is ensured by the check on line~\ref{alg:gammaone:1:partition2}.
    Condition~\ref{lem:independence:cond2} is ensured since both subinstances $\instance_1$ and $\instance_2$ satisfy all constraints in  $F_{Y'}$.
    Conditions~\ref{lem:independence:cond3}~and~\ref{lem:independence:cond4}
    are ensured by the introduction of $y_{\min}$ and the constraints
    involving it.
    
    Suppose instead that the procedure \textsc{FiveSplit} accepts $\instance$.
    First, note that subinstances $\instance_1$ and $\instance_3$ in lines~\ref{alg:gammaone:2:instance1}~and~\ref{alg:gammaone:2:instance3} have at most ${n}/{3} + 1 < n$ variables.
    Hence, they admit satisfying assignments by the inductive hypothesis.
    Subinstance $\instance_2$ is satisfiable by Lemma~\ref{lem:w-disjtemp}.
    Observe that Lemma~\ref{lem:independence} applies to the subinstance induced by $S_3 \cup S_4 \cup S_5$. We see that
    Condition~\ref{lem:independence:cond1} is ensured by the check on line~\ref{alg:gammaone:2:partition3},
    Condition~\ref{lem:independence:cond2} is ensured since both
    subinstances $\instance_2$ and $\instance_3$ satisfy all
    constraints in $F_{S_4'}$, and
    Conditions~\ref{lem:independence:cond3}~and~\ref{lem:independence:cond4} are ensured by the introduction of $s_4^{\min}$ and the constraints involving it. 
    Let $X = S_3 \cup S_4 \cup S_5$ and consider the subinstance induced by $S_1 \cup S_2 \cup X$. 
    Lemma~\ref{lem:independence} is applicable and we see that
    Condition~\ref{lem:independence:cond1} is ensured by the check on line~\ref{alg:gammaone:2:partition2},
    Condition~\ref{lem:independence:cond2} is ensured since both
    subinstances $\instance_1$ and $\instance_2$ satisfy all
    constraints in $F_{S_2'}$, and
    Conditions~\ref{lem:independence:cond3}~and~\ref{lem:independence:cond4} are ensured by the introduction of $s_2^{\min}$ and the constraints involving it. 
    Hence, we have showed that $\instance$ is satisfiable. 

\smallskip

    We proceed by proving the other direction: if $\instance$ is
    satisfiable with satisfying assignment $\varphi$,
    then the algorithm accepts it.
    Lemma~\ref{lem:splits} implies that $\varphi$ splits the instance as in Case~\ref{lem:splits:three}~or~\ref{lem:splits:five}.

\medskip

    \noindent
    {\bf Case~\ref{lem:splits:three}.} We show that \textsc{ThreeSplit} accepts
    $\instance$. The procedure enumerates every 3-partition of the variables, 
    so at some step of the algorithm $X = L_i$, $Y = V_i$ and $Z =
    R_i$, where $(L_i,V_i,R_i)$ is the split under the assignment
    $\varphi$ according to Lemma~\ref{lem:splits}
    Case~\ref{lem:splits:three}.
    We set $\varphi(y_{\min}) = ki - k$ and observe that $\varphi$ satisfies instances in Lines~\ref{alg:gammaone:1:instanceB},~\ref{alg:gammaone:1:instance1}~and~\ref{alg:gammaone:1:instance2}.
    Note that the procedure \textsc{Solve} accepts $\instance_1$ and $\instance_2$ by the inductive hypothesis.

\medskip

\noindent
    {\bf Case~\ref{lem:splits:five}.} We show that \textsc{FiveSplit} accepts
    $\instance$. The procedure enumerates every 5-partition of the variables, 
    so at some step of the algorithm $S_1 = L_i$, $S_2 = V_i$, $S_3 = R_i \cap L_j$, $S_4 = V_j$ and $S_5 = R_j$, 
    where $(L_i,V_i,R_i \cap L_j,V_j,R_j)$ is the split under the
    assignment $\varphi$ according to Lemma~\ref{lem:splits}
    Case~\ref{lem:splits:five}.
    We set $\varphi(s_2^{\min}) = ki - k$ and $\varphi(s_4^{\min}) =
    kj - k$ and observe that $\varphi$ satisfies the instances in Lines~\ref{alg:gammaone:2:instanceB},~\ref{alg:gammaone:2:instanceD},~\ref{alg:gammaone:2:instance1},~\ref{alg:gammaone:2:instance2}~and~\ref{alg:gammaone:2:instance3}.
    Note that the procedure \textsc{Solve} accepts $\instance_1$ and $\instance_3$ by the inductive hypothesis. 
    By Lemma~\ref{lem:splits}, all subsets in $R_i \cap L_j$ are dense. 
    Since each of them contains at least ${n}/{\log{n}}$ variables, there may be at most $\log{n}$ such subsets.
    Taking $V_i$ and $V_j$ into account, we conclude that $\varphi$ satisfies instance $\instance_2$ with span at most $k(\log{n} + 2)$.
    This completes the correctness proof.

    \paragraph{Time complexity.}
    Let $T(n)$ be the running time of Algorithm~\ref{alg:gammaone} on an instance of $\csp(\disjtemp{2,k})$ with $n$ variables.
    We claim that $T(n) \leq c^n (\log{n} + 2)^n = 2^{O(n \log \log n)}$
    for some constant $c$.
    If $n < 8$, then $T(n)$ is constant.
    Otherwise, $T(n) = T_1(n) + T_2(n) + \textnormal{poly}(n)$, 
    where $T_1$ and $T_2$ are the running times of the procedures \textsc{ThreeSplit} and \textsc{FiveSplit}, respectively.
    Note that $T_1(n) < T_2(n)$ for all $n$, so we can focus our attention on the running time of \textsc{FiveSplit}.

    The running time $T_2(n)$ is bounded from above by
    $5^n \cdot 2^{2n} \cdot (2T(\tfrac{n}{3} + 1) + (k(\log{n} + 2))^n) \cdot \textnormal{poly}(n)$,    
    where
    $5^n$ is an upper bound on the number of $5$-partitions of $V$,
    ${2}^{2n}$ comes from the upper bound on the running time of the calls to \textsc{ListCert} in lines~\ref{alg:gammaone:2:listassign1}~and~\ref{alg:gammaone:2:listassign2},
    $2T(\frac{n}{3} + 1)$ is an upper bound on the running time of the recursive calls in line~\ref{alg:gammaone:2:recursive}, and
    $(k(\log{n} + 2))^n$ comes from the running time of the bounded-span algorithm {\sc SolveBounded} in line~\ref{alg:gammaone:2:bounded} (see Lemma~\ref{lem:w-disjtemp}).
    Observe that for sufficiently large values of $n$ we have
    $(\log{n} + 2)^n > \alpha \log{(\varepsilon n)}^{\varepsilon n}$
    for arbitrary $\alpha \geq 0$ and $0 \leq \varepsilon < 1$.
    Hence, $(k(\log{n} + 2))^n$ asymptotically dominates $2T(\tfrac{n}{3} + 1)$ by our initial hypothesis.
    Finally, observe that
    \[ T(n) < 2 \cdot (40k)^n \cdot (\log{n} + 2)^n \cdot \textnormal{poly}(n). \]
    Setting $c = 81k$ completes the proof.
\end{proof}

\section{Lower Bounds on Time Complexity}
\label{sec:lower-bounds-time}

This section contains our time complexity lower bound results
for various subclasses of $\csp(\D)$.
The full picture of upper and
lower bounds on time complexity is summarized in Table~\ref{tb:time-summary}.
The majority of our lower-bound results are based on reductions from the
$n \times n$ \probfont{Independent Set} problem.

\pbDef{$n \times n$ \probfont{Independent Set}}
{A graph $G = (V,E)$ with vertex set~$V = \{(i,j) : 1 \leq i \leq n, 1 \leq j \leq n\}$.}
{Is there an independent set in $G$ with one vertex from each row, i.e., subset of vertices $\{ (1,j_1), \dots, (n,j_n) \}$ where no pair of vertices is connected by an edge?} 

Lokshtanov et al.~\cite{lokshtanov2018slightly} have proved the following result.

\begin{theorem}
    \label{thm:nxnindset}
    $\nxnindset$ cannot be solved in $2^{o(n \log{n})}$ time unless the ETH fails.    
\end{theorem}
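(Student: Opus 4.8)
The plan is to exhibit a polynomial-time reduction from \coloring[3] on $N$-vertex graphs to $\nxnindset$ in which the resulting grid dimension is $n=\Theta(N/\log N)$. Then a hypothetical $2^{o(n\log n)}$-time algorithm for $\nxnindset$ would, composed with the reduction, decide \coloring[3] in $2^{o(N)}$ time. Since the ETH together with the sparsification lemma~\cite{impagliazzo2001problems} shows that \sat[3] with $N$ variables and $O(N)$ clauses has no $2^{o(N)}$-time algorithm, and the textbook reduction from \sat[3] to \coloring[3]~\cite{gj79} blows up the instance only by a constant factor, \coloring[3] on $N$-vertex graphs has no $2^{o(N)}$-time algorithm under the ETH; this is the hypothesis I would reduce from, and the contradiction closes the argument. (One can equivalently phrase $\nxnindset$ as a binary CSP whose domain size equals its number of variables, which is the regime in which the Traxler-type ``domain size drives complexity'' phenomenon is pushed to its extreme.)

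Concretely, given $G=(V,E)$ with $|V|=N$, first choose $n=\Theta(N/\log N)$ large enough that $3^{\lceil N/n\rceil}\le n$ (for instance $n=\lceil 2N/\log N\rceil$ works for large $N$), and note that this forces $n\log n=\Theta(N)$. Partition $V$ into blocks $B_1,\dots,B_n$ of size at most $\lceil N/n\rceil$. If some $G[B_i]$ is not $3$-colourable---checkable in polynomial time since each block has $O(\log N)$ vertices---output a fixed NO-instance. Otherwise build the grid graph $H$ on vertex set $\{(i,j):1\le i,j\le n\}$ by enumerating the proper $3$-colourings of each $G[B_i]$ (there are between $1$ and $3^{\lceil N/n\rceil}\le n$ of them) and letting $(i,j)$ represent the $j$-th such colouring, padding each row to exactly $n$ columns by repeating colourings. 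Put an edge between $(i,j)$ and $(i',j')$ with $i\ne i'$ precisely when some edge of $G$ with one endpoint in $B_i$ and the other in $B_{i'}$ has those two endpoints receiving the same colour under the colourings represented by $(i,j)$ and $(i',j')$. Edges of $G$ inside a single block need no gadget, as every column of the corresponding row already encodes a proper colouring of that block.

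The correctness claim is that $G$ is $3$-colourable iff $H$ has an independent set with one vertex per row. A proper $3$-colouring of $G$ restricts to a proper colouring of each $G[B_i]$, hence selects one column per row; no two selected vertices are adjacent, since an edge of $H$ would witness two adjacent vertices of $G$ in distinct blocks getting the same colour. Conversely, a row-transversal independent set of $H$ selects a proper colouring of each block, and gluing these yields a map $V\to\{1,2,3\}$ that is proper on every edge of $G$: proper inside blocks by the choice of columns, proper across blocks by independence. The reduction runs in time $\mathrm{poly}(N)$: $H$ has $n^2\le N^2$ vertices, enumerating the $\le n$ proper colourings of each $O(\log N)$-vertex block costs $\mathrm{poly}(N)$, and $H$ has at most $|E|\cdot n^2$ edges.

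The main obstacle---and the reason to route the reduction through \coloring[3] rather than directly through \sat[3]---is that the $n\times n$ grid can only express binary (edge) interactions between rows, whereas a falsified $3$-clause is a ternary forbidden pattern over three blocks, which is not directly an edge. Working from \coloring[3], whose constraints are already binary, every forbidden block-pattern becomes a single edge of $H$, and the construction goes through cleanly. The remaining care is purely arithmetic bookkeeping: picking $n$ so that simultaneously $3^{\Theta(N/n)}\le n$ and $n\log n=\Theta(N)$, and padding each row up to exactly $n$ columns without perturbing the correspondence between column choices and proper block colourings.
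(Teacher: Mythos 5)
The paper does not actually prove this statement: it is imported verbatim from Lokshtanov, Marx and Saurabh~\cite{lokshtanov2018slightly}, so there is no internal proof to compare against. Your argument is correct, and it is essentially a self-contained reconstruction of the argument from that reference: group the $N$ vertices of a \coloring[3] instance into $n=\Theta(N/\log N)$ blocks chosen so that $3^{\lceil N/n\rceil}\le n$, let the columns of row $i$ enumerate (with harmless padding by duplication) the proper $3$-colourings of $G[B_i]$, and join grid vertices in distinct rows exactly when the two block colourings clash on some crossing edge of $G$; a row-transversal independent set is then precisely a consistent global $3$-colouring. The bookkeeping checks out: $n=\lceil 2N/\log N\rceil$ does satisfy $3^{\lceil N/n\rceil}\le n$ for all large $N$ and gives $n\log n=\Theta(N)$; the $2^{o(N)}$ lower bound for \coloring[3] under the ETH via sparsification and the linear-size reduction from \sat[3] is standard (and is invoked elsewhere in the paper itself); and omitting within-row edges is immaterial since a solution picks exactly one vertex per row. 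Hence a $2^{o(n\log n)}$-time algorithm for $\nxnindset$ would yield a $2^{o(N)}$-time algorithm for \coloring[3], contradicting the ETH, which is exactly the route taken in the cited source.
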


We begin by proving lower bounds for $\csp(\D_{a,0})$ when $a \geq 4$.
Note that $\csp(\D_{a,0}^{\leq})$
is trivially in P for every $a \geq 0$ so this result cannot be
extended to $\csp(\D^{\leq}_{a,0})$.

\begin{theorem}
    \label{thm:omegakresult}
    $\csp(\D_{4,0})$
    cannot be solved in $2^{o(n \log{n})}$ time unless the ETH fails.
\end{theorem}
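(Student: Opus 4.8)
The plan is to reduce from $\nxnindset$ and invoke Theorem~\ref{thm:nxnindset}. Given an instance $G=(V,E)$ of $\nxnindset$ with $V=\{(i,j):1\le i,j\le n\}$, I will build an instance $\inst$ of $\csp(\D_{4,0})$ with only $2n$ variables: a \emph{scale} $a_1,\dots,a_n$ and a \emph{row selector} $r_1,\dots,r_n$. The intended meaning is that $a_1<a_2<\dots<a_n$ realises a linearly ordered ruler with $n$ marks, and that each $r_i$ is forced to coincide with exactly one mark, say $r_i=a_{j_i}$; this encodes the choice of vertex $(i,j_i)$ from row $i$. Since we are allowed disjunctive relations over the point algebra of arity up to $4$, the whole construction fits inside $\D_{4,0}$, and the number of variables is $O(n)$, which is what makes Theorem~\ref{thm:nxnindset} transfer.

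Concretely, $\inst$ contains: (i) \emph{scale constraints} $a_j-a_{j+1}\in(-\infty,0)$ for $1\le j<n$; (ii) \emph{boundary constraints} $r_i-a_1\in[0,\infty)$ and $r_i-a_n\in(-\infty,0]$ for each $i$; (iii) for each $i$ and each $1\le j<n$ the \emph{pinning constraint}
\[
(r_i-a_j\in(-\infty,0])\ \lor\ (r_i-a_{j+1}\in[0,\infty)),
\]
which forbids $a_j<r_i<a_{j+1}$, so that together with (i)--(ii) it forces $r_i\in\{a_1,\dots,a_n\}$; and (iv) for every edge $\{(i,j),(i',j')\}\in E$ with $i\neq i'$ the \emph{edge constraint}
\[
(r_i-a_j\in(-\infty,0))\lor(r_i-a_j\in(0,\infty))\lor(r_{i'}-a_{j'}\in(-\infty,0))\lor(r_{i'}-a_{j'}\in(0,\infty)),
\]
i.e.\ $r_i\neq a_j$ or $r_{i'}\neq a_{j'}$ (within-row edges are harmless and can be dropped). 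Every relation used has all numerical bounds equal to $0$ and arity at most $4$, hence lies in $\D_{4,0}$, and $\inst$ is computable in polynomial time.

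For correctness I would argue both directions. If $G$ has an independent set $\{(i,j_i):i\in[n]\}$, set $a_j:=j$ and $r_i:=j_i$; all constraints hold. Conversely, any solution $f$ of $\inst$ has $f(a_1)<\dots<f(a_n)$, and by (ii)--(iii) satisfies $f(r_i)=f(a_{j_i})$ for a unique $j_i\in[n]$; the edge constraints then say exactly that $\{(i,j_i):i\in[n]\}$ contains no cross-row edge, so (having one vertex per row) it is independent in $G$. Thus $G$ is a yes-instance iff $\inst$ is satisfiable. Since $\inst$ has $N=2n$ variables, a $2^{o(N\log N)}$-time algorithm for $\csp(\D_{4,0})$ would decide $\nxnindset$ in $2^{o(2n\log(2n))}=2^{o(n\log n)}$ time, contradicting Theorem~\ref{thm:nxnindset} under the ETH.

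The crux — and the step that needs the most care — is staying within arity $4$. The naive ``slot'' encoding, where ``row $i$ picks column $j$'' means $a_j<r_i<a_{j+1}$, makes the forbidden-pair constraint for an edge mention $r_i,r_{i'}$ together with \emph{four} scale marks $a_j,a_{j+1},a_{j'},a_{j'+1}$, i.e.\ arity $6$, which is too large. The fix is to spend the structural arity budget on pinning each $r_i$ onto a \emph{single} mark — costing only arity $3$ per pinning constraint, since it refers to two consecutive marks and $r_i$ — so that afterwards ``row $i$ picked column $j$'' is the arity-$2$ atom $r_i=a_j$ and its disjunctive negation across an edge is exactly arity $4$. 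Verifying carefully that the pinning constraints plus the strict scale ordering really do force $r_i\in\{a_1,\dots,a_n\}$, with no off-mark solutions sneaking in, is the one spot where I would be meticulous; it is elementary given $a_1<\dots<a_n$.
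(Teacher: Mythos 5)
Your proposal is correct and follows essentially the same route as the paper's proof: an ordered ruler of $n$ column/scale variables, arity-$3$ disjunctions pinning each row variable onto one mark, and an arity-$4$ disjunction per edge expressing $r_i \neq a_j \lor r_{i'} \neq a_{j'}$, giving $2n$ variables and the $2^{o(n\log n)}$ lower bound via Theorem~\ref{thm:nxnindset}. The only differences are cosmetic (direction of the ordering and non-strict versus strict pinning inequalities), so there is nothing to add.
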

\begin{proof}
    We reduce from $\nxnindset$.
    Given an instance $G$ of this problem, do the following.

\begin{enumerate}
\item
    Introduce $n$ column variables $c_{1},\dots,c_{n}$ and the constraints
    $c_{i} < c_{i-1}$ for all $i \in \range{2}{n}$.

\item
    Introduce $n$ row variables $r_1,\dots,r_n$.
    To ensure that each $r_i$ is equal to one of the column variables,
    add the following constraints:
    $c_1 \leq r_i$, $r_i \leq c_n$ and 
    $(r_i \leq c_{j-1}) \lor (r_i \geq c_j)$
    for all $j \in \range{2}{n}$.

\item
    No pair of vertices $(i,j)$ and $(k,\ell)$ adjacent in $G$ can be simultaneously included in the independent set.
    To ensure this property, we add the following constraint:
    \[ (r_i < c_{j}) \lor (r_i > c_{j}) \lor (r_k < c_{\ell}) \lor (r_k > c_{\ell}). \]
\end{enumerate}

The resulting set of constraints only use relations in $\csp(\disjtemp{4,0})$.
The correctness of the reduction is easy to verify: If $G$ has an independent set $I$, than setting $r_i = c_j$ for all $(i,j) \in I$ satisfies all constraints of the instance above, and vice versa. 
The reduction requires polynomial time and introduces $2n$ variables.
    Thus, if $\csp(\D_{4,0})$ admits a $2^{o(n \log{n})}$ algorithm, then so does $\nxnindset$ and this contradicts the ETH by Theorem~\ref{thm:nxnindset}.
\end{proof}

A slightly weaker bound can be inferred from
Theorem~11 in \cite{Jonsson:Lagerkvist:mfcs2018}:
if
the randomized ETH holds, then there is no 
randomized algorithm for
$\csp(\D_{4,0})$ that runs in $O(c^n)$
time for any $c \geq 0$.
We continue by studying $\csp(\disjtemp{3,1}^{\leq})$.
We use Sidon sets in the proof so the reader
may want to skip back to Section~\ref{sec:sidon} 
for a reminder.

\begin{theorem} \label{thm:D3,1-hardness}
$\csp(\disjtemp{3,1}^{\leq})$ is not solvable in $2^{o(n\log{n} )}$ time if the
    ETH holds.
\end{theorem}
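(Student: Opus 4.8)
The plan is to reduce from $\nxnindset$. By Theorem~\ref{thm:nxnindset} this problem admits no $2^{o(n\log n)}$ algorithm under the ETH, so it suffices to give a polynomial-time reduction producing an instance of $\csp(\disjtemp{3,1}^{\leq})$ whose number of variables is (essentially) linear in $n$: a $2^{o(m\log m)}$ algorithm on $m$ variables would then solve $\nxnindset$ in $2^{o(n\log n)}$ time. The skeleton of the construction copies the proof of Theorem~\ref{thm:omegakresult} for $\csp(\D_{4,0})$: introduce $n$ column variables and $n$ row variables, force each row variable to take the value of some column variable, and add for every edge of $G$ a constraint forbidding its two incident grid cells from being selected simultaneously.

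Two ingredients of that proof are unavailable here and must be re-engineered. First, there one spreads the columns apart with strict inequalities of coefficient $0$, whereas over $\disjtemp{3,1}^{\leq}$ only closed intervals with bounds in $\{-1,0,1\}$ are allowed. I would instead enforce $c_{j-1}-c_{j}\geq 1$ (the relation $c_{j-1}-c_{j}\in[1,\infty)$ lies in $\stp$), so consecutive columns stay at least a unit apart; the snapping of a row variable onto the columns is done by the ternary clauses $(r_i\leq c_j)\vee(r_i\geq c_{j-1})$ together with $c_n\leq r_i$ and $r_i\leq c_1$. Because a row variable equal to some column then differs from every other column by at least $1$ in absolute value, the atom $r_i\neq c_j$ can be replaced by the closed-interval clause $(r_i-c_j\leq -1)\vee(r_i-c_j\geq 1)$, which respects the coefficient bound.

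Second, and this is the crux, the edge clause $(r_i\neq c_j)\vee(r_k\neq c_\ell)$ has arity $4$, but we are now restricted to arity $3$. This is where Sidon sets enter (Proposition~\ref{prop:golombconstruction}): if the columns are placed at the elements of a Sidon set $S=\{s_1<\dots<s_n\}$, then for row variables already confined to $S$ the event ``$r_i=c_j$ and $r_k=c_\ell$'' is equivalent to the single relation $r_i-r_k=s_j-s_\ell$, since the Sidon property forbids any other pair of columns from realizing that difference; thus the edge constraint collapses to one over $\{r_i,r_k\}$, with the degenerate case $j=\ell$ (giving the ternary clause $(r_i\neq c_j)\vee(r_k\neq c_j)$) handled separately. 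Using the polynomial-length Sidon sets of Proposition~\ref{prop:golombconstruction}, together with short chains of unit-step auxiliary variables to realize the needed differences under coefficient bound $1$, the instance can be kept within the required size budget. Correctness is then checked exactly as for $\csp(\D_{4,0})$: an independent set with one vertex per row yields the assignment sending $r_i$ to the column of its chosen vertex, and conversely any satisfying assignment pins each $r_i$ onto a unique column and the edge clauses certify independence.

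The main obstacle I anticipate is making all three restrictions — arity $\leq 3$, coefficient bound $1$, and closed intervals only — hold simultaneously without inflating the variable count; the Sidon property is the lever that reconciles the arity bound with the other two, and the delicate point is arranging the auxiliary unit-step machinery so that the (quadratic) length of a Sidon ruler does not blow up the number of variables beyond what the $2^{o(n\log n)}$ conclusion can tolerate.
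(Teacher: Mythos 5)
Your plan is, in outline, the paper's own: the paper also reduces from $\nxnindset$, also uses a Sidon set so that the forbidden pair of column values is captured by a single difference between two row variables, and also pays for the coefficient bound $1$ with unit-step auxiliary variables. The structural difference is cosmetic: instead of keeping explicit column variables and snapping rows onto them, the paper builds one global unit ruler $y_0,\dots,y_{a_n}$ (with $y_{i+1}-y_i=1$ and $a_n\leq 8n^2$), confines each row variable $x_c$ to the Sidon marks of this ruler by explicitly excluding every non-Sidon mark, and encodes an edge $((c,c'),(d,d'))$ by a single ternary constraint $R(x_c,x_d,y_e)$ with $e=a_{c'}-a_{d'}$, expressing $x_c-x_d\neq y_e$.

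As a proof, however, your proposal stops exactly where the restrictions of $\disjtemp{3,1}^{\leq}$ bite. First, the claim that the edge constraint ``collapses to one over $\{r_i,r_k\}$'' is not available in this fragment: forbidding $r_i-r_k=s_j-s_\ell$ involves a constant of magnitude up to $\Theta(n^2)$, while every bound must lie in $\{-1,0,1\}$. The entire purpose of the ruler is to turn that constant into a variable, and you never exhibit the resulting arity-$3$ constraint (nor check that it can be written as a disjunction of pairwise-difference atoms with bounds in $\{-1,0,1\}$); this is the heart of the construction -- the paper's $R(x_c,x_d,y_e)$ -- not a deferrable detail. Second, your announced size budget (``essentially linear in $n$'') cannot be met by this construction: a Sidon set of order $n$ has length $\Omega(n^2)$, and with all coefficients bounded by $1$ its positions can only be realized by a unit chain of that length, so the instance necessarily has $\Theta(n^2)$ variables -- just as in the paper, whose instance has $8n^2+n$ variables and whose accounting is carried out against $|V(G)|=n^2$. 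You rightly flag this as the delicate point, but it is not a matter of cleverer chain engineering: either you accept $m=\Theta(n^2)$ variables and then argue explicitly how a $2^{o(m\log m)}$-time algorithm contradicts Theorem~\ref{thm:nxnindset} (note that composed naively it yields only a $2^{o(n^2\log n)}$-time algorithm for the $n\times n$ grid problem, unlike the arity-$4$ reduction of Theorem~\ref{thm:omegakresult}, which uses only $2n$ variables and transfers the bound immediately), or you need a genuinely different construction. As submitted, the proposal leaves precisely this quantitative step -- which is the substance of the theorem -- unresolved.
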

\begin{proof}
The proof is by a reduction from $n \times n$ \textsc{Independent Set}.
    Assume $G=(V,E)$ to be an arbitrary instance of this problem
    with $r$ rows. We use the results from Section~\ref{sec:sidon}
 and construct (in polynomial time) a Sidon set $S=\{a_0,\dots,a_r\}$ where $0=a_0 < a_2 < \dots < a_r$  and $a_r \leq 8r^2$. We present the rest of the reduction in three steps.

\begin{enumerate}
\item
    Introduce $a_r$ fresh variables $y_0,\dots,y_{a_r}$
    and use the relation $y=x+1$ for enforcing $y_i=i$, $0 \leq i \leq a_r$.

\item
    Introduce one variable $x_r$
    for each row $r$. 
    We first ensure that the range of each variable is in $\{0,\dots,y_{a_r}\}$
    by adding the constraint
    \[ x_r \leq a_i-1 \vee x_r = a_i \vee x_r \geq a_i+1\]
    for each $1 \leq i \leq a_r$ together with the constraints
    $x \geq 0$ and $x \leq y_{a_r}$.
    Next, we restrict the range to be in $S$ by
    adding the constraint
   \[x \leq y_s-1 \vee x \geq y_s+1\]
for each $s \in \{0,\dots,a_r\} \setminus S$.

\item
Let $R(x,y,z)$ denote the relation $x-y \neq z$, i.e. $R(x,y,z) \equiv (x-y < z) \vee (y-x < z)$.
    For every edge $((c,c'),(d,d')) \in E$, compute $e=a_{c'} - a_{d'}$ and add
    the constraint $R(x_c,x_d,y_e)$.

\end{enumerate}

\noindent
The resulting set of constraints only use relations in $\csp(\disjtemp{3,1}^{\leq})$
This reduction can be performed in polynomial time. Steps 1. and 2. take
polynomial time since $a_r \leq 8r^2 \leq 8||G||^2$ and step 3.
can obviously performed in polynomial time in the size of $E$.
Let $(V',C')$ denote the resulting instance of $\csp(\disjtemp{3,1})$.
We see that $|V'| \leq 8r^2+r \leq 8|V|+ \lceil \sqrt{|V|} \rceil \leq 9|V|$.
Thus, if $(V',C')$ is satisfiable if and only if $G$ is a yes-instance,
then CSP$(\disjtemp{3,1})$ is not solvable in
$2^{o(n \log{n} )}$
time by Theorem~\ref{thm:nxnindset}.
We conclude the proof by proving this equivalence.

\medskip

\noindent
{\bf Forward  direction.} Assume $f$ is a solution to
$(V',C')$. We claim that for every $1 \leq c \neq d \leq r$,
it holds that $((c,f(c)),(d,f(d)))$ is not in $E$ and $G$ is a yes-instance.
Assume to the contrary that $c,d$ can be chosen such that
$((c,f(c)),(d,f(d))) \in E$.
This edge implies that there is a constraint $R(x_c,x_d,y_e) \in C'$
where $e=a_{f(c)}-a_{f(d)}$. Hence,

$$f(x_c)-f(x_d) \neq f(y_e) \Rightarrow$$
$$a_{c'}-a_{d'} \neq a_{f(c)}-a_{f(d)} \Rightarrow$$
$$a_{c'}-a_{d'}  \neq a_{c'}-a_{d'}$$

and we have reached a contradiction.

\medskip

\noindent
{\bf Backward direction.}
Assume $X=\{v_1,\dots,v_r\} \subseteq V$ is an independent set and $v_i$ occurs in
position $(i,i')$, $1 \leq i,i' \leq r$. 
Define the assignment $f$ such that $f(x_i)=a_{i'}$, $1 \leq i \leq r$,
and $f(y_i)=i$, $0 \leq i \leq a_r$. 
This assignment satisfies all constraints introduced in step 2.

Arbitrarily choose an edge $((c,c^*),(d,d^*)) \in E$. It gives rise to
the constraint $R(x_c,x_d,y_e)$ where $e=(a_{c^*}-a_{d^*})$.
We know that $((c,c'),(d,d'))$ is a non-edge in $E$ since $(c,c')=v_c$,
$(d,d')=v_d$, and $v_c,v_d$ are in the independent set $X$.
The constraint $R(x_c,x_d,y_e)$ is equivalent to $x_c - x_d \neq y_e$.
We apply the assignment $f$ to it:

$$f(x_c)-f(x_d) \neq f(y_{a_{c^*}-a_{d^*}}) \Rightarrow$$
$$a_{c'}-a_{d'} \neq a_{c^*}-a_{d^*}$$

We know that $((c,c'),(d,d'))$ is a non-edge in $E$ while
$((c,c^*),(d,d^*))$ is an edge in $E$. 
Hence, $a_{c'} \neq a_{c^*}$ or $a_{d'} \neq a_{d^*}$. 
Recall that
for all $w_1,\dots,w_4 \in S$ such that $w_1 \neq w_2$ and $w_3 \neq w_4$, $w_1-w_2=w_3-w_4$
if and only if $w_1=w_3$ and $w_2=w_4$.
Thus, if $a_{c'} \neq a_{d'}$ and $a_{c^*} \neq a_{d^*}$, then the disequality
holds.
If $a_{c'}=a_{d'}$, then $a_{c^*} \neq a_{d^*}$ (since
 $((c,c'),(d,d')) \not\in E$ and
$((c,c^*),(d,d^*)) \in E$) and we conclude
that the disequality holds.
The case when $a_{c^*}=a_{d^*}$ is symmetric.
\end{proof}

Our final lower bound based on a reduction from
$n \times n$ \textsc{Independent Set} concerns the problem
$\csp(\disjtemp{2}^{\leqslant})$. Since there is no
upper bound on the coefficients in this case, the lower bound
is parameterized both by the number of variables and the maximum
over the coefficients.

\begin{theorem}
    \label{thm:binary-hardness}
    $\csp(\disjtemp{2}^{\leqslant})$ is not solvable in $2^{o(n (\log{n}+\log{k}) )}$ time if the
    ETH holds.
\end{theorem}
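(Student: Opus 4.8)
The plan is to reduce from $\nxnindset$ and invoke Theorem~\ref{thm:nxnindset}. The reduction follows the same philosophy as the proof of Theorem~\ref{thm:D3,1-hardness} — encode the column chosen in each row as an element of a Sidon set, so that each edge of the graph becomes a single forbidden difference between two row variables — but since we are now confined to \emph{binary} relations (rather than the ternary relation used for $\disjtemp{3,1}$) we compensate with larger coefficients, which is exactly what the $\log k$ term in the claimed bound allows for.

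Concretely, given an instance $G$ of $\nxnindset$ with $n$ rows and $n$ columns I would first construct, in polynomial time, a Sidon set $B=\{b_1<b_2<\dots<b_{n^2}\}$ of length $O(n^4)$ with $b_1=0$ (Proposition~\ref{prop:golombconstruction} together with Bertrand's postulate, as in Section~\ref{sec:sidon}). Introduce a single reference (zero) variable $o$ and one variable $r_c$ for each row $c$. For row $c$ and column $j$ put $v_{c,j}=b_{(c-1)n+j}$; thus the $n^2$ values $v_{c,j}$ are pairwise distinct and the values used by distinct rows come from disjoint blocks of $B$. Pin $r_c$ to the set $\{v_{c,1},\dots,v_{c,n}\}$ via the closed-interval binary constraints $r_c-o\in[v_{c,1},v_{c,n}]$ and $r_c-o\in[-\infty,v_{c,j}]\cup[v_{c,j+1},\infty]$ for $1\le j<n$: together these forbid $r_c-o$ from landing in any open gap between consecutive admissible values, forcing $r_c-o=v_{c,j}$ for some $j$. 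For every edge of $G$ joining $(c,j)$ and $(d,\ell)$ with $c\neq d$, set $e=v_{c,j}-v_{d,\ell}$ and add the closed-interval binary constraint $r_c-r_d\in[-\infty,e-1]\cup[e+1,\infty]$; edges inside a single row are discarded. Since each $r_c$ is pinned to an integer, $r_c-r_d$ is an integer, so this last constraint is equivalent to $r_c-r_d\neq e$. All relations lie in $\disjtemp{2}^{\leqslant}$, the instance has $n+1$ variables and coefficients bounded by $O(n^4)$, and the whole construction runs in polynomial time.

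For correctness: if $G$ has an independent set selecting $(c,j_c)$ in each row, then assigning $r_c=v_{c,j_c}$ and $o=0$ satisfies all pinning constraints, and for any edge $(c,j)$–$(d,\ell)$ the two endpoints are not both selected, so $j\neq j_c$ or $\ell\neq j_d$; as $c\neq d$, all four values $v_{c,j_c},v_{d,j_d},v_{c,j},v_{d,\ell}$ are distinct elements of $B$, so the Sidon property yields $v_{c,j_c}-v_{d,j_d}\neq v_{c,j}-v_{d,\ell}$ and the disequality holds. Conversely, a satisfying assignment pins each $r_c$ to some $v_{c,j_c}$; if the vertices $(c,j_c)$ did not form an independent set, two of them — necessarily in distinct rows $c,d$, since one vertex is chosen per row — would be adjacent, and the associated constraint $r_c-r_d\neq v_{c,j_c}-v_{d,j_d}$ would be violated. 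Hence the produced instance is satisfiable if and only if $G$ is a yes-instance.

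Finally, with $n'=n+1$ variables and coefficient bound $k'=O(n^4)$ we get $n'(\log n'+\log k')=O(n\log n)$, so a $2^{o(n'(\log n'+\log k'))}$-time algorithm for $\csp(\disjtemp{2}^{\leqslant})$ would solve $\nxnindset$ in $2^{o(n\log n)}$ time, contradicting the ETH by Theorem~\ref{thm:nxnindset}. The step I expect to require the most care is coping with the absence of strict inequalities in $\disjtemp{2}^{\leqslant}$: we cannot directly forbid the single value $e$, only an open length-$2$ window around it. This is harmless exactly because the row variables are forced onto integer Sidon values — and it is the block structure of $B$ (rather than a single shared Sidon set of size $n$) that guarantees each forbidden window corresponds to a \emph{unique} pair of column choices, ruling out the degenerate case where two rows picking the same column would produce the difference $0$.
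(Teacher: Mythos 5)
Your reduction is correct, and it reaches the paper's bound by a genuinely different gadget. The paper's proof of Theorem~\ref{thm:binary-hardness} also reduces from $n \times n$ \probfont{Independent Set}, but it avoids Sidon sets: it uses a zero variable plus \emph{two} variables $x_r,x'_r$ per row, forces $x_r\in\{1,\dots,n\}$ and $x'_r=nx_r$ via two disjunctive constraints, and encodes the forbidden pair $(i,j)$ of an edge as the single forbidden difference $x'_a-x_b\neq ni-j$, relying on the bijectivity of $(i,j)\mapsto ni-j$; this keeps the coefficients bounded by $n^2$ on $2n+1$ variables. You instead lift the Sidon-set technique of Theorem~\ref{thm:D3,1-hardness} to the unbounded-coefficient binary setting: one variable per row, each vertex getting its own element of a Sidon set of order $n^2$ (rows in disjoint blocks), with coefficients $O(n^4)$ on $n+1$ variables. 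Your variant buys a cleaner correctness argument --- since every vertex has a distinct Sidon value you never hit the ``two rows pick the same column'' case analysis of Theorem~\ref{thm:D3,1-hardness}, and you need no auxiliary chain of unit-distance variables because unbounded coefficients let the Sidon values appear directly as interval endpoints --- while the paper's pairing-function trick buys smaller coefficients and dispenses with Sidon sets at the cost of the extra scaled variable $x'_r$ and the small argument that the two disjunctions really force $x'_r=nx_r$. Since both constructions have $O(n)$ variables and $k$ polynomial in $n$, both turn a $2^{o(n(\log n+\log k))}$ algorithm into a $2^{o(n\log n)}$ algorithm for $n\times n$ \probfont{Independent Set}, contradicting Theorem~\ref{thm:nxnindset}. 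One small wording slip: in your forward direction the four values $v_{c,j_c},v_{d,j_d},v_{c,j},v_{d,\ell}$ need not all be distinct (e.g.\ when $j=j_c$ but $\ell\neq j_d$); what the difference form of the Sidon property actually requires is only that the two values inside each difference are distinct, which your block structure guarantees from $c\neq d$, and in the partially coinciding case the disequality is immediate without the Sidon property --- so the argument stands after a one-line repair.
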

\begin{proof}
    The proof is by reduction from $n \times n$ \textsc{Independent Set}.
    Given an instance $G$ of this problem, we introduce a zero variable 
    $z$ and two variables $x_r,x'_r$ for each row $r$.
    We add the following constraints:
    \begin{enumerate}
      \item $\bigvee_{i=1}^{n} x_r - z = i$ for every row $r$,
      \item $\bigvee_{i=1}^{n} x'_r - x_r = ni - i$ and $\bigvee_{i=1}^{n} x'_r - z = ni$ for every row $r$,
      \item $x'_a - x_b \in (-\infty, ni - j - 1] \lor x'_a - x_b \in [ni - j + 1, \infty)$
      for every edge $\{(a,i), (b,j)\}$ in $G$. 
    \end{enumerate}
    Constraints of the first type restrict the domain of $x_r$ to $\{1,\dots,n\}$.
    Constraints of the second type ensure that $x'_r = nx_r$.
    Constraints of the third type ensure are equivalent to $x'_a - x_b \neq ni - j$,
    which forbids setting $x_a = i$ and $x_b = j$.
    Furthermore, function $f : \range{1}{n} \times \range{1}{n} \rightarrow \range{0}{n^2-1}$ defined as $f(i,j) = ni-j$ is bijective,
    so $x'_a - x_b \neq ni - j$ is satisfied by every other choice of values, i.e. whenever $x_a \neq i$ or $x_b \neq j$.
    Thus, the resulting instance of $\csp(\disjtemp{2}^{\leqslant})$ has a solution if and only if 
    $G$ has an independent set with one variable per row.
    The total number of variables in the resulting instance is $2n+1$
    and the absolute values of the integers appearing in the constraints do not exceed $n^2$.
    Thus, an algorithm solving $\csp(\disjtemp{2}^{\leqslant})$ in $2^{o(n(\log{n} + \log{k}))}$ time can be used for solving $\nxnindset$ in
    $$2^{o(n(\log{(2n+1)} + \log{n^2}))} = 2^{o(n\log{n})}$$
    time and this contradicts the ETH by Theorem~\ref{thm:nxnindset}.
\end{proof}

Our final lower bound for $\csp(D_{2,k})$ relies on a modification of a
result by Traxler~\cite{traxler2008time}.
Let $\dcsp$ be the constraint satisfaction problem with domain $D = \range{1}{d}$ and 
binary relations \mbox{$R_{a,b} = (x \neq a \lor y \neq b)$} for all $a,b \in D$.
Note that this relation is akin to the binary relation in
our reductions from $(n \times n)$ {\sc Independent Set}
that is used to forbid choosing an edge.
We cannot use the same reduction here because the numerical bound $k$ is fixed.
Indeed, a $2^{o(n \log n)}$ lower bound for this problem
stands in contradiction with Theorem~\ref{thm:gammaonetime} or the ETH.
For our reductions, we consider
a modified version of $\dcsp$ denoted by $\dcspchi$.
It allows binary relations
\begin{equation*}
    \textstyle
    R_{a,b}^{\chi} = (x \neq a \lor y \neq b) \land \bigwedge_{c \in D} (x \neq c \lor y \neq c)
\end{equation*}
for all $a,b \in D$.
For convenience, we consider the constraints that rule out $(c,c)$ tuples (i.e. the
right-hand side constraints in the definition above) separately, and assume the following rule: 
if an instance of $\dcspchi$ includes a constraint $(x \neq a \lor y \neq b)$, then it
implicitly includes the required constraints $(x \neq c \lor y \neq c)$ for all $c \in D$.
Additionally, we allow unary relations $x \neq a$ for all $a \in D$.

\begin{lemma}
    \label{lem:traxler}
    For any $r \in \naturals$ and any instance of $\dcspchi$ with $n$ variables, there exists an equivalent instance of $\dcspchi[d^r]$ with $\ceil{n / r}$ variables.
\end{lemma}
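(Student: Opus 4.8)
The plan is to carry out the standard variable-grouping (domain-amplification) construction, paying attention to the $\chi$-diagonals. First I would partition the $n$ variables of the given $\dcspchi$ instance $\phi$ into $m=\ceil{n/r}$ blocks $B_1,\dots,B_m$ of at most $r$ variables each; after adding at most $r-1$ fresh isolated variables (which does not change satisfiability) every block has exactly $r$ variables, and I fix an ordering $x_{i,1},\dots,x_{i,r}$ of each block. For each block I introduce one fresh variable $X_i$ over the domain $\range{1}{d^r}$ together with a bijection $\kappa_i$ from $\range{1}{d^r}$ onto $D^r$, with the intended reading that $X_i=v$ encodes the partial assignment $x_{i,t}\mapsto\kappa_i(v)_t$. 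A constraint of $\phi$ on two variables $x_{i,p},x_{i,q}$ of the same block is translated into the unary constraints $X_i\neq v$ for every $v$ whose decoding violates it, and a constraint of $\phi$ on $x_{i,p}\in B_i$ and $x_{j,q}\in B_j$ with $i\neq j$ is translated into a conjunction of $R^{\chi}_{v,w}(X_i,X_j)$ constraints forbidding exactly the pairs $(v,w)$ with $(\kappa_i(v)_p,\kappa_j(w)_q)$ violating it; the $(c,c)$-ruling-out parts over the new domain are supplied for free by the $\chi$-rule. The resulting instance is computable in polynomial time and has $\ceil{n/r}$ variables, as required.

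One direction of correctness is immediate: from a satisfying assignment $X_i\mapsto v_i$ of the new instance, the assignment $x_{i,t}\mapsto\kappa_i(v_i)_t$ satisfies $\phi$ by construction.

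The converse direction hides the only real difficulty. Over the domain $\range{1}{d^r}$, every $R^{\chi}$-relation automatically carries the disequality $X_i\neq X_j$ (its own $\chi$-diagonal), so the constructed instance additionally forbids two blocks joined by a constraint of $\phi$ from receiving the same $r$-tuple of values --- something $\phi$ itself need not forbid, and in fact there are satisfiable $\dcspchi$ instances all of whose solutions assign identical tuples to interacting blocks, so a naive encoding would break equi-satisfiability. To neutralise this I would choose the block partition and the within-block orderings so that whenever blocks $B_i$ and $B_j$ interact, $\phi$ contains a disequality between $x_{i,p}$ and $x_{j,p}$ at a common position $p$ (such a disequality is always present via the $\chi$-rule once $B_i$ and $B_j$ interact). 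Then $X_i=X_j$ would force two equal variables that must differ, so the forced super-diagonal is actually subsumed by the translation of that constraint and restricts nothing new; reading $X_i$ off an arbitrary satisfying assignment of $\phi$ then yields a satisfying assignment of the constructed instance, which completes the equivalence.

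I expect the main obstacle to be the combinatorial statement that such a \emph{position-aligned} block partition exists for every instance: one must decompose the constraint graph of $\phi$ into $m$ blocks of size $r$ and assign positions so that for each interacting pair of blocks some witnessing edge connects equal positions. I would try to establish this by choosing one witness edge per interacting pair of blocks and then colouring the variables with positions so that every witness edge becomes monochromatic, which amounts to a system-of-distinct-representatives / proper-colouring condition on the connected components of the witness subgraph (no two variables of one block may lie in the same component). An equivalent route is to keep the variables in contiguous blocks but use block-dependent decodings $\kappa_i$ whose coordinate functions are synchronised along one witness edge per interacting pair while staying jointly bijective inside each block. Making either approach work uniformly --- in particular when a block has several ``boundary'' variables joined to different blocks --- is the delicate point.
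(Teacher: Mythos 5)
You are reproducing the paper's variable-grouping construction, and you have put your finger on its one delicate step: over the domain $\range{1}{d^r}$ every binary relation of $\dcspchi[d^r]$ carries its own diagonal, so the constructed instance forbids two interacting blocks from receiving the same $r$-tuple, which the original instance need not forbid. Your observation that this matters is well taken: with unary constraints one can pin two blocks to the same value tuple and join them by a single constraint between variables at \emph{different} within-block positions, giving a satisfiable $\dcspchi$ instance whose (unique) solution encodes to equal block tuples. The paper, however, does not resolve this by choosing a special partition: it fixes an arbitrary partition with arbitrary within-block indexing and disposes of the diagonal in a single sentence, noting that the conversions of the required constraints $(x \neq c \lor y \neq c)$ already yield constraints $(z_i \neq s \lor z_j \neq t)$ for the coinciding pairs and asserting that this ``includes the case when $s$ and $t$ are equal''. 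That assertion is airtight precisely when the two constrained variables occupy the same position inside their blocks --- exactly the situation your position-aligned partition is designed to create --- so your proposal is, in effect, an attempt to repair a step the paper treats as immediate.

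As a proof, though, your proposal stops at the decisive point, and you say so yourself: everything hinges on the unproven claim that a position-aligned partition (equivalently, consistently synchronised block decodings $\kappa_i$) always exists. This is a genuine gap, not a routine verification. Once you pick one witness edge per interacting pair of blocks, positions must be constant on the connected components of the witness graph and injective inside each block; witness edges can chain through several blocks and force two variables of a single block into one component, so you would have to show that the block partition itself can always be re-chosen to break such conflicts, and you give no argument for this (nor is it clear that the statement holds in general); your alternative route via block-dependent decodings faces exactly the same consistency problem, as you note. Without that existence statement the backward direction of the equivalence --- a satisfying assignment of $\phi$ yields a satisfying assignment of the constructed instance --- is not established, so the lemma does not follow from what you have written. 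To complete the argument you must either prove that combinatorial claim or handle the induced diagonal in another way, for instance by analysing, as the paper does, exactly which pairs of tuples the converted constraints forbid and arguing that adding the diagonal cannot change satisfiability for the grouping you use.
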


\begin{proof}
    Let $\mathcal{I} = (V,\mathcal{C})$ be an instance of $\dcspchi$ with $\abs{V} = n$.
    Augment $V$ with at most $r-1$ extra variables so that its new size $n'$ becomes a multiple of $r$.
    Partition $V$ into $\ell = n' / r = \ceil{n / r}$ disjoint subsets $V_1, \dots, V_\ell$ of equal size and index the elements of each subset arbitrarily.
    
    The set of tuples $D^r$ represents all assignments to the variables in a subset $V_i$.
    For convenience, we use the tuples directly as the domain of $\dcspchi[d^r]$.
    Define an instance $\mathcal{I'} = (\variables', \constraints')$ of $\dcsp[d^r]$ as follows.
    For each subset $V_i$ in the partition introduce a variable $z_i$ to $V'$.
    Note that $\abs{V'} = \ell$.
    
    First, consider a unary constraint $x \neq a$ in $\constraints$.
    Assume $x \in V_i$ and $i_x$ is the index of $x$ in $V_i$.
    Add constraints $z_i \neq t$ to $\constraints'$ for all $t \in D^r$ such that $t_{i_x} = a$.
    Now consider a binary constraint $(x \neq a) \lor (y \neq b)$ in $\constraints$.
    Assume $x \in V_i$, $y \in V_j$ and $i_x, j_y$ are the indices of $x, y$ in the respective subsets.
    If $i \neq j$, then add constraints $(z_i \neq s \lor z_j \neq t)$ for all tuples $s,t \in D^r$ such that $s_{i_x} = a$ and $t_{j_y} = b$.
    If $i = j$ (index of $y$ is $i_y$), add unary constraints $z_i \neq t$ for all $t$ such that $t_{i_x} = a$ and $t_{i_y} = b$. 
    
    Observe that the required constraints $(x \neq c \lor y \neq c)$ for all $c \in D$ are converted into $(z_i \neq s \lor z_j \neq t)$ for every pair of tuples $s,t \in D^r$ that coincide in at least one position.
    Clearly, this includes the case when $s$ and $t$ are equal.
    Hence, $\mathcal{I}'$ is an instance of $\dcspchi[d^r]$.
    Proving the equivalence of $\mathcal{I}$ and $\mathcal{I}'$ is analogous to the proof of Lemma~1~in~\cite{traxler2008time}.
\end{proof}

%

We are now ready to prove the lower bound.

\begin{theorem}
    \label{thm:d2klowerbound}
    If we assume that the ETH holds, then for arbitrary $m > 0$ there is an integer $k$ such that any algorithm solving $\csp(\disjtemp{2,k}^{\leq})$ requires at least $O^*({m^n})$ time.
\end{theorem}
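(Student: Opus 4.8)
The plan is to prove the statement by a chain of reductions ending at a domain-amplified variant of $\dcspchi$. The three links are: (a) $\dcspchi$ with a small fixed domain is hard for the ETH at a \emph{concrete} rate $2^{\delta n}$; (b) Lemma~\ref{lem:traxler} amplifies the domain to a constant $d$, chosen as a function of $\delta$ and the target base $m$, so that the amplified problem is not solvable in $O^*(m^n)$ time; and (c) a Sidon set turns the pairwise disequalities of $\dcspchi[d]$ into single difference constraints, giving a reduction to $\csp(\disjtemp{2,k}^{\leq})$ with $k=\Theta(d^2)$, which is a constant once $m$ is fixed.

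For (a), I would first note that $R_{1,1}^{\chi}$ is just the disequality $x\neq y$ (the disjunct $x\neq 1\lor y\neq 1$ is subsumed by the $c=1$ term of the built-in conjunction, and that conjunction collectively says exactly $x\neq y$), so $\dcspchi[3]$ contains \probfont{3-Colourability}; composing the usual linear-size reduction from (sparsified) \probfont{3-Satisfiability} with the ETH then yields a fixed $\delta>0$ for which $\dcspchi[3]$ cannot be solved in $2^{\delta n}$ time. For (b), given $m>1$ (the case $m\le 1$ is immediate, since $\csp(\disjtemp{2,1}^{\leq})$ is already \NP-hard), I would put $r=\ceil{(\log_2 m)/\delta}+1$ and $d=3^{r}$. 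By Lemma~\ref{lem:traxler}, an $n$-variable instance of $\dcspchi[3]$ has an equivalent, polynomial-time constructible $\ceil{n/r}$-variable instance of $\dcspchi[d]$; hence an $O^*(m^{N})$-time algorithm for $\dcspchi[d]$ would decide $\dcspchi[3]$ in time $m^{\ceil{n/r}}\cdot n^{O(1)}=2^{((\log_2 m)/r)\,n+O(\log n)}$, and since $(\log_2 m)/r<\delta$ this is below $2^{\delta n}$ for large $n$ — a contradiction. So under the ETH, $\dcspchi[d]$ is not solvable in $O^*(m^{N})$ time.

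For (c), I would use Proposition~\ref{prop:golombconstruction} together with Bertrand's postulate to build, in polynomial time, a Sidon set $S=\{s_1<\dots<s_d\}$ with $s_1=0$ and $s_d\le 8d^2$. Given an instance $(W,\mathcal{C})$ of $\dcspchi[d]$, add a fresh zero variable $z$ and, for every $w\in W$, the constraint $w-z\in\{s_1\}\cup\dots\cup\{s_d\}$ (a disjunction of closed point-intervals); by translational invariance we may take $z=0$, so each $w$ gets a value $s_{g(w)}$ with $g(w)\in\range{1}{d}$. Encode a unary constraint $w\neq a$ as $w-z\in(-\infty,s_a-1]\cup[s_a+1,\infty)$ (i.e.\ $w-z\neq s_a$), and a binary constraint $R^{\chi}_{a,b}$ on $(x,y)$ as the conjunction of $x-y\in(-\infty,-1]\cup[1,\infty)$ (which, because $x,y\in S$, means $x\neq y$ and hence covers the built-in conjunct) and, when $a\neq b$, $x-y\in(-\infty,s_a-s_b-1]\cup[s_a-s_b+1,\infty)$ (i.e.\ $x-y\neq s_a-s_b$). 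Every such relation is binary, uses only closed intervals, and has coefficients bounded by $k:=8d^2+1$; the resulting instance of $\csp(\disjtemp{2,k}^{\leq})$ has $\abs{W}+1$ variables. The one place where the Sidon property is needed is the last constraint: for $x=s_i$, $y=s_j$ with $i\neq j$ (forced since $x,y\in S$ and $x\neq y$), the Sidon condition gives $s_i-s_j=s_a-s_b\iff(i,j)=(a,b)$, so $x-y\neq s_a-s_b$ holds exactly when $(g(x),g(y))\neq(a,b)$; the converse direction, sending a solution $g$ of $(W,\mathcal{C})$ to $f(z)=0$, $f(w)=s_{g(w)}$, is routine.

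Finally, if $\csp(\disjtemp{2,k}^{\leq})$ were solvable in $O^*(m^n)$ time, the polynomial reduction of (c) would solve $\dcspchi[d]$ on $n$ variables in $m^{n+1}\cdot n^{O(1)}=O^*(m^n)$ time, contradicting (b); hence no such algorithm exists, which is exactly the theorem (with $k=8\cdot 3^{2r}+1$). The main obstacle will be step (b): one must produce the domain-amplified lower bound in the exact form ``not in $O^*(m^n)$'' rather than the weaker ``not in $2^{o(n)}$'', which is why it is essential to track a concrete ETH rate $\delta$ and to tune the amplification exponent $r$ to both $\delta$ and $m$. Step (c) is essentially the Sidon gadget already used for Theorem~\ref{thm:D3,1-hardness}, now deployed with a constant-order Sidon set so that $k$ remains bounded.
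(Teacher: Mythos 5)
Your proposal is correct and follows essentially the same route as the paper's proof: ETH-hardness of $\dcspchi[3]$ via \probfont{3-Colourability}, domain amplification through Lemma~\ref{lem:traxler}, and a constant-order Sidon-set encoding (with one zero variable) into $\csp(\disjtemp{2,k}^{\leq})$ with $k=O(d^2)$. The only difference is presentational: you track an explicit ETH rate $\delta$ and tune $r$ to $m$ directly, whereas the paper packages the same amplification argument as the statement $\lim_{d\to\infty}c_d=\infty$ for $c_d=\inf\{c: \dcspchi \text{ solvable in } 2^{cn}\}$.
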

\begin{proof}
    We start by proving that the time complexity of
    $\dcspchi[d]$
    increases with $d$ assuming the ETH.
    Let 
    $$c_{d} = \inf \{c \in \reals\ : \mbox{there is a $2^{cn}$-time algorithm solving $\dcspchi$} \}.$$
    We show that $\lim_{d \rightarrow \infty} c_d = \infty$.
    The \mbox{\textsc{$3$-Colourability}} problem cannot be solved in subexponential time
    assuming the ETH~\cite{impagliazzo2001problems}.
    $\dcspchi[3]$ is a generalization of \mbox{\textsc{$3$-Colourability}} so $c_3 > 0$.
    By Lemma~\ref{lem:traxler}, for any $r \in \naturals$ we have $c_{3^r} \geq c_{3} \cdot r$.
    Observe that $\lim_{r \rightarrow \infty} c_{3} \cdot r = \infty$, so $\lim_{d \rightarrow \infty} c_d = \infty$.

    Next, we show that for any instance of $\dcspchi$ with $n$ variables 
    there is an equivalent instance of $\csp(\disjtemp{2,k}^{\leq})$ with $n+1$ variables and $k \in O(d^2)$.
    Let $\instance$ be an instance of $\dcspchi$ with $n$ variables.
    Construct an instance $\instance'$ of $\csp(\disjtemp{2,k})$ as follows.
    Choose $k$ so that $\range{-k}{k}$ contains a Sidon set of order $d$ as a subset.
    One can choose such a $k$ to be in $O(d^2)$ by Section~\ref{sec:sidon}.
    Denote the corresponding Sidon set by $G_d$.
    Associate each integer in $\range{1}{d}$ with a unique element of $G_d$ 
    via the bijection $\rho : \range{1}{d} \rightarrow G_d$.
    Introduce zero variable $z$ to express unary relations. 
    For each variable $x$ in $\instance$ introduce a new variable $v_x$.
    Define $E_x = \{\rho(c) : (x \neq c) \in \constraints \}$.
    Restrict the domain of $v_x$ to $D_x = G_d \setminus E_x$ by the constraint $\bigvee_{i \in D_x} v_x - z \in \{i\}$.
    Any constraint $(x \neq a \lor y \neq b)$ can be expressed by enforcing $v_x - v_y \neq \delta$ 
    where $\delta = \rho(a) - \rho(b)$.
    This is done by adding the constraint 
    $$v_x - v_y \in (-\infty, \delta - 1] \lor v_x - v_y \in [\delta + 1, \infty).$$
    By the properties of Sidon sets, this constraint is satisfied if and only if $x \neq \rho(a)$ or $y \neq \rho(b)$.

    The reduction introduces only one extra variable and the lower bound on $\dcspchi$ carries over to $\csp(\disjtemp{2,k}^{\leq})$.
\end{proof}

\section{Parameterized Upper Bounds}
\label{sec:ub}

The results in Section~\ref{sec:lower-bounds-time} imply
that most subproblems of $\csp(\D)$
cannot be solved in subexponential time under
the ETH.
This is a strong motivation for analyzing the
problems from a parameterized perspective.
A highly successful approach for 
identifying tractable fragments of CSPs 
is to restrict variable-constraint interactions
via the underlying 
primal and incidence graphs.
The {\em primal graph} has the variables as its vertices 
with any two joined by an edge if they occur together 
in a constraint. 
The {\em incidence graph} is the bipartite graph with 
two disjoint sets of vertices corresponding to the variables and the constraints, respectively, and
a constraint vertex and a variable vertex are joined by an edge if the variable occurs in the scope of the constraint.
The {\em treewidth} of such graphs has been used extensively
and 
it has been
successfully employed for many application areas.
In particular, it has been used for problems such as
SAT, CSP, and ILP~\cite{DBLP:journals/jair/BliemMMW20,Bodirsky:Dalmau:jcss2013,DBLP:journals/ai/GanianO18,gpw06,Gottlob:etal:ai2002,Huang:etal:ai2013,Samer:Szeider:jcss2010}.
Formal definitions and some auxiliary results are collected
in Section~\ref{sec:treewidth}.

The results in this section (together with the lower bound results that
will be proved in Section~\ref{sec:lb}) are
summarized in Table~\ref{tb:summary}.
The main result is 
 an \XP algorithm for
  $\csp(\D_{\infty,k})$, $k < \infty$, where the parameter
  is the treewidth of the incidence graph. This algorithm
  is presented in Section~\ref{sec:xp-algorithm}.
The treewidth of the incidence graph cannot be larger than
the treewidth of the primal graph plus one so
tractability
results are more general if they hold for incidence treewidth and
hardness results are more general if they hold for primal treewidth.

\subsection{Treewidth}
\label{sec:treewidth}

Treewidth
is based on {\em tree decompositions} \cite{Bertele:Brioschi:NDP72,Robertson:Seymour:jctb84}:
a tree decomposition $(T, \chi)$ of an undirected graph $G = (V, E)$ consists 
of a rooted tree $T$ and a mapping $\chi$ from nodes $V(T)$ 
of the tree to subsets of $V$.
The subsets $\chi(t)$ are called {\em bags}.
$T_t$ denotes the sub-tree rooted at $t$,
while $\chi(T_t)$ denotes the set of all vertices occurring in the bags of $T_t$, i.e.
$\chi(T_t) = \bigcup_{s \in V(T_t)} \chi(s)$.
A tree decomposition has the following properties:
\begin{enumerate}
    \item for every $\{u,v\} \in E$, there is a node $t \in
      V(T)$ such that $u,v \in \chi(t)$, and
    \item for every $v \in V$, the set of bags of $T$ containing
      $v$ forms a non-empty sub-tree of $T$.
\end{enumerate}

An example is given in Figure~\ref{fig:tdec}.
The width of a tree decomposition $T$ is $\max \{
\abs{\chi(t)}-1 : t \in T \}$. The {\em treewidth} of a graph $G$,
denoted by $\tw{G}$, is the minimum width of a tree decomposition of
$G$.
It is \NP-complete to determine if a graph has treewidth at most $w$ \cite{Arnborg:etal:sijmaa87} 
but when $w$ is fixed, the graphs with treewidth $w$ can be recognized 
and corresponding tree decompositions can be constructed in linear time~\cite{Bodlaender:sicomp96}.

\begin{figure}
  \centering
  \begin{tikzpicture}[node distance=1cm]
    \tikzstyle{every node}=[draw,fill,black,circle,inner sep=1mm]
    \tikzstyle{every edge}=[draw,line width=0.5mm]
    
    \draw
    node[label=below:$e$] (e) {}
    node[right of=e,label=below:$f$] (f) {}
    node[right of=f,label=below:$g$] (g) {}
    node[right of=g,label=below:$h$] (h) {}
    
    node[above of=e,label=left:$b$] (b) {}
    node[right of=b,label=above:$c$] (c) {}
    node[right of=c,label=above:$d$] (d) {}
    
    node[above of=b,label=above:$a$] (a) {}
    ;
    
    \draw
    (a) edge[] (b)
    (a) edge (c)
    
    (b) edge (c)
    (c) edge[] (d)
    (b) edge (e)
    (c) edge (f)
    (d) edge (f)
    (d) edge (g)
    
    (e) edge (f)
    (f) edge (g)
    (g) edge (h)
    ;

    \begin{scope}[level/.style={sibling distance = 3cm}, level distance = 1cm]
      \tikzstyle{every node}=[draw,ellipse,fill=lightgray,inner sep=0.5mm]
      \tikzstyle{every edge}=[draw,line width=0.5mm]
          
      \draw
      (h) +(5cm,2.1cm) node[] (r) {$c,d,f$}
      child {
        node[] (t11) {$b,c,f$}
        child {
          node[] (t21) {$a,b,c$}
        }
        child {
          node[] (t22) {$b,e,f$}
        }
      }
      child {
        node[] (t12) {$d,f,g$}
        child {
          node (t23) {$g,h$}
        }
      }
      ;
    \end{scope}          
  \end{tikzpicture}
  \caption{A graph (left) and an optimal tree decomposition of the
    graph (right).}
  \label{fig:tdec}
\end{figure}
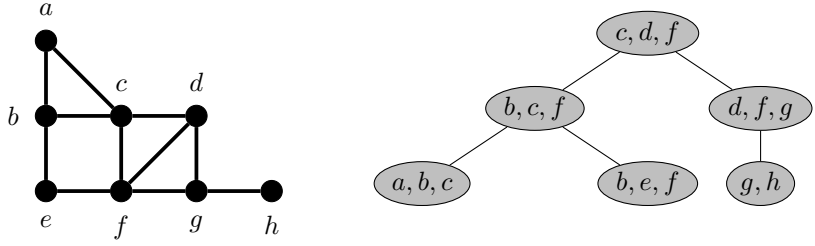

We simplify the presentation by using restricted tree decompositions. A tree decomposition is {\em nice} if
(1)
    $\chi(r) = \varnothing$ for the root $r$ and $|\chi(l)|=1$
     and for all leaf nodes $l$ in $T$, and
    (2)
    every non-leaf node in $T$ is of one of the following types:
    \begin{itemize}
        \item An {\em introduce node}: a node $t$ with exactly one child $t_0$
        such that $\chi(t) = \chi(t_0) \cup \{v\}$ for some $v \in V$.
        \item A {\em forget node}: a node $t$ with exactly one child $t_0$
        such that $\chi(t) = \chi(t_0) \setminus \{v\}$ for some $v \in V$.
        \item A {\em join node}: a node $t$ with exactly two children $t_1$ and $t_2$
        such that $\chi(t) = \chi(t_1) = \chi(t_2)$.
    \end{itemize}

Nice tree decompositions are illustrated in Figure~\ref{fig:ntdec}.
Note that nice tree
decompositions are merely a structured type of tree
decomposition and their only purpose is to simplify the presentation
of dynamic programming algorithms.
It is \NP-complete to determine if a graph has treewidth at most~$w$~\cite{Arnborg:etal:sijmaa87},
but when $w$ is fixed, then graphs with treewidth~$w$ can be recognized 
and corresponding tree decompositions can be constructed in linear time.

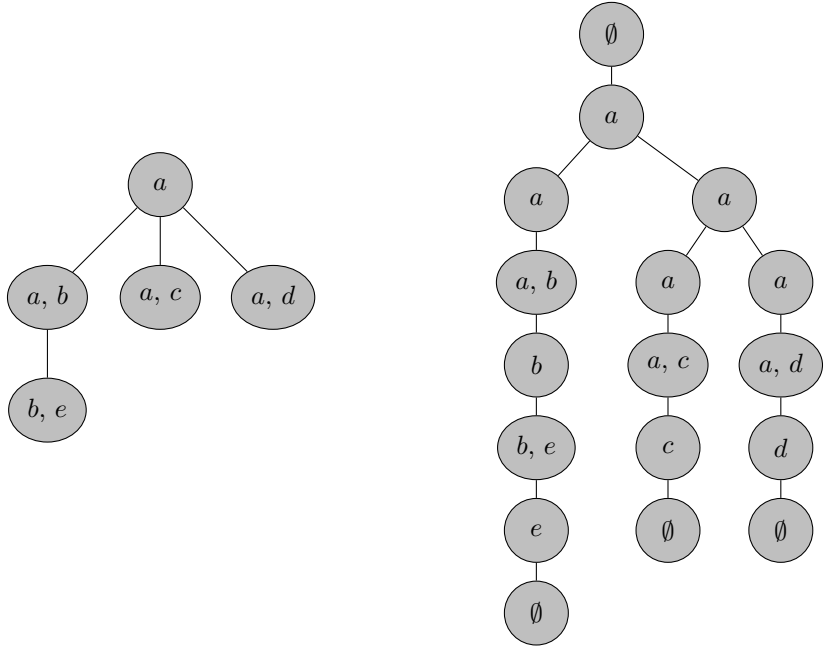
\begin{figure}
  \begin{center}
    \begin{tikzpicture}[level/.style={sibling distance = 1.5cm}, level distance = 1.5cm]
      \tikzstyle{every node}=[]
      \tikzstyle{every edge}=[draw,line width=1mm]
      \tikzstyle{tn}=[draw, ellipse, fill=lightgray,inner sep=1mm, minimum height=0.85cm, minimum width=0.85cm]
      \tikzstyle{tnn}=[tn, line width=1pt]

      \draw
      (0,0) node[tn] (r) {$a$}
      
      child {
        node[tn] (r1) {$a$, $b$}
        child {
          node[tn] (r11) {$b$, $e$}
        }
      }
      child {
        node[tn] (r2) {$a$, $c$}
      }
      child {
        node[tn] (r3) {$a$, $d$}
      }
      ;

      \begin{scope}[level/.style={sibling distance = 1.5cm}, level distance = 1.1cm]
        \tikzstyle{every node}=[]
        \tikzstyle{every edge}=[draw,line width=1mm]
        \tikzstyle{tn}=[draw, ellipse, , fill=lightgray,inner sep=1mm, , minimum height=0.85cm, minimum width=0.85cm]
        \tikzstyle{tnn}=[tn, line width=2pt]
        
        \draw
        (r) +(6cm,2cm) node[tn] (r) {$\emptyset$}
        child {
          node[tn] (rc) {$a$}
          child[sibling distance=2cm] {
            node[tn] (r1jc) {$a$}
            child {
              node[tn] (r1) {$a$, $b$}
              child {
                node [tn] (r1c) {$b$} 
                child {
                  node[tn] (r11) {$b$, $e$}
                  child {
                    node[tn] (r11c) {$e$}
                    child {
                      node[tn] (r11cl) {$\emptyset$}
                    }
                  }
                }
              }
            }
          }
          child[sibling distance=3cm] {
            node[tn] (rj) {$a$}
            child {
              node[tn] (r2jc) {$a$}
              child {
                node[tn] (r2) {$a$, $c$}
                child {
                  node[tn] (r2c) {$c$}
                  child {
                    node[tn] (r2cl) {$\emptyset$}
                  }
                }
              }
            }
            child {
              node[tn] (r3jc) {$a$}
              child {
                node[tn] (r3) {$a$, $d$}
                child {
                  node[tn] (r3c) {$d$}
                  child {
                    node[tn] (r3cl) {$\emptyset$}
                  }
                }
              }
            }
          }
        }
        ;

      \end{scope}
    \end{tikzpicture}%
  \end{center}
  \caption{A tree decomposition (left) and a corresponding nice tree
    decomposition (right). }
  \label{fig:ntdec}
\end{figure}

\begin{proposition}[{Bodlaender \& Kloks~\cite{bodlaender1996efficient}}; Kloks~\cite{Kl94}]\label{pro:comp-ntd}
  Let $G=(V,E)$ be a graph. For fixed $w$, if~$G$ has treewidth at most~$w$, then 
 a nice tree decomposition
  of width at most $w$ with $\bigoh(|V|)$ nodes can be
  computed in linear time.
\end{proposition}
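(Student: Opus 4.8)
The plan is to build the nice tree decomposition in two stages: first produce an ordinary tree decomposition of width at most $w$ with $\bigoh(|V|)$ nodes, and then reshape it into a nice one without raising the width or losing the linear bound on the number of nodes. For the first stage I would simply invoke Bodlaender's linear-time algorithm~\cite{Bodlaender:sicomp96} (quoted in the remark immediately preceding the statement): for fixed $w$ it either reports $\tw{G}>w$ or outputs a tree decomposition $(T,\chi)$ of $G$ of width at most $w$; since it runs in linear time its output already has $\bigoh(|V|)$ nodes (if one prefers, one can also enforce this directly by repeatedly contracting an edge $\{s,t\}$ of $T$ with $\chi(s)\subseteq\chi(t)$, which preserves all tree-decomposition properties and terminates with at most $|V|$ nodes).

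For the second stage I would apply the now-standard conversion, carried out so that the three node types of a nice tree decomposition appear cleanly. First, root $T$ at an arbitrary node. Second, make the tree binary: replace every node $t$ with $d\ge 2$ children by a path $t=t^{(1)},\dots,t^{(d-1)}$ of copies of $\chi(t)$ in which $t^{(i)}$ gets $t^{(i+1)}$ and one original child as its two children (and $t^{(d-1)}$ gets two original children); the nodes that still have two children are the candidate join nodes. Third, below each such candidate insert two fresh copies of its bag as its new children, so that every join node genuinely has two children carrying its own bag. Fourth, for every remaining tree edge $(t,t_0)$ with $\chi(t)\ne\chi(t_0)$ (so $t$ now has at most one child along it), subdivide the edge by the chain of bags that first deletes the vertices of $\chi(t)\setminus\chi(t_0)$ one at a time and then inserts the vertices of $\chi(t_0)\setminus\chi(t)$ one at a time; each inserted node then has a single child whose bag differs from its own by exactly one vertex, i.e.\ is an introduce or a forget node, and the chain has length at most $2(w+1)$. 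Fifth, prepend above the root a chain of $|\chi(r)|\le w+1$ forget nodes ending in the empty bag, and below each current leaf $\ell$ append a chain of at most $w$ introduce nodes ending in a singleton subset of $\chi(\ell)$. Finally, contract away any leftover node with a single child that carries the same bag.

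Three checks remain, all routine. \emph{Width.} Every bag of the new decomposition is either a copy of an old bag or a subset of an old bag (the chain bags in the fourth and fifth steps), so no bag exceeds $w+1$ elements. \emph{Validity.} Every edge of $G$ is still covered because every original bag still occurs (possibly as a copy), and for every vertex $v$ the bags containing $v$ still form a subtree: copying a node keeps all its copies simultaneously on or off the $v$-side, while on a subdivided edge (or a root/leaf chain) the vertex $v$ occupies a contiguous prefix, a contiguous suffix, or the whole chain according to whether $v$ lies in only the upper bag, only the lower bag, or both. \emph{Size and time.} $T$ has $\bigoh(|V|)$ nodes and edges; making it binary adds $\bigoh(\sum_t d_t)=\bigoh(|V|)$ nodes, and the remaining steps expand each node and edge by a factor $\bigoh(w)=\bigoh(1)$, so the final decomposition has $\bigoh(|V|)$ nodes and, since every step scans the current tree a constant number of times, it is produced in $\bigoh(|V|)$ time.

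The genuinely difficult ingredient is the first stage — Bodlaender's linear-time recognition-and-construction algorithm — which I would use entirely as a black box; everything in the second stage is bookkeeping. Within that bookkeeping the only point that needs care is the interplay between the third and fourth steps: one must insert the "buffer" bag-copies below each branch node \emph{before} subdividing mismatched edges, so that a subdivision chain never attaches directly to a join node and thereby spoils its structure; once the order is right, re-verifying the connectivity condition along the new chains is immediate.
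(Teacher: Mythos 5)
Your proposal is correct: the paper does not prove this proposition but cites it (Bodlaender \& Kloks; Kloks), and your two-stage argument---Bodlaender's linear-time algorithm as a black box, followed by the standard binarize/insert-join-buffers/subdivide-by-single-vertex-chains/pad-root-and-leaves conversion with the width, connectivity and $\bigoh(|V|)$-size checks---is exactly the standard construction underlying those references. Your ordering remark (buffer copies below join candidates before subdividing mismatched edges) is the right point of care, and the exhaustive contraction of subset-adjacent bags also disposes of degenerate empty-bag leaves, so no gaps remain.
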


Let $\inst=(\variables,\constraints)$ be an instance of
$\csp({\bf A})$. The \emph{primal graph} of $\inst$, denote by $\pg(\inst)$,
is the graph with vertices $\variables$ having an edge
between two variables if both appear together in the scope of one
constraint. The incidence graph of $\inst$, denote by $\ig(\inst)$, is 
the bipartite graph with $\variables$ on one side and $\constraints$
on the other side having an edge between a variable and a constraint
if the variable appears in the scope of the constraint.
It is well known that the
treewidth of the incidence graph is at most equal to the treewidth of
the primal graph plus one~\cite{Kolaitis:Vardi:jcss2000} and that the
incidence treewidth can be arbitrary smaller than the primal
treewidth.
This means that tractability
results are more general if they hold for incidence treewidth and
hardness results are more general if they hold for primal treewidth.

We continue with a few observations concerning treewidth.
We discussed in Section~\ref{sec:csp} that zero variables can be used for simulating
unary constraints, and that they do not
affect the time complexity with more than a multiplicative factor.
We have a similar situation in the parameterized setting: adding a zero
variable can increase the
treewidth of the incidence graph by at most $1$ so
algorithmic results (such as the forthcoming Theorem~\ref{the:xp-alg}) are still valid for
this extended formalism.
We will use a related observation when we argue about the treewidth of the graphs
obtained in our hardness result (Theorem~\ref{thm:w1-hard}).

\begin{proposition}[{Bodlaender~\cite{Bodlaender:sicomp96}}]\label{pro:tw-del}
  Let $G$ be a graph and $B \subseteq V(G)$. Then the treewidth of $G$
  is at most $|B|+\tw{G-B}$.
\end{proposition}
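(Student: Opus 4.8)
The plan is to take an optimal tree decomposition of $G-B$ and enlarge every bag by the whole set $B$. Concretely, let $(T,\chi)$ be a tree decomposition of $G-B$ of width $\tw{G-B}$, and define $\chi'(t) = \chi(t) \cup B$ for every node $t \in V(T)$. The goal is then to verify that $(T,\chi')$ is a tree decomposition of $G$ and read off the width bound.

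First I would check edge coverage. Take any edge $\{u,v\} \in E(G)$. If neither $u$ nor $v$ lies in $B$, then $\{u,v\}$ is an edge of $G-B$, so some bag $\chi(t)$ contains both, and hence $\chi'(t) \supseteq \{u,v\}$. Otherwise at least one endpoint, say $u$, lies in $B$; then $u \in \chi'(t)$ for \emph{every} node $t$, so it suffices to exhibit a bag containing $v$. If $v \in B$ this is immediate; if $v \in V(G-B)$, then $v$ occurs in some bag $\chi(t)$ of the original decomposition, and that bag works. So the edge-coverage condition holds.

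Next the connectedness condition. For $v \in V(G-B)$, the set $\{t \in V(T) : v \in \chi'(t)\}$ coincides with $\{t : v \in \chi(t)\}$, which is a nonempty subtree of $T$ by assumption on $(T,\chi)$. For $v \in B$, this set equals all of $V(T)$, which is trivially a nonempty subtree. Hence $(T,\chi')$ is a valid tree decomposition of $G$. Its width is $\max_t |\chi'(t)| - 1 \le \max_t(|\chi(t)| + |B|) - 1 = (\tw{G-B} + 1 + |B|) - 1 = \tw{G-B} + |B|$, which gives $\tw{G} \le |B| + \tw{G-B}$ as claimed.

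There is no genuine obstacle here; the argument is entirely routine padding of bags. The only point that might warrant a remark is the degenerate case in which $G-B$ has no vertices at all: then one simply uses the single-node tree decomposition with the one bag $B$, and the inequality still holds under the standard convention for the treewidth of the empty graph. I would include such a remark only if the paper's conventions make it necessary.
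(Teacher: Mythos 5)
Your proof is correct, and it is the standard bag-padding argument: the paper itself states this proposition as a citation to Bodlaender without giving a proof, so there is nothing to compare against beyond noting that your construction (adding $B$ to every bag of an optimal tree decomposition of $G-B$ and verifying edge coverage, connectedness, and the width bound) is exactly the canonical way this folklore bound is established. The remark about the degenerate case $V(G-B)=\varnothing$ is fine and harmless.
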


\subsection{\XP\ Algorithm for CSP$(\D_{\infty,k})$}
\label{sec:xp-algorithm}

We are now ready to present our dynamic programming algorithm for 
$\csp(\D_{\infty,k})$.
\begin{theorem} \label{the:xp-alg}
  $\csp(\D_{\infty,k})$ can be solved in $(nk)^{\bigoh(w)}$ time, where $w$ is
  the treewidth of the incidence graph and $n$ is the number of
  variables.
\end{theorem}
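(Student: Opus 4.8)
The plan is to solve $\csp(\D_{\infty,k})$ by bottom-up dynamic programming along a nice tree decomposition of the incidence graph $\ig(\inst)$. First, by Theorem~\ref{lem:compact-assign}, if $\inst=(\variables,\constraints)$ is satisfiable then it has a solution mapping $\variables$ into the set $D:=\cd{n,k}$, where $k=\num{\constraints}$; this set is finite with $\abs{D}=O(n^2k)$, so it suffices to search for a solution $f:\variables\to D$. I would also rewrite every constraint as a disjunction of simple constraints $x-y\in I$ with $I$ an interval whose finite endpoints are bounded in absolute value by $k$ (cf. the discussion before Lemma~\ref{lem:ass-scon}); write $m_c$ for the polynomially bounded number of disjuncts of a constraint $c$. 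Finally, compute a nice tree decomposition $(T,\chi)$ of $\ig(\inst)$ of width at most $w$ with $O(n+\abs{\constraints})$ nodes (Proposition~\ref{pro:comp-ntd}, or brute force, which is enough for an \XP\ bound). For a node $t$ write $\chi(t)=X_t\cup C_t$ with $X_t\subseteq\variables$ the bag variables and $C_t\subseteq\constraints$ the bag constraints, so $\abs{X_t}+\abs{C_t}\le w+1$.

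\textbf{The table.} At each node $t$ I would maintain a Boolean table indexed by \emph{configurations} $(\alpha,(\sigma_c)_{c\in C_t})$, where $\alpha:X_t\to D$ assigns the bag variables and each $\sigma_c$ is a \emph{status} for $c$: either the flag $\textsf{SAT}$, or a pair $(\ell,\mu)$ consisting of a \emph{committed disjunct} index $\ell\in\{1,\dots,m_c\}$ --- say this disjunct is $x_{i_\ell}-x_{j_\ell}\in I_\ell$ --- together with a partial map $\mu$ recording the already-fixed value of whichever of $x_{i_\ell},x_{j_\ell}$ has occurred in the part of $T$ processed so far but has since left the bag (so the domain of $\mu$ has size at most one). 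The table is intended to be true at $(\alpha,(\sigma_c))$ exactly when there is an assignment $\beta$ of the variables seen so far that extends $\alpha$, agrees with every $\mu$, satisfies all constraints already forgotten, and for every $c$ with $\sigma_c=\textsf{SAT}$ satisfies the committed disjunct of $c$ on two variables both seen so far. The crucial observation --- and the reason the number of configurations at a node is only $\abs{D}^{O(w)}\cdot\prod_{c\in C_t}O(m_c\abs{D})=(nk)^{O(w)}$ --- is that committing to a \emph{single} disjunct, which mentions only two variables, reduces what must be remembered about a constraint to at most one domain value, rather than an assignment to its (unboundedly many) scope variables.

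\textbf{Transitions.} The transitions are the standard six, governed by one structural invariant: since $c$ shares a bag with each variable of $\scope{c}$ and the bags containing $c$ form a subtree, every variable of $\scope{c}$ is introduced before $c$ is forgotten, and at the node introducing $c$ no variable of $\scope{c}$ has yet appeared. Hence: at an \emph{introduce-constraint} node we guess the committed disjunct of $c$ and start with $\mu$ empty; at an \emph{introduce-variable} node, whenever the new variable is one of the two variables of the committed disjunct of some $c\in C_t$ and the other one is available (in $X_t$ or recorded in $\mu$), we test the disjunct on those two values and either set $\sigma_c:=\textsf{SAT}$ or discard the configuration; at a \emph{forget-variable} node we move the forgotten value into $\mu$ for each $c\in C_t$ still awaiting that variable and take the disjunction over its value; at a \emph{forget-constraint} node we keep only configurations with $\sigma_c=\textsf{SAT}$; at a \emph{join} node we combine child configurations that agree on $\alpha$ and on each committed disjunct, merging the recorded values --- these come from disjoint variables, since a variable occurring in both branches would lie in the join bag, and if the merge makes both endpoints of a committed disjunct available we resolve it as above. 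A routine induction then shows the table at the root (with the empty configuration) is true iff $\inst$ is satisfiable; for the non-trivial direction one takes a solution $\beta:\variables\to D$ (which exists by Theorem~\ref{lem:compact-assign}) and commits each constraint to a disjunct it satisfies, obtaining a consistent family of true configurations.

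\textbf{Running time and the main obstacle.} Each node holds $(nk)^{O(w)}$ configurations, and every transition --- a join being the costliest --- is computed in $(nk)^{O(w)}$ time, so the overall running time is $(nk)^{O(w)}$ up to the polynomial cost of building the decomposition and evaluating individual simple constraints. I expect the principal difficulty to be setting up the constraint statuses and their transitions correctly: pinning down precisely what must persist about a constraint once it enters a bag (the key point that committing to one disjunct shrinks this to a single value), and making the join step sound when the two variables of a committed disjunct are split between the two branches. The remaining ingredients --- the reduction to the finite domain $D$, the nice-tree-decomposition machinery, and the inductive correctness argument --- are routine.
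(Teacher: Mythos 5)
Your overall route is the same as the paper's: a bottom-up dynamic program over a nice tree decomposition of the incidence graph, with the domain shrunk to $\cd{n,k}$ via Theorem~\ref{lem:compact-assign}, and with the key observation that, because every disjunct is binary, the state kept for an unsatisfied bag constraint can be compressed to at most one remembered value of an already-forgotten variable. The paper's records store exactly this as $S$, $U$, or a pair $(v,d)$, without even committing to a particular disjunct; your committed-disjunct bookkeeping is a cosmetic variant of the same idea and has the same $(nk)^{O(w)}$ state count.

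However, as written your transitions contain a step that fails, and it stems from a false structural claim. It is not true that ``at the node introducing $c$ no variable of $\scope{c}$ has yet appeared'': variables of $\scope{c}$ may already sit in the bag when $c$ is introduced (that bag may be precisely where the incidence edge is covered); what the tree-decomposition argument actually gives is only that no variable of $\scope{c}$ has been introduced \emph{and forgotten} below the introduce node of $c$. Because you rely on the stronger claim, your rules resolve a committed disjunct only when one of its two variables is \emph{introduced} while $c$ is in the bag (or at a join). Consider a decomposition in which a bag $\{x,y\}$ is followed by an introduce node for a constraint $c$ whose only disjunct is $x-y\in I$: neither $x$ nor $y$ is introduced after $c$, so the disjunct is never tested; when $x$ and then $y$ are forgotten, your forget rule pushes both values into $\mu$ (violating the size-one invariant) and $\sigma_c$ never becomes $\textsf{SAT}$, so the forget node for $c$ discards every configuration and a satisfiable instance is rejected. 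The repair is local and brings you to the paper's algorithm: resolve the committed disjunct as soon as both of its values are available---in particular, test $c$ against the bag assignment at the node introducing $c$, and at a variable-forget node test (rather than store) when the partner variable is in the bag or in $\mu$---exactly as the paper's constraint-introduce, variable-forget and join lemmas do. With that correction your configuration count and the $(nk)^{O(w)}$ running-time analysis go through as stated.
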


Note that the bound implies that $\csp(\D)$ is in \XP\ whenever
the numeric values are bounded by a polynomial in the input size.
Proposition~\ref{pro:comp-ntd} implies that the computation of a nice tree
decomposition of the incidence graph does not incur an additional
run-time overhead. We may thus assume that a nice tree
decomposition is provided in the input,
and it is hence
sufficient to show the following.
\begin{theorem}
  \label{the:xp-alg-td}
  $\csp(\D_{\infty,k})$ can be solved in time $(nk)^{\bigoh(w)}$ provided that
  a nice tree decomposition of the incidence graph of width at most
  $w$ is given as part of the input.
\end{theorem}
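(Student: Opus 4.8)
The plan is to solve the problem by bottom‑up dynamic programming along the given nice tree decomposition $(T,\chi)$ of the incidence graph, using the small solution property to keep the set of candidate values for each variable polynomially bounded. The first step is to invoke Theorem~\ref{lem:compact-assign}: it suffices to search for a solution $f\colon V\to\cd{n,k}$, and $\abs{\cd{n,k}}=O(n^2k)$. Hence each variable has only $(nk)^{O(1)}$ relevant values, and the remaining task is to propagate partial assignments of these values along $T$ while checking constraints.

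\textbf{The table.} For a node $t$, write $X_t=\chi(t)\cap V$ (variable‑vertices in the bag) and $Y_t=\chi(t)\cap C$ (constraint‑vertices in the bag). A \emph{record} at $t$ will be a pair $(\alpha,(\mu_c)_{c\in Y_t})$, where $\alpha\colon X_t\to\cd{n,k}$ and each $\mu_c$ is a compact \emph{local state} of the constraint $c$. The design of $\mu_c$ is the heart of the algorithm: it must have only $(nk)^{O(1)}$ possible values even though $c$ may have unbounded arity. Since $c$ has the form $\bigvee_{\ell}A_\ell$ with each $A_\ell$ a \emph{binary} atom $x_{i_\ell}-x_{j_\ell}\in I_\ell$, I let $\mu_c$ record (i) a committed disjunct index $\ell(c)$ (one of polynomially many, after deleting duplicate atoms); (ii) a flag stating whether $A_{\ell(c)}$ has already been confirmed true by the assignment produced so far in the subtree $T_t$; and (iii) in case exactly one of the two variables of $A_{\ell(c)}$ has already been assigned, the \emph{single} interval in which the other variable must eventually land, computed from that assigned value and $I_{\ell(c)}$ — an interval whose endpoints range over $(nk)^{O(1)}$ possibilities. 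Each of (i)--(iii) takes $(nk)^{O(1)}$ values, so the number of records at $t$ is at most $\abs{\cd{n,k}}^{\abs{X_t}}\cdot(nk)^{O(\abs{Y_t})}=(nk)^{O(w)}$.

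\textbf{Transitions, correctness, running time.} I would then define the transitions so that the valid records at $t$ are exactly the restrictions to $\chi(t)$ of partial assignments of the variables occurring in $T_t$ that satisfy every constraint whose scope variables all occur in $T_t$ and are already forgotten, with $\mu_c$ describing the outstanding obligation of every not‑yet‑finalised constraint. Leaves branch over the $\abs{\cd{n,k}}$ values of the bag variable, or over the committed disjunct of the bag constraint; introduce‑variable nodes extend $\alpha$ and update every $\mu_c$ whose committed atom mentions the new variable (discarding the record if that atom is falsified); introduce‑constraint nodes branch over the commitment; forget‑variable nodes update the $\mu_c$'s that mention the departing variable, converting ``one variable assigned'' into the derived interval‑obligation (or confirming/refuting the atom when both variables are assigned); forget‑constraint nodes keep the record only if $\mu_c$ reports $c$ confirmed; join nodes glue two child records with equal $\alpha$ and equal commitments, OR‑ing the confirmation flags and merging obligations (one side may carry the value of one endpoint of the committed atom, the other side the obligation on the remaining endpoint). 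Correctness follows by induction over $T$ (soundness of each rule plus completeness, i.e.\ a genuine solution induces a valid record at every node), and accepting at the root is equivalent to satisfiability of $\inst$. The running time is $O(n)$ nodes times $(nk)^{O(w)}$ records times $(nk)^{O(w)}$ work per node (dominated by the join step), i.e.\ $(nk)^{O(w)}$ in total, as required.

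\textbf{Main obstacle.} The delicate point is item (iii) together with the soundness of committing to a single disjunct. Because arity is unbounded, a constraint vertex $c$ and both variables of a chosen atom $A_{\ell(c)}$ need not ever occur in a common bag, so the value of one of those variables must be \emph{carried} inside $\mu_c$ through the part of the decomposition where $c$ lives until the other variable appears. I would argue, using that the bags containing $c$ form a subtree and that $c$ is adjacent in the incidence graph to each of its scope variables (so for every scope variable $x$ some bag contains both $x$ and $c$), that committing at the bottom tips of that subtree and carrying at most one partially evaluated binary atom in $\mu_c$ always suffices: the second variable of the committed atom is met, in a bag together with $c$, before $c$ is forgotten, and whenever the subtree of $c$ branches, the two variables of the committed atom may land on different branches, which is exactly what the join rule is set up to reconcile. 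Getting this bookkeeping precisely right — in particular checking that no relevant value is ever ``lost'' when a variable is forgotten — is the step I expect to require the most care.
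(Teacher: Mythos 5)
Your proposal is correct and follows essentially the same route as the paper's proof: bottom-up dynamic programming over the given nice tree decomposition of the incidence graph, using Theorem~\ref{lem:compact-assign} to restrict values to $\cd{n,k}$, and keeping for each constraint in a bag a polynomial-size state that carries the value of an already-forgotten scope variable until it can be discharged at a later bag or reconciled at a join, giving $(nk)^{O(w)}$ records overall. The only difference is in the encoding of that state: the paper marks each bag constraint as ``satisfied'', ``unsatisfied'', or a pair $(v,d)$ of a forgotten scope variable and its value (without committing to a particular disjunct), whereas you commit to a disjunct and store the derived interval obligation -- a minor bookkeeping variation that leads to the same analysis and the same bound.
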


Let $\inst = (V,C)$ be an instance of $\csp(\D_{\infty,k})$ 
with $n$ variables and assume $(T, \chi)$ is a nice tree decomposition
of the incidence graph of $\inst$ of width $w$.
Bags of this decomposition contain vertices corresponding to both
variables and constraints. 
To distinguish between them, we use $\var{t}$ 
to denote all variables in the bag $\chi(t)$ and 
$\con{t}$ to denote all constraints in $\chi(t)$. 
These definitions naturally extend to the subsets of $V(T)$.
Note that by Theorem~\ref{lem:compact-assign},
we may assume that every solution for $\inst$
maps the variables into the set $\cdf=\cd{n,k}$.

Intuitively, the algorithm behind Theorem~\ref{the:xp-alg} works as
follows. It uses bottom-up dynamic programming on
the nodes of $T$ starting from the leaves and finishing at the root.
It computes a compact representation, 
represented by a set of valid records, of all solutions to
$\inst$ restricted to the variables and constraints in $\chi(T_t)$ for every
node $t \in V(T)$.

A record for $t \in V(T)$ is a pair $(\rva,\rca)$ where
\begin{itemize}
\item $\rva: \var{t} \rightarrow \cdf$ is an assignment of values in $\cdf$
to the variables in $\var{t}$, and 
\item $\rca : \con{t} \rightarrow \rcd$, where $\rcd=\{S,U\} \cup \SB (v,d) \SM v \in V
  \textnormal{ and } d \in \cdf\SE$ such that for every constraint $c \in
  \con{t}$ either:
  \begin{itemize}
  \item $\rca(c)=S$ signaling that the constraint $c$ is
    already satisfied, 
  \item $\rca(c)=U$ signaling that the
    constraint $c$ is not yet satisfied,
  \item $\rca(c)=(v,d)$, where $v \in \scope{c}\cap
    (\var{T_t}\setminus \var{t})$ and $d \in \cdf$,
    signaling that $c$ is not yet satisfied, but satisfying $c$ can
    use the assumption that $v$ is set to $d$. This also means that
    $c$ will be satisfied by satisfying a simple constraint on $v$
    and some variable in $V \setminus \var{T_t}$.
  \end{itemize}
\end{itemize}
Note that there are at most $|\cdf|$ possible choices for every
variable in $\var{t}$ and at most $|V||\cdf|+2$ possible choices for
every constraint in $\con{t}$. Therefore, the total number of valid
records for $t$ is at most $(|V||\cdf|+2)^{w+1}$.

For $X \in \{S,U\}$, define the inverse 
$\rca^{-1}(X)$ as $\{ c \in \con{t} \mid \rca(c) = X \}$
and let 
$$\rca^{-1}(F) = \con{t} \setminus (\rca^{-1}(S) \cup \rca^{-1}(U)),$$
i.e. $\rca(c)=(v,d)$ for some $v \in V$ and $d \in \cdf$ for all $c \in \rca^{-1}(F)$.

The semantic of a record is defined as follows. We say that a record
$(\rva,\rca)$ is \emph{valid} for $t$ if there is an 
assignment $\ria : \var{T_t} \rightarrow \cdf$ such that:

\begin{enumerate}[(R1)]
\item  $\ria$ does not satisfy any constraint in $Y = \con{t} \setminus \rca^{-1}(S)$
and satisfies all constraints in $\con{T_t} \setminus Y$,
  
\item $\ria(v)=\rva(v)$ for every $v \in \var{t}$, and

\item $\ria(v)=d$ holds for every constraint $c \in \con{t}$ with $\rca(c)=(v,d)$.
\end{enumerate}

Let $\RRR(t)$ be the set of all valid records for $t$.
Note that~$\inst$ has a solution if and only if $\RRR(r)\neq
\emptyset$ for the root $r$ of $T$ since the records in $\RRR(r)$ 
represent solutions for the whole instance.
Moreover, once we have computed the set of records for all nodes, a
straightforward application of standard
techniques~\cite{DowneyFellows13} can be used to obtain a solution
for $\inst$ using a second top-to-bottom run
through the tree-decomposition.

Next, we will show that $\RRR(t)$ can be computed via a dynamic
programming algorithm on $(T,\chi)$ in a bottom-up manner. The
algorithm starts by computing the set of all valid records for the
leaves of $T$ and then proceeds by computing the set of all valid records
for the other three types of nodes of a nice tree-decomposition
(always selecting nodes all of whose children have already been
processed). The following lemmas show how this is achieved for the
different types of nodes of $(T,\chi)$.

\begin{lemma}[\textbf{variable leaf node}]\label{lem:DP-leaf-v}
  Let $t \in V(T)$ be a leaf node with $\chi(t)=\{v\}$ for some
  variable $v\in V$. Then, $\RRR(t)$ can be computed in
  $\bigoh(|\cdf|)$ time.
\end{lemma}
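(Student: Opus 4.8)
The plan is to observe that this is essentially a base case with almost no content, so the proof just amounts to unwinding the definitions. Since $t$ is a leaf of $T$, the subtree $T_t$ consists of the single node $t$, and hence $\var{T_t} = \var{t} = \{v\}$ while $\con{t} = \con{T_t} = \emptyset$ (the bag $\chi(t)=\{v\}$ contains no constraint vertex, and $T_t$ being the single node $t$ no constraint vertex occurs anywhere in $T_t$). Consequently a record for $t$ is a pair $(\rva,\rca)$ in which $\rca$ is the empty map on $\con{t}=\emptyset$, so the only freedom is the choice of $\rva : \{v\} \rightarrow \cdf$; there are exactly $|\cdf|$ such pairs.

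Next I would check that every one of these candidate records is in fact valid. Fix $\rva$ and take the witness assignment $\ria = \rva : \{v\} \rightarrow \cdf$. Condition (R1) is vacuous, since $Y = \con{t}\setminus\rca^{-1}(S) \subseteq \con{t} = \emptyset$ and $\con{T_t} = \emptyset$; condition (R2) holds by the choice $\ria = \rva$; and condition (R3) is vacuous because $\rca^{-1}(F) \subseteq \con{t} = \emptyset$. Hence $\RRR(t) = \SB (\rva,\rca) \SM \rva : \{v\}\rightarrow\cdf,\ \rca \text{ the empty map} \SE$, a set of size exactly $|\cdf|$.

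Finally the algorithm simply enumerates the $|\cdf|$ assignments $\rva : \{v\} \rightarrow \cdf$ and outputs each together with the empty map; this takes $\bigoh(|\cdf|)$ time, as claimed. There is no real obstacle here: the only point requiring care is reading off from the tree decomposition that $\con{t}$ and $\con{T_t}$ are empty for a variable leaf, so that (R1) and (R3) impose no restriction and every assignment of $v$ yields a valid record; all the substantive dynamic-programming work is deferred to the introduce, forget, and join nodes.
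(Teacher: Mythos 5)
Your proposal is correct and matches the paper's proof, which likewise notes that $\RRR(t)$ consists precisely of the records $(\rva,\emptyass)$ for each of the $|\cdf|$ assignments $\rva:\{v\}\rightarrow\cdf$ and that correctness is immediate from the definition of valid records. You merely spell out the vacuity of (R1)--(R3) explicitly, which the paper leaves implicit; there is no substantive difference.
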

\begin{proof}
  $\RRR(t)$ consists of all records $(\rva, \emptyass)$ for every
  assignment $\rva : \{v\}\rightarrow \cdf$, so $\RRR(t)$ can
  be computed by enumerating all assignments $\rva : \{v\}\rightarrow
  \cdf$ for $v$ in $\bigoh(|\cdf|)$ time. Correctness follows
  immediately from the definition of valid records.
\end{proof}

\begin{lemma}[\textbf{constraint leaf node}]\label{lem:DP-leaf-c}
  Let $t \in V(T)$ be a leaf node with $\chi(t)=\{c\}$ for some
  constraint $c\in C$. Then, $\RRR(t)$ can be computed in
  $\bigoh(1)$ time.
\end{lemma}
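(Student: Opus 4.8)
The plan is to treat this as a base case, writing out the (at most two) candidate records explicitly and checking each against the validity conditions (R1)--(R3). First I would record the relevant structural facts about the node: since $t$ is a leaf of the nice tree decomposition with $\chi(t)=\{c\}$, we have $\var{t}=\emptyset$ and $\con{t}=\{c\}$, and since $t$ is a leaf, $T_t$ consists of $t$ alone, so $\var{T_t}=\emptyset$ and $\con{T_t}=\{c\}$. Consequently the assignment component $\rva$ of any record for $t$ can only be the empty assignment $\emptyass$, and --- because the form $\rca(c)=(v,d)$ is admissible only for $v\in\scope{c}\cap(\var{T_t}\setminus\var{t})=\emptyset$ --- the only two candidate records are $(\emptyass,\rca_S)$ and $(\emptyass,\rca_U)$, where $\rca_S(c)=S$ and $\rca_U(c)=U$. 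Moreover the only assignment $\ria:\var{T_t}\rightarrow\cdf$ that (R1)--(R3) can ever refer to is the empty assignment, since $\var{T_t}=\emptyset$.

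Next I would test the two candidates, using the basic observation that the empty assignment cannot satisfy $c$: every relation in $\D$ has arity at least $1$, so $c$ is a constraint with nonempty scope, which a domain-less assignment does not satisfy. For $(\emptyass,\rca_U)$ we have $\rca_U^{-1}(S)=\emptyset$, so $Y=\con{t}=\{c\}$; then (R1) requires that $\ria$ satisfy no constraint in $Y=\{c\}$ (true) and satisfy all constraints in $\con{T_t}\setminus Y=\emptyset$ (vacuously true), while (R2) and (R3) are vacuous since $\var{t}=\emptyset$ and $\rca_U$ takes no value of the form $(v,d)$; hence this record is valid. For $(\emptyass,\rca_S)$ we have $\rca_S^{-1}(S)=\{c\}$, so $Y=\emptyset$, and (R1) then requires $\ria$ to satisfy every constraint in $\con{T_t}\setminus Y=\{c\}$, i.e.\ to satisfy $c$, which is impossible; hence this record is not valid.

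Therefore $\RRR(t)=\{(\emptyass,\rca_U)\}$, a set consisting of a single record, which the algorithm can write down directly in $\bigoh(1)$ time; correctness is precisely the case analysis above. There is no genuine obstacle here: the only points that warrant a moment's care are confirming that $\var{T_t}$ is empty (so that no record of the form $(\emptyass,\rca)$ with $\rca(c)=(v,d)$ can arise), and noting that an assignment with empty domain fails to satisfy a constraint of positive arity, which forces the status of $c$ to be $U$ at a constraint leaf.
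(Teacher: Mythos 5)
Your proposal is correct and follows essentially the same route as the paper: the paper simply declares $\RRR(t)=\{(\emptyass,\rca)\}$ with $\rca(c)=U$ and notes that correctness is immediate from the definition of valid records, which is exactly the conclusion your case analysis reaches. You merely spell out the details the paper leaves implicit (that $\var{T_t}=\emptyset$ rules out records of the form $(v,d)$ and that the empty assignment cannot satisfy $c$, excluding the $S$-record), so there is nothing substantively different to compare.
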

\begin{proof}
  $\RRR(t)$ consists of the record $(\emptyass, \rca)$, where $\rca :
  \{c\} \rightarrow \rcd$ is defined by setting $\rca(c)=U$.
  Thus, $\RRR(t)$ can
  be computed in constant time and correctness follows
  immediately from the definition of valid records.
\end{proof}

\begin{lemma}[\textbf{variable introduce node}]\label{lem:DP-intro-v}
  Let $t \in V(T)$ be an introduce node with child $t_0$ such that
  $\chi(t)\setminus \chi(t_0)=\{v\}$ for some
  variable $v\in V$. Then, $\RRR(t)$ can be computed in
  $\bigoh(|\RRR(t_0)| \cdot |\cdf| \cdot |\inst|)$ time.
\end{lemma}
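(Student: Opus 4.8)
The plan is to run the dynamic program bottom-up and obtain $\RRR(t)$ by augmenting each valid record of $t_0$ with a value for the freshly introduced variable $v$ and updating the constraint-status map. The structural fact behind this is that, by connectivity of tree decompositions, $v$ occurs in no bag of $T_{t_0}$; hence $\var{T_t}=\var{T_{t_0}}\sqcup\{v\}$ and $\con{T_t}=\con{T_{t_0}}$, and any assignment $\ria:\var{T_t}\rightarrow\cdf$ witnessing a record of $t$ restricts to $\ria|_{\var{T_{t_0}}}$, which witnesses a record of $t_0$ and satisfies $\ria(v)=\rva(v)$; conversely, a witness of a record of $t_0$ extends to $\var{T_t}$ by a single choice of value for $v$. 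So the algorithm iterates over all pairs $\bigl((\rva_0,\rca_0),d\bigr)$ with $(\rva_0,\rca_0)\in\RRR(t_0)$ and $d\in\cdf$, and for each produces one candidate record for $t$.

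For such a pair, set $\rva=\rva_0\cup\{v\mapsto d\}$, an assignment $\var{t}=\var{t_0}\cup\{v\}\rightarrow\cdf$. Define $\rca$ by: for $c\in\con{t}=\con{t_0}$ with $v\notin\scope{c}$, keep $\rca(c)=\rca_0(c)$ (assigning $v$ cannot change the status of $c$); for $c$ with $v\in\scope{c}$ and $\rca_0(c)=S$, keep $\rca(c)=S$; and for $c$ with $v\in\scope{c}$ and $\rca_0(c)\neq S$, inspect every simple disjunct $x-y\in I$ of $c$ that mentions $v$ and whose other variable $w$ has a value known to the record — i.e.\ $w\in\var{t}$ (value $\rva(w)$) or $\rca_0(c)=(w,d')$ (value $d'$) — and set $\rca(c)=S$ if some such disjunct is then satisfied, keeping $\rca(c)=\rca_0(c)$ otherwise. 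This never creates an entry of the form $(v,\cdot)$ since $v\in\var{t}$; such entries arise only at forget nodes.

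Correctness amounts to showing that the computed set equals $\RRR(t)$. Soundness: if $\ria_0$ witnesses $(\rva_0,\rca_0)$ at $t_0$, then $\ria=\ria_0\cup\{v\mapsto d\}$ witnesses the computed record at $t$; conditions~(R2) and~(R3) are inherited from $t_0$, and for~(R1) one argues that a disjunct of a constraint $c$ becomes newly satisfied by $\ria$ only if it mentions $v$, in which case it is exactly one of the disjuncts examined by the update — here connectivity is used to discard the only awkward shape, a disjunct between $v$ and a variable forgotten below $t_0$ that is not recorded as the deferred variable of $c$, since such a variable never again shares a bag with $c$ and hence cannot be the variable through which $c$ is satisfied in any continuation of this record. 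Completeness: given a valid record $(\rva,\rca)$ of $t$ with witness $\ria$, take $\ria_0=\ria|_{\var{T_{t_0}}}$, read off the record $(\rva_0,\rca_0)$ of $t_0$ induced by $\ria_0$ (a constraint whose status at $t$ genuinely depends on the value of $v$ is recorded at $t_0$ as $U$ or as the appropriate deferral $(u,\ria_0(u))$), check $(\rva_0,\rca_0)\in\RRR(t_0)$ via $\ria_0$, and observe that the algorithm reconstructs $(\rva,\rca)$ from this pair with $d=\rva(v)$. For the running time, there are $|\RRR(t_0)|$ records and $|\cdf|$ values of $d$, and each candidate is built by one scan over the constraints and their disjuncts in $\bigoh(|\inst|)$ time, for a total of $\bigoh(|\RRR(t_0)|\cdot|\cdf|\cdot|\inst|)$.

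The main obstacle is the constraint-status update and reconciling its soundness with its completeness: a constraint must be promoted to $S$ precisely when the extended assignment makes one of its disjuncts true (otherwise~(R1) fails), yet the record exposes only the bag assignment and a single per-constraint deferral. The delicate configuration — a disjunct $v-w\in I$ of $c$ with $w$ already forgotten below $t_0$ and $w$ not the deferred variable of $c$ — is exactly where the semantics of valid records, together with the property that a forgotten variable never reappears in a later bag (so $w$ never co-occurs with $c$ again), must be invoked to show that ignoring such disjuncts neither loses a valid record nor admits an invalid one. I expect this interplay, rather than the mechanics of the update, to carry the weight of the argument.
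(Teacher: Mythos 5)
Your proposal follows essentially the same route as the paper's proof: enumerate all pairs of a record in $\RRR(t_0)$ and a value in \cdf for $v$, extend the bag assignment, and promote a constraint to $S$ exactly when a disjunct involving $v$ becomes satisfied using either bag values or the value stored in the constraint's deferral entry (the paper's sets $U'$ and $F'$); completeness is argued, as in the paper, by restricting a witness $\ria$ of a record at $t$ and re-reading the deferral $(y(c),\ria(y(c)))$ off that witness, soundness by extending a witness of the record at $t_0$, and the $\bigoh(|\RRR(t_0)|\cdot|\cdf|\cdot|\inst|)$ count is identical. Your update rule differs only marginally (you also test $v$--bag-variable disjuncts for constraints carrying a deferral, where the paper tests only the deferred pair), which does not change the structure of the argument. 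The one step you explicitly leave open---that a disjunct between $v$ and a forgotten variable which is \emph{not} the deferred variable of $c$ can be ignored without admitting an invalid record---is exactly the point the paper itself dispatches with ``it is straightforward to verify'' in its soundness direction, while your handling of that configuration on the completeness side (record it at $t_0$ as the appropriate deferral) is precisely the paper's mechanism; so the attempt matches the paper's argument, including in the one place where the delicate case is asserted rather than worked out.
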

\begin{proof}
  Informally, the set $\RRR(t)$ is obtained from $\RRR(t_0)$ by
  extending every record $R_0=(\rva_0,\rca_0)$ in $\RRR(t_0)$ with an assignment $\rva_v :
  \{v\} \rightarrow \cdf$ for the variable $v$ and then updating the
  record (i.e. updating $\rca_0$) if $\rva_v$ causes additional constraints to be satisfied.
  More formally, for every $(\rva_0,\rca_0) \in \RRR(t_0)$ and every assignment
  $\rva_v : \{v\} \rightarrow \cdf$, the set $\RRR(t)$ contains the
  record $(\rva,\rca)$, where:
  \begin{itemize}
  \item $\rva(u)=\rva_0(u)$ for all $u \in \chi(t_0)$ and $\rva(v)=\rva_v(v)$,
  \item $\rca(c)=S$ for every constraint $c \in \rca_0^{-1}(S) \cup U'
    \cup F'$, where:
    \begin{itemize}
    \item $U'$ is the set of all constraints $c \in \rca_0^{-1}(U)$ that
      are satisfied by the (partial) assignment $\rva$ and
    \item $F'$ is the set of all constraints $c \in \rca_0^{-1}(F)$
      that are satisfied by setting $v$ to $\rva_v(v)$ and $u$ to $d$,
      where $(u,d)=\rca_0(c)$.
    \end{itemize}
  \item $\rca(c)=\rca_0(c)$ for every other constraint $c$, i.e. every
    constraint $c \in \con{t}\setminus (\rca_0^{-1}(S) \cup U'\cup F')$.
  \end{itemize}

  Towards showing correctness of the definition for $\RRR(t)$, we first show
  that every valid record $R=(\rva,\rca)$ for $t$ is added to
  $\RRR(t)$. Because $R$ is valid, there is an 
  assignment $\ria : \var{T_t} \rightarrow \cdf$ satisfying
  (R1)--(R3). Let $\rva_0$ be the restriction of $\rva$ to $\var{t_0}$
  and let $\ria_0$ be the restriction of $\ria$ to $\var{T_{t_0}}$. Let
  $Z$ be the set of all constraints in $\con{t}=\con{t_0}$ that are
  satisfied by $\ria$ but not satisfied by $\ria_0$. Moreover, 
  let $X \subseteq Z$ contain the constraints that are
  satisfied by $\rva$ and set $Y=Z\setminus X$. Then, for every
  constraint $c \in Y$, there is (at least one) variable, denoted by~$y(c)$, in
  $\var{T_t}\setminus \var{t}$ such
  that the partial assignment setting $y(c)$ to $\ria(y(c))$ and setting $v$
  to $\rva(v)$ satisfies~$c$. This implies that the record
  $R_0=(\rva_0,\rca_0)$ defined by setting $\rca_0(c)=\rca(c)$ for every $c
  \in \con{t_0}\setminus (X\cup Y)$, $\rca_0(c)=U$ for every $c \in
  X$, and $\rca_0(c)=(y(c),\ria(y(c)))$ for every $c \in Y$ is
  contained in $\RRR(t_0)$. Finally, $U'=X$ and $F'=Y$ holds
  for the record $R_0$, so $R$ is added to $\RRR(t)$.

  It remains to show that if a record $R=(\rva,\rca)$ is added to
  $\RRR(t)$, then $R$ is valid for $t$. Suppose that $R$ is obtained
  from the record $R_0=(\rva_0,\rca_0) \in \RRR(t_0)$.
  There is an assignment $\ria_0 :
  \var{T_{t_0}} \rightarrow \cdf$ satisfying (R1)--(R3) since $R_0$ is valid for $t_0$. Now it is
  straightforward to verify that the extension $\ria$ of $\ria_0$
  obtained by setting $\ria(v)=\rva(v)$ witnesses that $R$ is a valid
  record.

  Finally, the run-time of the procedure follows because there are
  $|\RRR(t_0)| \cdot |\cdf|$ pairs of records in $\RRR(t_0)$ and assignments
  $\rva_v$ for $v$. Computing the record for one
  such combination requires evaluating the constraints in $\con{t}$ for partial
  assignments and thus takes $\bigoh(|\inst|)$ time.
\end{proof}

\begin{lemma}[\textbf{constraint introduce node}]\label{lem:DP-intro-c}
  Let $t \in V(T)$ be an introduce node with child $t_0$ such that
  $\chi(t)\setminus \chi(t_0)=\{c\}$ for some
  constraint $c\in C$. Then, $\RRR(t)$ can be computed in
  $\bigoh(|\RRR(t_0)||\inst|)$ time.
\end{lemma}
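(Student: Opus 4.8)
The plan is to compute $\RRR(t)$ directly from $\RRR(t_0)$ by deciding, for each child record, whether the freshly introduced constraint $c$ is already satisfied by the variable assignment stored in that record. The first step is to collect the structural facts about an introduce node for a constraint. Since $t$ has the unique child $t_0$ and $\chi(t)\setminus\chi(t_0)=\{c\}$ consists of a single constraint vertex, the variable parts of the two bags coincide, $\var{t}=\var{t_0}$, hence $\var{T_t}=\var{T_{t_0}}$, while on the constraint side $\con{t}=\con{t_0}\cup\{c\}$. Moreover $V(T_t)\setminus V(T_{t_0})=\{t\}$, so by connectivity of the subtree of bags containing $c$ we get $c\notin\con{T_{t_0}}$ and $\con{T_t}=\con{T_{t_0}}\cup\{c\}$. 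The crucial consequence, which I would prove by applying the same connectivity argument to the bags containing a scope variable of $c$, is that every variable of $\scope{c}$ that occurs anywhere in $\var{T_t}$ already occurs in $\var{t}$; in particular $\scope{c}\cap(\var{T_t}\setminus\var{t})=\emptyset$. Thus in any valid record for $t$ the value $\rca(c)$ can only be $S$ or $U$ (never of the form $(v,d)$), and whether $c$ is satisfied by an assignment $\ria:\var{T_t}\to\cdf$ depends only on the restriction of $\ria$ to $\var{t}$.

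Given this, the construction is: for every record $(\rva_0,\rca_0)\in\RRR(t_0)$ add to $\RRR(t)$ the record $(\rva,\rca)$ where $\rva=\rva_0$, $\rca(c')=\rca_0(c')$ for all $c'\in\con{t_0}$, and $\rca(c)=S$ if $\rva$ satisfies $c$ and $\rca(c)=U$ otherwise. Evaluating $c$ on the relevant part of $\rva$ costs $O(|\inst|)$, so $\RRR(t)$ is produced in $O(|\RRR(t_0)|\cdot|\inst|)$ time, as claimed.

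For correctness I would argue both inclusions. \emph{Soundness}: take a record $(\rva,\rca)$ produced from $(\rva_0,\rca_0)\in\RRR(t_0)$ with witness $\ria_0:\var{T_{t_0}}\to\cdf$; since $\var{T_t}=\var{T_{t_0}}$, the same map $\ria:=\ria_0$ is a candidate witness for $t$, and (R2) and (R3) are inherited verbatim because the variable assignment and every $\rca_0$-value of the form $(v,d)$ is copied unchanged. For (R1), writing $Y=\con{t}\setminus\rca^{-1}(S)$ and $Y_0=\con{t_0}\setminus\rca_0^{-1}(S)$, these sets agree on $\con{t_0}$, so $\con{T_{t_0}}\setminus Y=\con{T_{t_0}}\setminus Y_0$ is satisfied by $\ria_0$ and nothing in $Y_0$ is; the only remaining constraint $c$ is satisfied by $\ria$ exactly when $\rca(c)=S$, by the construction and the fact that satisfaction of $c$ depends only on $\rva=\ria|_{\var{t}}$. \emph{Completeness}: take a valid record $(\rva,\rca)$ for $t$ with witness $\ria$; its restriction to $\con{t_0}$ (keeping the same $\rva$, legitimate since $\var{t}=\var{t_0}$) is a valid record for $t_0$ witnessed by $\ria$ itself — again using that on $\con{t_0}$ the sets $Y$ and $Y_0$ coincide — and feeding it through the construction reproduces $(\rva,\rca)$, because $\rca(c)$ is $S$ or $U$ exactly according to whether $\ria$, equivalently $\rva$, satisfies $c$.

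The one place that needs genuine care is the structural claim $\scope{c}\cap(\var{T_t}\setminus\var{t})=\emptyset$, i.e. that $\rca(c)$ is never of the form $(v,d)$ at the introduce node for $c$; everything else is bookkeeping on the partition $\{\rca^{-1}(S),\rca^{-1}(U),\rca^{-1}(F)\}$ of the bag constraints and on matching $Y$ with $Y_0$. That claim follows cleanly from the tree-decomposition connectivity property once one observes that, because the introduce node has a unique child, the only bag of $T_t$ outside $T_{t_0}$ is $\chi(t)$ itself.
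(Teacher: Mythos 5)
Your proposal is correct and follows essentially the same route as the paper's proof: the same construction (extend each child record by setting $\rca(c)$ to $S$ or $U$ according to whether $\rva_0$ satisfies $c$), the same tree-decomposition argument that $\scope{c}$ contains no variable of $\var{T_t}\setminus\var{t}$ (so $\rca(c)$ is never of the form $(v,d)$ and satisfaction of $c$ depends only on the bag assignment), the same two-direction correctness check, and the same $\bigoh(|\RRR(t_0)||\inst|)$ time bound. Your treatment is only slightly more explicit about the bookkeeping (e.g.\ matching $Y$ with $Y_0$ and noting $\var{T_t}=\var{T_{t_0}}$) than the paper's.
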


\begin{proof}
  Informally, the set $\RRR(t)$ is obtained from $\RRR(t_0)$ by
  checking, for every record $(\rva_0,\rca_0) \in \RRR(t_0)$, whether $\rva_0$
  satisfies the constraint $c$ and if so, extending $\rca_0$ by setting
  $c$ to being satisfied, and if not, extending $\rca_0$ by setting $c$
  to being unsatisfied.
  More formally, for every record $(\rva_0,\rca_0) \in \RRR(t_0)$:
  \begin{itemize}
  \item if the constraint $c$ is satisfied by the partial assignment~$\rva_0$, then $\RRR(t)$ contains the record $(\rva_0,\rca)$, where
    $\rca$ is the extension of $\rca_0$ that sets $c$ to $S$.
  \item otherwise, i.e. if $\rva_0$ does not satisfy $c$, then $\RRR(t)$
    contains the record $(\rva_0,\rca)$, where $\rca$ is the
    extension of $\rca_0$ that sets $c$ to $U$.
  \end{itemize}
  Towards showing correctness of the definition for $\RRR(t)$, we first show
  that every valid record $R=(\rva,\rca)$ for $t$ is added to
  $\RRR(t)$. Because $R$ is valid, there is an 
  assignment $\ria : \var{T_t} \rightarrow \cdf$ satisfying
  (R1)--(R3). Because $(T,\chi)$ is a tree decomposition, it follows
  that the scope of $c$ does not contain any variable from
  $\var{T_t}\setminus \var{t}$; otherwise the edge between $c$ and the variable in
  $\var{T_t}\setminus \var{t}$ in the incidence graph
  is not contained in any bag of $T$. Therefore, $\rca(c) \in \{S,U\}$ so
  if $\rca(c)=S$, then $c$ is already satisfied by the
  partial assignment $\rva$. It follows that $\ria$ witnesses that the
  record $(\rva,\rca_0)$, where $\rca_0$ is the restriction of $\rca$
  to $\con{t'}$, is in $\RRR(t_0)$ and $R$ is added to $\RRR(t)$.

  It remains to show that if a record $R=(\rva,\rca)$ is added to
  $\RRR(t)$, then $R$ is valid for $t$. Assume that $R$ is obtained
  from the record $R_0=(\rva,\rca_0) \in \RRR(t_0)$. We know that $R_0$ is valid for $t_0$ so there is an assignment $\ria_0 :
  \var{T_{t_0}} \rightarrow \cdf$ satisfying (R1)--(R3). Because
  $(T,\chi)$ is a tree decomposition, it follows
  that the scope of $c$ does not contain any variable from
  $\var{T_t}\setminus \var{t}$. Hence, $\rca(c) \in \{S,U\}$ so
  if $\rca(c)=S$, then $c$ is already satisfied by the
  partial assignment $\rva$. Therefore, $\ria$ witnesses that $R$ is
  valid.
  
  Finally, the run-time follows because we have to consider
  every record $(\rva_0,\rca_0)$ in $\RRR(t_0)$ and we can check in
  $\bigoh(|\inst|)$ time whether $\rva$ satisfies $c$ or not.
\end{proof}

\begin{lemma}[\textbf{variable forget node}]\label{lem:DP-forget-v}
  Let $t \in V(T)$ be a forget node with child $t_0$ such that
  $\chi(t_0)\setminus \chi(t)=\{v\}$ for some
  variable $v\in V$. Then, $\RRR(t)$ can be computed in
  $\bigoh(|\RRR(t_0)|2^ww)$ time.
\end{lemma}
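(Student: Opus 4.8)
The plan is to obtain $\RRR(t)$ from $\RRR(t_0)$ by a purely local rewriting of each record that reflects the single effect of forgetting $v$: a variable that used to be pinned down by the record now sits in $\var{T_t}\setminus\var{t}$, so for every not-yet-satisfied bag constraint whose scope contains $v$ we must be offered the choice of parking a commitment $(v,\rva_0(v))$ for later use. First I would record the structural facts $\chi(T_t)=\chi(T_{t_0})$ (hence $\var{T_t}=\var{T_{t_0}}$, $\con{T_t}=\con{T_{t_0}}$ and $\con{t}=\con{t_0}$) and $\var{t_0}=\var{t}\cup\{v\}$. For a record $(\rva_0,\rca_0)\in\RRR(t_0)$ put $\rva=\rva_0$ restricted to $\var{t}$; for each $c\in\con{t}$ copy $\rca_0(c)$ verbatim whenever $\rca_0(c)=S$ or $\rca_0(c)$ is a deferred value $(u,d)$ (note $u\neq v$, since deferred variables lie in $\var{T_{t_0}}\setminus\var{t_0}$), and whenever $\rca_0(c)=U$ either keep $\rca_0(c)=U$ or, provided $v\in\scope{c}$, set $\rca(c)=(v,\rva_0(v))$. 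Declare $\RRR(t)$ to be the set of all records $(\rva,\rca)$ so obtained. The key bookkeeping observation is that $\rca^{-1}(S)=\rca_0^{-1}(S)$ for every such record, so the set $Y$ appearing in (R1) --- and $\con{T_t}$ --- are literally unchanged in passing from $(\rva_0,\rca_0)$ to $(\rva,\rca)$.

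For soundness I would show that a witness $\ria:\var{T_{t_0}}\to\cdf$ of $(\rva_0,\rca_0)$ at $t_0$ also witnesses the produced $(\rva,\rca)$ at $t$: (R1) is the same statement because $Y$ and $\con{T_t}$ did not change; (R2) holds because $\rva$ is a restriction of $\rva_0$ and $\var{t}\subseteq\var{t_0}$; and (R3) holds for copied deferrals by the same property at $t_0$, while for a new deferral $\rca(c)=(v,\rva_0(v))$ it holds because $\ria(v)=\rva_0(v)$ by (R2) at $t_0$, the required membership $v\in\scope{c}\cap(\var{T_t}\setminus\var{t})$ being guaranteed by construction.

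For completeness I would start from an arbitrary valid $(\rva,\rca)$ for $t$ with witness $\ria:\var{T_t}\to\cdf$, set $\rva_0=\ria$ restricted to $\var{t_0}$ (so $\rva_0(v)=\ria(v)$), and define $\rca_0$ by $\rca_0(c)=\rca(c)$ unless $\rca(c)=U$ or $\rca(c)=(v,d)$ for some $d$, in which case $\rca_0(c)=U$. A short check (again using that $Y$ is unchanged for (R1), and that (R2),(R3) transfer directly) shows $(\rva_0,\rca_0)\in\RRR(t_0)$, and that the rewriting of $(\rva_0,\rca_0)$ returns $(\rva,\rca)$ under the appropriate choices. The one case that is not pure bookkeeping is when the valid $t$-record defers a constraint to the variable being forgotten, i.e.\ $\rca(c)=(v,d)$: then (R1) for $(\rva,\rca)$ forces $\ria$ not to satisfy $c$, so $\rca_0(c)=U$ is consistent with $\ria$, and (R3) gives $\rva_0(v)=\ria(v)=d$, so the rewriting option ``replace $U$ by $(v,\rva_0(v))$'' reproduces exactly $\rca(c)=(v,d)$.

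For the running time, each of the $|\RRR(t_0)|$ records yields at most $2^{|\con{t}|}=2^{\bigoh(w)}$ records, one for each way of choosing which qualifying $U$-constraints to defer to $v$, and assembling a single output record (including the scope tests $v\in\scope{c}$) takes $\bigoh(w)$ time; this gives the claimed $\bigoh(|\RRR(t_0)|\,2^{w}w)$ bound. I expect the main obstacle to be precisely the deferral-to-$v$ case highlighted above: one must realize that forgetting $v$ is not a mere restriction of records but must create, for each such constraint, the extra ``park $v$'s value'' alternative, and then match these alternatives up correctly in both directions of the correctness argument --- the remaining cases (over $\rca_0(c)=S$, copied deferred values, and $U$ with $v\notin\scope{c}$) being mechanical.
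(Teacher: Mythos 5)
Your proposal is correct and matches the paper's proof essentially step for step: the per-constraint choice of either keeping $U$ or deferring to $(v,\rva_0(v))$ when $v\in\scope{c}$ is exactly the paper's enumeration of subsets $U'\subseteq\rca_0^{-1}(U)\cap\{c : v\in\scope{c}\}$, and your two correctness directions (reusing the witness $\ria$, and in the completeness direction extending $\rva$ by $\ria(v)$ while resetting the $(v,d)$-deferrals to $U$) as well as the $\bigoh(|\RRR(t_0)|2^w w)$ accounting coincide with the paper's argument. Your write-up is in fact slightly more explicit than the paper about why the witness transfers (the invariance of $\rca^{-1}(S)$, hence of the set $Y$ in (R1)), but it is the same proof.
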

\begin{proof}
  Informally, $\RRR(t)$ is obtained from $\RRR(t_0)$ by
  restricting $\rva_0$ of every record $(\rva_0,\rca_0) \in \RRR(t_0)$
  to $\var{t}$, but allowing the assignment that sets $v$ to
  $\rva_0(v)$ to satisfy any set of yet unsatisfied constraints in
  $\rca_0^{-1}(U)$ that have $v$ in their scope.
  More formally, for every record $(\rva_0,\rca_0) \in \RRR(t_0)$ and every subset
  $U'$ of $\rca_0^{-1}(U)\cap \SB c \in C \SM v \in \scope{c}\SE$, the set
  $\RRR(t)$ contains the record $(\rva, \rca)$, where $\rva$ is the
  restriction of $\rva_0$ to $\var{t}$ and $\rca$ is defined by setting
  $\rca(c)=\rca_0(c)$ for every $c \in \con{t}\setminus U'$ and
  $\rca(c)=(v,\rva_0(v))$ for every $c \in U'$.

  Towards showing the correctness of the definition for $\RRR(t)$, we first show
  that every valid record $R=(\rva,\rca)$ for $t$ is added to
  $\RRR(t)$. Because $R$ is valid, there is an 
  assignment $\ria : \var{T_t} \rightarrow \cdf$ satisfying
  (R1)--(R3). Let $X$ be the set of all constraints $c$ in $\con{t}$
  such that $\rca(c)=(v,d)$. We know that $\ria$ satisfies (R3) so $d=\ria(v)$ for all constraints in $X$. 
  Then, $\ria$ witnesses validity of the record $R_0=(\rva_0,\rca_0)$,
  where $\rva_0$ is the extension of $\rva$ setting $v$ to $\ria(v)$ and
  $\rca_0$ is obtained from $\rca$ by setting $\rca_0(c)=U$ for
  every $c \in X$. Now, the record $R_0$ together with the set
  $U'=X$ shows that $R$ is added to $\RRR(t)$.

  It remains to show that if a record $R=(\rva,\rca)$ is added to
  $\RRR(t)$, then $R$ is valid for $t$. Assume that $R$ is obtained
  from the record $R_0=(\rva_0,\rca_0) \in \RRR(t_0)$. Then,
  because $R_0$ is valid for $t_0$, there is an assignment $\ria_0 :
  \var{T_{t_0}} \rightarrow \cdf$ satisfying (R1)--(R3). Moreover,
  the assignment $\ria_0$ witnesses the validity of $R$.
  Finally, the run-time follows because there are
  at most $|\RRR(t_0)|2^w$ pairs of a record in $\RRR(t_0)$ and a subset
  $U'$ and the time required to compute a record for such a pair is at
  most $\bigoh(w)$. 
\end{proof}

\begin{lemma}[\textbf{constraint forget node}]\label{lem:DP-forget-c}
  Let $t \in V(T)$ be a forget node with child $t_0$ such that
  $\chi(t_0)\setminus \chi(t)=\{c\}$ for some
  constraint $c\in C$. Then, $\RRR(t)$ can be computed in
  $\bigoh(|\RRR(t_0)| |\chi(t)|)$ time.
\end{lemma}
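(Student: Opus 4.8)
\noindent
The plan is to follow the template of the \textbf{variable forget node} (Lemma~\ref{lem:DP-forget-v}), but the situation is in fact simpler: a forgotten constraint can never be reintroduced. The only structural fact I need is that, since $(T,\chi)$ is a tree decomposition and $c\in\chi(t_0)\setminus\chi(t)$, the set of bags containing $c$ is a subtree lying entirely inside $T_{t_0}$; hence $c$ occurs in no bag on or above $t$. Consequently, for any valid record $(\rva,\rca)$ at $t$ with witness $\ria:\var{T_t}\rightarrow\cdf$, we have $c\in\con{T_t}\setminus\con{t}$, so $c\in\con{T_t}\setminus Y$ (as $Y\subseteq\con{t}$), and property (R1) forces $\ria$ to satisfy $c$. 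Thus $c$ must already be satisfied when we reach $t$, and the natural definition is to keep exactly those records of $\RRR(t_0)$ with $\rca_0(c)=S$ and delete $c$ from the domain of $\rca_0$:
$$\RRR(t)=\SB\,(\rva_0,\rca_0|_{\con{t}})\SM(\rva_0,\rca_0)\in\RRR(t_0)\text{ with }\rca_0(c)=S\,\SE;$$
every record of $\RRR(t_0)$ with $\rca_0(c)\in\{U\}\cup(\variables\times\cdf)$ is discarded.

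I would then establish correctness in the two usual directions. For completeness, given a valid record $R=(\rva,\rca)$ at $t$ with witness $\ria$, the observation above shows $\ria$ satisfies $c$; extending $\rca$ to $\rca_0$ on $\con{t_0}$ by $\rca_0(c)=S$ and keeping the very same $\ria$ as witness shows $(\rva,\rca_0)\in\RRR(t_0)$ (here one uses $\var{T_{t_0}}=\var{T_t}$, $\con{T_{t_0}}=\con{T_t}$, and $\var{t_0}=\var{t}$, all immediate because forgetting a constraint does not change any variable set, together with the easy fact that the set $Y$ obtained from $\rca_0$ at $t_0$ coincides with the one obtained from $\rca$ at $t$), and $R$ is produced from it. For soundness, if $R=(\rva,\rca)$ is produced from $(\rva,\rca_0)\in\RRR(t_0)$ with $\rca_0(c)=S$, a witness $\ria_0$ for $(\rva,\rca_0)$ already satisfies $c$ and respects (R1)--(R3); since $\var{T_t}=\var{T_{t_0}}$, the same $\ria_0$ witnesses validity of $R$ at $t$.

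For the running time I would simply scan the at most $|\RRR(t_0)|$ records once, look up $\rca_0(c)$ in constant time, and, when it equals $S$, copy the record while deleting its single entry for $c$ in $\bigoh(|\chi(t)|)$ time, giving the claimed $\bigoh(|\RRR(t_0)|\,|\chi(t)|)$ bound. There is no real obstacle here; the only point deserving a sentence is why discarding the records with $\rca_0(c)=(v,d)$ is harmless. The promise carried by such a record is that $c$ is eventually satisfied through a variable lying outside $\var{T_t}$, but every variable of $\scope{c}$ shares a bag with $c$ and therefore lies in $\var{T_{t_0}}=\var{T_t}$, so that promise can never be honoured; and in any case the completeness argument already guarantees that every genuinely satisfiable configuration resurfaces as an $S$-record of $\RRR(t_0)$.
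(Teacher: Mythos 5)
Your proposal is correct and follows essentially the same route as the paper: the same definition of $\RRR(t)$ (keep exactly the records with $\rca_0(c)=S$ and restrict $\rca$ to $\con{t}$), the same two-directional correctness argument via the observation that $c\in\con{T_t}\setminus\con{t}$ forces any witness to satisfy $c$ by (R1), and the same running-time analysis. The extra remark on why $(v,d)$-records for $c$ can never be honoured is a nice clarification but not a departure from the paper's argument.
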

\begin{proof}

Informally, $\RRR(t)$ is obtained from $\RRR(t_0)$ by
  taking all records $(\rva_0,\rca_0)$ in $\RRR(t_0)$ that satisfy $c$ and
  restricting $\rca_0$ to $\con{t}$.
  More formally, for every record $(\rva_0,\rca_0) \in \RRR(t_0)$ such that
  $\rca_0(c)=S$, $\RRR(t)$ contains the record $(\rva_0,\rca)$, where
  $\rca$ is the restriction of $\rca_0$ to $\con{t}$.

  Towards showing the correctness of the definition for $\RRR(t)$, we first show
  that every valid record $R=(\rva,\rca)$ for $t$ is added to
  $\RRR(t)$. Because $R$ is valid, there is an 
  assignment $\ria : \var{T_t} \rightarrow \cdf$ satisfying
  (R1)--(R3). Because $\ria$ satisfies (R1), it also satisfies
  the constraint $c$. Therefore the record
  $R_0=(\rva,\rca_0)$, where $\rca_0$ is the extension of $\rca$ to
  $c$ by setting $\rca_0(c)=S$, is valid and hence
  $R_0\in\RRR(t_0)$. Therefore, $R$ is added to $\RRR(t)$. 

  It remains to show that if a record $R=(\rva,\rca)$ is added to
  $\RRR(t)$, then $R$ is valid for $t$. Assume that $R$ is obtained
  from the record $R_0=(\rva_0,\rca_0) \in \RRR(t_0)$. Then,
  because $R_0$ is valid for $t_0$, there is an assignment $\ria_0 :
  \var{T_{t_0}} \rightarrow \cdf$ satisfying (R1)--(R3). Moreover,
  from the definition of $\RRR(t)$, we obtain that $\ria_0$ satisfies
  $c$ so $\ria_0$ witnesses that $R$ is valid.

  Finally, the run-time estimate is correct because it takes $\bigoh(|\chi(t)|)$ time to
  check whether $\rca_0(c)=S$ and to
  compute the restriction of $\rca$ to $\con{t}$ for a
  record $(\rva_0,\rca_0)$ in $\RRR(t_0)$.
\end{proof}

\begin{lemma}[\textbf{join node}]\label{lem:DP-join}
  Let $t \in V(T)$ be a join node with children $t_1$ and $t_2$, where
  $\chi(t)=\chi(t_1)=\chi(t_2)$. Then, $\RRR(t)$ can be computed in 
  $\bigoh(|\RRR(t_1)||\RRR(t_2)||\inst|)$ time.
\end{lemma}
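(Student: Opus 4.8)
Since $t$ is a join node, I would first record the structural consequences of $\chi(t)=\chi(t_1)=\chi(t_2)$: we have $\var{t}=\var{t_1}=\var{t_2}$ and $\con{t}=\con{t_1}=\con{t_2}$, that $\var{T_t}=\var{T_{t_1}}\cup\var{T_{t_2}}$ with $\var{T_{t_1}}\cap\var{T_{t_2}}=\var{t}$ (and analogously for constraints), and---crucially---that every constraint $c\in\con{T_{t_i}}\setminus\con{t}$ has its entire scope inside $\var{T_{t_i}}$, since otherwise the incidence edge from $c$ to a variable outside $\var{T_{t_i}}$ would be uncovered. Consequently any assignment $\ria:\var{T_t}\to\cdf$ witnessing a record at $t$ restricts to $\ria_1=\ria|_{\var{T_{t_1}}}$ and $\ria_2=\ria|_{\var{T_{t_2}}}$, which agree on $\var{t}$, and conversely two such assignments glue back together.

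The plan is to obtain $\RRR(t)$ by iterating over all pairs $(R_1,R_2)=((\rva,\rca_1),(\rva,\rca_2))\in\RRR(t_1)\times\RRR(t_2)$ that share the same variable assignment $\rva$, and for each \emph{compatible} pair outputting the merged record $(\rva,\rca)$. The merge on the constraint part is: for $c\in\con{t}$, if $\rca_1(c)=S$ or $\rca_2(c)=S$ then $\rca(c)=S$ ($c$ is already satisfied on one side); if $c$ has a disjunct $v-u\in I$ with $\{\rca_1(c),\rca_2(c)\}=\{(v,d_1),(u,d_2)\}$ and $d_1-d_2\in I$ then again $\rca(c)=S$ ($c$ gets satisfied by a disjunct straddling the two subtrees); otherwise $c$ stays pending---$\rca(c)=U$ when both children report $U$, and $\rca(c)$ equals the surviving commitment when exactly one child commits. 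The pair is incompatible (nothing is output) if none of these cases applies, e.g. if both children commit $c$ to variables that do not jointly satisfy a disjunct. I would then prove that $\RRR(t)$ equals the set of all such mergers, in two directions. For ``$\supseteq$'': given a compatible pair with witnesses $\ria_1,\ria_2$, glue them into $\ria$ and verify (R1)--(R3) for $(\rva,\rca)$, using that out-of-bag constraints are confined to one subtree and that the merge records the $S$/pending status faithfully. For ``$\subseteq$'': given a valid $(\rva,\rca)$ at $t$ with witness $\ria$, take $\ria_1,\ria_2$ as above and define $\rca_1,\rca_2$ by: for $c$ with $\rca(c)=S$ record $S$ on whichever side's sub-assignment already satisfies $c$, or, if $\ria$ satisfies $c$ only through a straddling disjunct at internal variables $v$ of $T_{t_1}$ and $u$ of $T_{t_2}$, record $(v,\ria(v))$ and $(u,\ria(u))$; for $c$ with $\rca(c)=(w,e)$ put the commitment on the side containing $w$ and $U$ on the other; for $c$ with $\rca(c)=U$ put $U$ on both; then check these are valid records forming a compatible pair that merges to $\rca$.

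For the running time, there are at most $|\RRR(t_1)|\cdot|\RRR(t_2)|$ pairs, matching records with equal $\rva$ is a $\bigoh(w)$ comparison, and for each compatible pair the merged status of each of the $\le w+1$ constraints in the bag is determined by scanning that constraint's disjuncts and evaluating them on a partial assignment---$\bigoh(|\inst|)$ in total per pair. This gives the stated $\bigoh(|\RRR(t_1)|\,|\RRR(t_2)|\,|\inst|)$ bound.

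The step I expect to be the main obstacle is the direction ``$\supseteq$'' for constraints $c\in\con{t}$ whose scope reaches strictly into \emph{both} child subtrees, possibly at several variables: the glued assignment $\ria_1\cup\ria_2$ could satisfy a straddling disjunct of $c$ that is not reflected by $\rca_1,\rca_2$, which would contradict $\rca(c)\neq S$. Getting the notion of compatibility exactly tight---so that every such situation is either impossible, or detected (forcing $\rca(c)=S$), or repairable by a constraint-preserving change to one witness in the spirit of Lemma~\ref{lem:ass-scon}---is the delicate part; everything else is routine bookkeeping of the same kind as in the introduce and forget lemmas.
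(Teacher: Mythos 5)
Your proposal follows the paper's proof essentially verbatim: the same pairing of child records that agree on the assignment $\rva$, the same compatibility condition and merge rules for the $S$/$U$/commitment statuses, the same two correctness directions, and the same running-time accounting, so it is the same approach rather than a new one. The obstacle you flag---both children reporting a pending status while the glued witness happens to satisfy a disjunct straddling the two subtrees---is precisely the point the paper's own proof dismisses as ``straightforward to verify''; it concerns only the negative half of condition (R1) and does not affect the soundness of the overall algorithm, so your sketch is as complete as the published argument.
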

\begin{proof}

Informally, $\RRR(t)$ is obtained from $\RRR(t_1)$ and $\RRR(t_2)$ by
  combining all pairs of records $(\rva_i,\rca_i)$ in $\RRR(t_i)$ that
  agree on the assignments $\rva_i$ to a new record and updating the
  set of satisfied constraints.
  More formally, we say that two records $(\rva_1,\rca_1) \in
  \RRR(t_1)$ and $(\rva_2,\rca_2) \in \RRR(t_2)$ are \emph{compatible}
  if $\rva_1=\rva_2$ and for every constraint $c\in \con{t}$ such that
  for $i \in \{1,2\}$, $\rca_i(c)=(v_i,d_i)$, and 
  the partial assignment setting $v_i$ to $d_i$
  satisfies $c$. Then, for every pair of compatible records $(\rva_1,\rca_1) \in
  \RRR(t_1)$ and $(\rva_2,\rca_2) \in \RRR(t_2)$, the set $\RRR(t)$
  contains the record $(\rva,\rca)$, where:
  \begin{itemize}
  \item $\rva=\rva_1=\rva_2$ and
  \item $\rca(c)=S$ if either:
    \begin{itemize}
    \item $\rca_1(c)=S$ or $\rca_2(c)=S$ or
    \item $\rca_1(c)=(v_1,d_1)$ and $\rca_2(c)=(v_2,d_2)$ and the
      (partial) assignment setting $v_1$ to $d_1$ 
      and $v_2$ to $d_2$ satisfies $c$.
    \end{itemize}
  \item $\rca(c)=U$ if $\rca_1(c)=U$ and $\rca_2(c)=U$,
  \item $\rca(c)=(v,d)$ if either:
    \begin{itemize}
    \item $\rca_1(c)=(v,d)$ and $\rca_2(c)=U$ or
    \item $\rca_1(c)=U$ and $\rca_2(c)=(v,d)$
    \end{itemize}
  \end{itemize}

  We will now show the correctness of the definition of $\RRR(t)$. We first show
  that every valid record $R=(\rva,\rca)$ for $t$ is added to
  $\RRR(t)$. Because $R$ is valid, there is an 
  assignment $\ria : \var{T_t} \rightarrow \cdf$ satisfying
  (R1)--(R3). Let $\ria_i$ be the restriction of $\ria$ to
  $\var{T_{t_i}}$. Note first that every constraint $c \in
  \con{T_t}\setminus \con{t}$, is either satisfied by $\ria_1$ or by
  $\ria_2$. This is because $(T,\chi)$ is a tree decomposition so $\scope{c} \subseteq \var{T_{t_i}}$ for some $i \in
  \{1,2\}$. Moreover, every constraint $c \in \con{t}$ that is
  satisfied by $\ria$ is either (1) already satisfied by $\ria_i$ (for
  some $i\in \{1,2\}$) or (2) it is satisfied by a simple constraint
  involving two variables $v_i \in \var{T_i}\setminus \var{t}$
  assigned according to $\ria$. We set $\rca_i(c)=S$ if $c$ is satisfied
  by $\ria_i$. Otherwise, we set $\rca_i(c)=U$ if $\rca(c)=U$ or $\rca(c)=S$, and $\ria_{3-i}$ satisfies $c$. Finally, we set $\rca_i(c)=(v_i,\ria(v_i))$ if
  either $\rca(c)=(v_i,\ria(v_i))$ and $v_i \in \var{T_{t_i}}\setminus
  \var{t}$, or $\rca(c)=S$ but neither $\ria_1$ nor $\ria_2$ satisfy
  $c$ and setting $v_i$ to $d_i$ satisfies $c$. Then, the records
  $R_i=(\rva,\rca_i)$ are valid for $t_i$ as witnessed by $\ria_i$, so 
  $(\rva,\rca_i)\in \RRR(t_i)$. Moreover, $R_1$ and $R_2$
  are compatible and $R$ is the result of combining $R_1$ and $R_2$,
  showing that $R$ is added to $\RRR(t)$.
  
  It remains to show that if a record $R=(\rva,\rca)$ is added to
  $\RRR(t)$, then $R$ is valid for $t$. Assume that $R$ is obtained
  from two compatible records $R_i=(\rva,\rca_i) \in \RRR(t_i)$. Then,
  because $R_i$ is valid for $t_i$, there is an assignment $\ria_i :
  \var{T_{t_i}} \rightarrow \cdf$ satisfying (R1)--(R3). It is now
  straightforward to verify that the assignment $\ria$ obtained by
  combining $\ria_1$ and $\ria_2$ witnesses that $R$ is valid for $t$.

  Finally, the run-time follows because there are at most
  $|\RRR(t_1)||\RRR(t_2)|$ compatible pairs of records and for every
  such pair it takes time at most $\bigoh(|\inst|)$ to compute the
  combined record for $\RRR(t)$.
\end{proof}

We can now conclude the results in this section.

\begin{proof}[Proof of Theorem~\ref{the:xp-alg}]
  The algorithm computes the
  set of all valid records $\RRR(t)$ for every node $t$ of $T$ using a
  bottom-up dynamic programming algorithm starting in the leaves of
  $T$. It then solves $\inst$ by checking whether $\RRR(r)\neq \emptyset$.
  The
  correctness of the algorithm follows from 
   Lemmas~\ref{lem:DP-leaf-v}--\ref{lem:DP-join}. 
  The run-time 
  of the algorithm is at most the number of nodes of $T$, which can be
  assumed to be bounded from above by $\bigoh(|\inst|)$ (Proposition~\ref{pro:comp-ntd}),
  times the maximum time required to compute $\RRR(t)$ for any of the
  node types of a nice tree-decomposition, which is obtained for join
  nodes with a run-time of $\bigoh(|\RRR(t_1)||\RRR(t_2)||\inst|)$. 
  It follows that
  $$\bigoh((|V||\cdf|+2)^{2(w+1)}(|\inst|)^2)\in (nk)^{\bigoh(w)}$$ is the total run-time because $|\RRR(t)|\leq (|V||\cdf|+2)^{w+1}$.
\end{proof}


\section{Parameterized Lower Bounds} 
\label{sec:lb}

This section contains two main results:
we show that $\csp(\D_{2,\infty})$ is \paraNP-hard (Section~\ref{sec:pnp-hardness})
and that $\csp(\D_{2,1})$ is 
$\W{1}$\hy hard (Section~\ref{sec:w1-hardness}) when parameterized
by primal treewidth.
These results indicate that there is no fpt algorithm
for $\csp(\D_{2,k})$, $k \in \{1,2,\dots\} \cup \{\infty\}$,
under standard complexity-theoretic assumptions.
The \paraNP-hardness result is proved by a direct
reduction from \probfont{Subset Sum}
while the $\W{1}$\hy hardness result
is based on a reduction from a variant of the
\probfont{Subset Sum} problem
that we call
\probfont{Multi-Dimensional Partitioned Subset Sum}.
$\W{1}$\hy hardness for the latter problem is proved in
Section~\ref{sec:mdpss}.
The full picture of parameterized upper and
lower bounds is summarized in Table~\ref{tb:summary}.

To present our results, we need some
additional technical machinery.
A parameterized problem is, formally speaking, a subset of $\Sigma^* \times {\mathbb N}$
where $\Sigma$ is the input alphabet. Reductions between parameterized problems need to take
the parameter into account. To this end, we will use {\em parameterized reductions} (or fpt-reductions).
Let $L_1$ and $L_2$ denote parameterized problems with $L_1 \subseteq \Sigma_1^* \times {\mathbb N}$
and $L_2 \subseteq \Sigma_2^* \times {\mathbb N}$. 
A parameterized reduction from $L_1$ to $L_2$ is a
mapping $P: \Sigma_1^* \times {\mathbb N} \rightarrow \Sigma_2^* \times {\mathbb N}$
such that
\begin{enumerate}[(1)]
  \item 
  $(x, k) \in  L_1$ if and only if $P((x, k)) \in L_2$, 
  \item 
  the mapping can be computed by an fpt-algorithm 
  with respect to the parameter $k$, and 
  \item there is a computable function $g : {\mathbb N} \rightarrow {\mathbb N}$ 
such that for all $(x,k) \in L_1$ if $(x', k') = P((x, k))$, then $k' \leq g(k)$.
\end{enumerate}
The class $\Weft[1]$ contains all problems that are fpt-reducible to \textsc{Independent Set} parameterized
by the size of the solution, i.e. the number of vertices in the maximum independent set.
Showing $\Weft[1]$-hardness (by an fpt-reduction) for a problem rules out the existence of a fixed-parameter
algorithm under the standard assumption $\FPT \neq \Weft[1]$.
The class \paraNP contains all parameterized problems that can be solved by a nondeterministic algorithm in time 
$f(k)\cdot ||x||^{O(1)}$ for some computable function $f$. 
It is known that 
$\FPT=\paraNP$
if and only if 
$\PP=\NP$.
A problem is \paraNP-hard if it is 
\NP-hard for a constant value of the parameter.

\subsection{\paraNP-hardness for $\csp(\D_{2,\infty})$}
\label{sec:pnp-hardness}

We show that if there is no upper bound on the
size of the numbers used in the constraints, then
$\csp(\D_{2,\infty})$ is \NP-hard, even for
instances whose primal graph has constant treewidth. 
In other words, we prove that $\csp(\D_{2,\infty})$ is
\paraNP -hard.
This result is based on the \NP-hard problem \probfont{Subset Sum}~\cite{gj79}.

\pbDef{\probfont{Subset Sum}}
{A set of integers $S$ and an integer $N$.}
{Is there a set $S' \subseteq S$ such that $N=\sum_{s \in S'}s$?
}

\begin{theorem} \label{thm:subsum-d2}
    $\csp(\D_{2,\infty})$ is \NP-hard, even for instances whose
    primal graph has treewidth at most $2$.
\end{theorem}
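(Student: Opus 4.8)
The goal is to reduce \probfont{Subset Sum} to $\csp(\D_{2,\infty})$ while keeping the primal treewidth bounded by a constant (here $2$). Given an instance $(S,N)$ of \probfont{Subset Sum} with $S=\{s_1,\dots,s_m\}$, the natural idea is to build a "running-sum" gadget: introduce variables $p_0,p_1,\dots,p_m$ (together with a zero variable $z$, which we may identify with $p_0$) whose intended meaning is that $p_i$ equals the partial sum $\sum_{j\in S',\, j\le i} s_j$ for the chosen subset $S'\subseteq\{1,\dots,m\}$. For each index $i$ we add a single disjunctive binary constraint on $p_{i-1}$ and $p_i$ expressing the choice "either $s_i$ is taken or not":
\[
  (p_i - p_{i-1} \in \{0\}) \;\lor\; (p_i - p_{i-1} \in \{s_i\}).
\]
Finally we add the constraint $p_m - p_0 \in \{N\}$ (a binary simple constraint) to force the total sum to equal $N$. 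Because $z=p_0$ is a zero variable, this last constraint pins $p_m$ to $N$, and one checks immediately that satisfying assignments correspond exactly to subsets $S'$ with $\sum_{j\in S'}s_j=N$. All coefficients are integers appearing in $S$ and $N$, so unbounded coefficients are genuinely used — this is why the construction only proves \paraNP-hardness and not anything for bounded $k$.

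**Treewidth bound.** The primal graph of this instance is essentially a path $p_0 - p_1 - \dots - p_m$ (each choice constraint joins consecutive $p_{i-1},p_i$), plus the single extra edge $\{p_0,p_m\}$ coming from the target constraint. A path has treewidth $1$, and adding one edge raises treewidth by at most one, so the primal treewidth is at most $2$. (Concretely: take the tree decomposition with bags $\{p_0,p_{i-1},p_i\}$ for $i=1,\dots,m$ arranged in a path; every edge of the primal graph, including $\{p_0,p_m\}$, lies in some bag, and each bag has size $3$.) Alternatively one can invoke Proposition~\ref{pro:tw-del}: deleting $p_0$ leaves a path of treewidth $1$, so the whole graph has treewidth at most $1+1=2$. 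The reduction is clearly polynomial-time and the parameter (primal treewidth) is the constant $2$ on all output instances, so this is a valid reduction witnessing \paraNP-hardness.

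**Correctness and the main obstacle.** The equivalence is almost routine: from a subset $S'$ define $p_i=\sum_{j\in S',\,j\le i}s_j$; each choice constraint is satisfied since $p_i-p_{i-1}\in\{0,s_i\}$, and $p_m-p_0=\sum_{j\in S'}s_j=N$. Conversely, any solution forces each $p_i-p_{i-1}\in\{0,s_i\}$, so letting $S'=\{i : p_i-p_{i-1}=s_i\}$ gives $\sum_{j\in S'}s_j = p_m-p_0 = N$. The one genuine subtlety — and the step I'd expect to need a sentence of care — is the edge case $s_i=0$ for some $i$: then the two disjuncts coincide, which is harmless, but one should remember that "taken" vs. "not taken" is then ambiguous (irrelevant, since such elements contribute $0$ to every sum). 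A second minor point is making sure the target constraint $p_m - p_0 \in \{N\}$ is expressible in $\D_{2,\infty}$ as a binary simple constraint with an integer bound, which it is, and that introducing the zero variable $p_0$ does not inflate the treewidth beyond the stated bound, which we already accounted for. There is no hard technical core here; the content of the theorem is entirely in choosing a gadget whose primal graph is a path-plus-one-edge so that bounded treewidth comes for free.
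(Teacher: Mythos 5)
Your proposal is correct and is essentially the paper's own proof: the same running-sum gadget with one disjunctive binary constraint $x_i-x_{i-1}=0 \lor x_i-x_{i-1}=s_i$ per element, the closing constraint $x_n-x_0=N$ turning the primal graph into a cycle of treewidth at most $2$, and the same bijection between solutions and subsets summing to $N$. The only cosmetic difference is your invocation of a zero variable, which is unnecessary here since the target constraint is already binary, but it is harmless.
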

\begin{proof}
  We present a polynomial-time reduction from the
  \SubsetSum problem to $\csp(\D_{2,\infty})$. Let
  $(S,N)$ be an instance of \SubsetSum with $S=\{s_1,\dotsc,s_n\}$.
  We construct an
  equivalent instance $\inst$ of $\csp(\D_{2,\infty})$ as follows.
  Introduce $n+1$ variables $x_0,\dots,x_n$.
  For every $i$ with $1 \leq i \leq n$, introduce the constraint 
    $x_{i} - x_{i-1} = 0 \lor x_{i} - x_{i-1} = s_i$.
  Finally, add the constraint $x_n - x_0 = N$.
  Note that the primal graph of $\inst$ is a cycle, so its
  treewidth is at most $2$. 
   Given a solution to $\inst$,
   selecting those $s_i$ for which $x_{i} - x_{i-1} = s_{i}$ 
   yields a subset of $S$ that sums up to $N$.
   In the opposite direction, a solution to $\inst$
   can be constructed from the subset $S' \subseteq S$
   that sums up to $N$ by setting $x_{i} - x_{i-1} = s_i$ if $s_i \in S'$
   and $x_{i} - x_{i-1} = 0$ otherwise.
\end{proof}

\subsection{$\W{1}$\hy hardness for \probfont{Multi-Dimensional Partitioned Subset Sum}}
\label{sec:mdpss}

Our parameterized hardness result for $\csp(\D_{2,k})$ is based on a
variant of \SubsetSum; we note here that similar but slightly
different variants of \SubsetSum have been considered before~\cite{DBLP:journals/algorithmica/GanianKO21,DBLP:journals/algorithmica/GanianOR23}.
Let $k$ denote a natural number and let $\mvec{v}$ denote a vector of
dimension $K=\binom{k}{2}$. We sometimes refer to the coordinates of
$\mvec{v}$ by a pair $(a,b)$ of natural numbers with $1 \leq a < b
\leq k$; here, we implicitly use an arbitrary bijection between the
$K$ pairs $(a,b)$ satisfying the inequality
and the $K$ coordinates of the vector $\mvec{v}$.
We say that $\mvec{v}$ is {\em uniform} if every non-zero
coordinate of $\mvec{v}$ has the same value $s(\mvec{v})$.
Finally, for an integer $N$, we let $\mvec{N}$ denote the
$K$-dimensional vector that is equal to $N$ at every coordinate.

\pbDefP{\probfont{Multi-dimensional Partitioned Subset Sum}
  (\MPSS{})}{Integers $k$ and $N$, and sets $V_1,\dotsc,V_k$ and
  $E_{1},\dotsc,E_{K}$
  of uniform $K$-dimensional vectors over the
  natural numbers such that:
  \begin{itemize}
  \item 
    Every vector $\mvec{v} \in V_i$ is non-zero at all coordinates
    $(a,b)$ such that $a=i$ or $b=i$ and zero elsewhere.
  \item Every vector $\mvec{v} \in E_{r}$ is non-zero only at the
    coordinate $r$.
  \end{itemize}
}{$k$}{
  Are
  there $\mvec{v}^1,\dotsc, \mvec{v}^k$ and
  $\mvec{e}^1,\dotsc,\mvec{e}^{K}$ with $\mvec{v}^i \in V_i$
  and $\mvec{e}^r \in E_r$ such that
  $(\sum_{i=1}^k\mvec{v}^i)+(\sum_{r=1}^{K}\mvec{e}^r)=\mvec{N}$?
}

\begin{theorem} \label{pro:mpss-hard}
  \MPSS{} is strongly \W{1}\hy hard (i.e. it is \W{1}\hy hard
  even if
  all numbers are encoded in unary).
\end{theorem}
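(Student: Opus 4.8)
The plan is to give a parameterized reduction from \textsc{Multicolored Clique} (also called \textsc{Partitioned Clique}): given a graph $G$ whose vertex set is partitioned into $k$ colour classes $U_1,\dots,U_k$, decide whether $G$ has a clique with exactly one vertex in each class. This problem is well known to be \W{1}\hy hard with parameter $k$, and it has numerical content that is trivially polynomial, so establishing an fpt-reduction to \MPSS{} that produces only polynomially bounded numbers will give strong \W{1}\hy hardness. The match with \MPSS{} is natural: the $k$ colour classes will correspond to the sets $V_1,\dots,V_k$, while the $K=\binom{k}{2}$ unordered pairs of classes correspond to the $K$ coordinates (and to the sets $E_1,\dots,E_K$).

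First I would fix an injective labelling of $V(G)$ by a Sidon set. Let $n=|V(G)|$; by Proposition~\ref{prop:golombconstruction} and the remarks following it, we can construct in polynomial time a Sidon set of \emph{positive} integers $\{s_u : u\in V(G)\}$, one value per vertex, with $\max_u s_u = O(n^2)$. Put $N = 2\max_u s_u + 1$, so that $N > s_u + s_{u'}$ for all $u,u'$. For each class $i$ and each $u\in U_i$, let $\mvec{v}(u)$ be the vector with entry $s_u$ at every coordinate $\{i,j\}$ with $j\neq i$ and $0$ elsewhere, and set $V_i=\{\mvec{v}(u) : u\in U_i\}$; these are uniform vectors of the required shape (nonzero exactly at coordinates involving $i$). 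For each coordinate $r=\{a,b\}$ with $a<b$, set
\[
  E_r = \bigl\{\, N - s_u - s_{u'} \ :\ \{u,u'\}\in E(G),\ u\in U_a,\ u'\in U_b \,\bigr\},
\]
each value being a positive vector supported only on coordinate $r$ (hence uniform). If some $E_r$ is empty then $G$ has no multicoloured clique, so both instances are no-instances and the reduction is still correct. The parameter of the produced instance is $k$, it is computed in polynomial time, and every number occurring in it is $O(n^2)$, so even its unary encoding has size polynomial in $\|G\|$.

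For correctness, observe that at coordinate $\{a,b\}$ only the chosen vectors $\mvec{v}^a$, $\mvec{v}^b$ and $\mvec{e}^{\{a,b\}}$ are nonzero. Hence a solution, writing $\mvec{v}^i=\mvec{v}(w_i)$ with $w_i\in U_i$, forces $s_{w_a}+s_{w_b}+\mvec{e}^{\{a,b\}}=N$ at that coordinate, so $\mvec{e}^{\{a,b\}}=N-s_{w_a}-s_{w_b}$ must lie in $E_{\{a,b\}}$; i.e. $s_{w_a}+s_{w_b}=s_u+s_{u'}$ for some edge $\{u,u'\}$ with $u\in U_a$, $u'\in U_b$. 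Since the labels are distinct, the Sidon property gives $\{s_u,s_{u'}\}=\{s_{w_a},s_{w_b}\}$, hence $\{u,u'\}=\{w_a,w_b\}$, and using that $U_a\cap U_b=\emptyset$ this means $u=w_a$, $u'=w_b$, so $\{w_a,w_b\}\in E(G)$. As this holds for every pair, $\{w_1,\dots,w_k\}$ is a multicoloured clique. Conversely, from a multicoloured clique $\{w_1,\dots,w_k\}$ take $\mvec{v}^i=\mvec{v}(w_i)$ and $\mvec{e}^{\{a,b\}}=N-s_{w_a}-s_{w_b}$, which lies in $E_{\{a,b\}}$ because $\{w_a,w_b\}\in E(G)$; every coordinate then sums to $N$.

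The one place that genuinely needs care is keeping the numbers polynomially bounded, which is what makes the hardness \emph{strong}. Because each vector $\mvec{v}(u)$ must be uniform, the label $s_u$ is reused identically at both ends of every incident coordinate, so the two endpoints of a pair cannot be separated by placing them in different digit positions; Sidon sets are exactly what allows the sum $s_u+s_{u'}$ to still recover the unordered pair $\{u,u'\}$ while all labels stay of size $O(n^2)$. This is the main obstacle, and it is dispatched by Proposition~\ref{prop:golombconstruction}. Combining the reduction with the \W{1}\hy hardness of \textsc{Multicolored Clique} then yields the claim.
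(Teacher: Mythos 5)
Your proposal is correct and follows essentially the same route as the paper's proof: a parameterized reduction from \textsc{Multicoloured Clique} in which vertices are labelled by a polynomially bounded Sidon set, the sets $V_i$ carry the vertex labels uniformly on all coordinates involving class $i$, the sets $E_r$ carry the complementary values $N-s_u-s_{u'}$ for edges, and the Sidon property recovers the chosen pair from the sum at each coordinate. The only differences are cosmetic (e.g.\ taking $N=2\max_u s_u+1$ rather than $\max_2(\SSS)+1$, and explicitly noting the empty-$E_r$ case), so nothing further is needed.
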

\begin{proof}
  We prove the lemma by a  parameterized reduction from
  \probfont{Multicoloured Clique}, which is well known to be 
  \W{1}\hy complete~\cite{Pietrzak03}.
  Given an integer $k$ and a
  $k$-partite graph $G$ with partition $U_1,\dotsc,U_k$, 
  the \probfont{Multicoloured Clique} problem
  asks whether $G$ contains a $k$-clique
  (note that since the sets $U_i$ are independent, any $k$-clique must contain exactly one vertex from each set~$U_i$).
  We let 
  $W_{(i,j)}$ denote the set of all edges in $G$ with one endpoint in $U_i$ and
  the other endpoint in $U_j$, for every $i$ and $j$ with $1\leq i < j
  \leq k$.
  To show the lemma, we construct an instance
  $$\inst=(k,N,(V_i)_{1\leq i \leq k},(E_{r})_{1 \leq r \leq K}),$$
  of \MPSS{} in polynomial time where all integers in
  $\inst$ are bounded by a polynomial in $|V(G)|$
  and $K=\binom{k}{2}$. Our construction
  yields an instance $\inst$ such that $G$ contains a
  $k$-clique if and only if $\inst$ has a solution.
  
  We will employ Sidon sets from Section~\ref{sec:sidon} in the reduction. 
  Namely, we need a Sidon set containing $|V(G)|$ natural numbers, i.e.
 one number for each vertex of $G$. Since the numbers in
 the Sidon set will be used as numbers in $\inst$, we need
 to ensure that the largest of these numbers is bounded by a
 polynomial in $|V(G)|$.
  We know from Section~\ref{sec:sidon} that such a set (where the bound on the largest element is $8|V|^2$) 
  can be computed in polynomial time.
  In the following, we will
  assume that we are given such a Sidon sequence $\SSS$
  and we let~$\SSS(i)$
  denote the~$i$\hy th element of $\SSS$ for $1 \leq i \leq |V(G)|$. 
  Let $\max(\SSS)$ and $\max_2(\SSS)$ denote
  the largest element of~$\SSS$ and the maximum sum of any
  two distinct elements in~$\SSS$, respectively. We will furthermore assume that the vertices
  of $G$ are identified with the numbers from $1$ to $|V(G)|$ and
  therefore $\SSS(v)$ is properly defined for every $v \in V(G)$.

  We are now ready to construct the instance $\inst$.
  We set $N=\max_2(\SSS)+1$ and
  proceed to the construction of the sets $V_1,\dotsc,V_k$ and
  the sets $E_1,\dotsc,E_K$. For every $i$ with $1 \leq i \leq k$ and every $u \in U_i$, the set $V_i$ contains the vector
  $\mvec{u}$ with $s(\mvec{u})=\SSS(u)$ being non-zero at all
  coordinates $(a,b)$ such that either $a=i$ or $b=i$. Moreover, for
  every $1 \leq r \leq K$ and every $e=(u,v) \in W_{r}$, the set
  $E_r$ contains the vector $\mvec{e}$ with $s(\mvec{e})=(\max_2(\SSS)+1)-(\SSS(u)+\SSS(v))$,
  and the non-zero value appearing only at coordinate $r$.
  
  This completes the construction of $\inst$. It is clear that $\inst$
  can be constructed in polynomial time and that every integer in
  $\inst$ is at most $\max_2(\SSS)+1$ so $\inst$ is polynomially bounded in
  $|V(G)|$. Intuitively, the construction relies on the fact that
  since the sum of each pair of vertices is unique, we can uniquely
  associate each pair with an edge between these vertices, whose value
  will then be the global upper bound of
  $\max_2(\SSS)+1$ minus the unique sum.

  It remains to show that $G$ contains a $k$-clique if and only if $\inst$ has
  a solution. 

\smallskip

  \noindent
  {\bf Forward direction.} Let $C$ be a
  $k$-clique in $G$ with vertices $u_1,\dotsc,u_k$ such that $u_i \in
  U_i$ for every $i$ with $1\leq i \leq k$. Choose the
  vector $\mvec{u}_i$ from $V_i$, $1 \leq i \leq k$,
  and the vector $\mvec{e}_{(i,j)}$ from $E_{(i,j)}$, where $\mvec{e}_{(i,j)}$ is
  the edge with endpoints $u_i$ and $u_j$ for every $i$ and $j$ with
  $1 \leq i < j \leq k$.
  We claim that this choice
  is a solution for $\inst$.
  Let $\mvec{t}$ be the vector
  $(\sum_{i=1}^k\mvec{u}_i)+(\sum_{i=1}^K\mvec{e}_i)$.
  For every coordinate $(i,j)$ with $1 \leq i < j \leq k$, the vectors
  $\mvec{u}_i$, $\mvec{u}_j$, and
  $\mvec{e}_{(i,j)}$ are the only vectors that are non-zero at
  the coordinate $(i,j)$. Therefore,
  $\mvec{t}[(i,j)]=s(\mvec{u}_i)+s(\mvec{u}_j)+s(\mvec{e}_{(i,j)})$. Moreover, using the identities
  $s(\mvec{u}_i)=\SSS(u_i)$,
  $s(\mvec{u}_j)=\SSS(u_j)$, and
  $s(\mvec{e}_{(i,j)})=(\max_2(\SSS)+1)-(\SSS(u_i)+\SSS(u_j))$, we obtain that
  $$\mvec{t}[(i,j)] = \SSS(u_i)+\SSS(u_j)+({\rm max}_2(\SSS)+1)-(\SSS(u_i)+\SSS(u_j))={\rm max}_2(\SSS)+1=N,$$
  as required.

  \smallskip
  
  \noindent
  {\bf Backward direction.}
  Assume $\mvec{u}_i \in V_i$, $1 \leq i \leq k$ together with
  $\mvec{e}_j \in E_j$, $1 \leq j \leq K$ is a solution for $\inst$.
  We claim that $\{u_1,\dotsc,u_k\}$ forms a $k$-clique in
  $G$, i.e. $e_{(i,j)}=\{u_i,u_j\}$ is an edge of $G$ 
  for every $i$ and $j$ with $1 \leq i < j \leq k$.
  Note that the only
  vectors in the solution for $\inst$ that have a
  non-zero contribution towards the $(i,j)$-th coordinate of the sum vector
  are the vectors
  $\mvec{u}_i$, $\mvec{u}_j$, and $\mvec{e}_{i,j}$. Since
  $s(\mvec{u}_i)=\SSS(u_i)$, $\mvec{u}_j=\SSS(u_j)$,
  and $N=\max_2(\SSS)+1$, we see that
  $s(\mvec{e}_{(i,j)})=(\max_2(\SSS)+1)-(\SSS(u_i)+\SSS(u_j))$. Moreover,
  $\SSS$ is a Sidon sequence so the sum $(\SSS(u_i)+\SSS(u_j))$ is
  unique. It follows that $e_{(i,j)}=\{u_i,u_j\}$ as required.
  \end{proof}

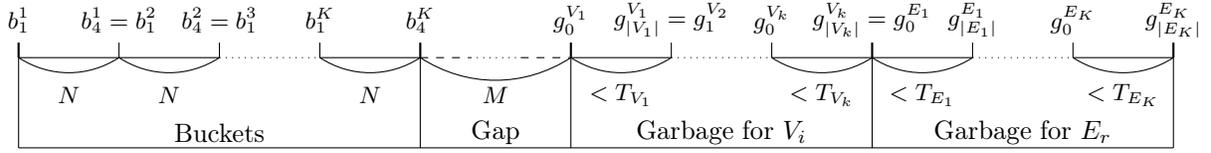
\begin{figure*}
    \centering
    \begin{tikzpicture}[xscale=0.89, yscale=1]
        \def \w {1.5};
        \def \h {0.2};
        \def \e {0.5};
        \def \g {3*\w/2};
        \def \l {1.2};
        \def \mathlabelsize{\small};
        
        \draw (0,0) -- (2*\w,0);
        \draw (2*\w,0) -- (3*\w,0) [dotted];
        \draw (3*\w,0) -- (4*\w,0);
        \draw (0,0) -- (0,\h) [thick];
        \draw (\w,0) -- (\w,\h);
        \draw (2*\w,0) -- (2*\w,\h);
        \draw (3*\w,0) -- (3*\w,\h);
        \draw (4*\w,0) -- (4*\w,\h) [thick];
        \node at (0, \e) {\mathlabelsize $b^{1}_{1}$};
        \node at (\w,\e) {\mathlabelsize $b^{1}_{4} = b^{2}_{1}$};
        \node at (2*\w,\e) {\mathlabelsize $b^{2}_{4} = b^{3}_{1}$};
        \node at (3*\w,\e) {\mathlabelsize $b^{K}_{1}$};
        \node at (4*\w,\e) {\mathlabelsize $b^{K}_{4}$};
        \draw (0,0) arc (240:300:\w);
        \node at (\w/2, -\e) {\small $N$};
        \draw (\w,0) arc (240:300:\w);
        \node at (3*\w/2, -\e) {\small $N$};
        \draw (3*\w,0) arc (240:300:\w);
        \node at (7*\w/2, -\e) {\small $N$};
        
        \draw[dashed] (4*\w,0) -- (4*\w + \g/3,0);
        \draw[dotted] (4*\w + \g/3,0) -- (4*\w + 2*\g/3,0);
        \draw[dashed] (4*\w + 2*\g/3,0) -- (4*\w + \g,0);
        \draw (4*\w,0) arc (240:300:\g);
        \node at (4*\w+\g / 2, -\e) {\small $M$};
        
        \draw (4*\w+\g,0) -- (5*\w+\g,0);
        \draw (5*\w+\g,0) -- (6*\w+\g,0) [dotted];
        \draw (6*\w+\g,0) -- (8*\w+\g,0);
        \draw (8*\w+\g,0) -- (9*\w+\g,0) [dotted];
        \draw (9*\w+\g,0) -- (10*\w+\g,0);
        \draw (4*\w+\g,0) -- (4*\w+\g,\h) [thick];
        \draw (5*\w+\g,0) -- (5*\w+\g,\h);
        \draw (6*\w+\g,0) -- (6*\w+\g,\h);
        \draw (7*\w+\g,0) -- (7*\w+\g,\h) [thick];
        \draw (8*\w+\g,0) -- (8*\w+\g,\h);
        \draw (9*\w+\g,0) -- (9*\w+\g,\h);
        \draw (10*\w+\g,0) -- (10*\w+\g,\h) [thick];
        \node at (4*\w+\g, \e) {\mathlabelsize $g^{V_1}_{0}$}; 
        \node at (5*\w+\g, \e) {\mathlabelsize $g^{V_1}_{\abs{V_1}} = g^{V_2}_{1}$};
        \node at (6*\w+\g, \e) {\mathlabelsize $g^{V_k}_{0}$};
        \node at (7*\w+\g, \e) {\mathlabelsize $g^{V_k}_{\abs{V_k}} = g^{E_1}_{0}$};
        \node at (8*\w+\g, \e) {\mathlabelsize $g^{E_1}_{\abs{E_1}}$};  
        \node at (9*\w+\g, \e) {\mathlabelsize $g^{E_{K}}_{0}$};
        \node at (10*\w+\g, \e) {\mathlabelsize $g^{E_{K}}_{\abs{ E_{K}}}$};
        \draw (4*\w+\g,0) arc (240:300:\w);
        \node at (9*\w/2+\g,-\e) {\small $< T_{V_1}$}; 
        \draw (6*\w+\g,0) arc (240:300:\w);
        \node at (13*\w/2+\g,-\e) {\small $<T_{V_k}$}; 
        \draw (7*\w+\g,0) arc (240:300:\w);
        \node at (15*\w/2+\g,-\e) {\small $< T_{E_1}$}; 
        \draw (9*\w+\g,0) arc (240:300:\w);
        \node at (19*\w/2+\g,-\e) {\small $< T_{E_K}$};       
        
        \draw (0,0) -- (0,-\l);
        \draw (4*\w,0) -- (4*\w,-\l);
        \draw (4*\w+\g,0) -- (4*\w+\g,-\l);        
        \draw (7*\w+\g,0) -- (7*\w+\g,-\l);
        \draw (10*\w+\g,0) -- (10*\w+\g,-\l);
        \draw (0,-\l) -- (10*\w+\g,-\l);
        \node at (2*\w, -\l+0.2) {Buckets};
        \node at (4*\w + \g/2, -\l+0.2) {Gap};
        \node at (5.5*\w+\g, -\l+0.2) {Garbage for $V_i$};
        \node at (8.5*\w+\g, -\l+0.2) {Garbage for $E_r$};
    \end{tikzpicture}
    \caption{The board consisting of the bucket part and the garbage
      part defined in the proof of Theorem~\ref{thm:w1-hard}. Here,
      $T_{A} = \sum_{\mvec{a} \in A}s(\mvec{a})$ for $A \in \{V_1,\dotsc,V_k,E_1,\dotsc,E_K\}$.}
    \label{fig:board}
\end{figure*}

\begin{figure}[b]
    \centering
    \begin{tikzpicture}[scale=0.975]
        \def \w {2};
        \def \h {-2};
        \def \b {-3};
        \def \g {3};
        \def \t {2pt};
        \coordinate (x) at (0,0);
        \coordinate (y) at (\w,0);
        \coordinate (b1) at (\b,\h);
        \coordinate (b2) at (\b+\w,\h);
        \coordinate (b3) at (\b+\w+1,\h);
        \coordinate (b4) at (\b+\w+2,\h);
        \coordinate (g1) at (\g,\h);
        \coordinate (g2) at (\g+\w,\h);
        \coordinate (t) at (0,\t);
        \draw (x) -- (y)[thick];
        \draw (b1) -- (b2)[thick];
        \draw (b2) -- (b3) -- (b4);
        \draw (b4) -- (g1)[loosely dotted];
        \draw (g1) -- (g2)[thick];
        \draw (x) -- (b1)[dashed];
        \draw (x) -- (g1)[dashed];
        \draw (y) -- (b2)[dashed];
        \draw (y) -- (g2)[dashed];
        \draw ($(b1) - (t)$) -- ($(b1) + (t)$);
        \draw ($(b2) - (t)$) -- ($(b2) + (t)$);
        \draw ($(b3) - (t)$) -- ($(b3) + (t)$);
        \draw ($(b4) - (t)$) -- ($(b4) + (t)$);
        \draw ($(g1) - (t)$) -- ($(g1) + (t)$);
        \draw ($(g2) - (t)$) -- ($(g2) + (t)$);
        \filldraw (x) circle (2pt) node[above] at (x) {$x^{V_i}_{\ell,c}$};
        \filldraw (y) circle (2pt) node[above] at (y) {$y^{V_i}_{\ell,c}$};
        \node[below] at (b1) {$b^c_1$};
        \node[below] at (b2) {$b^c_2$};
        \node[below] at (b3) {$b^c_3$};
        \node[below] at (b4) {$b^c_4$};
        \node[below] at (g1) {$g^{V_i}_{\ell-1}$};
        \node[below] at (g2) {$g^{V_i}_{\ell}$};
        \draw[densely dotted] (x) arc (120:60:\w); 
        \node[above] at ($(x)!0.5!(y) + (0,\w/10)$) {\small $s(\mvec{v}_\ell)$};
        \draw[densely dotted] (b1) arc (240:300:\w+2);
        \node[below] at ($(b1)!0.5!(b4) + (0,-\w/4)$) {\small $N$};
    \end{tikzpicture}
    \caption{
        Possible ways to place the variables $x^{V_i}_{\ell,c}$ and $y^{V_i}_{\ell,c}$
        corresponding to the non-zero coordinate $c = (i,j)$ of the
        $\ell$-th vector $\mvec{v}_\ell$ in $V_i$.
    }
    \label{fig:placing-variable}
\end{figure}
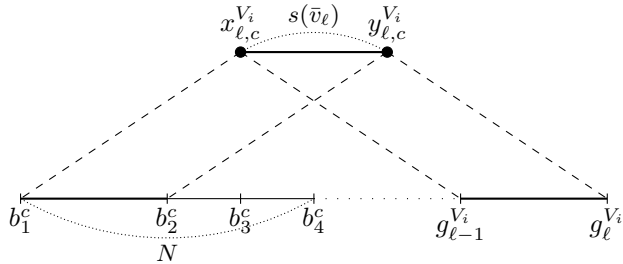

\subsection{\W{1}-hardness for $\csp(\D_{2,1})$}
\label{sec:w1-hardness}

We provide a parameterized reduction from \MPSS{}, which together
with Theorem~\ref{pro:mpss-hard} establishes the result. To
simplify the reduction, we provide it in two stages. First we
show how to construct an equivalent instance $\inst''$ of
$\csp(\D_{2})$ and then we show how to obtain the desired instance
$\inst'$ of $\csp(\D_{2,1})$ from $\inst''$. 

Before giving the formal proof in Theorem~\ref{thm:w1-hard}, let us provide an
informal overview of the main ideas behind the proof.
Let
$\inst=(k,N,(V_i)_{1\leq i \leq k},(E_{r})_{1 \leq r \leq K})$
be an instance of \MPSS{}.
We let our playing board be the real line; please refer to Figure~\ref{fig:board} for an illustration.
First, for every vector $\mvec{v}$ in
$\textup{VE}=(\bigcup_{i=1}^kV_i)\cup (\bigcup_{i=1}^KE_i)$ and
every non-zero coordinate $c$ of $\mvec{v}$, we introduce a segment on the board
represented by two variables $x$ and $y$ at distance exactly
$s(\mvec{v})$ from each other. We divide our board into
two main
parts: the {\em bucket} part and the {\em garbage} part.  While the bucket part provides
placeholders for the segments of the vectors chosen to
be in a solution for $\inst$, the garbage part provides placeholders for all
other segments. Crucial for the idea is a gadget that ensures that
a segment can only be in one of two places, i.e. either its place
inside the bucket part or its place inside the garbage part. 
To illustrate the idea behind this gadget, 
suppose one wants to ensure that a variable $x$ is
either equal to a variable $a$ or equal to a variable $b$. This can
be achieved by the ternary constraint $x=a\lor x=b$.
However, since we are
only allowed to use binary constraints, it becomes more complicated.
The idea is that we additionally ensure that the distance between
$a$ and $b$ is between $M$ and $2M-1$ for some number $M$. Then we can
ensure that $x$ is either equal to $a$ or equal to $b$ by using the
constraints $x=a \lor x-a\geq M$ and $x=b \lor b-x\geq M$.

With this in mind, let us provide some details on the bucket part
and the garbage part. The main idea behind the bucket part is that
it provides placeholders for the segments representing the non-zero
coordinates of all vectors that are in the solution for $\inst$.
More specifically, consider a solution for $\inst$ choosing exactly one
vector $\mvec{v}^i$ from each $V_i$ and exactly one vector
$\mvec{e}^r$ from each $E_r$. Then for every
coordinate $r=(i,j)$, the solution contains exactly three vectors
that are non-zero at coordinate $r$, i.e. the vector $\mvec{v}^i$,
the vector $\mvec{v}^j$, and the vector $\mvec{e}^r$. Thus, the
bucket part will provide three placeholders. This is achieved by
introducing four variables $b^r_1,\dotsc,b^r_4$ for every coordinate
$r$ with the idea that, the
place between $b^r_1$ and $b^r_2$ is a
placeholder for the $r$-th coordinate of $\mvec{v}^i$, the place between
$b^r_2$ and $b^r_3$ is a placeholder the $r$-th coordinate of $\mvec{v}_j$, and the place
between $b^r_3$ and $b^r_4$ is a placeholder for the $r$-th
coordinate of $\mvec{e}_r$. Finally, to verify that the sum of all
vectors in the solution is equal to $N$ at each coordinate $r$, we
introduce the constraint $b_4^r-b_1^r=N$.

The main function of the garbage part is to ensure two
things: (1) if a segment representing a non-zero coordinate of some
vector $\mvec{v}$ in $\textup{VE}$ is chosen to be in the bucket part, then all
segments representing non-zero coordinates of $\mvec{v}$ are chosen to be in the bucket
part and (2) the segments of at least one vector from every set $V_i$ and every set
$E_r$ are chosen to be in the bucket part. To achieve this, the
garbage part consists of $k+K$ parts, i.e. one part for every set
$V_i$ and one part for every set $E_r$. Moreover, the part for a set
$A \in \{V_1,\dotsc,V_k,E_1,\dotsc,E_K\}$, has one placeholder for
every vector $\mvec{a} \in A$, which can hold all segments
representing non-zero coordinates of the vector $\mvec{a}$. This is
achieved by introducing $|A|+1$ variables $g^A_0,\dotsc,g^A_{|A|}$
such that the place between $g^A_{i-1}$ and $g^A_{i}$ is reserved to
hold all segments of the $i$-th vector in $A$. Here, it
is important to recall that every non-zero coordinate of every
vector $\mvec{v}$ in $\textup{VE}$ has the same value $s(\mvec{v})$.
Finally, we ensure (2) by 
adding the constraint 
$g^A_{|A|}-g_0^A<\sum_{\mvec{a}\in A}s(\mvec{a})=T_A$,
which ensures that not all vectors of $A$ can fit
into the garbage part. We are now ready to provide the formal proof.

\begin{theorem} \label{thm:w1-hard}
  $\csp(\D_{2,1})$ is strongly \W{1}\hy hard parameterized by primal
  treewidth.
\end{theorem}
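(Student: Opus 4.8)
The plan is to give a parameterized reduction from \MPSS{}, which is strongly \W{1}\hy hard by Theorem~\ref{pro:mpss-hard}; since that hardness holds even with all numbers encoded in unary, we may assume that the input instance $\inst=(k,N,(V_i)_{1\le i\le k},(E_r)_{1\le r\le K})$ of \MPSS{} has all numbers bounded by a polynomial in its size. As announced, the reduction runs in two stages: first build an equivalent instance $\inst''$ of $\csp(\D_2)$, and then eliminate every coefficient of magnitude larger than $1$ to obtain $\inst'\in\csp(\D_{2,1})$. In Stage~1 I realise the board of Figure~\ref{fig:board}. For every vector $\mvec v$ occurring in $\inst$ and every non-zero coordinate $c$ of $\mvec v$, introduce a \emph{segment}: variables $x^{\mvec v}_c,y^{\mvec v}_c$ with $y^{\mvec v}_c-x^{\mvec v}_c=s(\mvec v)$. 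The bucket part uses variables $b^r_1,\dots,b^r_4$ for each coordinate $r$, chained by $b^r_1\le b^r_2\le b^r_3\le b^r_4$, $b^r_4-b^r_1=N$, and $b^r_4=b^{r+1}_1$, so that the three slots between consecutive $b^r_j$ are reserved for the selected vectors of the two $V$-sets and of the $E$-set living at coordinate $r$. The garbage part uses, for each $A\in\{V_1,\dots,V_k,E_1,\dots,E_K\}$, variables $g^A_0,\dots,g^A_{|A|}$ with $g^A_{\ell-1}\le g^A_\ell$, $g^A_{|A|}=g^{A'}_0$ for consecutive sets, and $g^A_{|A|}-g^A_0<T_A=\sum_{\mvec a\in A}s(\mvec a)$; the slot between $g^A_{\ell-1}$ and $g^A_\ell$ holds all segments of the $\ell$-th vector of $A$ (Figure~\ref{fig:placing-variable}). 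A \emph{gap} constraint $g^{V_1}_0-b^K_4=M$ separates the two parts, where $M$ is fixed larger than the total width $KN+\sum_A T_A$ of the board. For each segment endpoint I add the two binary either-or constraints of the overview, which pin the endpoint to its bucket anchor or to its garbage anchor: the choice of $M$ guarantees the two anchors lie at distance in $[M,2M-1]$, so the gadget works, and $M>s(\mvec v)$ forces both endpoints of a segment to make the same choice, so a segment lies entirely in a bucket slot or entirely in its garbage slot.

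For correctness of Stage~1, the forward direction is routine: from a selection $\mvec v^1,\dots,\mvec v^k,\mvec e^1,\dots,\mvec e^K$ summing to $\mvec N$, place the segments of the selected vectors into the bucket slots and every other segment into its garbage slot; the absence of the selected vector of $A$ from the garbage makes $g^A_{|A|}-g^A_0<T_A$, and the slot decomposition of each $b^r_4-b^r_1=N$ is exactly the $r$-th coordinate equation of \MPSS{}. For the backward direction, in any solution of $\inst''$ every segment is wholly in a bucket slot or wholly in its garbage slot; the constraint $g^A_{|A|}-g^A_0<T_A$ forces at least one vector of each set $A$ to be \emph{fully} in the bucket (otherwise the garbage slots alone already sum to $T_A$), while the capacity of a bucket slot---together with the uniformity of the \MPSS{} vectors and the distinctness of the values $s(\mvec v)$ that we may assume, as in the Sidon-set construction underlying Theorem~\ref{pro:mpss-hard}---forces at most one; selecting the fully-bucketed vectors and reading off the slot sums of $b^r_4-b^r_1=N$ produces a solution of \MPSS{}.

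For Stage~2 and the treewidth bound: every constraint of $\inst''$ is binary with integer bounds of magnitude $O(M)=\mathrm{poly}$. Replace each equality $u-v=c$ by a path of $c$ unit-step equalities, each inequality $u-v\ge c$ (resp.\ $\le c$) by such a path plus one coefficient-$0$ comparison, and the disjunctive gadget relation $x-a\in\{0\}\cup[M,\infty)$ by the coefficient-$1$ gadget: introduce fresh $p_1,\dots,p_{M-1}$, chain them by $p_1-a=1$ and $p_i-p_{i-1}=1$, add $x-a\ge 0$, and add for every $i$ the binary constraint $x-p_i\in(-\infty,-1]\cup[1,\infty)$; this forbids $x$ from every open unit interval around the $p_i$, forcing $x\in\{a\}\cup[a+M,\infty)$, and symmetrically for the mirrored gadget on $\{x,b\}$. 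All new relations lie in $\D_{2,1}$, the reduction is polynomial in the unary size of $\inst$, and it remains to check that it sends the parameter $k$ to a value depending only on $k$. By Proposition~\ref{pro:tw-del} it suffices to bound the treewidth after deleting the $3K+1=O(k^2)$ bucket variables $b^r_j$. Once these are gone all long-range connections are cut: what remains is the ``chain of cycles'' formed by the garbage variables together with the chords coming from the $g^A_{|A|}-g^A_0<T_A$ constraints, with, attached between each consecutive pair $g^A_{\ell-1},g^A_\ell$ (an edge of that chain), the bounded-treewidth ``fan and path'' gadgets belonging to the $\ell$-th vector of $A$, plus pendant paths; this remainder has treewidth $O(1)$, so $\inst'$ has primal treewidth $O(k^2)$. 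Since this is an fpt-reduction, $\csp(\D_{2,1})$ parameterized by primal treewidth is strongly \W{1}\hy hard.

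The main obstacle is twofold. First, arranging the bucket/garbage/gap layout so that the two properties ``all segments of one vector move together'' and ``exactly one vector per set is selected'' genuinely follow from the $g^A_{|A|}-g^A_0<T_A$ constraints and the slot capacities---this is precisely where the uniformity of the vectors and the distinctness of the $s$-values are needed, and verifying it carefully (in particular ruling out mixed placements of a vector's segments) is the heart of the argument. Second, carrying out the Stage-2 coefficient reduction of the \emph{disjunctive} either-or relation: it has no single small-coefficient binary equivalent and must be simulated by the fan-over-a-unit-path gadget above, after which one must confirm that these (arbitrarily large, but series-parallel-like) gadgets are attached at only two mutually adjacent vertices of the garbage chain and hence leave the primal treewidth at $O(k^2)$.
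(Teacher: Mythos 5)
Your proposal is correct and, in its essential structure, coincides with the paper's proof: the same reduction from \MPSS{} (Theorem~\ref{pro:mpss-hard}), the same board with bucket variables $b^r_1,\dots,b^r_4$, garbage variables $g^A_0,\dots,g^A_{|A|}$ with the $<T_A$ capacity constraints, a gap of length $M$ chosen so that each segment endpoint's two anchors are at distance in $[M,2M)$, the same ``pin to bucket anchor or garbage anchor'' dichotomy (the paper's Claim~\ref{clm:xy-or}), the same at-least-one/at-most-one argument using uniformity and distinctness of the $s$-values (which indeed holds for the instances produced by the proof of Theorem~\ref{pro:mpss-hard}), and the same path-subdivision trick for the large-coefficient equalities and inequalities, with the treewidth bounded via Proposition~\ref{pro:tw-del} after deleting the $O(k^2)$ bucket variables. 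The one genuinely different ingredient is your Stage-2 gadget for the either-or relation $z=B(z)\lor z-B(z)\geq M$: the paper uses the squared-path gadget $U(a,b,Z)$ with constraints $h_{i+2}-h_i=0\lor h_{i+2}-h_i>1$ (Claim~\ref{clm:UabZ}), which only realizes differences in $\{0\}\cup(Z,2Z]$ and therefore forces an extra verification that all intended distances stay below $2M$ (guaranteed by the choice of $M$); your fan-over-a-unit-path gadget, which forbids the open interval $(a,a+M)$ via constraints $x-p_i\in(-\infty,-1]\cup[1,\infty)$, realizes the relation $x-a\in\{0\}\cup[M,\infty)$ exactly, so it is a faithful (and in this respect cleaner) implementation that avoids the upper-cap check, at the price of a high-degree hub vertex -- which is harmless, since the fan has treewidth $2$ and attaches to the rest of the instance at only two vertices, so your direct treewidth analysis of the final instance (rather than the paper's two-step ``bound $\inst''$, then argue the replacements add at most $2$'') goes through and again yields primal treewidth $O(k^2)$.
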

\begin{proof}
  Let $\inst=(k,N,(V_i)_{1\leq i \leq k},(E_{r})_{1 \leq r \leq K})$
  be an arbitrary instance of \MPSS{}.
  We start by
  introducing the board consisting of the bucket part and
  the garbage part; see Figure~\ref{fig:board} for an illustration.

  \paragraph{The bucket part.}
  For every coordinate $r$ (of the vectors in $\inst$), where $1
  \leq r \leq K$, we introduce $4$ \emph{bucket variables}
  $b^{r}_1,\dotsc, b^{r}_4$ and ensure that those appear
  consecutively in any solution by using the constraints
  $b^{r}_{l+1}-b^{r}_{l}\geq 0$ for every $l$ with $1 \leq l
  <4$. We introduce a constraint that ensures that the distance
  between $b^r_1$ and $b^r_4$ is exactly $N$, i.e. the constraint
  $b^r_4-b^r_1=N$.
  Finally, we arrange the bucket variables for different
  coordinates in the natural order by introducing the constraints
  $b^{r}_4=b_1^{r+1}$ for every $r$ with $1 \leq r < K$.



  \paragraph{The garbage part (for the sets $V_i$).}
  For every set $V_i=\{\mvec{v}_1,\dotsc,\mvec{v}_{|V_i|}\}$, we
  introduce $|V_i|+1$ \emph{garbage variables} $g^{V_i}_0,\dotsc
  g^{V_i}_{|V_i|}$ and ensure that those appear
  consecutively in any solution by using the constraints
  $g^{V_i}_{l+1}-g^{V_i}_{l}\geq 0$ for every $l$ with $0 \leq l
  <|V_i|$. We introduce a constraint that ensures that the distance
  between $g^{V_i}_0$ and $g^{V_i}_{|V_i|}$ is smaller than $T=\sum_{\mvec{v}
    \in V_i}s(\mvec{v})$, i.e. the constraint $g^{V_i}_{|V_i|}-g^{V_i}_0<T$.
  Additionally, we arrange the garbage variables for different
  sets $V_i$ in the natural order by introducing the constraints
  $g^{V_{i+1}}_0=g_{|V_i|}^{V_i}$ for every $i$ with $1 \leq i < k$.

  \paragraph{The garbage part (for the sets $E_{r}$).}
  For every set $E_{i}=\{\mvec{v}_1,\dotsc,\mvec{v}_{|E_{i}|}\}$, we
  introduce $|E_{i}|+1$ \emph{garbage variables} $g^{E_{i}}_0,\dotsc
  g^{E_{i}}_{|E_{i}|}$ and ensure that those appear
  consecutively in any solution by using the constraints
  $g^{E_{i}}_{l+1}-g^{E_{i}}_{l}\geq 0$ for every $l$ with $0 \leq l
  <|E_i|$. We introduce a constraint that ensures that the distance
  between $g^{E_i}_0$ and $g^{E_i}_{|E_i|}$ is smaller than $T=\sum_{\mvec{v}
    \in E_i}s(\mvec{v})$, i.e. the constraint $g^{E_i}_{|E_i|}-g^{E_i}_0<T$.
  Additionally, we arrange the garbage variables for different
  sets $E_i$ in the natural order by introducing the constraints
  $g^{E_{i+1}}_0=g_{|E_i|}^{E_i}$ for every $i$ with $1 \leq i <
  K$.

  We ensure that the garbage variables of the sets $E_i$ are placed
  after the garbage variables of the sets $V_i$ by adding the
  constraint $g^{V_k}_{|V_k|}=g^{E_1}_0$.  
  Finally, to make the later arguments simpler, we make sure that
  the last bucket variable $b^{K}_4$ has sufficient
  distance to the first garbage variable $g^{|V_i|}_0$. We let $M = K \cdot N+\sum_{\mvec{v} \in \textup{VE}}s(\mvec{v})$ and
  add the constraint $g^{|V_1|}_0-b^{K}_4=M$.

  \paragraph{The vector variables for the sets $V_i$.}
  For every set $V_i=\{\mvec{v}_1,\dotsc,\mvec{v}_{|V_i|}\}$, every
  $\ell$ with $1 \leq \ell \leq |V_i|$, and every non-zero coordinate
  $c$ of $\mvec{v}_\ell$, we introduce two variables $x^{V_i}_{\ell,c}$
  and $y^{V_i}_{\ell,c}$ and the constraint
  $y^{V_i}_{\ell,c}-x^{V_i}_{\ell,c}=s(\mvec{v}_\ell)$ ensuring that
  the distance between $y^{V_i}_{\ell,c}$ and $x^{V_i}_{\ell,c}$ is
  exactly $s(\mvec{v}_\ell)$. We associate one bucket
  variable, denoted by $B(x^{V_i}_{\ell,c})$ and $B(y^{V_i}_{\ell,c})$, respectively, and one garbage variable,
  denoted by $G(x^{V_i}_{\ell,c})$ and $G(x^{V_i}_{\ell,c})$, respectively,
  with $x^{V_i}_{\ell,c}$ and $y^{V_i}_{\ell,c}$ as follows.
  If $c=(i,j)$ (for some $j>i$), we set $B(x^{V_i}_{\ell,c})=b^c_1$
  and $B(y^{V_i}_{\ell,c})=b^c_2$. Otherwise, i.e. if $c=(j,i)$ (for some
  $j<i$), we set $B(x^{V_i}_{\ell,c})=b^c_2$ and
  $B(y^{V_i}_{\ell,c})=b^c_3$. Moreover, we set $G(x^{V_i}_{\ell,c})=g^c_{\ell-1}$
  and $G(y^{V_i}_{\ell,c})=g^c_{\ell}$. We add constraints that
  ensure that $x^{V_i}_{\ell,c}$ is either equal to
  $B(x^{V_i}_{\ell,c})$ or $G(x^{V_i}_{\ell,c})$ (see Figure~\ref{fig:placing-variable}). As we will show
  later this can be guaranteed by the constraints:
  \begin{itemize}
  \item $x^{V_i}_{\ell,c}=B(x^{V_i}_{\ell,c}) \lor
    x^{V_i}_{\ell,c}-B(x^{V_i}_{\ell,c})\geq M$ and
  \item $x^{V_i}_{\ell,c}=G(x^{V_i}_{\ell,c}) \lor G(x^{V_i}_{\ell,c})-x^{V_i}_{\ell,c} \geq
    M$.
  \end{itemize}
  Similarly, we add constraints so that
  $y^{V_i}_{\ell,c}$ is either equal to
  $B(y^{V_i}_{\ell,c})$ or $G(y^{V_i}_{\ell,c})$, i.e. we add the
  constraints:
  \begin{itemize}
  \item $y^{V_i}_{\ell,c}=B(y^{V_i}_{\ell,c}) \lor
    y^{V_i}_{\ell,c}-B(y^{V_i}_{\ell,c})\geq M$ and
  \item $y^{V_i}_{\ell,c}=G(y^{V_i}_{\ell,c}) \lor G(y^{V_i}_{\ell,c})-y^{V_i}_{\ell,c} \geq
    M$.
  \end{itemize}
  Let $\textup{XY}_{V}$ denote the set of all vector variables for the sets $V_i$, i.e.
  the set $$\SB x^{V_i}_{\ell,c}, y^{V_i}_{\ell,c} \SM 1\leq i \leq k \land 1 \leq \ell
  \leq |V_i| \land c \textup{ is a non-zero coordinate of }\mvec{v}^i_\ell\SE.$$
  
  \paragraph{The Vector Variables for the sets $E_i$.}
  For every set $E_i=\{\mvec{e}_1,\dotsc,\mvec{e}_{|E_i|}\}$ and every
  $\ell$ with $1 \leq \ell \leq |E_i|$, we introduce two variables $x^{E_i}_{\ell}$
  and $y^{E_i}_{\ell}$ and the constraint
  $y^{E_i}_{\ell}-x^{E_i}_{\ell}=s(\mvec{e}_\ell)$ ensuring that
  the distance between $y^{E_i}_{\ell}$ and $x^{E_i}_{\ell}$ is
  exactly $s(\mvec{e}_\ell)$. Similarly, to the vector variables for
  the sets $V_i$, we associate a bucket variable and a garbage
  variable with $x^{E_i}_{\ell}$ and $y^{E_i}_{\ell}$, defined by
  setting: $B(x^{E_i}_{\ell})=b^i_3$,
  $G(x^{E_i}_{\ell})=g^{E_i}_{\ell-1}$, $B(y^{E_i}_{\ell})=b^i_4$, and
  $G(y^{E_i}_{\ell})=g^{E_i}_{\ell}$.
  We add constraints that
  ensure that $x^{E_i}_{\ell}$ is either equal to
  $B(x^{E_i}_{\ell})$ or $G(x^{E_i}_{\ell})$. As we will show
  later, this is guaranteed by the constraints:
  \begin{itemize}
  \item $x^{E_i}_{\ell}=B(x^{E_i}_{\ell}) \lor
    x^{E_i}_{\ell}-B(x^{E_i}_{\ell})\geq M$ and
  \item $x^{E_i}_{\ell}=G(x^{E_i}_{\ell}) \lor G(x^{E_i}_{\ell})-x^{E_i}_{\ell} \geq
    M$.
  \end{itemize}
  Finally, we add constraints that
  imply that $y^{E_i}_{\ell}$ is either equal to
  $B(y^{E_i}_{\ell})$ or $G(y^{E_i}_{\ell})$:
  \begin{itemize}
  \item $y^{E_i}_{\ell}=B(y^{E_i}_{\ell}) \lor
    y^{E_i}_{\ell}-B(y^{E_i}_{\ell})\geq M$ and
  \item $y^{E_i}_{\ell}=G(y^{E_i}_{\ell}) \lor G(y^{E_i}_{\ell})-y^{E_i}_{\ell} \geq
    M$.
  \end{itemize}
  We will verify that these constraints have the required properties later on.
  Let $\textup{XY}_{E}$ denote the set of all vector variables for the sets $E_i$, i.e.
  the set $$\SB x^{E_i}_{\ell}, y^{E_i}_{\ell} \SM 1\leq i \leq K \land 1 \leq \ell
  \leq |V_i|\SE$$ and let $\textup{XY} = \textup{XY}_V\cup \textup{XY}_E$.

\medskip
    
  This completes the construction of the instance $\inst''$ of
  $\csp(\D_{2})$. We first show that the primal treewidth of $\inst''$ is
  at most $4K+3$ and consequently bounded by
  a function of the parameter $k$ only. Let $B=\{
  b^i_l \colon 1 \leq i \leq K \land 1 \leq l \leq 4\}$ be
  the set of all $4K$ bucket variables and let $G$ be the
  primal graph of $\inst''$ after removing the variables in $B$.
  It is 
  straightforward to verify that $G$ has treewidth at most
  $3$ and we obtain, from Proposition~\ref{pro:tw-del}, that the primal
  graph of $\inst''$ has treewidth at most $|B|+3=4K+3$.
  
  We now show the equivalence of the instances $\inst$ and $\inst''$.

\medskip

\noindent
{\bf Forward direction.}
  Let $\mvec{v}^1_{i_1},\dotsc,
  \mvec{v}^k_{i_k}$ and $\mvec{e}^1_{j_1},\dotsc,\mvec{e}^{K}_{j_{K}}$ with
  $\mvec{v}^\ell_{i_\ell} \in V_\ell$ and $\mvec{e}^\ell_{j_\ell} \in E_\ell$ be a solution for
  $\inst$. Informally, the main idea to obtain a solution for $\inst''$
  is to set the variables $x^{V_\ell}_{i_\ell,c}$ and $y^{V_\ell}_{i_\ell,c}$
  equal to their respective bucket variables, i.e. the variables
  $B(x^{V_\ell}_{i_\ell,c})$ and $B(y^{V_\ell}_{i_\ell,c})$, and similarly for the
  variables $x^{E_\ell}_{j_\ell}$ and $y^{E_\ell}_{j_\ell}$. All other variables
  in $\textup{XY}$ are then set to be equal to
  their respective garbage variables. Since $\mvec{v}^1_{i_1},\dotsc,
  \mvec{v}^k_{i_k}$ and $\mvec{e}^1_{j_1},\dotsc,\mvec{e}^{K}_{j_{K}}$ is a
  solution for $\inst$ (and
  $\sum_{\ell=1}^k\mvec{v}^\ell_{i_\ell}+\sum_{\ell=1}^{K}\mvec{e}^\ell_{j_\ell}=\mvec{N}$),
  this ensures that the distance between $b^r_1$ and $b^r_4$ is
  exactly $N$ for every coordinate/bucket $r$ and the distance between
  $g^{A}_0$ and $g^{A}_{|A|}$ is less than the sum of all vectors in
  the set $A \in \{V_1,\dotsc,V_k,E_1,\dotsc,E_{K}\}$.

  More formally, we set $x^{V_\ell}_{i_\ell,c}$ and $y^{V_\ell}_{i_\ell,c}$ equal
  to $B(x^{V_\ell}_{i_\ell,c})$ and $B(y^{V_\ell}_{i_\ell,c})$, respectively, for every
  $\ell$ with $1 \leq \ell \leq k$ and every non-zero coordinate $c$ of
  $\mvec{v}^\ell_{i_\ell}$. Similarly, we set $x^{E_\ell}_{j_\ell}$ and $y^{E_\ell}_{j_\ell}$ equal
  to $B(x^{E_\ell}_{j_\ell})$ and $B(y^{E_\ell}_{j_\ell})$, respectively, for every
  $\ell$ with $1 \leq \ell \leq K$. For every other
  variable $v$ in $\textup{XY}$, we set $v$ equal to $G(v)$. Finally,
  we set $g^{V_\ell}_{i_\ell-1}$ equal to $g^{V_\ell}_{i_\ell}$ for every $\ell$ with $1 \leq \ell
  \leq k$ and $g^{E_\ell}_{j_\ell-1}$ equal to $g^{E_\ell}_{j_\ell}$ for every $\ell$ with $1 \leq \ell
  \leq k$. Note that because of the distances between the variables
  in $\textup{XY}$, this already fixes the position (value)
  of each variable (up to an additive constant). Note also that all constraints are
  satisfied. In particular, the constraints $b^c_4-b^c_1=N$ are
  satisfied because
  $\sum_{\ell=1}^k\mvec{v}^\ell_{i_\ell}+\sum_{\ell=1}^{K}\mvec{e}^\ell_{j_\ell}=\mvec{N}$.
  Similarly, for every set $A \in \{V_1,\dotsc,
  V_k,E_1,\dotsc,E_{K}\}$ the constraints $g^{A}_{|A|}-g^{A}_0<\sum_{\mvec{v} \in
    A}s(\mvec{v})$ are satisfied since $g^{A}_{|A|}-g^{A}_0=(\sum_{\mvec{v} \in
    A\setminus C}s(\mvec{v}))$ and $A\cap C \neq \emptyset$, where
  $C=\{\mvec{v}^1_{i_1},\dotsc,\mvec{v}^k_{i_k},\mvec{e}^1_{j_1},\dotsc,\mvec{e}^{K}_{j_{K}}\}$.

\medskip

\noindent
{\bf Backward direction.}
Let $\alpha$ be an arbitrary solution
  to $\inst''$. We start by showing the following claim.
  \begin{claim}\label{clm:xy-or}
    For every $v \in \textup{XY}$ either $\alpha(v)=\alpha(B(v))$ or
    $\alpha(v)=\alpha(G(v))$.
  \end{claim}
  \begin{claimproof}
  Suppose to the contrary that this is not the case.
  Because $v \in \textup{XY}$, $v$ appears in the two constraints:
  \begin{itemize}
  \item $v=B(v) \lor v-B(v)\geq M$ and
  \item $v=G(v) \lor G(v)-v\geq M$.  
  \end{itemize}
  Thus, $\alpha(v)-\alpha(B(v))\geq M$ and
  $\alpha(G(v))-\alpha(v)\geq M$. However, this is only possible if
  $\alpha(G(v))-\alpha(B(v))\geq 2M$, which, as we will show now, is not
  the case. It follows from the relation between
  the bucket and garbage variables and the definition of $B(v)$ and
  $G(v)$ that $\alpha(B(v)) \geq b^1_1$ and $\alpha(G(v)) \leq
  g^{E_{K}}_{|E_{K}|}$. Hence,
  $\alpha(G(v))-\alpha(B(v))\leq
  \alpha(g^{E_{K}}_{|E_{K}|})-\alpha(b^1_1)$.
  Because of the constraints on the bucket variables and the garbage
  variables, i.e. the constraints:
  \begin{itemize}
  \item $b^i_4-b^i_1=N$ for every $i$ with $1 \leq i\leq
    K$,
  \item $b^{i+1}_1=b^i_4$ for every $i$ with $1 \leq i<
    K$,
  \item $g^A_{|A|}-g^A_0<\sum_{\mvec{v} \in A}s(\mvec{v})$ for every $A
    \in \{V_1,\dotsc,V_k,E_1,\dotsc,E_{K}\}$,
  \item $g^{V_{i+1}}_{0}=g^{V_i}_{|V_i|}$ for every $i$ with $1 \leq i\leq
    k$,
  \item $g^{E_{i+1}}_{0}=g^{E_i}_{|E_i|}$ for every $i$ with $1 \leq i\leq
    K$,
  \item $g^{V_k}_{|V_k|}=g^{E_1}_0$,
  \item $g^{V_1}_0-b^{K}_4=M$.
  \end{itemize}
  we obtain that:
  \begin{eqnarray*}
    \alpha(G(v))-\alpha(B(v)) & \leq & \alpha(g^{E_{K}}_{|E_{K}|})-\alpha(b^1_1)\\
                               & < & NK+M+\sum_{\mvec{v}\in S}s(\mvec{v}) \\
                               & \leq & NK+\sum_{\mvec{v}\in S}s(\mvec{v})+M \\
                               & \leq & 2M \\
  \end{eqnarray*}
  We see that $\alpha(G(v))-\alpha(B(v))<2M$. This contradiction concludes the proof
  of Claim~\ref{clm:xy-or}.
\end{claimproof}

  We say that a vector $\mvec{v}^i_\ell \in V_i$ is in the bucket if
  $\alpha(x^{V_i}_{\ell,c})=\alpha(B(x^{V_i}_{\ell,c}))$ and
  $\alpha(y^{V_i}_{\ell,c})=\alpha(B(y^{V_i}_{\ell,c}))$ for every non-zero
  coordinate $c$ of $\mvec{v}^i_\ell$. Moreover, we say that
  $\mvec{v}^i_\ell \in V_i$ is in the garbage if $\alpha(x^{V_i}_{\ell,c})=\alpha(G(x^{V_i}_{\ell,c}))$ and
  $\alpha(y^{V_i}_{\ell,c})=\alpha(G(y^{V_i}_{\ell,c}))$ for every non-zero
  coordinate $c$ of $\mvec{v}^i_\ell$. Similarly, we say that a vector
  $\mvec{e}^i_{\ell} \in E_i$ is in the bucket if
  $\alpha(x^{E_i}_{\ell})=\alpha(B(x^{E_i}_{\ell}))$ and
  $\alpha(y^{E_i}_{\ell})=\alpha(B(y^{E_i}_{\ell}))$ and we say that
  $\mvec{e}^i_{\ell} \in E_i$ is in the garbage if $\alpha(x^{E_i}_{\ell})=\alpha(G(x^{E_i}_{\ell}))$ and
  $\alpha(y^{E_i}_{\ell})=\alpha(G(y^{E_i}_{\ell}))$.

  Based on Claim~\ref{clm:xy-or}, we will now show that for every set $V_i$ and
  every set $E_i$ exactly one vector is in the bucket and all other
  vectors (of the set) are in the garbage. We start by showing the claim for the
  sets $V_i$.
  \begin{claim}\label{clm:sets-eo-v}
    For every $V_i$ there is exactly one vector $\mvec{v}^i_\ell \in
    V_i$ such that $\mvec{v}^i_{\ell}$ is in the bucket, and all
    other vectors in $V_i$ are in the garbage.
  \end{claim}
  \begin{claimproof}
  We first show that at least one vector $\mvec{v}^i_\ell$ is in the bucket. Suppose
  to the contrary that this is not the case.
  Claim~\ref{clm:xy-or} implies that
  for every $\ell$ there is a non-zero coordinate $c$ of
  $\mvec{v}^i_\ell$ such that
  $\alpha(x^{V_i}_{\ell,c})=\alpha(G(x^{V_i}_{\ell,c}))$ and, consequently, $\alpha(y^{V_i}_{\ell,c})=\alpha(G(y^{V_i}_{\ell,c}))$. Thus, 
  the distance between $g^{V_i}_{\ell-1}$ and
  $g^{V_i}_{\ell}$ is exactly $s(\mvec{v}^i_\ell)$ so the
  distance between $g^{V_i}_0$ and $g^{V_i}_{|V_i|}$ equals
  $\sum_{\mvec{v}\in V_i}s(\mvec{v})$. This violates the constraint
  $g^{V_i}_{|V_i|}-g^{V_i}_0<\sum_{\mvec{v}\in V_i}s(\mvec{v})$ 
  and
  consequently contradicts our assumption that $\alpha$ is a solution for
  $\inst''$. We conclude that there is at least one vector
  $\mvec{v}^i_\ell$ in the bucket.
  
  It remains to show that all other vectors $\mvec{v}^i_{\ell'}\in V_i$
  for $\ell'\neq \ell$ are in the garbage, i.e.
  $\alpha(x^{V_i}_{\ell',c})=\alpha(G(x^{V_i}_{\ell',c}))$ and
  $\alpha(y^{V_i}_{\ell',c})=\alpha(G(y^{V_i}_{\ell',c}))$ for every $\ell'\neq \ell$ and every non-zero
  component $c$ of $\mvec{v}^i_{\ell'}$. Suppose this is not the case and assume that
  the claim is violated for $\ell'$ and $c$. Then, by
  Claim~\ref{clm:xy-or}, it follows that
  $\alpha(x^{V_i}_{\ell',c})=\alpha(B(x^{V_i}_{\ell',c}))$
  (and therefore also
  $\alpha(y^{V_i}_{\ell',c})=\alpha(B(y^{V_i}_{\ell',c}))$). This
  implies that the distance between $B(x^{V_i}_{\ell',c})$ and
  $B(y^{V_i}_{\ell',c})$ is equal to
  $s(\mvec{v}^i_{\ell'})$. However, since $\alpha(x^{V_i}_{\ell,c})=\alpha(B(x^{V_i}_{\ell,c}))$
  and $\alpha(y^{V_i}_{\ell,c})=\alpha(B(y^{V_i}_{\ell,c}))$, we
  obtain that the distance between
  $B(x^{V_i}_{\ell',c})=B(x^{V_i}_{\ell,c})$ and
  $B(y^{V_i}_{\ell',c})=B(y^{V_i}_{\ell,c})$ is equal to
  $s(\mvec{v}^i_\ell)$. However, this is not possible because no two
  vectors in $V_i$ agree on $s(\mvec{v})$ and hence $s(\mvec{v}^i_{\ell'})\neq s(\mvec{v}^i_{\ell})$.
\end{claimproof}

An analogous proof
shows the statement of Claim~\ref{clm:sets-eo-v} for the sets $E_i$
  (instead of the sets $V_i$).

  \begin{claim}\label{clm:sets-eo-e}
    For every $E_i$ there is exactly one vector $\mvec{e}^i_\ell \in
    E_i$ such that $\mvec{e}^i_{\ell}$ is in the bucket, and all
    other vectors in $E_i$ are in the garbage.
  \end{claim}

  We continue by showing that the vectors that are in the bucket form a
  solution for $\inst$. Let $\mvec{v}^i \in V_i$ and
  $\mvec{e}^j\in E_j$ with $1 \leq i \leq k$ and $1
  \leq j \leq K$ be the vectors that are in  the bucket; these vectors
  exist due to Claims~\ref{clm:sets-eo-v} and \ref{clm:sets-eo-e}.
  The constraints $b^l_4-b^l_1=N$ imply that
  \begin{eqnarray*}
    \left(\sum_{i=1}^k\mvec{v}^i\right)+\left(\sum_{i=1}^{K}\mvec{e}^i\right)=\mvec{N}.
  \end{eqnarray*}
   for every $1 \leq l \leq K$.
  Hence, the vectors $\mvec{v}^1,\dotsc,\mvec{v}^k$ and
  $\mvec{e}^1,\dotsc,\mvec{e}^{K}$ indeed form a solution for $\inst$.

\medskip

  \newcommand{\op}{\odot}

The proof so far shows that $\inst$ has a solution
if and only if $\inst''$ has a solution.
  In the final part of the proof, we 
  show how to transform the instance $\inst''$ into the equivalent
  instance $\inst'$ of $\csp(\D_{2,1})$. To achieve this we first
  replace every constraint of the form $a-b \: \op \: n$ (for variables
  $a$ and $b$, natural number $n$ with $n>1$, and $\op \in
  \{\leq,<,=,\geq,>\}$) by a `path' on $n$ auxiliary variables. More
  formally, to replace the constraint $C=a-b \: \op \: n$, we add $n$
  auxiliary variables $h_1^C,\dotsc,h^C_n$ and the constraints $h_1-b \:
  \op \:  1$, $h^C_{i+1}-h^C_i \: \op \: 1$ for every $i$ with $1\leq i \leq n$,
  and $h_n^C=b$. This allows us to replace the following constraints
  of $\inst''$:
  \begin{itemize}
  \item the constraints $b^i_4-b^i_1=N$ for every $i$ with $1 \leq i
    \leq K$,
  \item the constraints $g^{A}_{|A|}-g^{A}_0 < \sum_{\mvec{v} \in
      A}s(\mvec{v})$ for every $A \in \{V_1,\dotsc,V_k,E_1,\dotsc,
    E_{K}\}$,
  \item the constraint $g^{V_1}_0-b^{K}_4=M$,
  \item the constraints
    $y^{E_i}_{\ell}-x^{E_i}_{\ell}=s(\mvec{e}^i_\ell)$, for every $i$
    and $\ell$ with $1 \leq i \leq K$ and $1 \leq \ell \leq |E_i|$, and
  \item the constraints
    $y^{V_i}_{\ell,c}-x^{V_i}_{\ell,c}=s(\mvec{v}^i_\ell)$, for every $i$
    and $\ell$ with $1 \leq i \leq k$, $1 \leq \ell \leq |E_i|$, and
    every non-zero component $c$ of $\mvec{v}^i_\ell$.
  \end{itemize}
  Note that this reduction is polynomial since the numbers in the
  instance $\inst$ can be assumed to be polynomially bounded in the input size
  because \MPSS{} is strongly \W{1}\hy hard. Also note that this
  replacement does not increase the treewidth of the primal graph by
  more than $1$ since the new primal graph can be obtained by
  subdividing edges of the original primal graph and it is well known
  that subdividing edges can only increase the treewidth of a graph by
  at most $1$. Let $\inst'''$ be the instance obtained from $\inst''$
  after replacing all of the constraints as described above.

  It now only remains to replace the remaining constraints for the
  variables in $\textup{XY}$. Recall that these constraints are of
  the form $z=B(z) \: \lor \: z-B(z)\geq M$ and $z=G(z) \: \lor \: G(z)-z\geq M$
  for some $z \in \textup{XY}$. As we saw in
  Claim~\ref{clm:xy-or}, the effect of these two constraints is that
  for every variable $z \in \textup{XY}$ and every solution $\alpha$
  for $\inst''$, either $\alpha(z)=\alpha(B(z))$ or 
  $\alpha(z)=\alpha(G(z))$. To replace these constraints, we first
  introduce a gadget $U(a,b,Z)$, where $a$ and $b$ are variables and
  $Z$ is a natural number, which ensures that either $a=b$ or $b-a\geq
  Z$. The gadget $U(a,b,Z)$ has $2Z+1$ auxiliary variables $h_0,\dotsc,h_{2Z}$ and the
  following constraints:
  \begin{enumerate}[(C1)]
  \item $h_0=a$, $h_{i+1}-h_i \in [0,1]$ for every $i$ with
    $0\leq i < 2Z$, and $h_{2Z}=b$,
  \item $h_{i+2}-h_i=0 \lor h_{i+2}-h_i>1$ for every $i$ with
    $0 \leq i < 2Z-1$.
  \end{enumerate}
  \begin{claim}\label{clm:UabZ}
    Consider the instance $U(a,b,Z)$ for variables $a$ and $b$ and
    natural number $Z$. Then:
    \begin{itemize}
    \item every solution $\beta$ for $U(a,b,Z)$ satisfies either
      $\beta(a)=\beta(b)$ or $\beta(b)-\beta(a) \in (Z,2Z]$, and
    \item for every number $Z'\in \{0\} \cup (Z,2Z]$, there is a solution $\beta$
      for $U(a,b,Z)$ such that $\beta(b)-\beta(a)=Z'$.
    \end{itemize}
  \end{claim}
  \begin{claimproof}
  Let $\beta$ be a solution for $U(a,b,Z)$. The constraints
  in (C1) imply that $\beta(b)-\beta(a) \in [0,2Z]$. 
  If $\beta(a)=\beta(b)$, then there is nothing to
  show. Hence, assume that $\beta(a)<\beta(b)$. Now,
  there is an $i$ with $0 \leq i <2Z-1$ such that $h_{i+2}-h_{i}>0$.
  We first show that $h_{i+2}-h_{i}>0$ for every $i$
  with $0 \leq i <2Z-1$. Suppose that this is not the case and let $i$
  be an index such that $h_{i+2}-h_{i}>0$ but either
  $h_{i+3}-h_{i+1}=0$ or $h_{i+1}-h_{i-1}=0$. In both cases it follows
  from the constraints in (C2) that $h_{i+2}-h_{i}>1$. Consequently,
  the constraints in (C1) imply that $h_{i+1}-h_i>0$ and
  $h_{i+2}-h_{i+1}>0$. However, this implies that $h_{i+3}-h_{i+1}>0$
  (since $i<2Z-2$) and $h_{i+1}-h_{i-1}>0$ (since $i>0$) so we obtain a contradiction. Hence,
  $h_{i+2}-h_{i}>0$ for every $i$ with $0 \leq i <2Z-1$, which
  together with the constraints in (C2) implies that $h_{i+2}-h_{i}>1$ for
  every $i$ with $0 \leq i <2Z-1$. 
  Since 
  \begin{eqnarray*}
    \beta(b)-\beta(a) & \geq &
                               \sum_{i=1}^Z\beta(h_{2i})-\beta(h_{2i-2})\\
                      & > &  \sum_{i=1}^Z1\\
    & = & Z
  \end{eqnarray*}
  it follows that $\beta(b)-\beta(a)>Z$. This completes the proof of
  the first statement of the claim.

  We continue with the second
  statement of the claim. Arbitrarily choose $Z'\in \{0\} \cup (Z,2Z]$. If $Z'=0$, then we
  set
  $$\beta(a)=\beta(h_0)=\beta(h_1)=\dotsb=\beta(h_{2Z-1})=\beta(h_{2Z})=\beta(b)$$
  and this assignment clearly satisfies all constraints in (C1) and (C2). If $Z' \in (Z,2Z)$, then we set $\beta(a)=\beta(h_0)$,
  $\beta(h_{i+1})-\beta(h_i)=0.5+\epsilon$ for every $i$ with $0 \leq
  i < 2Z$, and $\beta(h_{2Z})=\beta(b)$, where $\epsilon=(Z'-Z)/2Z$. It is straightforward to verify that this assignment satisfies the constraints in (C1) and (C2).
  This completes the proof of Claim~\ref{clm:UabZ}.
\end{claimproof}

  We are now ready to show how to replace the constraints
  $z=B(z) \lor z-B(z)\geq M$ and $z=G(z) \lor G(z)-z\geq M$
  for every variable $z \in \textup{XY}$. That is, for 
  $z \in \textup{XY}$, we replace the constraint
  $z=B(z) \lor z-B(z)\geq M$ with the gadget $U(z,B(z),M)$ and we
  replace the constraint $z=G(z) \lor G(z)-z\geq M$ with the gadget
  $U(G(z),z,M)$.

  Then, $\inst'$ is obtained from
  $\inst'''$ after replacing all the remaining constraints of the
  variables in $\textup{XY}$ as described above. Clearly, $\inst'$ is
  an instance of $\csp(\D_{2,1})$. Furthermore, the treewidth of the
  primal graph of $\inst'$ is at most the treewidth of the primal
  graph of $\inst''$ plus $2$. This is because the treewidth of the
  primal graph of $\inst'''$ is at most the treewidth of $\inst''$
  plus $1$ (as we already argued above). Furthermore, the primal graph
  for $\inst'$ is obtained from the primal graph of $\inst'''$ by
  replacing the edges between $x$ and $B(x)$ as well as between $x$ and $G(x)$
  with the primal graph of the gadget $U(a,b,Z)$ for every $x \in
  \textup{XY}$. The result now follows because the treewidth of the
  primal graph of $U(a,b,Z)$ is at most $2$. Because the treewidth of
  the primal graph of $\inst''$ is at most $4K+3$, we
  obtain that the treewidth of the primal graph of $\inst'$ is at most
  $4K+5$.
\end{proof}
Since all variables in the proof of Theorem~\ref{thm:w1-hard} are only assigned integers,
we can replace every constraint $L<R$ that uses $<$ in the construction, i.e., the constraints of the form $g^{V_i}_{|V_i|}-g^{V_i}_0<T$ and $g^{E_i}_{|E_i|}-g^{E_i}_0<T$, by $L\leq R-1$. Therefore, we obtain the following corollary from Theorem~\ref{thm:w1-hard}.
\begin{corollary}\label{cor:w1-hard}
  $\csp(\D^{\leq}_{2,1})$ is strongly \W{1}\hy hard parameterized by primal
  treewidth.
\end{corollary}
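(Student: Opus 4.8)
The plan is to reuse the reduction behind Theorem~\ref{thm:w1-hard} essentially unchanged and to remove the few strict inequalities it produces, exploiting the fact that the reduction only ever needs integer‑valued solutions. First I would observe that the equivalence established in the proof of Theorem~\ref{thm:w1-hard} survives restriction to integer solutions: the solution built in the forward direction assigns every variable an integer (up to a single global additive constant), since all prescribed distances --- the $N$ in $b^r_4-b^r_1=N$, the $M$ in $g^{V_1}_0-b^K_4=M$, the segment lengths $s(\mvec{v})$, and the sums $T_A$ --- are integers; and the backward direction extracts a solution to $\inst$ from an arbitrary (in particular integer) solution to $\inst'$. Hence the whole chain from \MPSS{} to $\csp(\D_{2,1})$ remains an fpt‑reduction when the target is interpreted over the integer domain (cf.\ Section~\ref{sec:integerdomains}), with primal treewidth still bounded by a function of $k$.

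Next I would eliminate the strict constraints. Inside $\inst''$ the only strict constraints are $g^{V_i}_{|V_i|}-g^{V_i}_0<T$ and $g^{E_i}_{|E_i|}-g^{E_i}_0<T$; over the integers these are equivalent to $g^{V_i}_{|V_i|}-g^{V_i}_0\le T-1$ and $g^{E_i}_{|E_i|}-g^{E_i}_0\le T-1$, which are closed, so I would perform this replacement \emph{before} the path‑subdivision. The subdivision then turns every constraint $a-b=n$ into a path of closed unit constraints $h_i-h_{i-1}=1$ and every $a-b\le n$ into a path of closed constraints $h_i-h_{i-1}\le 1$, all with numerical bound at most $1$. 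What remains are the constraints (C2) of the choice gadget $U(a,b,Z)$, currently reading $h_{i+2}-h_i=0\lor h_{i+2}-h_i>1$. In the presence of (C1) and over the integers this says exactly that the two consecutive steps $h_{i+1}-h_i$ and $h_{i+2}-h_{i+1}$ must coincide (both $0$ or both $1$), so the gadget forces $b-a\in\{0,2Z\}$. I would re‑implement $U(a,b,Z)$ by a variant enforcing this ``all‑steps‑equal'' behaviour using only closed binary constraints of bound $\le 1$, and then verify that the analogue of Claim~\ref{clm:UabZ} (hence Claim~\ref{clm:xy-or}) still holds. Since the re‑implementation only subdivides or duplicates edges, Proposition~\ref{pro:tw-del} together with the usual subdivision bound keeps the primal treewidth within $O(K)=O(k^2)$, so the fpt‑reduction property is preserved.

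\textbf{The main obstacle} is this last re‑implementation. A single constraint of $\D^{\le}_{2,1}$ on a difference $x-y$ can, over the integers, only restrict $x-y$ to a subset of $\{-1,0,1\}$ possibly together with a tail $(-\infty,-1]$ or $[1,\infty)$; in particular it cannot exclude the value $1$ while retaining both $0$ and $2$, so the implicit ``skip'' relation $h_{i+2}-h_i\in\{0,2\}$ behind (C2) is not itself expressible in $\D^{\le}_{2,1}$. The plan is therefore to realise the propagation ``once a step of the chain moves, every step moves'' without any skip constraint --- for instance by coupling consecutive steps through fresh midpoint variables and a constant number of closed $[0,1]$ and equality constraints, or, failing a clean local gadget, by engineering the bucket/garbage choice directly from closed unit constraints. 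Everything else --- the bucket/garbage bookkeeping, the Sidon‑set encoding of \MPSS{}, and the treewidth computation (which grows only from $4K+3$ to $4K+O(1)$) --- carries over verbatim from the proof of Theorem~\ref{thm:w1-hard}, so once the choice gadget is rebuilt with closed bound‑$1$ constraints the corollary follows from Theorem~\ref{pro:mpss-hard}.
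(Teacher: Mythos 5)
Your proposal starts out on the paper's track but then takes a detour that it cannot complete. The paper's own derivation of Corollary~\ref{cor:w1-hard} is a one-line adjustment of Theorem~\ref{thm:w1-hard}: it keeps the entire construction, gadgets included, and only tightens the strict garbage-span constraints $g^{A}_{|A|}-g^{A}_0<T$ to $g^{A}_{|A|}-g^{A}_0\leq T-1$, which is sound because the bucket, garbage and vector variables receive integer values in the forward direction and because tightening can only help the backward direction. Your first two steps (integrality of the board variables, replacing $<T$ by $\leq T-1$ before the path subdivision) coincide with this. The divergence is your treatment of the constraints (C2) of the gadget $U(a,b,Z)$, and this is where the genuine gap lies.

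Concretely: (1) your opening claim that the forward-direction solution of $\inst'$ assigns an integer to \emph{every} variable is false for the gadget auxiliaries. When a vector variable $z$ is parked in the garbage, the offset that the chain of $U$ must realize is an integer strictly between $M$ and $2M$ (never $0$ and never $2M$), and, as your own analysis shows, integer assignments to the chain can only realize $0$ or $2Z$; Claim~\ref{clm:UabZ} realizes the intermediate offsets precisely by using fractional steps $0.5+\epsilon$. (2) For the same reason, the replacement you sketch --- a closed, bound-$1$ gadget enforcing ``all steps equal'', i.e.\ forcing $b-a\in\{0,2Z\}$ --- cannot be a drop-in: the second bullet of Claim~\ref{clm:UabZ}, realizability of every offset in $(Z,2Z]$, is exactly what the forward direction of Theorem~\ref{thm:w1-hard} needs, so with your rigid gadget yes-instances of \MPSS{} would be mapped to unsatisfiable instances, and the ``analogue of Claim~\ref{clm:UabZ} (hence Claim~\ref{clm:xy-or})'' that you promise to verify does not hold. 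You correctly identify that a single relation of $\D^{\leq}_{2,1}$ cannot exclude the middle value of a skip while keeping both $0$ and $2$, but the proposal then stops at naming this as ``the main obstacle'': no concrete closed bound-$1$ gadget with the required behaviour (allow $b-a=0$ and \emph{all} sufficiently large offsets, forbid the intermediate range) is exhibited, and the fallback of ``engineering the bucket/garbage choice directly'' is exactly the unsolved crux. So the key step of your argument is missing, and the specific fix you outline would break the completeness of the reduction; the paper, by contrast, never rebuilds the gadget and derives the corollary solely from the integrality-based tightening of the garbage-span constraints.
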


\section{Generalizations}
\label{sec:extensions}

The results that we have proved in Sections~\ref{sec:upper-bounds-time}--\ref{sec:lb} are restricted
in two ways: (1) formulas are assumed to be in
conjunctive normal form and (2) the variable domains are assumed
to be the set of rationals.
We consider the satisfiability problem for DL without these
restrictions in the following two sections.
Thus, we discuss DL with general formulas (i.e. the \probfont{DL-Sat} problem from Section~\ref{sec:difflogic})   in Section~\ref{sec:generalformulas}
and we discuss DL with integer variable domains in
Section~\ref{sec:integerdomains}.

\subsection{General Formulas}
\label{sec:generalformulas}

Every DL formula can be converted into a logically equivalent formula that is in CNF by using well-known laws of logic.
We use this fact for proving the following result.

\begin{theorem} \label{thm:dl-sat-bounds}
\probfont{DL-Sat} is solvable in
$2^{O(n (\log n + \log k))}$ time where $n$ is the number of
variables in the given formula $\phi$ and $k=\num{\phi}$.
\probfont{DL-Sat} is not solvable in
$2^{o(n (\log n + \log k))}$ if the ETH holds.
\end{theorem}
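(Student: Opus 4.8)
The plan is to reduce \probfont{DL-Sat} to $\csp(\D)$ in both directions while controlling the number of variables and the coefficient bound. For the upper bound, take an arbitrary existential \stpfull-sentence $\phi = \exists x_1,\dots,x_n.\psi$ with $\psi$ quantifier-free. First I would push negations inwards: every atomic formula $\ell \compare{1} x-y \compare{2} u$ and its negation is itself expressible as a (disjunction of) relations in $\D$ over the variables $x,y$ — e.g. $\neg(x-y \leq u)$ is $x-y > u$, and $\neg(\ell \leq x-y \leq u)$ is $(x-y < \ell) \vee (x-y > u)$. Hence negations do not change the variable set and at most double the coefficient magnitudes (in fact they leave them unchanged), so $\num{\psi}$ stays $\le k$. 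Next, convert the resulting negation-free formula to CNF by distributivity. The crucial point is that neither the variable set nor the set of numeric bounds changes under distributivity: a clause of the CNF is a disjunction of literals each of which is a simple constraint $x-y \in I$ with $I$ having endpoints of absolute value $\le k$, so the whole clause is a relation of $\D$ with $\num{} \le k$. Thus $\phi$ is satisfiable iff the corresponding instance $\inst$ of $\csp(\D)$ with $n$ variables and $\num{\inst} \le k$ is satisfiable, and by Corollary~\ref{thm:generalupperbound} this is decidable in $2^{O(n(\log n + \log k))}$ time. The one subtlety to flag: the CNF blow-up can be exponential in the formula size, so we must be careful that we never explicitly write down the CNF — instead we appeal directly to Theorem~\ref{lem:compact-assign}, which guarantees a solution in $\cd{n,k}$; we then enumerate all $|\cd{n,k}|^n = 2^{O(n(\log n + \log k))}$ candidate assignments and check each one against the \emph{original} formula $\psi$ in polynomial time. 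This avoids the CNF conversion entirely and keeps the running time as claimed.

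For the lower bound, I would reuse Theorem~\ref{thm:binary-hardness}: $\csp(\disjtemp{2}^{\leqslant})$ is not solvable in $2^{o(n(\log n + \log k))}$ time under the ETH. Since every instance of $\csp(\disjtemp{2}^{\leqslant})$ is literally a conjunction of DL clauses (each constraint is a disjunction of atoms $\ell \leq x-y \leq u$), it is directly an instance of \probfont{DL-Sat} with the same $n$ and the same $k$; the existential quantifier block is just $\exists x_1,\dots,x_n$ over all variables, together with the zero-variable trick if unary constraints are needed (which they are not here). Therefore a $2^{o(n(\log n + \log k))}$-time algorithm for \probfont{DL-Sat} would yield one for $\csp(\disjtemp{2}^{\leqslant})$, contradicting the ETH.

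The main obstacle is purely presentational rather than mathematical: one has to make the argument that the CNF conversion does not affect $n$ or $k$ precise and airtight — in particular that the disjunctive temporal relations are closed under the Boolean operations needed (negation of atoms, distribution of $\vee$ over $\wedge$), which follows because $\D$ by definition contains \emph{all} disjunctions of simple constraints over any tuple of variables with integer bounds, so $\D$-relations on a fixed variable set are closed under arbitrary disjunction and conjunction. Once this closure observation is stated, both directions are essentially bookkeeping on top of Corollary~\ref{thm:generalupperbound} and Theorem~\ref{thm:binary-hardness}.
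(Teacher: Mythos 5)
Your proposal is correct and follows essentially the same route as the paper: appeal to a logically equivalent CNF formula (never explicitly constructed) to invoke the small solution property of Theorem~\ref{lem:compact-assign}, then enumerate the $2^{O(n(\log n+\log k))}$ assignments over $\cd{n,k}$ and check them against the original quantifier-free formula, with the lower bound read off directly from Theorem~\ref{thm:binary-hardness} since every $\csp(\disjtemp{2}^{\leqslant})$ instance is already a \probfont{DL-Sat} instance with the same $n$ and $k$. Your explicit remark that the CNF conversion is only a proof device and is never computed matches the paper's own caveat following its proof.
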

\begin{proof}
The lower bound is an immediate consequence of Theorem~\ref{thm:binary-hardness} since
every instance of $\csp(\disjtemp{2}^{\leqslant})$ can be viewed as a DL formula (as was discussed in Section~\ref{sec:csp}).
To show the upper bound, we let
$\phi$ denote an arbitrary instance of
\probfont{DL-Sat}. Assume $\phi$ contains $n$ variables and
that $k=\num{\inst}$.
Every existential sentence $\phi$ admits a logically
equivalent existential sentence $\phi'$
such that $\phi'$ is in CNF, $\phi$ and $\phi'$ contains the same number of
variables, and $k=\num{\phi'}$.
The formula $\phi'$ may be viewed as an instance $\inst=(V,C)$ of $\csp(\D)$ where $|V|=n$ and $\num{\inst}=k$.
Theorem~\ref{lem:compact-assign}
 implies that $\phi'$ is satisfiable if and only if
it has a solution
 $f : V \rightarrow \cd{n,k}$.
Since $\phi$ and $\phi'$ are logically equivalent formulas, the
same holds for $\phi$. The upper bound
follows immediately 
since $\cd{n,k}$ contains $2^{O(n (\log n + \log k))}$
elements
(as was proved in Corollary~\ref{thm:generalupperbound}).
\end{proof}

The conversion of a DL formula into CNF can obviously lead to an exponential larger formula and the conversion process may
thus take exponential time. Note, however, that we do not need
to compute the CNF formula explicitly in the proof
of Theorem~\ref{thm:dl-sat-bounds}.

Theorem~\ref{thm:dl-sat-bounds} is closely connected to
{\em Satisfiability Modulo Theories} (SMT), i.e.
the decision problem for logical sentences with respect to a
given background theory, where logical formulas are expressed in classical first-order logic with
equality.
Let SMT$({\cal T})$ be the problem of determining whether a
first-order sentence 
is true with respect to a background theory ${\cal T}$,
and let SMT$_{\exists}({\cal T})$ be the
subproblem where universal quantifiers are not allowed.
If we let ${\cal T}_{\rm diff}$ denote the background theory for difference
constraints, then
\probfont{DL-Sat} and SMT$_{\exists}({\cal T}_{\rm diff})$ are the same computational problems. 
Jonsson and Lagerkvist~\cite[Theorem 9]{Jonsson:Lagerkvist:algorithmica2023} prove bounds
valid for any background theory:
SMT$_{\exists}(\emptyset)$ is solvable in $2^{O(|V| \log |V|)}$
time but it
cannot be solved in $2^{o(|V| \log |V|)}$ time unless
the ETH is false.
Theorem~\ref{thm:dl-sat-bounds} thus implies that SMT$_{\exists}({\cal T}_{\rm diff})$
is only marginally harder than
SMT$_{\exists}(\emptyset)$.


Applying the restricted time complexity results and the parameterized
results to \probfont{DL-Sat} directly is, unfortunately, not possible.
The clause arity parameter is obviously not well-defined for an arbitrary existential formula $\phi$ since it is not required
to be in CNF. Similarly, the primal and incidence
graphs are not well-defined in this case. 
Converting the formula into CNF is typically not a viable option
since this process may take exponential time and it may produce
a formula that is exponentially larger than the original formula.
A simple (but sometimes sufficiently powerful) workaround is based on
generalizing the results to more complex subformulas
than clauses. We present one possible way of doing this.
Recall from Section~\ref{sec:csp} that we can always view a CSP instance
as a primitive positive sentence over some structure.
We have used this perspective throughout the article: we view
an existential formula in CNF
as a CSP instance where the structure contains the relations
that describe the allowed clauses.
Clearly, we can instead consider relations
that describe other subformulas than
clauses. This must be done with care, though.
These relations cannot use auxiliary variables
in their definitions since this introduces a
time complexity dependency on the number of
subformulas and not only on the number of variables
and the magnitude of the coefficients.
Furthermore, the definitions of the subformulas
must (in a certain sense) be  easy to compute.
We circumvent this problem by restricting ourselves
to a finite number of subformula types; this restriction can often be lifted but it needs a careful analysis based on the chosen relations and the representation
of them.
We arrive at the following result.

\begin{proposition} \label{prop:qffo-relations}
If ${\bf A}$ is a finite structure that is quantifier-free definable in $\stpfull$, then
the following hold.

\begin{enumerate}
\item
CSP$({\bf A})$ is solvable in $2^{O(n \log n)}$ time.

\item
If the relations in ${\bf A}$ have arity at most 2, then CSP$({\bf A})$ is solvable in $2^{O(n \log \log n)}$ time.

\item
If the relations in ${\bf A}$ have arity at most 3 and $\num{{\bf A}}=0$, then CSP$({\bf A})$ is solvable in $2^{O(n)}$ time.

\item
CSP$({\bf A})$ is in \XP when parameterized by the treewidth of the incidence graph.

\end{enumerate}
\end{proposition}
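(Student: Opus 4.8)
The plan is to reduce $\csp({\bf A})$ to $\csp(\D)$ (in fact to a bounded-arity, bounded-coefficient fragment of it) \emph{without introducing auxiliary variables}, and then to invoke the algorithms behind Corollary~\ref{thm:generalupperbound}, Theorem~\ref{thm:gammaonetime}, Theorem~\ref{thm:D30}, and Theorem~\ref{the:xp-alg}. Since ${\bf A}$ is finite, I would first fix once and for all a quantifier-free $\stpfull$-definition of each relation $R\in{\bf A}$. Pushing negations inward is harmless because the negation of an atom $x-y\odot c$ with $\odot\in\{<,\leq\}$ is again such an atom (with a negated, hence still bounded, coefficient), and distributing $\vee$ over $\wedge$ puts the definition into CNF. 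Every clause of this CNF is a disjunction of $\stpfull$-atoms over the free variables $x_1,\dots,x_a$ of $R$, so it defines a relation of $\D$ of arity at most $a$ whose numerical bounds lie among the integer constants occurring in the chosen definitions. Hence $R(x_1,\dots,x_a)$ is logically equivalent to a conjunction $\bigwedge_{j=1}^{t_R} R_j(\dots)$ with each $R_j\in\D$ acting on a subset of $\{x_1,\dots,x_a\}$, where $t:=\max_{R\in{\bf A}}t_R$ and $k:=\num{{\bf A}}$ are constants depending only on ${\bf A}$.

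Given an instance $\inst=(V,C)$ of $\csp({\bf A})$, I would replace each constraint $R(v_1,\dots,v_a)$ by the at most $t$ constraints $R_j(\dots)$ from the fixed CNF of $R$; this is a polynomial-time step (with a precomputable rewriting table) producing an equivalent instance $\inst'=(V,C')$ of $\csp(\D)$ on the same $n=|V|$ variables, with $\num{C'}\leq k$, and with every relation of arity at most $a$ whenever all relations of ${\bf A}$ have arity at most $a$. Parts~1--3 then follow at once: as $k$ is a constant, $\inst'$ is an instance of $\csp(\D_{\infty,k})$, solvable in $2^{O(n(\log n+\log k))}=2^{O(n\log n)}$ time by Corollary~\ref{thm:generalupperbound}; if $a\leq 2$ then $\inst'\in\csp(\D_{2,k})$, solvable in $2^{O(n\log\log n)}$ time by Theorem~\ref{thm:gammaonetime}; and if $a\leq 3$ and $\num{{\bf A}}=0$ then $\inst'\in\csp(\D_{3,0})$, solvable in $2^{O(n)}$ time by Theorem~\ref{thm:D30}.

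For part~4 the only genuine work is to bound the incidence treewidth of $\inst'$ in terms of that of $\inst$. Given a tree decomposition $(T,\chi)$ of $\ig(\inst)$ of width $w$, I would form $\chi'$ by keeping every variable vertex where it is and replacing, inside each bag, every constraint vertex $c$ by the at most $t$ constraint vertices of $\inst'$ created from $c$. Then $|\chi'(s)|\leq t\,|\chi(s)|\leq t(w+1)$ for every node $s$; each variable vertex still occupies a connected subtree; each new constraint vertex occupies exactly the subtree in which its parent $c$ occurred, hence a connected subtree; and every incidence edge of $\inst'$ joins some variable $v$ to a constraint vertex derived from some $c$ with $v\in\scope{c}$, so any bag of $\chi$ covering the edge $cv$ yields a bag of $\chi'$ covering the new edge. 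Thus $\tw{\ig(\inst')}\leq t(w+1)-1=O(w)$, and Theorem~\ref{the:xp-alg} then solves $\inst'$ in $(nk)^{O(w)}=n^{O(w)}$ time (since $k$ is constant), so $\csp({\bf A})$ is in \XP\ with respect to incidence treewidth. I expect this treewidth bookkeeping to be the main obstacle; the remainder is a direct application of the earlier results once the CNF rewriting is in place.
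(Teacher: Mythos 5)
Your proposal is correct and follows essentially the same route as the paper: rewrite each relation of ${\bf A}$ via a fixed quantifier-free CNF definition over its free variables, replace every constraint by the resulting boundedly many $\D_{a,k}$-constraints without adding variables, and invoke Corollary~\ref{thm:generalupperbound}, Theorem~\ref{thm:gammaonetime}, Theorem~\ref{thm:D30} and Theorem~\ref{the:xp-alg}. Your treewidth bookkeeping for item~4 (replacing each constraint vertex in every bag by its at most $t$ derived constraint vertices, giving width $O(w)$) is exactly the paper's argument, stated if anything a bit more carefully.
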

\begin{proof}
The structure ${\bf A}$ is finite and
every relation in ${\bf A}$ has a quantifier-free definition in $\stpfull$. We may without loss of generality assume that the defining formulas are in CNF. This implies that
every relation can be viewed as a conjunction of relations in $\D_{a,k}$ where $a,k < \infty$.
Since ${\bf A}$ is finite, we may assume that we have access to a table containing
the defining CNF formulas for each relation in ${\bf A}$.

Let $\inst=(V,C)$ denote an instance of CSP$({\bf A})$ and arbitrarily choose a
constraint $R(x_1,\dots,x_n)$ in $C$. The relation $R$ has a definition
$$R(x_1,\dots,x_n) \equiv \bigwedge_{i=1}^p R_i(x_{i,1},\dots,x_{i,{\rm ar}(R_i)})$$
where $R_1,\dots,R_p \in \D_{a,k}$ and
$\{x_{i,j} : 1 \leq i \leq p, 1 \leq j \leq \max\{{\rm ar}(R_m) : 1 \leq m \leq p\}\} \subseteq \{x_1,\dots,x_n\}$.
This constraint in $\inst$ can be replaced by $p$ constraints $$R_1(x_{1,1},\dots,x_{1,{\rm ar}(R_1)}),\dots,R_p(x_{p,1},\dots,x_{p,{\rm ar}(R_p)})$$ 
and this transformation does not
affect the solvability of the instance since it preserves the set of solutions.
Let $\inst'=(V',C')$ denote the instance that results from applying this
transformation to each constraint in $C$. We note that $\inst'$ can be computed
in polynomial time, $V'=V$, $\num{C}=\num{C'}$, and $\inst'$ has a solution if and only if $\inst$ has
a solution.
Now, item 1 follows from Corollary~\ref{thm:generalupperbound}, 
item 2 follows from Theorem~\ref{thm:gammaonetime}, and
item 3 follows from Theorem~\ref{thm:D30}.
Finally, we claim that item
4 follows from Theorem~\ref{the:xp-alg} together with the observation that
$\tw{\ig(\inst')} \leq q \cdot \tw{\ig(\inst)}$, where $q$ is the smallest integer such that any
constraint in $\inst$ is replaced by at most $q$ constraints in $\inst'$; note that $q$ can be considered constant because ${\bf A}$ is finite. The last observation follows because any tree decomposition of $\ig(\inst)$ can be transformed into a tree decomposition of $\ig(\inst')$ by replacing any vertex corresponding to a constraint $c$ in $\inst$ by the at most $q$ vertices corresponding to the constraints that replace $c$ in
$\inst'$.
\end{proof}

\subsection{Integer Domains}
\label{sec:integerdomains}

We show that the complexity results presented in this
article also hold if we restrict DL to integer variable domains. Henceforth,
we let $\D_{a,k}^{\mathbb Z}$ denote the set of relations in $\D_{a,k}$
restricted to the integers.
Let ${\bf A}$ and ${\bf B}$ be two structures.
We write $\csp({\bf A}) \leq_0 \csp({\bf B})$
if there exists a polynomial-time reduction $F$ from
$\csp({\bf A})$ to $\csp({\bf B})$
that introduces no additional variables, i.e.
if $(V,C)$ is an instance of 
$\csp({\bf A})$, then $F((V,C))=(V,C')$.
The existence of such a reduction implies the following.

\begin{enumerate}
\item
If $\csp({\bf A})$ is not solvable within a time bound $f(|V|)$,
then $\csp({\bf B})$ is not solvable within $f(|V|)$, either.

\item
If $\csp({\bf B})$ is solvable within a time bound $f(|V|)$,
then $\csp({\bf A})$ is solvable within $f(|V|)$, too.

\end{enumerate}


We continue by presenting a number of reductions.

\begin{lemma} \label{lem:integer-ak}
$\csp(\D^{\leq}_{a,k}) \leq_0 \csp(\D^{\mathbb Z}_{a,k}) \leq_0 \csp(\D^{\leq}_{a,k+1})$.
\end{lemma}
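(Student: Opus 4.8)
We must establish two $\leq_0$-reductions. The first, $\csp(\D^{\leq}_{a,k}) \leq_0 \csp(\D^{\mathbb Z}_{a,k})$, says that deciding satisfiability over the rationals (using only closed intervals) reduces, without introducing variables, to deciding satisfiability over the integers with the same arity and coefficient bounds. The second, $\csp(\D^{\mathbb Z}_{a,k}) \leq_0 \csp(\D^{\leq}_{a,k+1})$, goes the other way: integer-domain constraints can be simulated by closed-interval rational constraints at the cost of incrementing the coefficient bound by one. Since $\leq_0$-reductions are identity on the variable set, in each case we only need to replace each constraint by an equivalent one (over the appropriate domain) using relations from the target language, and argue the instance is satisfiable iff the original one is.

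\textbf{First reduction.} For $\csp(\D^{\leq}_{a,k}) \leq_0 \csp(\D^{\mathbb Z}_{a,k})$, the key observation is that a relation in $\D^{\leq}_{a,k}$ is a disjunction of simple constraints of the form $x-y \in [\ell,u]$ with $\ell,u \in \integers \cup \{-\infty,+\infty\}$ and $|\ell|,|u| \leq k$. I would take the \emph{identical} constraint set $C' = C$ but reinterpret it over $\integers$: since all endpoints are integers and the intervals are closed, I claim $\inst=(V,C)$ is satisfiable over $\rationals$ iff it is satisfiable over $\integers$. The ``$\Leftarrow$'' direction is trivial. For ``$\Rightarrow$'', suppose $g:V\to\rationals$ is a solution. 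By Theorem~\ref{lem:compact-assign} (the small solution property), $\inst$ has a solution $f:V\to \cd{n,k}$; but a cleaner route is available here: because all constraints are closed-interval constraints with integer endpoints, one can round. Specifically, order the variables so that $g(v_1)\leq\dots\leq g(v_n)$ and define an integer assignment $f$ by $f(v_1)=0$ and $f(v_{i+1}) = f(v_i) + \intP{g(v_{i+1})-g(v_i)}$ if $g(v_{i+1})-g(v_i) < k+1$ and $f(v_{i+1}) = f(v_i) + (k+1)$ otherwise --- essentially the integral-part construction from the proof of Theorem~\ref{lem:compact-assign}, ignoring the fractional part. Then for every pair $x,y$, $\intP{f(x)-f(y)}$ agrees with $\intP{g(x)-g(y)}$ truncated at $k+1$ (and $-(k+1)$), and since the target constraints only reference integer bounds of magnitude $\leq k$, a closed-interval constraint $x-y\in[\ell,u]$ is satisfied by $f$ iff it is by $g$. (The closedness matters: e.g. $g(x)-g(y)=2.5$ satisfying $x-y\in[2,3]$ still holds after rounding to $2$ or $3$.) Since $a,k$ are unchanged and no variables are added, this is a valid $\leq_0$-reduction.

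\textbf{Second reduction.} For $\csp(\D^{\mathbb Z}_{a,k}) \leq_0 \csp(\D^{\leq}_{a,k+1})$, the point is to force the rational solution to behave like an integer one. A relation in $\D^{\mathbb Z}_{a,k}$ is $\{(x_1,\dots,x_t)\in\integers^t : \bigvee_\ell x_{i_\ell}-x_{j_\ell}\in I_\ell\}$. Over the integers, the open/half-open distinctions collapse: $x-y \in (\ell,u)$ with $x-y$ integer is the same as $x-y\in[\ell+1,u-1]$, and similarly $x-y\in(\ell,\infty)$ becomes $x-y\in[\ell+1,\infty]$. So I would first rewrite every disjunct of every constraint as a closed-interval constraint with integer endpoints; this may raise the coefficient magnitude from $k$ to at most $k+1$ (e.g. $(\ell,u)$ with $\ell=-k$ gives endpoint $-k+1$, but $(\ell,\infty)$ with $\ell=k$ gives endpoint $k+1$; and a point $\{k\}$ stays at $k$ --- the worst case is indeed $k+1$). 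This gives a $\csp(\D^{\leq}_{a,k+1})$ instance $\inst'=(V,C')$, and I claim $\inst$ (over $\integers$) is satisfiable iff $\inst'$ (over $\rationals$) is. The ``$\Rightarrow$'' direction: an integer solution of $\inst$ satisfies each rewritten closed constraint because we just argued the rewriting is equivalence-preserving on integers, and integers are rationals. The ``$\Leftarrow$'' direction is the part needing an argument: a \emph{rational} solution of $\inst'$ need not be integral. Here I would invoke the small solution property (Theorem~\ref{lem:compact-assign}) to get a solution $f:V\to\cd{n,k+1}$, then apply the \emph{same} integral-part rounding trick as in the first reduction to extract an integer assignment; because the constraints of $\inst'$ are closed with integer endpoints, and because the rounding preserves truncated integer differences, the rounded assignment still satisfies $\inst'$, hence satisfies $\inst$ over $\integers$. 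No variables are introduced, so this is a legitimate $\leq_0$-reduction.

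\textbf{Main obstacle.} The routine part is the bookkeeping of how open/half-open intervals convert to closed ones and exactly when the coefficient bound must grow (confirming $k+1$ suffices and is sometimes necessary). The slightly delicate part is the ``$\Leftarrow$'' direction of the second reduction: one must be careful that the rounding procedure, which is designed to preserve \emph{all} simple $\D_{2,k}$-constraints in Theorem~\ref{lem:compact-assign}, in particular preserves the specific closed-interval constraints appearing in $\inst'$ --- this follows directly from Lemma~\ref{lem:ass-scon} applied to $f$ and its rounding, since condition~1 there is exactly the statement that truncated integer parts of differences agree, and condition~2 is vacuously maintainable by collapsing all fractional parts to $0$. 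So the whole argument is really a corollary of Lemma~\ref{lem:ass-scon} plus elementary interval arithmetic; I would write it up by stating the rounding map once and citing Lemma~\ref{lem:ass-scon} for both reductions.
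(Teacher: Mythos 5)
Your skeleton is the same as the paper's: both reductions keep the variable set unchanged, the first simply reinterprets the closed-interval constraints over $\integers$, the second replaces each strict bound $x-y<c$ by $x-y\leq c-1$ (raising the coefficient bound to $k+1$), and both directions hinge on the single fact that a satisfiable instance of $\csp(\D^{\leq})$ has an integer solution. The gap is in your proof of that fact. First, the rounding map you write down does not have the property you claim: $f(v_{i+1})=f(v_i)+\intP{g(v_{i+1})-g(v_i)}$ telescopes floors of \emph{consecutive} gaps, and these do not sum to the floor of a pairwise difference. Take $g(x)=0$, $g(y)=0.5$, $g(z)=1$ with constraints $y-x\in[0,1]$, $z-y\in[0,1]$, $z-x\in[1,2]$ (all in $\D^{\leq}_{2,2}$, satisfied by $g$): your $f$ assigns $0$ to all three variables, so $z-x\in[1,2]$ is violated and $\intP{f(z)-f(x)}=0\neq 1=\intP{g(z)-g(x)}$, contradicting your claimed invariant. (The construction in Theorem~\ref{lem:compact-assign} avoids this by truncating the gaps of the floors $\intP{g(v_{i+1})}-\intP{g(v_i)}$, not the floors of the gaps.) Second, the fallback to Lemma~\ref{lem:ass-scon} does not go through: condition~2 there is a biconditional on the order of the fractional parts, so collapsing all fractional parts to $0$ forces $\ratP{f(x)}=\ratP{f(y)}$ for every pair while $g$ may have $\ratP{g(x)}<\ratP{g(y)}$; the condition is violated, not ``vacuously maintainable''. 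Indeed no integral rounding can preserve all simple $\disjtemp{2,k}$ constraints (an open constraint such as $x-y\in(0,1)$ has no integer solutions at all), so any repair must argue directly, and only for closed intervals, that an integral assignment whose truncated integer-part differences agree with those of $g$ satisfies every $x-y\in[\ell,u]$ that $g$ does --- an argument you do not supply.

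The fact you need is true, and the paper obtains it by a much simpler route: from a satisfying assignment pick one satisfied disjunct in each constraint; the chosen literals form a conjunction of constraints $x-y\leq c$ with integer $c$, and such a difference-constraint system always admits an integer solution (by Dechter et al.'s STP algorithm~\cite{Dechter:etal:ai91}, or total unimodularity). Apart from this, your two reductions coincide with the paper's (identical constraints for the first, the $<c\;\mapsto\;\leq c-1$ rewriting for the second, with the same bookkeeping showing $k+1$ suffices), and in the backward direction of the second reduction you do not need Theorem~\ref{lem:compact-assign} at all: any rational solution can be converted. If you insist on the rounding route, replace your floor-of-gaps formula by the gap-of-floors construction of Theorem~\ref{lem:compact-assign} with the fractional part dropped and prove the preservation of closed-interval constraints directly rather than via Lemma~\ref{lem:ass-scon}; with that repair the argument becomes correct.
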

\begin{proof}
\noindent
We first verify that if an instance $\inst$ of \csp$(\D^{\leq})$ has a solution, then it has
a solution over $\integers$, too.
Let $\inst=(V,C)$ with $C=\{c_1,\dots,c_p\}$ denote an arbitrary satisfiable instance of
CSP$(\D^{\leq})$.
Since $\inst$ is satisfiable, we can pick one literal $l_i$
out of the definition of every constraint $c_1,\dots,c_p$ such that $\{l_1,\dots,l_p\}$
is satisfiable. This set of constraints
admits an integer solution since the literals
are of the form $x-y \leq c$ and the bound $c$ is an integer: this follows from the original algorithm for solving STPs
by Dechter et al.~\cite[Section 3]{Dechter:etal:ai91}
but it is also a consequence of the theory of total unimodularity~\cite[Section 13.2]{Papadimitriou:Steiglitz:CO}.
Consequently, $\inst$ admits an integer solution.

We begin with the reduction $\csp(\D^{\leq}_{a,k}) \leq_0 \csp(\D^{\mathbb Z}_{a,k})$. 
Let $R$ denote
a relation in $\D^{\leq}_{a,k}$
and let $R_{\mathbb Z}$ have the same definition as $R$
but with domain ${\mathbb Z}$ instead of ${\mathbb Q}$. Let $\inst=(V,C)$ denote an arbitrary instance of
$\csp(\D^{\leq}_{a,k})$ and let $\inst_{\mathbb Z}=(V,C_{\mathbb Z})$
where $C_{\mathbb Z}=\{R_{\mathbb Z}(x_1,\dots,x_k) : R(x_1,\dots,x_k) \in C\}$.
If $\inst$ is not satisfiable, then $\inst_{\mathbb Z}$ is not
satisfiable since ${\mathbb Z} \subseteq {\mathbb R}$. If $\inst$ is satisfiable, then
$\inst_{\mathbb Z}$ is satisfiable as pointed out earlier.

We continue with the reduction $\csp(\D^{\mathbb Z}_{a,k}) \leq_0 \csp(\D^{\leq}_{a,k+1})$.
Let $R_{\mathbb Z}$ denote
a relation in $\D^{\mathbb Z}_{a,k}$.
We define a relation $R$ over ${\mathbb Q}$ as follows: $R$ has
the same definition as $R_{\mathbb Z}$
but every literal that is a strict inequality $x-y < c$
is replaced by $x-y \leq c-1$.
Observe that $R_{\mathbb Z} \subseteq R$
and $R$ is a member of $\D^{\leq}_{a,k+1}$.
Let $\inst_{\mathbb Z}=(V,C_{\mathbb Z})$ denote an arbitrary instance
of $\D^{\mathbb Z}_{a,k}$ and let
$\inst=(V,C)$
where $C=\{R(x_1,\dots,x_k) : R_{\mathbb Z}(x_1,\dots,x_k) \in C_{\mathbb Z}\}$.
If $\inst_{\mathbb Z}$ has a solution $f_{\mathbb Z}$, then
this solution is a solution to $\inst$, too, since
$R_{\mathbb Z} \subseteq R$ for every
$R_{\mathbb Z} \in \D^{\mathbb Z}_{a,k}$.
If $\inst$ is satisfiable, then
it has an integer solution (as discussed earlier)
and
this solution witnesses the satisfiability of $\inst_{\mathbb Z}$.
\end{proof}


\begin{lemma}
$\csp(\D_{a,0}) \leq_0 \csp(\D^{\mathbb Z}_{a,0}) \leq_0 \csp(\D_{a,0})$
\end{lemma}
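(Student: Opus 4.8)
The plan is to realise both of the claimed reductions as the identity on the set of variables and on the constraint scopes, merely reinterpreting each relation symbol over the other numeric domain; then the only thing left to check is that satisfiability is preserved. The tool that makes this work is an \emph{order-invariance} property of $\D_{a,0}$. Since $\num{\D_{a,0}}\leq 0$, every interval that can occur in the definition of a relation $R\in\D_{a,0}$ has its finite endpoint (if any) equal to $0$; thus $R$ is a disjunction of atoms of the form $x_{i}-x_{j}\in I$ with $I\in\{(-\infty,\infty),(-\infty,0),(-\infty,0],\{0\},[0,\infty),(0,\infty)\}$, and membership of a tuple $(t_1,\dots,t_a)$ in $R$ depends only on, for each relevant pair $(i,j)$, whether $t_i-t_j$ is negative, zero, or positive --- equivalently, only on the weak linear order that $t_1,\dots,t_a$ induce on themselves.

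First I would prove the following \emph{order-collapse} lemma: if $g:V\to\rationals$ satisfies a set $C$ of constraints over relations in $\D_{a,0}$, then so does the map $f:V\to\integers$ obtained by replacing each value by its rank, i.e. if $q_1<\dots<q_m$ are the distinct values of $g$ and $g(v)=q_\ell$ then $f(v)=\ell$. Indeed, for all $x,y\in V$ and all $\odot\in\{<,=,>\}$ we have $g(x)\odot g(y)$ iff $f(x)\odot f(y)$, so $g(x)-g(y)$ and $f(x)-f(y)$ lie in exactly the same intervals with endpoints in $\{-\infty,0,\infty\}$; by order-invariance $f$ satisfies the same $\D_{a,0}$-constraints as $g$. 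Crucially $f$ is integer-valued, hence it is simultaneously a rational solution and, when the relation symbols are read over $\integers$, an integer solution, because $R_{\mathbb Z}=R\cap\integers^{a}$ (the defining formulas are parameter-free and quantifier-free).

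Then I would spell out the two reductions. For $\csp(\D_{a,0})\leq_0\csp(\D^{\mathbb Z}_{a,0})$, let $F$ map $(V,C)$ to $(V,C_{\mathbb Z})$, where $C_{\mathbb Z}$ replaces each constraint $R(x_1,\dots,x_a)$ by $R_{\mathbb Z}(x_1,\dots,x_a)$; this is computable in polynomial time and introduces no variables. If $(V,C_{\mathbb Z})$ has an integer solution, the same assignment solves $(V,C)$ over $\rationals$ since $R_{\mathbb Z}\subseteq R$; conversely, a rational solution of $(V,C)$ collapses, by the lemma, to an integer solution of $(V,C_{\mathbb Z})$. The reduction $\csp(\D^{\mathbb Z}_{a,0})\leq_0\csp(\D_{a,0})$ is symmetric: $F'$ replaces each $R_{\mathbb Z}$ by $R$ (no new variables, polynomial time), an integer solution of the source instance is directly a rational solution of its image, and a rational solution of the image collapses to an integer solution of the source. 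The only subtlety worth guarding against is the order-invariance claim itself, i.e. making precise that $\num{\D_{a,0}}\leq 0$ leaves $0$ as the sole admissible finite endpoint; everything else is bookkeeping, and the two displayed $\leq_0$-reductions give, exactly as in the discussion preceding Lemma~\ref{lem:integer-ak}, that $\csp(\D_{a,0})$ and $\csp(\D^{\mathbb Z}_{a,0})$ have the same time complexity in terms of the number of variables.
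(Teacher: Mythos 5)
Your proposal is correct, and it follows the same overall skeleton as the paper's proof: both reductions are the identity on variables and scopes, simply swapping each relation $R\in\D_{a,0}$ for its integer restriction $R_{\mathbb Z}=R\cap\integers^{a}$ (or back), so that everything hinges on the single fact that a satisfiable $\csp(\D_{a,0})$ instance always has an integer solution, the converse direction being immediate from $R_{\mathbb Z}\subseteq R$. Where you differ is in how that integrality fact is established. The paper uses multiplicative invariance: a rational solution $f$ can be rescaled to $f_c(x)=c\cdot f(x)$ with $c$ the product of the denominators, which remains a solution because all bounds are $0$, and is integer-valued. You instead use order invariance: since $\num{\D_{a,0}}=0$ forces every finite interval endpoint to be $0$, membership in any relation of $\D_{a,0}$ depends only on the signs of the pairwise differences, so replacing each value by its rank yields an integer solution. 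Both arguments are sound and equally short; your order-collapse version buys a little extra. It yields a solution with values in $\{1,\dots,n\}$ (a small-solution property for the $k=0$ case, consistent with Theorem~\ref{lem:compact-assign}), and it sidesteps a sign subtlety in the scaling argument: multiplying by a negative constant reverses strict inequalities, so the scaling factor must be taken positive (the paper's ``whenever $c\neq 0$'' is slightly too generous, though harmless since one can take $|b_1\cdots b_n|$). The paper's scaling argument, in turn, is arguably the more direct one-liner once the denominators are named. Your treatment of the two reductions themselves, including polynomial-time computability and the preservation of the variable set required by $\leq_0$, matches the paper's and is complete.
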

\begin{proof}
We first verify that
if an instance $I$ of \csp$(\D_{a,0})$ has a solution, then it has
a solution over $\integers$, too.
Arbitrarily choose a satisfiable instance $\inst$ of CSP$(\D_{a,0})$. 
Assume that
$f:V \rightarrow {\mathbb Q}$ is a solution
to $\inst$. Observe that the function $f_c(x)=c \cdot f(x)$ is a solution
to $\inst$ whenever $c \neq 0$.
We assume that $V=\{x_1,\dots,x_n\}$ and
$f(x_i)=a_i/b_i$, $1 \leq i \leq n$, where
$a_i$ and $b_i \neq 0$ are integers. Let $c=
b_1 \cdot \ldots \cdot b_n$ and note
that $f_c$ is a function from $V$ to ${\mathbb Z}$.
Thus, a satisfiable instance $\inst$ of CSP$(\D_{a,0})$
always has an integer solution.

Let us now consider the reduction $\csp(\D_{a,0}) \leq_0 \csp(\D^{\mathbb Z}_{a,0})$.
Given a relation $R \in \D_{a,0}$,
we let $R_{\mathbb Z}$ denote $R$ restricted to
the integers.
Let $\inst_{\mathbb Z}$ denote an arbitrary instance
of $\D^{\mathbb Z}_{a,0}$ and let
$\inst=(V,C)$
where $C=\{R(x_1,\dots,x_k) : R_{\mathbb Z}(x_1,\dots,x_k) \in C_{\mathbb Z}\}$.
If $\inst_{\mathbb Z}$ has a solution, then
$\inst$ has a solution, too.
If $\inst$ has a solution, then
$\inst_{\mathbb Z}$ has a solution as pointed out earlier.
The other reduction is analogous.
\end{proof}

These reductions imply
that all results in Table~\ref{tb:time-summary}
hold for $\D^{\mathbb Z}_{a,k}$.
Similarly, the results in Table~\ref{tb:summary}
hold since the reductions do not change the
primal and incidence graphs of a given instance.
The results for \probfont{DL-Sat} (Theorem~\ref{thm:dl-sat-bounds}) also hold in the integer case. The
reductions show that
$\csp(\D^{\mathbb Z})$ is solvable in
$2^{O(n(\log n + \log k))}$ time but not in
$2^{o(n(\log n + \log k))}$ (under the ETH).
The proof of Theorem~\ref{thm:dl-sat-bounds}
shows that these results immediately carry over
to \probfont{DL-Sat} over the integers.

\section{Conclusion and Future Work}
\label{sec:discussion}

We have initiated a fine-grained complexity analysis of
the satisfiability problem for DL. We have studied
the time complexity of $\csp(\D_{a,k})$ and obtained
closely matching bounds
for almost all choices of $a,k \in \naturals \cup \{\infty\}$.
We have studied the parameterized complexity
of $\csp(\D_{a,k})$ (with parameters
primal and incidence treewidth) and obtained a comprehensive
picture for all choices of $a$ and $k$. We have considered
generalizations where arbitrary formulas are allowed and where 
variable domains are the integers; 
many of our results survive such generalizations.

A future research direction is to close the
gaps between lower bounds and upper bounds 
for time complexity. This boils down to a better understanding
of the time complexity of $\csp(\D_{2,k})$.
There is a lack of natural problems that can be solved in $2^{O(n\log\log n)}$ time but 
do not admit a single-exponential-time algorithm.
This may point in the direction that $\csp(\D_{2,k})$ is solvable
in single-exponential time but it may equally well indicate a need
for new lower bound techniques.
We remark that the running time of the bounded-span algorithm 
(Lemma~\ref{lem:w-disjtemp}) is the dominant term 
in the time complexity of our algorithm for $\csp(\D_{2,k})$ so
improving this part would reduce the overall time complexity.

Our work on parameterized complexity have focused on the parameters
primal and incidence treewidth
One possible way forward is to study other structural parameters.
The notion of treewidth captures the fact that trees are structurally simple, but fails to
do this for cliques since the treewidth of an $n$-clique is $n-1$.
An alternative graph decomposition 
with a corresponding quality measure (known as {\em clique-width})
was introduced and analyzed in a series of articles~\cite{Courcelle:etal:jcss93,Courcelle:Olariu:dam2000,Wanke:dam94}.
This decomposition
captures the structure of both sparse graphs (such as trees) and dense graphs (such as cliques), and
it is known to have algorithmic properties that are
similar to those of bounded treewidth graphs. It may thus be highly relevant in connection
with DL.

Algorithms for deciding the truth of DL
formulas containing universal quantifiers is a natural step forward.
Theorem~\ref{lem:compact-assign} suggests a straightforward but incorrect approach.
Consider a formula $Q_1x_1 \dots Q_nx_n.\phi$ where
$Q_i \in \{\forall,\exists\}$ and $\phi$ is quantifier-free.
Let $D=CD(n,k)$ be the set of values needed for $\phi$ via Lemma 6.
If $Q_1=\forall$, then we assign the values from $D$
to variable $x_1$ and recursively check that all assignments
leads to satisfiability. If $Q_1=\exists$, then we check that
at least one assignment leads to satisfiability. 
However, such an algorithm does not work as intended: 
the formula $\forall x \exists y.y-x \geq 1$
is false when interpreted over any finite $D \subseteq {\mathbb Q}$
while it is true when interpreted over ${\mathbb Q}$.
This implies that another algorithmic approach is needed for handling
quantified DL formula.

\section*{Acknowledgements}

The second and the fourth author were supported by
the Wallenberg AI, Autonomous Systems and Software Program (WASP) funded
by the Knut and Alice Wallenberg Foundation. In addition, the second
author was partially supported by the Swedish Research Council (VR)
under grant 2021-0437. The third author was supported by the Engineering and Physical Sciences Research Council (EPSRC) (Project EP/V00252X/1).

\end{document}